\author{Alexandre Talon}             
\def\VV{\mathcal{V}}
\def\FF{\mathcal{F}}
\def\EE{\mathcal{E}}
\def\SSS{\mathcal{S}}
\def\RRR{\mathrm{R}}
\def\OO{\mathcal{O}}
\def\PP{\mathcal{P}}
\def\NN{\mathbb{N}}
\def\ZZ{\mathbb{Z}}
\def\ZZd{\mathbb{Z}^d}
\def\RR{\mathbb{R}}
\def\dd{\mathrm{d}}
\def\ab{\alpha\beta}
\def\cd{\gamma\delta}
\def\ac{\alpha\gamma}
\def\ad{\alpha\delta}
\def\bc{\beta\gamma}
\def\bd{\beta\delta}
\def\FF{\mathcal{F}}
\def\Gin{G_\mathrm{in}}
\def\Gout{G_\mathrm{out}}
\theoremstyle{remark}
\newtheorem{rk}{Remark}[chapter]
\theoremstyle{definition}
\newtheorem{deff}{Definition}[chapter]
\newtheorem{notation}[deff]{Notation} 
\newtheorem{ppty}{Property}[chapter]
\theoremstyle{plain}
\newtheorem*{question*}{Question}
\newtheorem{question}{Question}
\newtheorem{thm}{Theorem}[chapter]
\newtheorem{prop}[thm]{Proposition}
\newtheorem{lemma}[thm]{Lemma}
\newtheorem*{theorem*}{Theorem}
\newtheorem*{claim*}{Claim}
\newtheorem{claim}{Claim}[chapter]
\newtheorem{fact}{Fact}[chapter]
\newtheorem{conj}{Conjecture}
\definecolor{rouge}{RGB}{255,77,77}
\definecolor{vert}{RGB}{0,178,102}
\definecolor{jaune}{RGB}{255,255,0}
\definecolor{violet}{RGB}{208,32,144}
\definecolor{orange}{RGB}{255,140,0}
\definecolor{bleu}{RGB}{0,0,205}
\newenvironment{enumerate*}%
  {\begin{enumerate}%
    \setlength{\itemsep}{0pt}%
    \setlength{\topsep}{0pt}
    \setlength{\parskip}{0pt}}%
  {\end{enumerate}}
\begin{document}

\setcounter{secnumdepth}{3}
\setcounter{tocdepth}{3}

\maketitle                  

\addcontentsline{toc}{chapter}{Acknowledgements}

\thispagestyle{empty}
\begin{center}
\vspace*{1.5cm}
{\Large \bfseries Acknowledgements}
\end{center}
\vspace{0.5cm}

Here is the place to thank all the people who, directly or not, helped me achieve this PhD. Such help fell into two main categories: science and friendship or social activities.

First, I want to thank Michaël for his guidance and support during all the duration of my PhD. This includes the knowledge he shared with me, as well as the numerous discussions we had about the problems we tackled. I also want to thank him, Guilhem, Nathalie, Silvère, Tero and Mathieu for reading part or the whole of this manuscript, and providing much helpful feedback. I also want to thank Silvère for his involvement in the work we did together, which gave me knowledge about subshifts which are very interesting objects. I thank Karl Dahlke for answering my questions with plenty of details about the work he did on polyominoes. I also want to congratulate him for his project Edbrowse, a text-based editor which was aimed to help blind people access the web. Many thanks to the MALIP team who helped me with administrative tasks, and Alice and Aurore for helping me for the final steps of my PhD.

I now want to thank many other people whom I have met, and who helped me evade from the world of the PhD. First, I want to thank Alice, Alma and Aurore for their very valuable friendship and support throughout these three years. The people I played badminton with, and most specifically Edmond, Juliette, Corentin, Leeroy, Hubert and Selva allowed me to avoid some stress through sport, and we spent very nice moments even outside badminton sessions. Michaël and Laetitia supported me regularly around some beers, helped me rethink part of the world, and we shared a lot of fun moments. Gabrielle also supported me around beers, being for some time in the same adventure as me. To continue in the register of beers, I am very glad I randomly met Alice, with whom I spent lots of great moments, with interesting discussions. I am also grateful to Flore for all the valuable discussions we had, and her integrity of reasoning over various matters, among which climate change. Janelle and Mathieu, with small attentions, helped me a lot at a period when my PhD moments were not so bright. All the messages I received from friends to cheer me up, at the end of my PhD, were very nice and helped me keep up, so thanks to all of you who reached me on this occasion. I also had nice discussions and jokes with the people of the MC2 team.

A lot of other people have been part of my life through several activities. I want to thank the members of ENvertS and the RF for all the projects and things we organised together. I also want to thank the Improfesseurs for all the training sessions and shows we did. Finally, the "club jeux" (board games club) was a big part of my life during these three years; I am grateful to the support the people attending it provided me, particularly from Jean, Valentine and Henry.
I thank my parents who always supported me, and even understood at some point that asking me what I had done during the week was not a good question to raise each week. Finally, I am very happy to have met so many people with whom I enjoyed very good moments. I am glad to have kept contact with older friends too. Even though I cannot put all your names here, I think to all of you while writing this and am happy to know you.

\phantomsection
\addcontentsline{toc}{chapter}{R\'esum\'e}
\begin{resume}
	
Nous cherchons à prouver de nouveaux résultats en théorie des graphes et combinatoire grâce à la vitesse de calcul des ordinateurs, couplée à des algorithmes astucieux. Nous traitons quatre problèmes.\\

Le théorème des quatre couleurs affirme que toute carte d’un monde où les pays sont connexes peut être coloriée avec 4 couleurs sans que deux pays voisins aient la même couleur. Il a été le premier résultat prouvé en utilisant l'ordinateur, en 1989. Nous souhaitions automatiser encore plus cette preuve. Nous expliquons la preuve et fournissons un programme qui permet de la réétablir, ainsi que d'établir d'autres résultats avec la même méthode. Nous donnons des pistes potentielles pour automatiser la recherche de règles de déchargement.\\

Nous étudions également les problèmes de domination dans les grilles. Le plus simple est celui de la domination. Il s'agit de mettre des pierres sur certaines cases d'une grille pour que chaque case ait une pierre, ou ait une voisine qui contienne une pierre. Ce problème a été résolu en 2011 en utilisant l’ordinateur pour prouver une formule donnant le nombre minimum de pierres selon la taille de la grille. Nous adaptons avec succès cette méthode pour la première fois pour des variantes de la domination. Nous résolvons partiellement deux autres problèmes et fournissons des bornes inférieures pour ces problèmes pour les grilles de taille arbitraire.\\

Nous nous sommes aussi penchés sur le dénombrement d’ensembles dominants. Combien y a-t-il d’ensemble dominant une grille donnée ? Nous étudions ce problème de dénombrement pour la domination et trois variantes. Nous prouvons l'existence de taux de croissance asymptotiques pour chacun de ces problèmes. Pour chaque, nous donnons en plus un encadrement de son taux de croissance asymptotique.\\

Nous étudions enfin les polyominos, et leurs façons de paver des rectangles. Il s'agit d'objets généralisant les formes de Tetris : un ensemble de carrés connexe (« en un seul morceau »). Nous avons attaqué un problème posé en 1989 : existe-t-il un polyomino, d'ordre impair ? Il s'agit de trouver un polyomino qui peut paver un rectangle avec un nombre impair de copies, mais ne peut paver de rectangle plus petit. Nous n'avons pas résolu ce problème, mais avons créé un programme pour énumérer les polyominos et essayer de trouver leur ordre, en éliminant ceux ne pouvant pas paver de rectangle. Nous établissons aussi une classification, selon leur ordre, des polyominos de taille au plus 18.

\end{resume}

\clearpage
\phantomsection
\addcontentsline{toc}{chapter}{Abstract}
\begin{abstract}
	Our goal is to prove new results in graph theory and combinatorics thanks to the speed of computers, used with smart algorithms. We tackle four problems.\\

The four-colour theorem states that any map of a world where all countries are made of one part can be coloured with 4 colours such that no two neighbouring countries have the same colour. It was the first result proved using computers, in 1989. We wished to automatise further this proof. We explain the proof and provide a program which proves it again. It also makes it possible to obtain other results with the same method. We give potential leads to automatise the search for discharging rules.\\

We also study the problems of domination in grids. The simplest one is the one of domination. It consists in putting a stone on some cells of a grid such that every cell has a stone, or has a neighbour which contains a stone. This problem was solved in 2011 using computers, to prove a formula giving the minimum number of stones needed depending on the dimensions of the grid. We successfully adapt this method for the first time for variants of the domination problem. We solve partially two other problems and give for them lower bounds for grids of arbitrary size.\\

We also tackled the counting problem for dominating sets. How many dominating sets are there for a given grid? We study this counting problem for the domination and three variants. We prove the existence of asymptotic growths rates for each of these problems. We also give bounds for each of these growth rates.\\

Finally, we study polyominoes, and the way they can tile rectangles. They are objects which generalise the shapes from Tetris: a connected (of only one part) set of squares. We tried to solve a problem which was set in 1989: is there a polyomino of odd order? It consists in finding a polyomino which can tile a rectangle with an odd number of copies, but cannot tile any smaller rectangle. We did not manage to solve this problem, but we made a program to enumerate polyominoes and try to find their orders, discarding those which cannot tile rectangles. We also give statistics on the orders of polyominoes of size up to 18.

\end{abstract}
 
\begin{romanpages} 
\tableofcontents            

\cleardoublepage
\phantomsection
\addcontentsline{toc}{chapter}{Introduction}
\begin{introduccion}
This PhD falls into what is called "computer science", or "informatique" in French. This phrase means the science, hence the study, of computers. One may find it strange that computers are not more part of it. In some countries, a large part of computer science is called "discrete mathematics". It is considered part of this topic since for instance combinatorics and graph theory consists in studying some discrete objects, which intuitively means objects we can count. Rational numbers are also discrete objects whereas real numbers belong to the field of continuous mathematics: there are much too many of them to be able to count them. However, a certain number of people use computers in their research work in computer science, but only a few proved big results thanks to computers.
	
	In this PhD, we put the computers in the limelight and used them as tools to help us find new results. There are mainly two ways of using a computer to prove a result: using its huge computational power, and using its rigour to validate proofs and certify their correctness. In this thesis we focus on the first means: devising and implementing programs to be run by computers, the result of which are then used to prove a theorem. However a computer still has finite resources\footnote{As does the Earth, see \Cref{dom-counting-chapter}.} and many problems involve checking an infinite number of objects. This means that theory is needed to reduce this infinity of cases into a finite number of them, even if that number is huge (but not too huge). This part is fundamental, but may turn out not to be enough. A second step may be to think more to reduce again the number of objects the computer will have to examine, thanks to symmetry reasons for instance.\\
	
    The first big result which was proved by harnessing computers power is the four-colour theorem. It can be formulated as follows: any map in which countries are connected\footnote{In particular, all islands must be a proper country of their own.} can be coloured using only four colours, and such that any two neighbouring countries have different colours. In computer science, it is stated as "every planar graph is four colourable". It was first proved by Appel and Haken in 1976~\cite{appel-haken}. There were some doubts on this proof since it contained a flaw, which was quickly fixed by the authors. In 1996, Robertson and al. proved again this theorem in~\cite{4-col-paper} and their proof was somewhat more concise. In fact, both proofs rely on two lists of graphs (forbidden graphs, and discharging rules) and the proof of Robertson and al. contained much fewer forbidden configurations (633 versus 1476) and rules (32 versus 487). The four-colour theorem was proved using clever ideas and the computational power of computers. Actually, the idea of using discharging was first suggested by Heesch during the 70s, and he had thought about it since the 60s. He might have, with Durre, proved it earlier than Appel and Haken, but they did not have enough resources to run their program. The introduction of computers and the increase of their speed was crucial for the proof of this theorem.
	
	Other problems were solved or tackled using the power of computers, like the ones we talk about in this thesis. For instance, Rao recently closed a problem which had remained open for a century about tiling the plane with convex pentagons in~\cite{rao-pentagons}: he showed that no other families of convex pentagons than the ones already known could tile the plane. However the use of computers is not restricted to computer science nor to graph theory. Some people in cryptography or number theory also resort to the raw power of computers: some try to factor RSA integers, and others try to find, for example, which natural integers can be written as $x^3 + y^3 + z^3$ where $x,y,z \in \ZZ$. We now know\footnote{The answer for the famous number 42 was found very recently.} the answer for all numbers lesser than or equal to 100 and there are only ten numbers lesser than 1000 for which we do not have the answer yet.
	
	In a completely different domain, here geometry with applications in chemistry, a 400-year-old conjecture made by Kepler was proved in 1998 by Hales and Ferguson in a series of papers. The whole proof and some comments may be found in~\cite{kepler-livre}. This conjecture states that any packing of sphere of equal radius has a density at most around 0.7405. It was proved by examining a finite list of configurations to check the conjecture, despite the infinite number of possible configurations. It was Fejes Tóth who showed that this problem could be reduced to checking only a finite number of configurations, and provided them in 1953. The proof by Hales and Ferguson was very talked about since it closed a 400-year-old problem which was in the list of most important problems Hilbert made. Some people were not convinced of it because it was obtained with the help of a computer, like the proofs of the four-colour theorem some had earlier doubted.\\
	
	 The other use for computers we mentioned above belongs more in the fields of compilation or logic (or even arithmetic). It uses the automation of the computers to certify things like correctness of proofs, correctness of programs, or decimals in floating-point arithmetic. Apart from using the raw power of computers, one can use the automation and the ability of a computer to follow some rules strictly. Some logicians try to certify proofs, with software tools like Coq, a proof assistant/engine. This means that they input each step of a proof in a certain fashion to the proof assistant and the software validates these steps, which certifies that the proof contains no flaws. This was done with the four colour theorem by Gonthier~\cite{4-col-coq} in 2005. Concerning the Kepler conjecture, it was in fact checked during four years by a team of twelve reviewers, which is quite exceptional. They said they estimated the probability of the proof to be true to be at least 99\%, and the Kepler conjecture was widely accepted as a theorem. To remove any remaining doubts, in 2014 the Flyspeck project team, headed by Hales, showed in~\cite{kepler-formal} that their proof of the conjecture was correct, combining two proof assistants. People studying compiler design and/or logic also try to certify some aspects of programs: they prove, with the help of a computer, specific properties a program will have, for instance that never a division by zero would occur, or that the program will always terminate\footnote{It may be possible to show that some programs do terminate, however there is no way to decide if any program given as input always terminate.}. These so-called certified programs are used in critical real-time embedded systems, like in planes. We recall that this thesis is not about certification or proof assistants.\\
	 
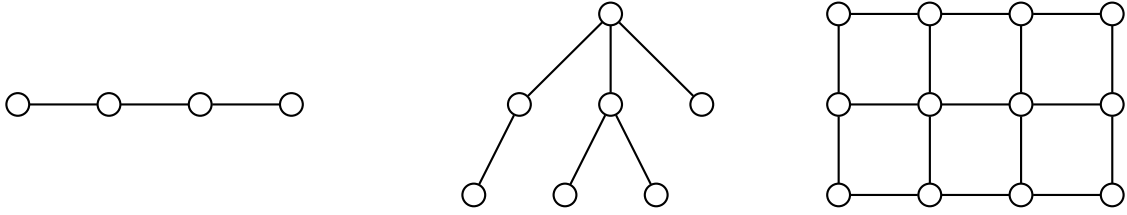
\begin{figure}[H]

\centering
\begin{tikzpicture}[->, scale=0.6, transform shape, thick,place/.style={draw, circle,thick,
inner sep=0pt,minimum size=5mm}]

\node (a) at (-1,0) [place] {};
\node (b) at (1,0) [place] {};
\node (c) at (3,0) [place] {};
\node (d) at (5,0) [place] {};

\draw [-] (a) edge (b);
\draw [-] (b) edge (c);
\draw [-] (c) edge (d);

\node (u) at (12,2) [place] {};
\node (v) at (10,0) [place] {};
\node (w) at (12,0) [place] {};
\node (x) at (14,0) [place] {};
\node (v1) at (9,-2) [place] {};
\node (w1) at (11,-2) [place] {};
\node (w2) at (13,-2) [place] {};

\draw [-] (u) edge (v);
\draw [-] (u) edge (w);
\draw [-] (u) edge (x);

\draw [-] (w) edge (w1);
\draw [-] (w) edge (w2);
\draw [-] (v) edge (v1);

\node (d) at (17,2) [place] {};
\node (dd) at (19,2) [place] {};
\node (ddd) at (21,2) [place] {};
\node (dddd) at (23,2) [place] {}; 

\node (e) at (17,0) [place] {};
\node (ee) at (19,0) [place] {};
\node (eee) at (21,0) [place] {};
\node (eeee) at (23,0) [place] {}; 

\node (f) at (17,-2) [place] {};
\node (ff) at (19,-2) [place] {};
\node (fff) at (21,-2) [place] {};
\node (ffff) at (23,-2) [place] {}; 

\draw[-] (d) -- (dd);
\draw[-] (ddd) -- (dd);
\draw[-] (dddd) -- (ddd);
\draw[-] (e) -- (ee);
\draw[-] (eee) -- (ee);
\draw[-] (eeee) -- (eee);
\draw[-] (f) -- (ff);
\draw[-] (fff) -- (ff);
\draw[-] (ffff) -- (fff);

\draw[-] (d) -- (e);
\draw[-] (e) -- (f);
\draw[-] (dd) -- (ee);
\draw[-] (ee) -- (ff);
\draw[-] (ddd) -- (eee);
\draw[-] (eee) -- (fff);
\draw[-] (dddd) -- (eeee);
\draw[-] (eeee) -- (ffff);

\end{tikzpicture}
\caption{From left to right: a path, a tree and a grid. The vertices are the circles, and the edges are the line segments joining them.}
\label{ex-graph-intro}
\end{figure}
	
In this thesis we study mostly problems on graphs, and some combinatorics. A graph can be seen as a list of relations between objects (see \Cref{ex-graph-intro}). For instance, any social networks would be seen as the set of friendships between its members. The members are called the \emph{vertices} of the graph and the friendships are called the \emph{edges}: they connect pairs of vertices. Graphs are for example used to model a lot of problems occurring in real life: when asking for a route for public transport, the website or application of the transport company uses graphs to model its network. Each stop is a vertex, and two successive stops on a same line are connected by an edge, weighted by the average time between the two stops. In case a bus and the underground both have the two stops as successive stops, we may have multiple edges between them. To answer the customer of the public transport who would like to know the fastest route, we have to study the \emph{shortest-path problem}: given two vertices, what is the path of minimal total weight between them? This problem has been much studied in graph theory and we know efficient algorithms to solve it.

However, not all problems on graphs are easy and a lot of people study graphs in more abstract ways, without direct applications. These questions include categorising graphs into classes the elements of which share a lot of properties. Often the elements of a same class are constructed according to specific patterns and rules. Other questions include studying some problems on graphs, showing that some have a particular property, and designing fast algorithms to solve some problems. Part of the motivation to split graphs into classes is to make it easier to study them. For instance, a certain problem may be very hard on graphs in general, but turn out to be easier on some classes of graphs. Usually paths (a line of vertices connected only to their left and right neighbours) and trees (like a family tree, see \Cref{ex-graph-intro}) are the simplest families of graphs on which to study a problem. Other classes of graphs which have a bounded treewidth, that is "look like trees",  may also be easier to study than general graphs. Another class, which is harder to study because it does not have a bounded treewidth, or even a bounded cliquewidth, is the class of grids: each cell of the grid is a vertex, and it is connected to the four adjacent cells.\\

The main part of this thesis addresses some problems of \emph{domination} on this class of grids. This problem consists in selecting a set of \emph{dominating} vertices such that any vertex not in this set is connected to a dominating vertex. The goal is to select as few vertices as possible while dominating the graph. The domination problem can model real-life problems such as choosing where to build fire stations or hospitals such that every city is not too far away from one. Multiple variants of this problem have been studied. They are obtained by modifying the condition for a vertex to be dominated. The Roman domination, for instance, looks like a game: we put zero, one or two troops of soldiers in each vertex, and the set of troops is dominating if any vertex with zero troops has a neighbouring vertex with two troops. Intuitively, a military strategy could consider that a troop would defend the vertex and two troops could split into two, one of the two leaving to defend a neighbouring vertex\footnote{provided the enemy does not attack too many vertices at once!}. These domination problems are very hard on general graphs, but we managed to solve some and approximate others in the case of grids. The simplest domination problem was already solved in grids by Gonçalves et al.~\cite{rao} in 2011. Among the multiple variants of the dominations, we study here the domination, minimal-domination, 2-domination, Roman-domination, total-domination, minimal-total-domination and distance-two-domination problems. We introduce the meta $k$ domination and the minimal meta $k$ domination, which extends the domination and total-domination problems in a certain direction. We already mentioned that the goal is to find, given a graph, the minimum size of a dominating set. We study this optimisation problem for variants of the domination, in \Cref{domination-chapter}. Another problem which we study here is to try to estimate how many dominating sets there are in a large grid, in \Cref{dom-counting-chapter}. Whereas the minimum dominating set in grids belongs to graph theory and combinatorics, this problem leads us to the side of combinatorics, outside graph theory.\\

As we mentioned above, computers are able to operate in an automatic way and very fast. They may try a lot of possible solutions of a problem, provided we program them to do so by giving a set of instructions in the right syntax. A computer program would for instance solve a Sudoku in a matter of seconds, or even less than one if correctly optimised. The same goes for many puzzles which may take us some, or even a lot of time to solve. One other kind is the Rubik's Cube: people usually follow an algorithm to solve one, hence a computer can solve it way faster than any human would. 

Some people exploited the fact that even a computer cannot solve a problem if the number of configurations to examine is too big, and if no ways to reduce this number are known. The Eternity game was a board game consisting of 209 irregularly shaped small polygons, and a dodecadron board. The goal was to fit all the 209 pieces together to tile the board, and a prize of one million pounds was promised to whomever would solve it first within four years. A pair of mathematicians made it in 2000 and were paid. A a sequel, called Eternity 2, was released in 2007. It was composed of 256 decorated squares: they should be placed on the board so that decorations match on the edges of two adjacent squares. This problem is strongly related to objects called Wang tiles, which are equivalent to the concept of SFTs we discuss in this thesis. A prize of two million dollars was offered if someone could solve within four years, but no one achieved it.

We tackle another "geometric" problem, similar to the first Eternity game, which also belongs the the field of combinatorics. It involves polyominoes, which are generalisations of Tetris pieces: they may contain fewer or more than four squares. 
We are interested in a more complex puzzle: given a polyomino and as many copies of it that you want, try to tile a rectangle with these pieces. The rectangle is not given, it is up to you to find if there exists one tilable by the polyomino or if the polyomino cannot tile any rectangle. Also, if it can tile a rectangle, please find one which requires as few copies of the polyomino as possible as in \Cref{intro-tiling}. It would be awesome if this minimum number of copies could be an odd integer greater than one. Unfortunately, we still do not know if this is possible. Here, contrarily to the games we mentioned, the number of configurations is not finite, for there are infinitely many rectangles we could try. Even trying to tile a single rectangle with about a hundred copies of a polyomino may require a huge amount of configurations to try. Fortunately we may use clever arguments to reduce this number, which makes it possible to find the order of some polyominoes, but not for every one we tested.\\

\begin{figure}

\centering
\includegraphics[scale=1]{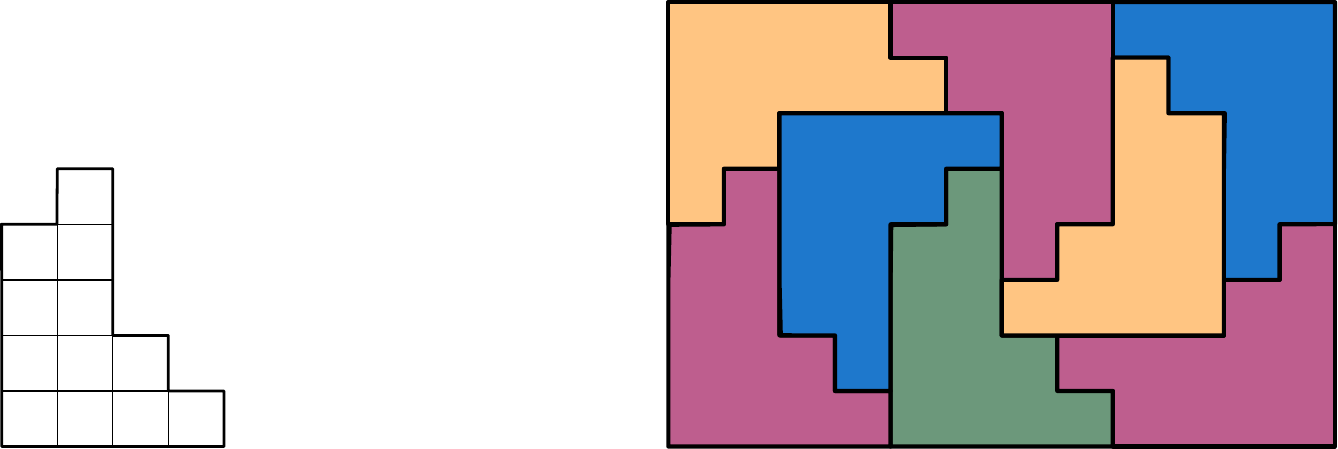}

\caption{A polyomino of size 12 and order 8 and a tiling of the associated rectangle.}
\label{intro-tiling}
\end{figure}

To tackle any problem with the help of a computer, a certain methodology must be followed. The power of computers is huge, all the more so that we can use parallelism and we have increasingly powerful machines. However, there is still a finite and fixed number of operations doable in a certain amount of time with a given machine. This means that if the problem we want to tackle implies examining an infinite number of objects, the first step is to reduce it to a finite number, thanks to theoretical arguments. The second step is to consider the angle of attack: what type of algorithms should we use? Also, what are the optimal the data structures we need? Once this is done, it may be useful to estimate the running time and memory consumption of the program to be devised, if this is possible. If it exceeds the available resources, more care must be given to reducing the number of objects. One classical way to do so is to observe the symmetries of the objects and try to reduce their numbers according to this fact. Indeed, if a group of objects are equivalent, we may as well examine thoroughly one of them and detect that the others do not need to be examined. Another way to use fewer resources can be to think more about the algorithm and data structures, and find better ones. In case the bottleneck is only time, we may do some precomputations and store intermediate results to be reused. This trades some memory for a smaller running time. If on the contrary the amount of memory used is problematic, we may store fewer results and compute them again several times, to benefit from the inverse trade-off.

Nowadays the frequencies of computers no longer increase, or by very little. The way to gather more computing resources is to resort to parallelism. This can usually be done by launching a lot of instances of the same program, at the same time, with different inputs. Another way is to use threads: a program splits itself into several children which run the same code. We split the amount of computations between the threads so that if we have $k$ threads, each would do one $k^\text{th}$ of the computations. To benefit fully from parallelising the code, we must think carefully of how to make parts of the computations independent from one another. Indeed, the speed up will be much higher if the parallel computations do not write to the same locations in memory and do not need to wait for results of another computation to start. One problem is that when using threads, we may tolerate simultaneous accesses to a same location of the memory provided none is writing to it. If one of the simultaneous accesses is a write, the behaviour of the program is undefined: if some thread reads this location, should the value read be the new one or the old one? In the case of two writes, which value should be the one to be stored at the end?

Another hint to achieve better performances is to choose the right programming languages and libraries\footnote{Sorts of programs made by other people, which contains routines to do specific tasks.}. Interpreted languages such as Python may be very elegant and have a nice syntax, but they can be as much as 100 times as slow as C/C++. When the critical operations are classical ones, like matrices products, it may be better to use a library written for that purpose, which will be faster. Using libraries also make it less likely to find bugs in the final program.\\

\begin{figure}[h]
\centering

\includegraphics[scale=0.45]{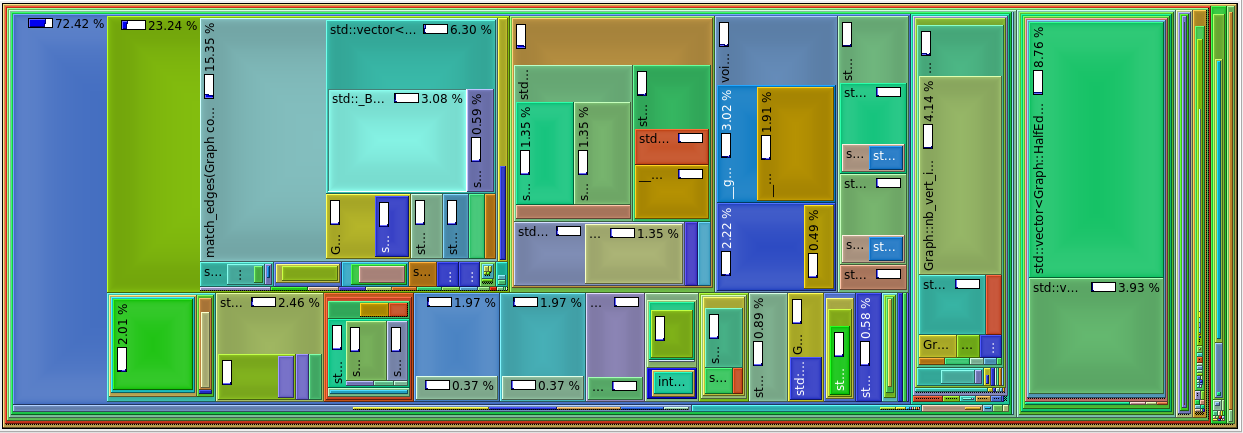}
\caption{A screenshot of a qcachegrind displaying visually some logs of the profiler callgrind. It shows each function as a rectangle whose size is proportional to the time spent in the function.}
\label{screenshot-callgrind}
\end{figure}

Finally, there are some good practices to adopt when programming something rather big. They may take a bit of time and some energy, but they can save a huge amount of time and frustration later on. The first thing is to always write some simple code first, before optimising it. It may be sufficient even if not optimal, and if we need to optimise it we will have a reference to test the new code against. The second thing is then to have tests, if possible automatic ones. For instance, running the first simple code on small inputs can generate test cases for the subsequent versions. Third thing, \textbf{do not overoptimise, or optimise too early}: there is no use optimising some parts (small rectangles in \Cref{screenshot-callgrind}) of the code which takes only one or two percent of the running time. It is only making the code harder to read, maintain, and more prone to bugs. The objective is to find a good balance between a simple code and a fast one. To see which parts of the code are important to optimise we may use tools called \textit{profilers}: they say how many times each function was called, and what portion of the running time these calls took. We used a lot the callgrind software in this PhD (see \Cref{screenshot-callgrind}), with the visualiser qcachegrind.  When bugs occur or are suspected, it is better to use a \textit{debugger}: it enables us to stop at specific points of the program (or at the point where it crashes) and have some information about the values of the variables, the position in the code, which functions were currently being called, and so on. We used GDB for this purpose. For similar goals, we also used valgrind, a piece of software which performs further checks to detect some possible memory errors or misuse. All these low-level tools, though, are to be used in the last resort: they do not replace a careful design and implementation of the algorithms.

\begin{center}
	\textbf{Organisation of the manuscript}
\end{center}
	This thesis deals with four topics: the proof of the four-colour theorem, several dominations numbers of grid graphs, counting some dominating sets in grids, and a tiling problem with polyominoes. The second and third topics are very related: they study similar problems. The last problem is to some extent related to the second topic. All these four topics share the approach to solve them, which resorts at some point to the use of computer programs. This thesis is divided in four chapteris, one dealing with each of the problems we have just listed.\\

	\Cref{discharging-chapter} is about the four-colour theorem and its proof in~\cite{4-col-paper}. The theorem states that any finite planar graph can be coloured properly using only four colours. We first explain the proof from Robertson et al. It works by contradiction: it assumes the existence of a minimal counter-example to the theorem. The minimality forbids some graphs to appear in this counter-example: we generate a list of \emph{forbidden (or reducible) configurations}. The reason the proof works is that no planar graphs can avoid all these reducible configurations. This is shown with the \emph{discharging method}, which consists in assigning weights to the vertices and moving them according to a set of \emph{rules}. This rules are used in an exhaustive search for the hypothetical minimal counter-example. They make it possible to cut all the branches of explorations. Finally, only a finite set of graphs are examined and none can lead to a minimal counter-example, hence the theorem is true. We ported this program for a use in Sage, so that the community may prove some other results with the discharging method, without having to implement everything from scratch. We describe some aspects of the program.\\
	
	\Cref{domination-chapter} tackles variants of the domination number, i.e. the minimum size of a dominating set, in grid graphs. The domination number problem in grids was solved in 2011 by Gonçalves et al.~\cite{rao}. We reuse for the first time their method and apply it to the 2-domination, the Roman domination, the total domination and the distance-2 domination. We explain there the method, and give the results we obtained using a new program we wrote. We solve the 2-domination and the Roman domination, that is we provide closed formulas giving the 2-domination number and the Roman domination number according to the height and the width of the grid. We give, for the distance-2 domination and the total domination, the numbers for up to 15 lines and arbitrary number of columns. For the total loss, this confirms the results of Crevals and Ostergård~\cite{total-dom-article-28}, who went up to 28 lines. We also give, for the total domination, a lower bound provided by our program. We also try to explain why our method does not seem to be able to give a full result for the total domination, which we relate to some covering problems. Most of the results of this chapter are published in~\cite{rao-talon}.\\
	
    The dominations problems can be viewed from another angle: trying to count the number of different dominating sets of a specific graph. In \Cref{dom-counting-chapter}, we attack this problem in grid graphs for several domination variants: the domination, the total domination, and their minimal variants. Using the notion of \emph{subshifts} and some known results about them, we show that the number of dominating sets, for each of these four problems, admits a growth rate which is furthermore \emph{computable}. We show this property by analogy with the \emph{entropy} of the domination subshifts, which we show are \emph{block gluing}. The number of dominating sets in grids is showed to be $\nu^{nm+o(nm)}$ for some constant $\nu$ depending on the problem studied. For each of these constants, we give numerical bounds obtained by our program. We also introduce a new domination family, which generalises the domination and total domination problems: the \emph{meta-$k$-domination} family of problems. We also study one particular property of the subshifts associated to this family. The work of this chapter is a joint work of Silvère Gangloff and myself~\cite{article-counting}.\\
	
	In \Cref{polys-chapter}, we study a problem related to the covering of rectangles we mentioned for \Cref{domination-chapter}. We study polyominoes and the way they can or cannot tile some rectangle. We are interested in \emph{rectifiable} polyominoes: when there exists a rectangle the polyomino can tile. The question we attacked, and has been open for 30 years, is the following: is there a polyomino of \emph{odd order} greater than one ? This means we look for a polyomino which can tile a certain rectangle with an odd number $k > 1$ of copies and which cannot tile any rectangle with fewer copies. We tried to find such a polyomino, unfortunately with no success. We describe several ways to find the order of a polyomino. Apart from these algorithms, we also describe some methods to show, again with the help of a computer program, that a polyomino is not rectifiable. Some ideas come from the work of Karl Dahlke~\cite{dahlke-website}, which we programmed anew, along with some of our own optimisations.\\

    This arxiv version contains the sources of the program developped and used during this PhD. To access them, you need to select the "Other formats" option in the Download section. Then click the "Download source" link.

\end{introduccion}


\end{romanpages}            


\resetlinenumber
\chapter{Discharging and the four colours theorem}
\label{discharging-chapter}
One thing which strongly characterises human beings is their relation to waste. No other species produce that much waste and then do not care about it.
This waste can take many shapes: from the plastic of our packagings, to some residues of the cleaning product we use (or other products used by the industry), buildings and machines left beside when no longer used... and obviously nuclear waste, which can last for  100 000 years. But more shockingly is how we process this waste: some of it is simply discharged at the middle of some nature, illegally\footnote{and the authorities do not seem to put the necessary means to stop it, even when these places are noticed}. Some nuclear waste was just dumped into water: between 1946 and 1993, before it was agreed to stop doing this\footnote{Or maybe not, TEPCO company plans on putting 777 000 tons of contaminated water in the sea following the Fukushima accident.} France alone has put around 15 000 tons of nuclear wast into the sea and the world.\\

As we said in the introduction, the four-colour theorem is very famous because it is the first big problem which was solved with a computer.

\begin{thm}[\cite{appel-haken}]
Every planar graph is four colourable.
\end{thm}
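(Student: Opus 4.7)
The plan is to argue by contradiction through a minimal counterexample. Suppose the theorem fails, and let $G$ be a planar graph with the smallest number of vertices that is not 4-colourable. First I would extract structural constraints from minimality: by adding edges (the resulting graph is still a smaller non-4-colourable graph if any colouring descends), $G$ can be assumed to be a planar triangulation; and by a standard Kempe-chain argument, every vertex of $G$ has degree at least~5 (a vertex of degree~$\le 4$ can always be removed, 4-coloured by minimality, and its colour recovered by recolouring along an appropriate two-colour component).

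The next step is the twin-engine structure of the proof: produce a finite set $\mathcal{U}$ of \emph{configurations} (small subgraphs together with degree specifications of their boundary) such that
\begin{enumerate}
\item \textbf{Reducibility:} each configuration in $\mathcal{U}$ cannot appear in a minimal counterexample, because any 4-colouring of the graph obtained by deleting the configuration's interior can be extended (possibly after Kempe-chain modifications) to a 4-colouring of the whole graph;
\item \textbf{Unavoidability:} every planar triangulation with minimum degree $\ge 5$ contains at least one configuration from $\mathcal{U}$.
\end{enumerate}
Together these yield the contradiction, since $G$ must contain some configuration from $\mathcal{U}$ by (2) yet cannot by (1). Reducibility of a single configuration is a finite check: one enumerates all $4$-colourings of its boundary ring and verifies, via Kempe-chain recolourings, that at least one extends to the interior after deletion. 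This is the part delegated to the computer, exactly as in~\cite{4-col-paper}.

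The unavoidability proof (2) is carried out by the \emph{discharging method}. I would assign to each vertex $v$ an initial charge $\deg(v) - 6$ and to each face $f$ the charge $2(\deg(f) - 3)$; Euler's formula $V - E + F = 2$ together with the handshake identities gives total charge exactly $-12$, so in particular some vertex must end up with negative charge. One then designs a finite list of local redistribution rules that shift charge between neighbouring elements, preserving the total. The goal is to choose $\mathcal{U}$ and the rules so that a triangulation avoiding every configuration in $\mathcal{U}$ has every element reach non-negative charge, contradicting the negative total. The main obstacle is precisely this joint design: one must simultaneously keep $\mathcal{U}$ small enough for each configuration to be checked for reducibility and rich enough that suitable discharging rules can certify unavoidability. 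In practice this requires iterated refinement where a failing charge analysis points to a new "overcharged" local pattern, which is then added to $\mathcal{U}$ (after verifying it is reducible) or accommodated by a new rule; both components grow together until the charge bookkeeping closes. Finally, I would verify that the resulting reducibility tests and the charge-conservation accounting against $\mathcal{U}$ can be mechanised, which is precisely the role of the computer program described in the chapter.
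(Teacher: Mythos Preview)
Your outline is correct and matches the paper's approach: a minimal counterexample is reduced to a triangulation with minimum degree~5, and the contradiction comes from pairing a finite list of reducible configurations with a discharging argument, both checked by computer. The only cosmetic difference is the discharging convention: the paper weights vertices only by $w(v)=10(6-\deg v)$ (total $+120$) and looks for a vertex of \emph{positive} final charge, whereas you use the sign-flipped $\deg(v)-6$ convention (total $-12$) and argue that avoiding all configurations would force every charge non-negative; also note that ``adding edges'' to triangulate does not make the graph smaller---it simply gives another minimal counterexample on the same vertex set, which is enough.
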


We will explain each term in \Cref{section-def-graphs}, so for the moment we consider an equivalent version.

\begin{thm}
Every planar map can be coloured with four colours.
\end{thm}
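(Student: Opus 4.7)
The plan is to reduce this statement to the previous theorem by passing to the dual graph of the map. Given a planar map whose countries are connected regions of the plane, I would construct a graph $G$ as follows: place one vertex in the interior of each country, and for every pair of countries that share a border of positive length (not merely a point), draw an edge between the corresponding vertices. A proper 4-colouring of $G$ then immediately yields a valid 4-colouring of the map, since two countries sharing a border correspond to adjacent vertices, which receive distinct colours.

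The first step I would carry out is to fix precisely the notion of adjacency for the map: two countries are considered neighbours when their boundaries share an arc of positive length. Points where three or more countries meet (like the Four Corners situation in the United States) do not by themselves create neighbour relations, so this convention excludes pathological cases. This step is mostly a definition, but it is essential for the construction to produce a simple graph.

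The second step is to verify that $G$ is planar. For each edge of $G$, I would draw a curve starting at the vertex inside one country, crossing the shared border arc with the neighbour exactly once at an interior point of that arc, and ending at the vertex inside the neighbour, staying within the union of the two countries. Since each shared border arc carries a single edge, and edges are otherwise confined to disjoint interiors, one can choose these curves so that no two of them cross, giving a planar embedding of $G$.

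Once $G$ is known to be planar, the previous theorem supplies a proper 4-colouring of $G$, which I transfer to the countries of the map by colouring each country with the colour of its vertex. The main obstacle is really the handling of degeneracies at the boundary: countries with non-connected boundaries, enclaves, or the case of many countries meeting at a single point. Once the adjacency convention is fixed and one restricts to maps whose countries are connected (as stated in the theorem), the reduction becomes routine, and the hard content of the result is entirely inherited from the graph-theoretic version.
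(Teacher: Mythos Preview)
Your reduction via the dual graph is correct and is exactly the standard argument. The paper itself does not give a separate proof of this statement: it simply declares the map formulation to be ``an equivalent version'' of the graph-theoretic four-colour theorem and immediately moves on to explaining and proving the latter. Your proposal supplies precisely the details that make this equivalence rigorous (construct the dual graph, check planarity, pull back the colouring), so there is nothing to correct---you have just been more explicit than the paper.
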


Le us assume that we draw a finite number of lines on a sheet of papers, the lines delimiting regions of the sheet. The theorem means that we can colour the regions using only four colours and such that any two adjacent regions (sharing an edge\footnote{non reduced to a point}) have different colours. For instance, it implies that it is not possible for five regions to be pairwise adjacent, for they could not be coloured with only four colours in that case. The theorem is however stronger than this fact.

The conjecture was apparently first proposed when, in 1852, Francis Guthrie was colouring the counties of England and noticed that four colours sufficed in this task to guarantee that two neighbouring counties are given different colours. A lot of people tackled this problem with little or no success until 1976. The first big attempt was made by Kempe~\cite{kempe} in 1879. It was welcomed by his peers until, eleven years later, Heawood~\cite{heawood} showed it was false. Similarly, a claim made by P. Guthrie Tait in 1880 was only disproved in 1891 by Petersen. However, Heawood reused one argument in Kempe's proof, the notions of \textit{Kempe chain} and \textit{Kempe interchange} to prove a weakened version of the four-colour theorem. These notions turned out to be important in later proofs.

\begin{thm}[\cite{heawood}]
Every planar graph is five colourable.
\end{thm}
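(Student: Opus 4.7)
The plan is to prove the statement by induction on the number of vertices $n$ of the planar graph $G$. The base case ($n \leq 5$) is trivial since one can assign a distinct colour to every vertex. For the inductive step, I would first use Euler's formula $V - E + F = 2$, together with the bound $E \leq 3V - 6$ for simple planar graphs, to deduce that $G$ must contain a vertex $v$ of degree at most $5$. Removing $v$ yields a planar graph $G - v$ on $n-1$ vertices, which by the induction hypothesis admits a proper 5-colouring $c$.

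If $\deg(v) \leq 4$, or if $\deg(v) = 5$ but the neighbours of $v$ collectively use at most four colours, then at least one of the five colours is free and we simply assign it to $v$. So the only genuine case is when $\deg(v) = 5$ and the five neighbours $v_1, \dots, v_5$ (listed in the cyclic order in which they appear around $v$ in a fixed planar embedding) are coloured with five pairwise distinct colours, say $c(v_i) = i$. Here I would invoke the Kempe chain argument already used by Kempe and highlighted in the introduction. Let $H_{1,3}$ be the subgraph of $G - v$ induced by the vertices of colours $1$ and $3$, and let $C$ be the connected component of $H_{1,3}$ containing $v_1$.

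If $v_3 \notin C$, then swapping colours $1$ and $3$ throughout $C$ (a \emph{Kempe interchange}) yields another proper $5$-colouring of $G - v$ in which no neighbour of $v$ uses colour $1$, and we can extend it to $v$ by giving $v$ colour $1$. Otherwise $v_1$ and $v_3$ lie in the same component of $H_{1,3}$, which means there is a path $P_{1,3}$ from $v_1$ to $v_3$ using only colours $1$ and $3$. The cycle formed by $P_{1,3}$ together with the two edges $vv_1$ and $vv_3$ is a closed Jordan curve in the plane, and by the Jordan curve theorem it separates $v_2$ from $\{v_4, v_5\}$ (because $v_2$ lies, say, inside the curve while $v_4, v_5$ lie outside, by the cyclic order). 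Now consider analogously $H_{2,4}$ and the component $C'$ of $v_2$. Any path in $H_{2,4}$ from $v_2$ to $v_4$ would have to cross $P_{1,3}$, but vertices on $P_{1,3}$ have colour $1$ or $3$ and therefore cannot belong to $H_{2,4}$; hence $v_4 \notin C'$. Swapping colours $2$ and $4$ on $C'$ then frees colour $2$ at every neighbour of $v$, and we colour $v$ with $2$, completing the induction.

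The only delicate point, and the main obstacle, is the last step: one must justify carefully that the cycle $P_{1,3}\cup\{vv_1,vv_3\}$ topologically separates $v_2$ from $v_4$ in the planar embedding. This is precisely where planarity is used in an essential (non-combinatorial) way, via the Jordan curve theorem, and where Kempe's original attempt at the \emph{four}-colour theorem fails, since with only four colours one is forced to perform two interacting Kempe interchanges that can undo one another. With five colours a single Kempe interchange on an independent pair of colours suffices, so no such conflict arises.
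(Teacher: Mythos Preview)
Your proof is correct and follows exactly the approach the paper sketches: after reproducing Kempe's cases $d<4$ and $d=4$, the paper notes that for five colours the only non-trivial case $d=5$ is handled by the very argument of the $d=4$ case, namely using one Kempe chain as a planar barrier so that a single interchange on a disjoint colour pair frees a colour for $v$. You have simply written out this sketch in full detail, including the Jordan-curve justification that the paper leaves implicit.
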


During the $20^\text{th}$ century, much progress was made and the theorem was eventually proved. In 1913, Birkhoff~\cite{birkhoff} formalised the notion of \textit{reducible configurations}, i.e. subgraphs which cannot appear in a minimal counter-example. He notably showed that the theorem holds for graphs with fewer than 26 vertices. Beginning in the 60's, Heesch worked on the problem and introduced a crucial step towards the resolution of the problem: the \textit{discharging method}. Much excitement went about the problem, and it was thought solved by Shimamoto. He showed that the whole problem could be reduced to checking some property on a particular graph, the "horseshoe". Heesch announced that it was one of the graphs for which they had shown this property, called \textit{D-recucibility}, but it turned out that it was an incorrect result due to a flaw in the program he had devised with Dürre. Finally, Appel and Haken~\cite{appel-haken} proved it in 1976, using the reducibility and discharging methods. A small flaw in the program they used cast some doubts on their proof. This proof was criticised by some sceptical people. The program was quickly fixed by their authors, but this did not end the doubts on their proof. This proof relied on some approximatively 2000 (later reduced to 1476) \emph{reducible configurations} which had to be checked by hand by the authors and Haken's daughter,  Dorothea Blostein. It also relied on 487 \emph{discharging rules}. In 1996, Robserton et al.~\cite{4-col-paper} gave another proof, based on the same principles, but needing much fewer configurations (633) which were checked by a computer program, and also much fewer discharging rules (32). The same authors announced in 1999 an alternate proof, by proving the "snark theorem" which implied the four-colour theorem. The proof is still not completely published. All doubts on the four-colour theorem were disappeared in 2005 when Gonthier~\cite{4-col-coq}, using the Coq proof assistant, made a certified proof of the theorem.

\begin{figure}[h]
\centering
\includegraphics[scale=0.34]{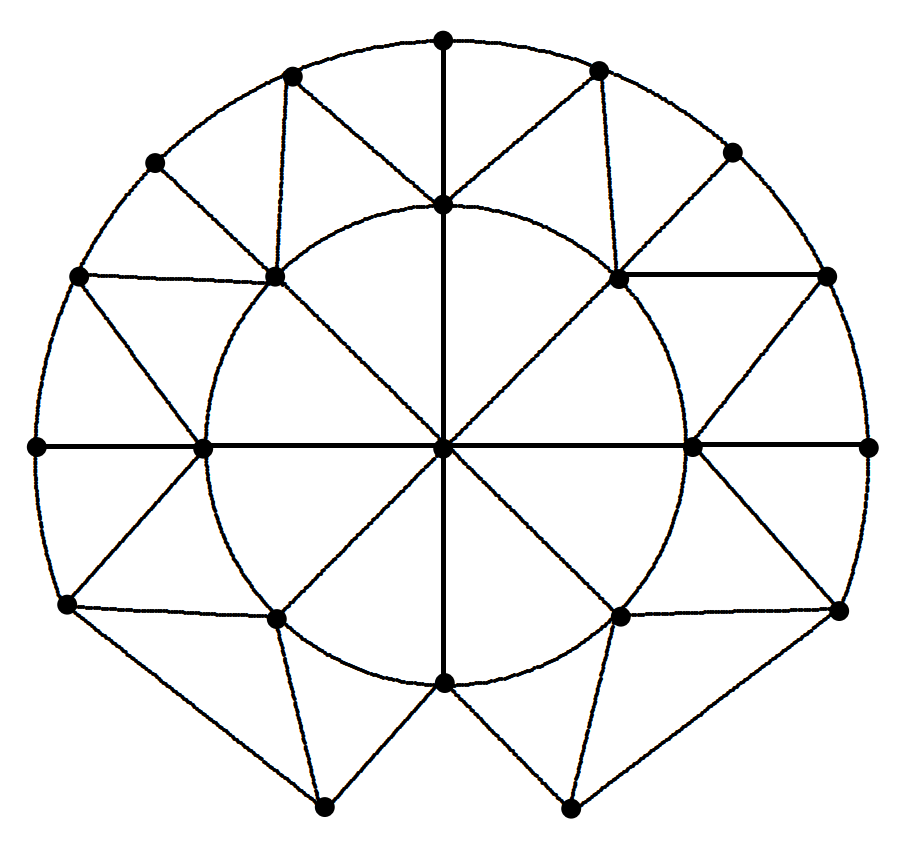}

    \caption{Shimamoto's "horseshoe" graph.\\The figure comes from~\cite{book-horseshoe}.}
\label{fig-horsehoe}
\end{figure}

We begin by giving general definitions about graphs, including planarity, and colourings, in \Cref{section-def-graphs}. \Cref{section-proof} is about the proof of the theorem by Robertson et al. We first give a one-page sketch of the main ideas of their proof. After this, we consider the false proof of Kempe, which introduced a central concept named the \textit{Kempe chains}. The rest of this section is devoted to the main two ingredient of the proofs of the four-colour theorem: \textit{reducible configurations} and \textit{discharging rules}. The former consist in subgraphs which cannot appear in a minimal counter-example, and the latter help us realise that no planar graph can avoid all the reducible configurations. The last section is about the work we did in this topic. It contains some details of the program we developed to reproduce the proof of Robertson et al., and some ideas on how to further automate the proof.

\section{Graph definitions}
\label{section-def-graphs}

In this chapter, \Cref{domination-chapter} and \Cref{dom-counting-chapter} we talk about graphs. We define here what they are, some vocabulary related to them, some of which is specific to this chapter.

\begin{deff}
A \textbf{graph} is a an ordered pair $G = (V,E)$: $V \subset \NN$ is the set of \textbf{vertices} and $E  \subset \{\{u,v\} \,|\, u,v \in V, \; u \neq v\}$ is the set of \textbf{edges}.\\If $V$ is finite, then the graph is \textbf{finite}.
\end{deff}

\begin{rk}
Our definition is the one of \emph{simple} graphs: it forbids loops (an edge between a vertex and itself) and parallel edges (multiple edges having the same endpoints). In all this thesis we only deal with simple graphs.
\end{rk}

\begin{deff}
If $\{u,v\} \in E$ we say that $u$ and $v$ are \textbf{neighbours}, or \textbf{connected}.\\
The number of neighbours of a vertex $u$ is called the \textbf{degree} of $u$ and denoted by $\dd^{\circ}(u)$.
\end{deff}

\begin{deff}
    We define the \textbf{open neighbourhood}, or \textbf{neighbourhood} of a vertex $u$, and denoted by $N(u)$ as the set of neighbours of $u$. The \textbf{closed neighbourhood} of $u$ is $N[u] = N(u) \cup \{u\}$.
\label{def-neighbourhood}
\end{deff}

\begin{figure}[H]

\centering
\begin{tikzpicture}[->, scale=0.6, transform shape, thick,place/.style={draw, circle,thick,
inner sep=0pt,minimum size=7mm}]

\node (a) at (0,0) [place] {$a$};
\node (b) at (6,0) [place] {$b$};
\node (c) at (6,-6) [place] {$c$};
\node (d) at (0,-6) [place] {$d$};
\node (e) at (3,-3) [place] {$e$};
\node (f) at (9,-3) [place] {$f$};
\node (g) at (9,-6) [place] {$g$};

\draw [-] (a) edge (e);
\draw [-] (b) edge (e);
\draw [-] (c) edge (e);
\draw [-] (d) edge (e);
\draw [-] (a) edge (b);
\draw [-] (c) edge (d);
\draw [-] (a) edge (d);
\draw [-] (b) edge (c);
\draw [-] (b) edge (f);
\end{tikzpicture}
\caption{An example of a graph.}
\label{ex-graph}
\end{figure}
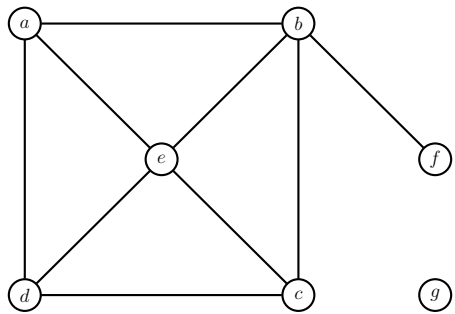

\Cref{ex-graph} illustrates a graph with 7 vertices ($a$ to $g$) and 9 edges. $a$ and $e$ are connected, but $c$ and $a$ are not. $e$ has degree 4, $f$ has degree 1 and $g$ is not connected to any vertex: its degree is 0. The neighbourhood of $b$ is $\{a, e, f\}$ and its closed neighbourhood is $\{a,b,e,f\}$.

\begin{deff}
An \textbf{embedding} of a graph $G$ into the plane is a representation of $G$ on $\RR^2$: vertices are given coordinates and every edge $\{u,v\}$ is drawn as a connected arc whose endpoints are the points assigned to $u$ and $v$.
\end{deff}
Usually when we speak about graphs, in addition to giving $V$ and $E$ we also give embeddings of them, i.e. we draw a representation of them.

\begin{deff}
A graph is said to be \textbf{planar} when it can be embedded into the plane such that no two edges cross.
\end{deff}
This means that there is a way to draw the graph on the plane such that no pairs of edges cross. It does not mean, of course, that any embedding of the graph would respect this property.

\begin{deff}[\Cref{plane-K4}]
Any embedding of a planar graph the edges of which only intersect at the vertices is called a \textbf{plane graph}.
\label{def-plane}
\end{deff}

\begin{rk}
Any planar graph can be embedded on a sphere such that edges only intersect at the vertices.
\end{rk}

\vspace*{-1.5cm}
\begin{figure}[H]
\centering
\begin{tikzpicture}[-, scale=0.5, transform shape, thick,place/.style={draw, circle,thick,
inner sep=0pt,minimum size=6mm}]

\begin{scope}
\node (a) at (0,0) [place] {};
\node (b) at (6,0) [place] {};
\node (c) at (6,-6) [place] {};
\node (d) at (0,-6) [place] {};

\draw [-] (a) edge (b);
\draw [-] (b) edge (d);
\draw [-] (a) edge (c);
\draw [-] (c) edge (d);
\draw [-] (a) edge (d);
\draw [-] (b) edge (c);

\end{scope}

\begin{scope}[xshift=12cm]
\node (a) at (0,0) [place] {};
\node (b) at (6,0) [place] {};
\node (c) at (0,-6) [place] {};
\node (d) at (6,-6) [place] {};

\node (e) at (-2,2) {};

\draw [-] (a) edge (b);
\draw [-] (b) edge (d);
\draw [-] (a) edge (c);
\draw [-] (c) edge (d);
\draw [-] (a) edge (d);
\draw [-] (c) edge [bend left =90, looseness=2.2] (b);
\end{scope}

\end{tikzpicture}
\caption{Two embeddings of the same planar graph ($K_4$). Only the one on the right is a plane embedding.}
\label{plane-K4}
\end{figure}
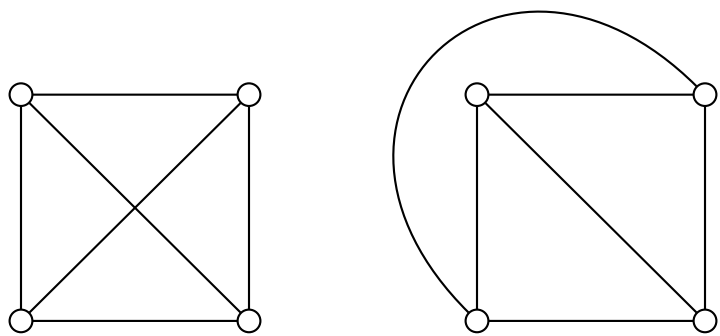

In this chapter we focus on planar graphs, and we reason in fact on plane embeddings of them.

\begin{deff}
A \textbf{face} of a plane graph is a region of the map delimited by vertices and edges, and which contains no vertices and no edges except the ones of its boundary.\\
A \textbf{triangle} is a face delimited by three vertices.
\end{deff}

In \Cref{plane-K4}, the plane graph has 4 faces, all of them being triangles.

\begin{deff}
We say that a plane graph is \textbf{triangulated} when all its faces are triangles. It is \textbf{almost-triangulated} when all its faces except at most one are triangles. If there is one, the non-triangular face is then called the \textbf{outer face}.
\end{deff}

\begin{deff}[see \Cref{ex-colouring}]
$c : V \mapsto \llbracket 1, k \rrbracket$ is called a \textbf{$\bm{k}$-colouring} of $G = 
(V,E)$.\\
A $k$-colouring $c$ is \textbf{proper} when for any $\{u,v\} \in E$ $c(u) \neq c(v)$.\\
A graph is \textbf{$\bm{k}$ colourable} when it admits a proper $k$-colouring.
\end{deff}

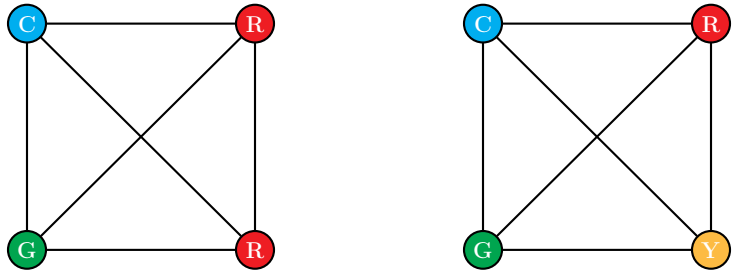
\begin{figure}[H]
\centering
\begin{tikzpicture}[-, scale=0.5, transform shape, thick,place/.style={draw, circle,thick,
inner sep=0pt,minimum size=10mm}]

\begin{scope}
\node (a) at (0,0) [place, fill=Cyan] {\color{white}\Large\textbf{C}};
\node (b) at (6,0) [place, fill=Red] {\color{white}\Large\textbf{R}};
\node (c) at (6,-6) [place, fill=Red] {\color{white}\Large\textbf{R}};
\node (d) at (0,-6) [place, fill=Green] {\color{white}\Large\textbf{G}};

\draw [-] (a) edge (b);
\draw [-] (b) edge (d);
\draw [-] (a) edge (c);
\draw [-] (c) edge (d);
\draw [-] (a) edge (d);
\draw [-] (b) edge (c);

\end{scope}

\begin{scope}[xshift=12cm]
\node (a) at (0,0) [place, fill=Cyan] {\color{white}\Large\textbf{C}};
\node (b) at (6,0) [place, fill=Red] {\color{white}\Large\textbf{R}};
\node (c) at (6,-6) [place, fill=Dandelion] {\color{white}\Large\textbf{Y}};
\node (d) at (0,-6) [place, fill=Green] {\color{white}\Large\textbf{G}};

\draw [-] (a) edge (b);
\draw [-] (b) edge (d);
\draw [-] (a) edge (c);
\draw [-] (c) edge (d);
\draw [-] (a) edge (d);
\draw [-] (b) edge (c);

\end{scope}

\end{tikzpicture}
\caption{Two colourings of $K_4$. Only the one on the right is proper: the two vertices coloured in red on the left are connected.}
\label{ex-colouring}
\end{figure}

A proper colouring means that the vertices of a graph are given colours among a set of $k$ colours, such that any two connected vertices receive different colours. The idea of giving different colours to connected vertices arises from practical problems. In the case of colouring a map, it makes the map clearer by making the frontiers more visible: since adjacent countries have different colours we may not think that they are the same big country. We can also model a room assignment problem with graph colourings. Suppose we have $n$ different lessons to be given and we fixed the schedule of the lessons but would like to know how many rooms we need. We can think of the lessons as vertices, and two vertices are connected if the lessons cannot be put in the same room because their schedules overlap. Each colour we assign to vertices will represent a room, different colours representing different rooms. Then any proper colouring of the graph with $k$ colours gives an assignment of the lessons to $k$ different rooms. The minimum number of colours needed for a proper colouring gives the minimum number of rooms needed.\\

\section{The proof}
\label{section-proof}
\subsection{Scheme of the proof}
\label{5-min-proof}
We give here a five-minute overview of the proof: we abstracted it very much skipping most details and above all technical ones. We boiled down the proof into nine notions, some very short, other more detailed.

The \uline{first idea} is that we make a \emph{proof by contradiction}: we assume that there is indeed a planar graph which is not 4 colourable. Let us call this guy $G_0$.
Actually, we will reason on plane graphs (see \Cref{def-plane}): a planar graph may admit several non-isomorphic plane embeddings.
The \uline{second idea} is to reduce the number of candidates for $G_0$: to show that it cannot exist, we must find contradictory properties it has. So we first assume that we look for a \emph{critical} counter-example: $G_0$ must be 5 colourable. We also restrict ourselves to minimal counter-examples: $G_0$ is chosen such that no graphs smaller than $G_0$ can be counter-examples.
\uline{Idea three} is then to use the previous conditions to show properties our graph has, to restrict the number of candidates for $G_0$. For instance, we can show that $G_0$ must be triangulated (see \Cref{trig-ctr-ex-lemma}) and have no vertex of degree less than five.
The \uline{fourth ingredient} is the notion of \textit{reducible configurations}: they are graphs which cannot appear in $G_0$. There are several reasons for a configuration to be reducible, and they can be tested by a computer. The goal here is to obtain a list of reducible configurations "big enough".
Now, at \uline{step five}, we want to show that no plane graphs can avoid all the reducible configurations we found. To do this, we use the discharging technique. First we decide on a way to assign weights to the vertices of $G_0$, such that the sum of the weights over any plane graph is 120. We then design \textit{discharging rules}: each time some specific configuration\footnote{not the reducible ones, some other ones} appear in a weighted graph $G$, we transfer some weight from some special vertex to another special one, depending on the rule. At the \uline{sixth step}, we know that any application of a rule preserves the total weight of the graph. Hence after applying our set of rules to $G_0$, the sum of the weights is still 120. Therefore, some vertex $v_0$ has a positive weight after applying the rules. We want to show that some reducible configuration necessarily appears in the second neighbourhood of $v_0$.
The \uline{seventh ingredient} is to enumerate all possible neighbourhoods for $v_0$ with the help of a computer. We start with $v_0$ and use a branch-and-bound algorithm to generate these neighbourhoods. At each step of the algorithm we precise a little more our neighbourhood, or we extend it. Before going further, at each step we test our current neighbourhood. If it cannot be triangulated, or if we can see that any extension of this neighbourhood will necessarily contain a reducible configuration or forces $v_0$ to have a non-positive weight after applying our set of rules, we discard it and backtrack.
The \uline{eighth thing to do} is to find a set of discharging rules which is good enough. By this we mean a set which, together with the list of reducible configurations, leads to discarding all possible neighbourhoods and show that $G_0$ does not exist. Robertson et al. found, by trial and error, a sufficient set of discharging rules. Beginning with only a few rules, the exploration program did not seem to finish. By looking at some neighbourhoods which could not be discarded, they designed new discharging rules. A new run of the program led to other new rules, and so on... until at some point the program terminated, proving the theorem.

\subsection{An interesting wrong proof}
We give here a proof which turned out to be false and could not be fixed. However it introduces an important notion in the colouring of planar graphs: the one of \emph{Kempe chains}. It was published in 1879 by Kempe~\cite{kempe}... and the flaw was discovered by Heawood~\cite{heawood} 11 years later! The proof uses some lemmas which are true and reused in the proof by Thomas et al., in particular the concept of \emph{Kempe chains}. 

We us assume we colour our vertices with four colours : $\alpha$, $\beta$, $\gamma$ and $\delta$.

\begin{lemma}
Any minimal counter-example to the four-colour theorem is triangulated.
\label{trig-ctr-ex-lemma}
\end{lemma}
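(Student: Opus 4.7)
The plan is to proceed by contradiction, strengthening the notion of minimality along the way. First, I would replace "minimal counter-example" by: a counter-example $G_0$ with the minimum number of vertices and, among those, with the maximum number of edges. Any minimal counter-example in the vertex-count sense can be upgraded in this way, so this refinement is free. Suppose for contradiction that some plane embedding of $G_0$ has a face $F$ whose bounding walk has length $k \geq 4$.

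Before touching $F$, I would first observe that $G_0$ is $2$-connected: if $v$ were a cut vertex, each of the strictly smaller blocks could be $4$-coloured by minimality, and the colourings could be glued together at $v$ (after permuting colours so that they agree on $v$), producing a $4$-colouring of $G_0$ and contradicting that $G_0$ is a counter-example. Since $G_0$ is $2$-connected, every face boundary is a simple cycle, so the boundary of $F$ is a cycle $v_1 v_2 \ldots v_k$ with $k \geq 4$.

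The key step is to produce two non-consecutive vertices $v_i, v_j$ on this boundary with $\{v_i,v_j\} \notin E(G_0)$. For $k = 4$: if both "diagonals" $\{v_1,v_3\}$ and $\{v_2,v_4\}$ were edges of $G_0$, neither could be drawn inside $F$ (which is a face), so both would lie in the complementary region bounded by the $4$-cycle. On the sphere, the pair $v_1,v_3$ separates $v_2$ from $v_4$ in that complementary region, forcing the two diagonals to cross, which contradicts planarity. For $k \geq 5$ the same separation argument applied to any chord shows that not every non-consecutive pair can be an edge. Once such a non-edge $\{v_i,v_j\}$ is found, add it drawn through $F$ to obtain a graph $G_0'$ which is still simple and planar, has the same vertex set as $G_0$, and has one more edge. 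Since $G_0 \subseteq G_0'$, every proper $4$-colouring of $G_0'$ restricts to one of $G_0$, so $G_0'$ is still not $4$-colourable. This contradicts the edge-maximality chosen at the start.

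The main obstacle is the case analysis in the "non-adjacent pair" step; the planarity/crossing argument is elementary but has to be stated carefully because one needs to know that the two candidate chords are drawn outside $F$ and then appeal to the Jordan curve property on the sphere. The preliminary $2$-connectedness reduction is routine but must not be forgotten, otherwise the face boundary could be a closed walk revisiting vertices, complicating the notion of "non-consecutive".
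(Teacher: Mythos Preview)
Your argument is sound but ends up proving a slightly different statement than the paper does, because the two take different routes once a non-adjacent pair $u,v$ on the large face is in hand. The paper \emph{identifies} $u$ and $v$ (contracts a virtual edge drawn through $F$); the resulting graph has strictly fewer vertices, is still planar, and any $4$-colouring of it pulls back to one of $G_0$, contradicting vertex-minimality alone. Your approach instead \emph{adds} the edge $uv$ through $F$ and appeals to edge-maximality. That extra hypothesis is where your remark ``this refinement is free'' oversells things: you can certainly pass from an arbitrary vertex-minimal counter-example to an edge-maximal one on the same vertex set, but that changes the graph, so you only conclude that the edge-maximal one is triangulated, not the original $G_0$. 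For the four-colour theorem this is harmless---one triangulated minimal counter-example is all the downstream argument needs---but it does not establish the lemma as literally stated (``\emph{any} minimal counter-example'').

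On the upside, you are more careful than the paper about two points it glosses over: the $2$-connectedness reduction (so that face boundaries are genuine cycles) and the crossing-chords argument guaranteeing a non-adjacent pair on the boundary. The paper simply writes ``let $u$ and $v$ be two non-adjacent vertices of this face'' and moves on; your justification of that step is a real improvement, and it is needed equally by both approaches.
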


\begin{proof}
Let assume $G_0 = (V_0, E_0)$ is a minimal counter-example for the four-colour theorem and has a face with at least 4 vertices. Then let $u$ and $v$ be two non-adjacent vertices of this face. We create $G_0'$ to be the graph where $u$ and $v$ are identified:
$G'_0 = (V'_0, E'_0)$ with $V'_0 = V \setminus \{u, v\} \cup \mathit{\{uv\}}$ and 
$E'_0 = E_0 \setminus \{\{a,b\} \,|\, a \in \{u,v\}\}
\cup \{\mathit{uv}, b\} \,|\, \{u,v \} \in E_0 \text{ or } \{v, b\} \in E_0\}$.

    Now any colouring $c'$ of $G'_0$ extends itself to a colouring $c$ of $G_0$ by giving to $u$ and $v$ the colour $\mathit{uv}$ has in $c'$. Indeed, identifying $u$ and $v$ preserved the vertices connected to $u$ and $v$: any of their neighbours is neighbour of $\mathit{uv}$. This works because $u$ and $v$ were not neighbours in $G_0$. This implies that the set of colourings of $G'_0$ is included in the one for $G_0$. Since $G_0$ admits no four-colourings, the same applies to $G'_0$. Therefore $G'_0$, which is still planar, is a smaller counter-example, which is a contradiction.
\end{proof}

\begin{lemma}
Any triangulated planar graph has a vertex of degree less than 6.
\end{lemma}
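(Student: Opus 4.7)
The plan is to use Euler's formula for plane graphs together with a double-counting argument on incidences between edges and faces. First, I would reduce the problem to plane graphs: a triangulated planar graph admits a plane embedding all of whose faces (including the outer one) are triangles, so I can fix such an embedding and apply Euler's formula $n - e + f = 2$, where $n$, $e$, $f$ denote the number of vertices, edges and faces respectively.

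Next, I would exploit the triangulation hypothesis. Since every face is bounded by exactly three edges and every edge lies on the boundary of exactly two faces, summing over all faces gives $3f = 2e$, hence $f = 2e/3$. Plugging this into Euler's formula yields $n - e + 2e/3 = 2$, which rearranges to $e = 3n - 6$.

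Now I would compare this to the sum of the degrees. By the handshaking lemma, $\sum_{v \in V} \dd^{\circ}(v) = 2e = 6n - 12$. If every vertex had degree at least $6$, then the sum of degrees would be at least $6n$, contradicting $6n - 12 < 6n$. Therefore at least one vertex has degree at most $5$, i.e., strictly less than $6$.

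The argument is essentially routine once Euler's formula is available; the only subtle point is to make sure we really have a plane graph with all faces (including the unbounded one) triangular, so that the identity $3f = 2e$ holds without exception. For the trivial small cases where $n \le 3$ one can check the statement directly, since such graphs cannot be properly triangulated in the plane sense assumed here.
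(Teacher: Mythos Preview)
Your proof is correct and follows essentially the same approach as the paper: both use Euler's formula together with the triangulation identity $3f = 2e$ to obtain $e = 3n - 6$, then compare with the handshaking lemma. The only cosmetic difference is that the paper phrases the final step via the average degree $2e/n < 6$ rather than your direct contradiction on the degree sum.
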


\begin{proof}
We use Euler's formula for planar graphs: $n-m+f = 2$, where $n$ is the number of vertices, $m$ the number of edges and $f$ the number of faces. Besides, since all faces are triangles, the number of edges satisfies $f = 2m/3$: each edge belongs to two faces and the sum of edges over the faces, which are triangles, is $3f$. Euler's formula can then be rewritten to $3n-m = 6$ or $m = 3n-6$. Summing the degrees of the vertices equals $2m$: each edge has two incident vertices. The average degree of the (finite) graph is $2m/n$ so here it is $2m/n = (6n-12)/n < 6$. Since the average degree is lesser than 6, at least one vertex must have a degree at most equal to 5.
\end{proof}

Now we want to show that no counter-examples to the four-colour theorem exist. To show this, Kempe used strong induction: we look for a minimal counter-example $G_0$. By \Cref{trig-ctr-ex-lemma}, we may know that $G_0$ has a maximum number of edges, namely that it is triangulated. We proceed by contradiction, assuming that such a counter-example $G_0$ exists. We proceed by disjunction of cases since $G_0$ has a vertex of degree $d < 6$:

\paragraph{First case: $d < 4$.\\}In this case let $u$ be a vertex of degree $d < 4$. We remove $u$ and its edges, so the remaining graph must be four colourable by the minimality hypothesis. Let us consider the colours its $d < 4$ neighbours receive. There is at least one colour which is not given to them (since we have four colours available). We can extend the four-colouring of $G_0 \setminus u$ to $u$ by colouring it with the remaining colour, hence our graph $G_0$ is four colourable, which is a contradiction.

Note that the graph obtained by removing $u$ might no longer be triangulated. However this is not a problem: $G_0 \setminus u$ is still smaller than $G_0$.

\paragraph{Second case: $d = 4$.}(see \Cref{fig-degree-4-Kempe})\\
We need here to introduce the very useful concept Kempe introduced in his proof.
\begin{deff}
Let $G = (V,E)$ be a four colourable graph and $c : V \rightarrow \{1, 2, 3, 4\}$ a four colouring of it. Let $G_{\{\alpha,\beta\}}$, more concisely written $G_{\alpha,\beta}$, be the subgraph of $G$ whose vertices are coloured with colour $\alpha$ or $\beta$. For any pair of $\alpha \neq \beta \in \{1,2,3,4\}$, any maximal connected component of $G_{\alpha,\beta}$ is called a \textbf{Kempe chain}, or more precisely an \textbf{$\bm{\ab}$-chain} of $c$. 
\end{deff}

\begin{deff}
Given $G$, $c$ and a pair $\alpha \neq \beta$ a \textbf{Kempe interchange} with respect to the colour partition $\{\{\alpha, \beta\}, \{\delta,\gamma\}\}$ is the colouring $c'$ obtained by switching colours $\alpha$ and $\beta$ in one of the $\ab$-chains, or by switching the colours $\gamma$ and $\delta$ in one of the $\cd$-chains.
\end{deff}

One important and useful fact about Kempe chains is the following:
\begin{fact}
Performing a Kempe interchange on a proper four-colouring $c$ always yields a proper four-colouring $c'$.
\end{fact}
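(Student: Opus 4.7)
The plan is to fix an arbitrary edge $\{u,v\}$ of $G$ and verify that its endpoints still receive distinct colours under $c'$. Let $C$ denote the $\alpha\beta$-chain whose colours are swapped (the argument for a $\cd$-chain being symmetric). I would then split into three cases according to whether $u$ and $v$ belong to $C$.

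First I would dispose of the two easy cases. If both $u$ and $v$ lie in $C$, then $\{c(u), c(v)\} = \{\alpha, \beta\}$ (since the edge was properly coloured), and after swapping we simply have $\{c'(u), c'(v)\} = \{\beta, \alpha\}$, which is still a set of two distinct colours. If neither $u$ nor $v$ lies in $C$, their colours are unchanged by the interchange, so $c'(u) = c(u) \neq c(v) = c'(v)$ by hypothesis on $c$.

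The remaining case, where exactly one of the endpoints lies in $C$ (say $u \in C$, $v \notin C$), is where the real content sits, and this is the step I expect to be the crux. Here $c(u) \in \{\alpha, \beta\}$ and $c'(u)$ is the other element of $\{\alpha, \beta\}$, while $c'(v) = c(v)$. I would argue by contradiction: if $c'(v) = c'(u)$, then $c(v) \in \{\alpha, \beta\}$. But then $v$ would be a vertex of $G_{\alpha,\beta}$ adjacent to $u \in C$, and by the maximality of $C$ as a connected component of $G_{\alpha,\beta}$, this forces $v \in C$, contradicting our assumption. Hence $c'(u) \neq c'(v)$.

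Since the three cases are exhaustive, $c'$ assigns distinct colours to the endpoints of every edge, so $c'$ is a proper four-colouring, which concludes the proof. The only subtle ingredient is recognising that the maximality in the definition of a Kempe chain is precisely what prevents a "boundary" vertex of the chain from clashing with its chain-neighbour after the swap.
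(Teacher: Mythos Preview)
Your proof is correct and relies on exactly the same idea as the paper: the maximality of the Kempe chain guarantees that no vertex of the chain has an $\alpha$- or $\beta$-coloured neighbour outside the chain, which is precisely your third case. The paper merely states this key observation in one sentence, whereas you spell out the full case analysis on an arbitrary edge; the substance is identical.
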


Indeed, by definition of a Kempe chain, we are switching two colours of a maximal component of vertices which had these colours. This means that if we exchanged colours $\alpha$ and $\beta$, any vertex which had its colour changed did not have any neighbour coloured $\alpha$ or $\beta$ outside its $\ab$-chain.

\begin{rk}
Performing a Kempe interchange along an $\ab$-chain may create or remove $\ac$-chains, $\ad$-chains, $\bc$-chains and $\bd$-chains. However such a Kempe interchange does not modify the $\ab$-chains and $\cd$-chains.
\end{rk}

Now let us consider a vertex $u$ of degree 4. We remove it from the graph, and colour the resulting (smaller) graph with four colours. If two of $u$'s neighbours share the same colour, then we use one spare colour for $u$. Otherwise, the neighbours have four different colours, let us say Orange, Cyan, Pink and Red. Let us call $v_\mathrm{O}, v_\mathrm{C}, v_\mathrm{P}$ and $v_\mathrm{R}$ the neighbours of $u$ labelled with these colours. We assume that in clockwise direction, the neighbours of $u$ are $v_\mathrm{O}, v_\mathrm{C}, v_\mathrm{P}$ and $v_\mathrm{R}$. We take for instance $v_\mathrm{O}$ and the OP-chain $G_\mathrm{OP}$ which contains $v_\mathrm{O}$. If $v_\mathrm{P}$ does not belong to this subgraph, then we perform a Kempe interchange along this chain. This frees colour O because $v_\mathrm{O}$ gets colour pink. Hence we can colour $u$ with orange. On the contrary, if $v_\mathrm{P}$ belongs to the OP-chain containing $v_\mathrm{O}$ then we consider the RC-chain $G_\mathrm{RC}$ containing $v_\mathrm{R}$. $v_\mathrm{C}$ cannot belong to this chain because the chain $G_\mathrm{OP}$ acts as a barrier: it contains only orange and pink vertices, hence no red or cyan vertex of $G_\mathrm{RC}$. Since the graph is plane, no edges can cross $G_\mathrm{OP}$. This means that we can perform a Kempe interchange along $G_\mathrm{RC}$. $v_\mathrm{R}$ becomes cyan and we can colour $u$ in red.

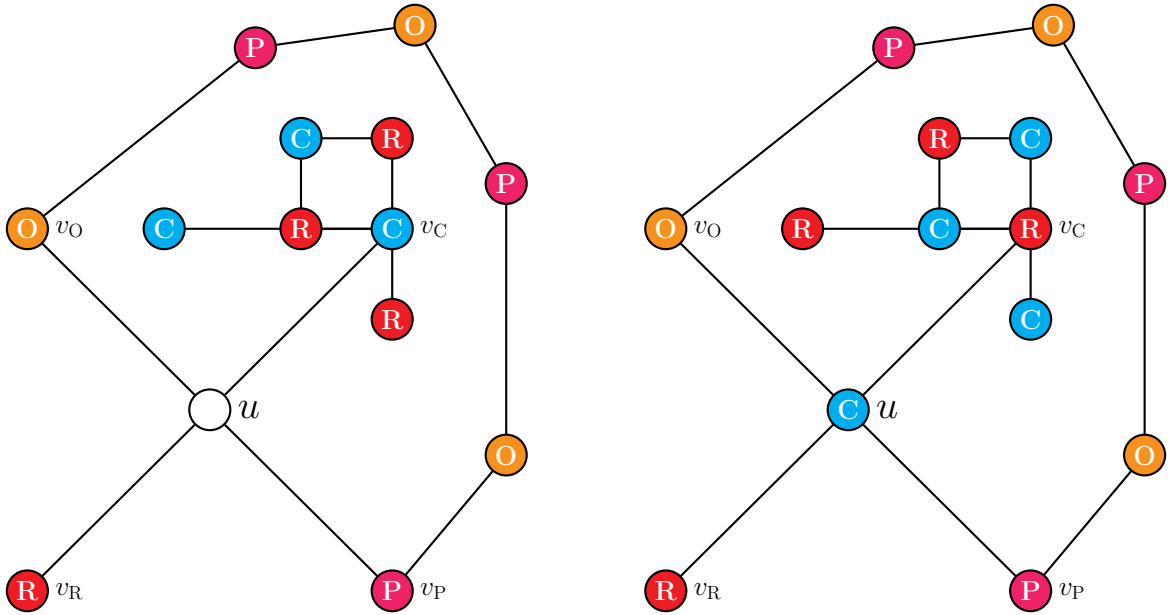
\begin{figure}[H]
\centering
\begin{tikzpicture}[-, scale=0.6, transform shape, thick,place/.style={draw, circle,thick,
inner sep=0pt,minimum size=9mm}]

\begin{scope}
\node (u) at (0,0) [place, label=right:\huge $u$] {};
\node (vC) at (4,4) [place, fill=Cyan, label=right:\Large $v_\mathrm{C}$] {\color{white}\Large \textbf{C}};
\node (vP) at (4,-4) [place, fill=WildStrawberry, label=right:\Large $v_\mathrm{P}$] {\color{white}\Large \textbf{P}};
\node (vR) at (-4,-4) [place, fill=Red, label=right:\Large $v_\mathrm{R}$] {\color{white}\Large \textbf{R}};
\node (vO) at (-4,4) [place, fill=BurntOrange, label=right:\Large $v_\mathrm{O}$] {\color{white}\Large \textbf{O}};

\node (O2) at (1,8) [place, fill=WildStrawberry] {\color{white}\Large \textbf{P}};
\node (O3) at (4.5,8.5) [place,fill=BurntOrange] {\color{white}\Large \textbf{O}};
\node (O4) at (6.5,5) [place, fill=WildStrawberry] {\color{white}\Large \textbf{P}};
\node (O5) at (6.5,-1) [place, fill=BurntOrange] {\color{white}\Large \textbf{O}};

\node (R1) at (2,4) [place, fill=Red] {\color{white}\Large \textbf{R}};
\node (R2) at (2,6) [place, fill=Cyan] {\color{white}\Large \textbf{C}};
\node (R3) at (4,6) [place, fill=Red] {\color{white}\Large \textbf{R}};
\node (R4) at (4,2) [place, fill=Red] {\color{white}\Large \textbf{R}};
\node (R5) at (-1,4) [place, fill=Cyan] {\color{white}\Large \textbf{C}};

\draw [-] (vC) edge (R1);
\draw [-] (R2) edge (R1);
\draw [-] (vC) edge (R4);
\draw [-] (R5) edge (R1);
\draw [-] (R2) edge (R3);
\draw [-] (vC) edge (R1);
\draw [-] (R3) edge (vC);

\draw [-] (u) edge (vC);
\draw [-] (u) edge (vP);
\draw [-] (u) edge (vR);
\draw [-] (u) edge (vO);
\draw [-] (vO) edge (O2);
\draw [-] (O2) edge (O3);
\draw [-] (O4) edge (O3);
\draw [-] (O4) edge (O5);
\draw [-] (vP) edge (O5);

\end{scope}

\begin{scope}[xshift=14cm]
\node (u) at (0,0) [place, fill=Cyan, label=right:\huge $u$] {\color{white}\Large \textbf{C}};
\node (vC) at (4,4) [place, fill=Red, label=right:\Large $v_\mathrm{C}$] {\color{white}\Large \textbf{R}};
\node (vP) at (4,-4) [place, fill=WildStrawberry, label=right:\Large $v_\mathrm{P}$] {\color{white}\Large \textbf{P}};
\node (vR) at (-4,-4) [place, fill=Red, label=right:\Large $v_\mathrm{R}$] {\color{white}\Large \textbf{R}};
\node (vO) at (-4,4) [place, fill=BurntOrange, label=right:\Large $v_\mathrm{O}$] {\color{white}\Large \textbf{O}};

\node (O2) at (1,8) [place, fill=WildStrawberry] {\color{white}\Large \textbf{P}};
\node (O3) at (4.5,8.5) [place,fill=BurntOrange] {\color{white}\Large \textbf{O}};
\node (O4) at (6.5,5) [place, fill=WildStrawberry] {\color{white}\Large \textbf{P}};
\node (O5) at (6.5,-1) [place, fill=BurntOrange] {\color{white}\Large \textbf{O}};

\node (R1) at (2,4) [place, fill=Cyan] {\color{white}\Large \textbf{C}};
\node (R2) at (2,6) [place, fill=Red] {\color{white}\Large \textbf{R}};
\node (R3) at (4,6) [place, fill=Cyan] {\color{white}\Large \textbf{C}};
\node (R4) at (4,2) [place, fill=Cyan] {\color{white}\Large \textbf{C}};
\node (R5) at (-1,4) [place, fill=Red] {\color{white}\Large \textbf{R}};

\draw [-] (vC) edge (R1);
\draw [-] (R2) edge (R1);
\draw [-] (vC) edge (R4);
\draw [-] (R5) edge (R1);
\draw [-] (R2) edge (R3);
\draw [-] (vC) edge (R1);
\draw [-] (R3) edge (vC);

\draw [-] (u) edge (vC);
\draw [-] (u) edge (vP);
\draw [-] (u) edge (vR);
\draw [-] (u) edge (vO);
\draw [-] (vO) edge (O2);
\draw [-] (O2) edge (O3);
\draw [-] (O4) edge (O3);
\draw [-] (O4) edge (O5);
\draw [-] (vP) edge (O5);

\end{scope}

\end{tikzpicture}
\caption{Illustration of the Kempe interchange for the case $d=4$ in Kempe's proof. The PO-chain acts as a barrier: it guarantees that the CR-chain containing $v_\mathrm{C}$ cannot contain $v_\mathrm{R}$. On the right we have performed a CR-interchange so that the colour cyan is free for $u$.}
\label{fig-degree-4-Kempe}
\end{figure}

\paragraph{Third case: $d = 5$.\\}
This last case is the most difficult and uses the ideas of the previous case. As usual, let $u$ be a vertex of degree 5. We four colour the graph $G_0 \setminus \{u\}$. If the neighbours of $u$ are coloured with fewer than four different colours, we use a spare one for $u$. If this is not the case, two vertices have the same colour, say pink, and each of the other three has a colour from the remaining three ones. There are two cases: either the two pink vertices are adjacent when we list the neighbours of $u$ in clockwise order, or they are not. The first case cannot occur since $G_0$ is triangulated: two consecutive neighbours of $u$ are connected themselves, hence they cannot receive the same colour in $G$ nor in $G \setminus u$.

So we deal with the case when the two pink vertices are not adjacent. We may assume that the vertices, listed in clockwise directions are (their names reflect their colours): $v_\mathrm{P}$,  $v_\mathrm{O}$, $v'_\mathrm{P}$, $v_\mathrm{R},$ and $v_\mathrm{C}$. We consider the OC-chain $G_\mathrm{OC}$ which contains $v_\mathrm{O}$. If it does not contain $v_\mathrm{C}$ then for the same reason as in the case $d=4$ we are done. So we assume $G_\mathrm{OC}$ contains $v_\mathrm{C}$. We then consider the OR-chain containing $v_\mathrm{O}$. We again assume that we are in the worst case: $v_\mathrm{R} \in G_\mathrm{OR}$. We now consider two new Kempe chains: the PR-chain $G_\mathrm{PR}$ which contains $v_\mathrm{P}$ and the PC-chain $G_\mathrm{PC}$ which contains $v'_\mathrm{P}$. Like before, $G_\mathrm{OR}$ and $G_\mathrm{OC}$ act as barriers. Therefore, using the same arguments we conclude that both $v_\mathrm{C} \notin G_\mathrm{PC}$ and $v_\mathrm{R} \notin G_\mathrm{PR}$. We can then perform a Kempe interchange on $G_\mathrm{PR}$ and one on $G_\mathrm{PC}$ such that $v_\mathrm{P}$ becomes red and $v'_\mathrm{P}$ becomes cyan. We finish by colouring $u$ with pink.\footnote{QED}

\begin{figure}[h]
\centering
\begin{tikzpicture}[-, scale=0.6, transform shape, thick,place/.style={draw, circle,thick,
inner sep=0pt,minimum size=9mm}]

\begin{scope}

\node (u) at (0,0) [place, label=right:\huge $u$] {};
\node (vR) at (0,-5) [place, fill=Red, label=right:\Large $v_\mathrm{R}$] {\color{white}\Large \textbf{R}};
\node (vC) at (-5,-2) [place, fill=Cyan, label=below:\Large $v_\mathrm{C}$] {\color{white}\Large \textbf{C}};
\node (vP) at (-3,4) [place, fill=WildStrawberry, label=right:\vspace*{-2cm}\Large $v_\mathrm{P}$] {\color{white}\Large \textbf{P}};
\node (vO) at (3,4) [place, fill=BurntOrange, label=right:\Large $v_\mathrm{O}$] {\color{white}\Large \textbf{O}};
\node (vP2) at (5,-2) [place, fill=WildStrawberry, label=right:\Large $v'_\mathrm{P}$] {\color{white}\Large \textbf{P}};

\node (O2) at (3,9) [place, fill=BurntOrange] {\color{white}\Large \textbf{O}};
\node (R2) at (1,6.5) [place, fill=Red, label=left:\Large $x_\mathrm{R}$] {\color{white}\Large \textbf{R}};
\node (C2) at (5,6.5) [place, fill=Cyan, label=right:\Large $x_\mathrm{C}$] {\color{white}\Large \textbf{C}};

\draw [-] (u) edge (vC);
\draw [-] (u) edge (vP);
\draw [-] (u) edge (vR);
\draw [-] (u) edge (vO);
\draw [-] (u) edge (vP2);

\draw [-] (vO) edge (R2);
\draw [-] (vO) edge (C2);
\draw [-] (R2) edge (C2);
\draw [-] (R2) edge (O2);
\draw [-] (O2) edge (C2);
\draw [-] (vP2) edge (C2);
\draw [-] (vP) edge (R2);

\end{scope}

\begin{scope}[xshift=14cm]
\node (u) at (0,0) [place, label=right:\huge $u$] {};
\node (vR) at (0,-5) [place, fill=Red, label=right:\Large $v_\mathrm{R}$] {\color{white}\Large \textbf{R}};
\node (vC) at (-5,-2) [place, fill=Cyan, label=below:\Large $v_\mathrm{C}$] {\color{white}\Large \textbf{C}};
\node (vP) at (-3,4) [place, fill=Red, label=right:\Large $v_\mathrm{P}$] {\color{white}\Large \textbf{R}};
\node (vO) at (3,4) [place, fill=BurntOrange, label=right:\Large $v_\mathrm{O}$] {\color{white}\Large \textbf{O}};
\node (vP2) at (5,-2) [place, fill=Cyan, label=right:\Large $v'_\mathrm{P}$] {\color{white}\Large \textbf{C}};

\node (O2) at (3,9) [place, fill=BurntOrange] {\color{white}\Large \textbf{O}};
\node (R2) at (1,6.5) [place, fill=WildStrawberry, label=left:\Large $x_\mathrm{R}$] {\color{white}\Large \textbf{P}};
\node (C2) at (5,6.5) [place, fill=WildStrawberry, label=right:\Large $x_\mathrm{C}$] {\color{white}\Large \textbf{P}};

\draw [-] (u) edge (vC);
\draw [-] (u) edge (vP);
\draw [-] (u) edge (vR);
\draw [-] (u) edge (vO);
\draw [-] (u) edge (vP2);

\draw [-] (vO) edge (R2);
\draw [-] (vO) edge (C2);
\draw [-] (R2) edge (C2);
\draw [-] (R2) edge (O2);
\draw [-] (O2) edge (C2);
\draw [-] (vP2) edge (C2);
\draw [-] (vP) edge (R2);

\end{scope}
\end{tikzpicture}
\caption{Illustration of the flaw in Kempe's proof for the case $d=5$. When performing the two interchanged mentioned in the proof, $x_\mathrm{R}$ and $x_\mathrm{C}$ both receive the colour pink, hence the colouring is not proper.}
\label{fig-kempe-fail}
\end{figure}
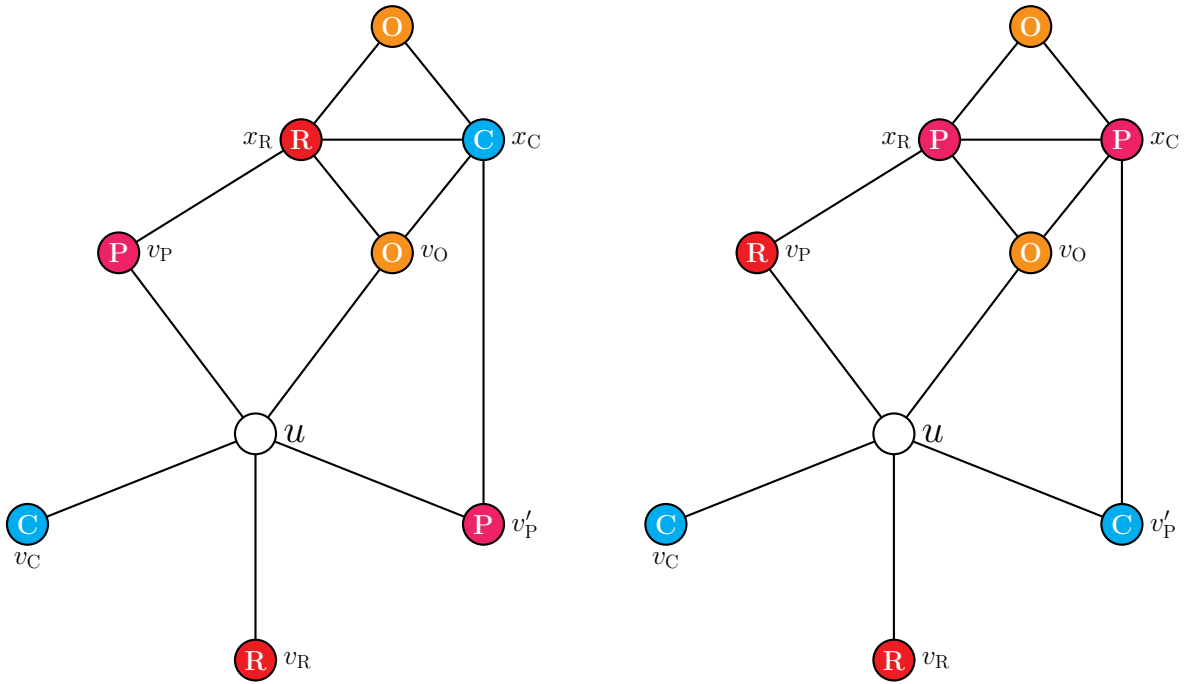

\paragraph{A fourth case for the prosecution.} (see \Cref{fig-kempe-fail})\\
There is subtle flaw in this proof, which you may have overlooked. Indeed, $v_\mathrm{O}$ may have one neighbour $x_\mathrm{R}$ coloured in red and $x_\mathrm{C}$ coloured in cyan. Let us assume that $x_\mathrm{R}$ and $x_\mathrm{C}$ are connected. It is moreover possible that $x_\mathrm{C}$ belongs to $G_\mathrm{PC}$ and that $x_\mathrm{R}$ belongs to $G_\mathrm{PR}$. In this case, after performing the Kempe interchanges in $G_\mathrm{PC}$ and in $G_\mathrm{PR}$, a problem arises: $x_\mathrm{R}$ and $x_\mathrm{C}$ are both pink, which makes the colouring invalid.

In his PhD manuscript, in 1921, Errera\footnote{who bears the same first name as Kempe} found a graph with 17 vertices and 45 edges, known as the Errera Graph, on which Kempe's proof fails. Therefore, the proof cannot be fixed.

However, Kempe's proof can be used to show that any planar graph is five colourable as Heawood did in \cite{heawood}. Indeed, the only non-trivial case in the proof of this result is when $d=5$. We can use the arguments of the case $d=4$ of Kempe's proof: taking two disjoint Kempe chains and using one as a barrier.

\label{section-kempe}

\subsection{Forbidden subgraphs}
To reduce the number of possible counter-examples to explore, one way is to find a list of subgraphs which cannot be contained in any triangulated minimal counter-example: the forbidden configurations. This way, if at some point our current partial graph contains any graph in the list of forbidden configurations, we may discard it because any final graph we would obtain from this one would contain the forbidden configuration, hence could not be a counter-example. These forbidden configurations can be obtained thanks to the properties we imposed on our counter-example $G_0$: triangulation and minimality. We begin by giving a property shared by minimal counterexamples. We then define what a configuration is, and afterwards explain how to find some of the forbidden configurations.

The idea behind it is, assuming the correctness of the theorem, that if we list enough configurations then no plane graphs forbidding them all can exist. \Cref{section-discharging} describes a tool to realise "quickly" that indeed no plane graphs can exclude all the given forbidden configurations.

We already proved something with Kempe's false proof.

\begin{lemma}
The minimum degree of a minimal counter-example is five.
\label{min-degree-five}
\end{lemma}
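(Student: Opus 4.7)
The plan is to argue by contradiction along exactly the lines of the two correct cases of Kempe's argument recalled in Section~\ref{section-kempe}. Assume $G_0$ is a minimal counter-example and that it contains a vertex $u$ of degree $d\leq 4$; we will exhibit a proper four-colouring of $G_0$, contradicting the assumption. Note that by \Cref{trig-ctr-ex-lemma} $G_0$ is triangulated, but for this lemma we will only use minimality (and planarity of $G_0\setminus u$); triangulation is invoked only implicitly via the clockwise ordering of the neighbours of $u$.

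First I would dispose of the easy range $d\leq 3$. Remove $u$ and all its incident edges: the graph $G_0\setminus u$ is still planar and strictly smaller than $G_0$, so by minimality it admits a proper four-colouring $c$. The $d\leq 3$ neighbours of $u$ consume at most three colours, hence at least one colour in $\{\alpha,\beta,\gamma,\delta\}$ is unused on $N(u)$. Assigning this free colour to $u$ extends $c$ to a proper four-colouring of $G_0$, a contradiction.

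Next I would handle $d=4$ by reusing verbatim the Kempe-chain argument of the case $d=4$ from Kempe's proof. Remove $u$, four-colour the resulting graph by minimality, and let $v_\mathrm{O},v_\mathrm{C},v_\mathrm{P},v_\mathrm{R}$ be the neighbours of $u$ in clockwise order. If two of them share a colour, the fourth colour is free for $u$ and we are done. Otherwise consider the OP-chain $G_\mathrm{OP}$ containing $v_\mathrm{O}$: if $v_\mathrm{P}\notin G_\mathrm{OP}$, performing a Kempe interchange on $G_\mathrm{OP}$ recolours $v_\mathrm{O}$ with colour P and frees O for $u$. If instead $v_\mathrm{P}\in G_\mathrm{OP}$, then the OP-chain forms a Jordan barrier in the plane embedding separating $v_\mathrm{R}$ from $v_\mathrm{C}$, so the RC-chain containing $v_\mathrm{R}$ cannot contain $v_\mathrm{C}$; a Kempe interchange on this RC-chain frees R for $u$. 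In either subcase we extend the colouring to $G_0$, contradicting the assumption.

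The only potential obstacle is the planarity/barrier step in the case $d=4$, which is the place where the analogous argument fails at $d=5$ (see the fourth case in Section~\ref{section-kempe}); here the obstruction does not appear because only two chains and two pairs of opposite neighbours are involved, so the Jordan-curve separation argument goes through cleanly. Having ruled out $d\leq 3$ and $d=4$, every vertex of $G_0$ has degree at least $5$, which is the claim.
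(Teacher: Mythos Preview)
Your proof is correct and follows exactly the same approach as the paper: both appeal to the two correct cases ($d\le 3$ and $d=4$) of Kempe's argument to show that a four-colouring of $G_0\setminus u$ extends to $G_0$. The only difference is that the paper simply refers back to the earlier exposition of Kempe's proof, whereas you spell the details out in full.
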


\begin{proof}
The proof is simple. Let $G$ be a minimal counter-example. Let us assume that $G$ has a vertex $u$ of degree less than 5. By minimality of $G$, $G\setminus u$ is four colourable. We showed in the correct part of Kempe's false proof that there exist a 4-colouring of $G\setminus u$ which can be extended to a colouring of $G$. Therefore, $G$ is not a counter-example, which concludes the proof.
\end{proof}

\begin{deff}
A \textbf{separating short circuit} $C$ of a plane graph $G$ is a cycle of size at most five such that: if $C$ is of length 3 or 4 then each of the two open\footnote{excluding $C$} regions bounded by $C$ contains at least one vertex, and if $C$ is of length 5 then both open regions contains at least 2 vertices.
\end{deff}

\begin{lemma}[\cite{birkhoff}]
Any minimal counter-example contains no short cycles.
\end{lemma}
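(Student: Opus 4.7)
The plan is to prove the lemma by contradiction: assume $G_0$ is a minimal counter-example containing a separating short circuit $C$ of length $k \leq 5$, and derive that $G_0$ is in fact $4$-colourable, contradicting its being a counter-example. The key idea is to cut $G_0$ along $C$ into two strictly smaller plane graphs, $4$-colour each by minimality, and then reconcile the two colourings on $C$ so that they glue into a proper $4$-colouring of $G_0$.

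First I would set up the splitting. Writing $\Gin$ and $\Gout$ for the subgraphs of $G_0$ induced by $C$ together with the vertices strictly inside and strictly outside $C$ respectively, both are planar, and both are \emph{strictly smaller} than $G_0$: this is exactly what the two region-size conditions in the definition of a separating short circuit guarantee (for $k \in \{3,4\}$ each side must contain at least one vertex off $C$, and for $k=5$ at least two). By the minimality of $G_0$, there exist proper $4$-colourings $c_{\text{in}}$ of $\Gin$ and $c_{\text{out}}$ of $\Gout$. If we can modify $c_{\text{in}}$ and $c_{\text{out}}$ so that they agree on $C$, then the common restriction extends to a proper $4$-colouring of all of $G_0$, a contradiction.

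Next I would handle the cases by increasing difficulty. For $k=3$ the three vertices of $C$ form a triangle, hence receive three distinct colours on each side; by relabelling colours in $c_{\text{out}}$ (permuting the four-colour palette) we can align its values on $C$ with those of $c_{\text{in}}$, and we are done. For $k=4$ the vertices of $C$ use either three or four colours; in either case a suitable colour permutation applied to $c_{\text{out}}$ lets the two colourings agree on $C$ (the tricky subcase, where both sides use exactly three colours but the repeated pair differs, is resolved by a Kempe interchange inside $\Gout$ on a chain bounded by $C$, using the fact that on a planar graph a Kempe chain meeting $C$ in only the desired vertices can be freely recoloured). For $k=5$ I would apply the Kempe-chain argument from the case $d=5$ of \Cref{section-kempe}, essentially treating $C$ the same way as the cycle formed by the five neighbours of a degree-$5$ vertex: permuting colours lets us reduce to the situation where the two colourings differ at exactly one or two vertices of $C$, and a sequence of Kempe interchanges in $\Gout$ resolves the discrepancy.

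The main obstacle is the $k=5$ case: here a straightforward permutation does not suffice, and one must carry out Kempe interchanges. This is precisely where the condition that both open regions bounded by $C$ contain at least two vertices becomes essential, since it ensures $\Gin$ and $\Gout$ are both strictly smaller than $G_0$ so that minimality can be invoked, and it also gives enough room on each side for the Kempe chains used in the recolouring to lie entirely within one side. Once all three cases yield agreeing boundary colourings, the gluing step is immediate because every edge of $G_0$ lies in $\Gin$ or in $\Gout$, so a colouring proper on each side is proper on the whole, contradicting the choice of $G_0$ and proving the lemma.
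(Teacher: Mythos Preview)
Your setup and the $k=3$ case match the paper's (partial) proof. For $k=4$, however, the paper takes a different and more robust route than yours: rather than trying to Kempe-interchange a \emph{given} colouring of $\Gout$ into agreement with $c_{\text{in}}$, it observes that $\Gout$ together with the diagonal edge $\{a,c\}$ (respectively $\{b,d\}$) is still strictly smaller than $G_0$, hence $4$-colourable, which forces $\Gout$ to admit boundary colourings of \emph{both} shapes $(1,2,1,3)$ and $(1,2,3,2)$; a single Kempe argument on $\Gin$ then shows $\Gin$ admits one of these. Your parenthetical (``a Kempe interchange inside $\Gout$ on a chain bounded by $C$'') does not actually establish the claim: if the two sides give $3$-colour patterns with different repeated pairs, one interchange need not reconcile them (the relevant chain may well connect the two opposite vertices), and you also omit the $2$-colour case $(1,2,1,2)$. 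The add-a-diagonal trick is the missing idea.

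The genuine gap is at $k=5$. You propose to ``apply the Kempe-chain argument from the case $d=5$ of \Cref{section-kempe}'', but the paper has just shown that this very argument is \emph{wrong}: the simultaneous double interchange can create adjacent vertices of the same colour (the Errera obstruction). Birkhoff's actual reduction of separating $5$-cycles is not a replay of Kempe's $d=5$ step; it requires a finer analysis of the sets of admissible boundary colourings on each side, together with further contractions and edge additions, which is precisely why the paper declines to present it. Your final paragraph also misidentifies the role of the ``at least two vertices on each side'' hypothesis: even one interior vertex already makes $\Gin$ and $\Gout$ strictly smaller than $G_0$; the stronger condition is needed because a $5$-cycle with a single interior vertex is exactly a degree-$5$ vertex and its neighbourhood, a configuration that \emph{cannot} be reduced (minimum degree in a minimal counter-example is $5$, by \Cref{min-degree-five}).
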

We prove this lemma in the cases of cycles of lengths 3 and 4 to illustrate a bit the concept of reducibility. We leave aside the case with a short cycle of length 5, which is longer to prove and does not bring much more understanding.

\begin{proof}[Partial proof]
Let us assume that $G_0$ is a minimal counter-example to the theorem and that $C$ is a separating short circuit of $G_0$ of length 3 or 4. We define $G_\textit{in}$ and $G_\textit{out}$ be the two closed regions bounded by $C$. Note that they both contain $C$.

Let us assume that $C$ is a triangle with vertices $a,b$ and $c$. Since $\Gin$ and $\Gout$ are smaller than $G_0$, each of them admits a four-colouring, and the vertices of $C$ must receive different colours since they are pairwise connected. Up to renaming the colours, we may assume that in both colourings $a$ receives colour 1, $b$ colour 2 and $c$ colour 3. The two colourings agree on the vertices of $C$ so that they can be combined to form a four colouring of $G_0$, which is a contradiction.

    We now assume that $C$ has four vertices: $a,b,c$ and $d$. We define in the same way $\Gin$ and $\Gout$. For the same reason as before, each of them is four colourable. Up to renaming the colours, each colouring must be $(1,2,3,4)$, $(1,2,1,3)$, $(1,2,1,2)$ or $(1,2,3,2)$: either all vertices receive different colours, only two opposite vertices receive the same colour, or each pair of opposite vertices receive the same colour. If both $\Gin$ and $\Gout$ admit a colouring of the shape $(1,2,3,4)$ then we are done: each of this colouring can be extended to $G_0$ as for the previous case.

    We then assume that $\Gout$ does not admit $(1,2,3,4)$ as a proper colouring. We will show that it admits both $(1,2,1,3)$ and $(1,2,3,2)$ as proper colourings, and that $\Gin$ admits one of the two. First, we prove that $\Gout$ must admit colourings of the shape $(1,2,1,3)$ and $(1,2,3,2)$. Indeed,  the graph $\Gout$ to which we add the edge $\{a,c\}$ is still smaller than $G_0$, hence it admits a proper 4-colouring giving $a$ and $c$ different colours. This colouring is a proper colouring for $\Gout$. The same argument applies if we instead add the edge $\{b,d\}$. Now we know that $\Gout$ admits both $(1,2,1,3)$ and $(1,2,3,2)$ as proper colourings. If $\Gin$ admits one of them, we are done. If it is not the case then $\Gin$ only admits colourings of the shape $(1,2,3,4)$. We can then look at the $13$-chain containing $c$: if it does not contain $a$ we may perform a Kempe interchange and obtain $(1,2,1,4)$. Up to renaming the colours, we may assume it is the desired $(1,2,1,3)$. Otherwise, as in the case $d=4$ of the Kempe proof
in \Cref{section-kempe} we know that the $24$-chain containing $d$ does not contain $c$, hence we can obtain $(1,2,3,2)$, which concludes the proof for a short circuit of length 4.
\end{proof}

We now introduce the notion of configuration used by the correct proofs of the four-colour theorem. It enables us to define "partial" graphs, which will be shown to be excluded from any minimal counter-example. We recall that an almost-triangulated plane graph is an embedding in which every face is a triangle, except for at most one.

\begin{deff}
A \textbf{configuration} is a couple $C = (H, \gamma)$ where $H = (V,E)$ is an almost-triangulated plane graph and $\gamma : V \rightarrow \NN$. It verifies $\gamma(v) = \dd^\circ(v)$ except for the vertices of at most one face, which has to be the non-triangular face if any. This face is called the outer face, and its vertices verify $\gamma(v) > \dd^\circ(v)$.
\label{def-configuration}
\end{deff}

\begin{figure}[h]
\centering

\includegraphics[scale=0.4]{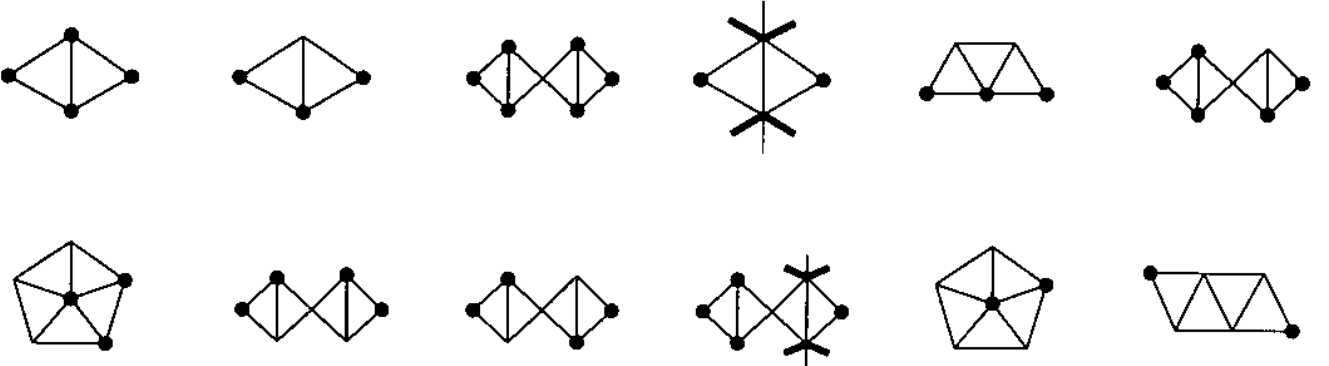}
\caption{An example of reducible configurations found by Robertson et al. The vertices represented by a black circle have a $\gamma$-value of 5, and the others a $\gamma$-value of 6.\\
    The figure comes from~\cite{4-col-paper}.}
\label{fig-reducible}
\end{figure}

We can see an example of configurations in \Cref{fig-reducible}. Each configuration is drawn such that its contour is the outer face. We can see that, in the second line, two configurations have a vertex which does not belong to the outer face. In both cases it has a $\gamma$ value of five, and its degree in the configuration is indeed five. The other vertices have a degree in the configurations at most equal to their $\gamma$ value.

\begin{deff}
A configuration $C = (H, \gamma)$ \textbf{appears} or \textbf{is contained} in a triangulated graph $G$ when:
\begin{itemize}[noitemsep, topsep=0pt]
\item $H$ is an induced subgraph of $G$;
\item except for the outer face, any face of $H$ is mapped to a face in $G$;
\item if $u_H \in H$ is mapped to $u_G \in G$ then $\dd^\circ(u_G) = \gamma(u_H)$: $\gamma$ represents the degrees of the vertices once mapped in a graph.
\end{itemize}
\end{deff}

\begin{deff}
When a configuration $C$ cannot appear in any minimal triangulated counter-example, we say that $C$ is a \textbf{forbidden configuration}, or that $C$ is \textbf{reducible}.
\end{deff}

\Cref{fig-reducible} gives an example of twelve reducible configurations from \cite{4-col-paper}. In that paper, great care is brought to the list of reducible configurations. Their configurations have stronger structural properties than in \Cref{def-configuration}. For instance, if removing a vertex splits the graph into several components then there are at most two of them and this vertex has exactly two other neighbours than the one appearing in the configurations (its $\gamma$-value is 2 more than its degree in the configuration). If a vertex is not incident to the region which is not a triangle, then its gamma value is its degree in the configuration. Some other properties are true for their list of reducible configurations, one of which saying that the sum of the $\gamma(v)-\dd^{\circ}(v)-1$ over some specific sets of vertices is at least two.

One way to show that a configuration $G = (H, \gamma)$ is reducible is to show that if a graph $G_0$ contains $G$ and is not four colourable, then there exists a graph $G'_0$, still not colourable, but with fewer vertices than $G_0$. For instance, $G'_0$ could be constructed from $G_0$ by removing $H$ and replacing it with a smaller subgraph. We describe a bit further how to detect this case, as well as another way to find that a configuration is reducible.\\

We consider a configuration $G$, almost-triangulated, and denote by $L$ the cycle defining its outer face. We call $L$ the \emph{crown} of the configuration. We generate the list $X_\mathrm{true}$ of all 4-colourings of $L$ which induce a proper colouring on $G$, i.e. all colourings pf $L$ which can be extended into a proper colouring of $G$. If $G$ appears in some counter-example, then every proper 4-colouring of $G$ is guaranteed to be non-extendible to the whole graph it would appear in, for otherwise the counter-example would have a 4-colouring. We study the restrictions of the colourings of $G$ to its crown because if $G$ appears in a counter-example, its crown is its "interface" with the other vertices. We now try to find a smaller configuration $G'$ which has the same crown bounding its outer face, with some property on its colourings. If, denoting the set of the restrictions of its proper 4-colourings to the crown $L$ by $X'_\mathrm{true}$, we find that $X'_\mathrm{true} \subseteq X_\mathrm{true}$, we deduce that $G$ is reducible. Indeed, let $G_0$ be a counter-example containing $G$. Let us replace $G$ by $G'$ in $G_0$ to obtain $G'_0$. $G'_0$ is smaller than $G_0$ and admits no 4-colourings. Indeed, any 4-colouring of $G'_0$ would induce a 4-colouring of $G_0$: this 4-colouring restricted to the crown $L$ would also be proper for $G$, hence a proper colouring for $G_0$. We would have achieved our goal: any time $G$ would appear in a counter-example, it could be replaced by $G'$, hence this would show its reducibility.

However, having to enumerate all smaller configurations with the same crown (and its 4-colourings inducing a proper 4-colouring of $G$!) is very costly in time. Also, this technique was not efficient enough for the proof of Robertson et al.: it misses a lot of reducible configurations. Indeed, when examining a configuration, it leaves aside the completions of the configuration into a plane graph. It may be possible to show that some configuration forces any graph extending it to be four colourable. There are indeed others techniques to show that a configuration is reducible: in fact the researchers who worked on the four-colour conjecture categorised the reducibility property into several classes. One of them, used in the proof of Robertson et al. is the \emph{D-reducibility}. It once again uses an argument based on Kempe chains. Let again $G$ be a configurations which we want to show is reducible. Let us call $X_\mathrm{false}$ the list of the 4-colourings of the crown $L$ of $G$ which do not induce a proper 4-colouring of a configuration. Let us assume that $G$ appears in $G_0$. We then know that $G' = (G_0 \setminus G) \cup L$ is four colourable, because it is smaller than $G_0$. Let us call $X'_\mathrm{true}$ the list of the 4-colourings of $L$ which induce a proper 4-colouring of $G'$. $G$ appears in $G_0$ and $G_0$ is not four colourable, therefore like previously $X'_\mathrm{true} \subseteq X_\mathrm{false}$. Our goal is to show that in fact $X'_\mathrm{true} = \varnothing$, which is a contradiction: $G'$, being smaller than $G_0$, is four colourable. 
To do so, we try to find a maximal \textit{consistent} set $X'_\mathrm{C}$ of allowed 4-colourings for $G'$. We begin by setting $X'_\mathrm{C} = X_\mathrm{false}$. Then, for each colouring in $X'_\mathrm{C}$, we first look for other colourings of the crown which should be allowed for $G'$ based on Kempe interchange arguments. If, for some colouring $c \in X'_\mathrm{C}$, one such colouring $c_1$ is not allowed, i.e. not in  $X'_\mathrm{C}$ then this implies that $c$ is in fact not allowed either. This is what is called the consistency of a set: if a colouring is allowed, but not some colourings obtained by Kempe interchanges, then the set of allowed colourings is inconsistent. Removing $c$ may in turn lead to some other removals of possibly valid colourings. We iterate this process until the set $X'_\mathrm{C}$ stabilises. If at the end $X'_\mathrm{C} = \varnothing$, this means that the configuration is D-reducible. We do not dive into the details of how to deduce which colourings should also be valid, given that some $c$ belongs to the current $X'_\mathrm{C}$. The algorithm 
to compute these is quite complex. In our code, the search for a maximal consistent set of colourings included in some $X_\mathrm{false}$ is done by the function \texttt{get\_maximal\_consistent\_colouring\_subset}.

We may notice that we can combine this method with the previous one, when we enumerated smaller subgraphs. Indeed, we looked for some smaller configuration $G'$ such that $X'_\mathrm{true} \subseteq X_\mathrm{true}$, or, equivalently, such that $X_\mathrm{false} \subseteq X'_\mathrm{false}$. This condition is more frequently met if we reduce the set $X_\mathrm{false}$ to a maximal consistent subset. Robertson et al. used this D-reducibility notion, but they also used another type of reducibility (the C-reducibility) which we do not define here.

\subsection{The discharging method}
\label{section-discharging}
We detail here more the idea of discharging, mentioned in \Cref{5-min-proof}\footnote{We hope we were not too efficient so that there are still things to learn or understand here.}. It helps us realise that the class of minimal counter-examples is empty. Discharging requires us to work on \textit{weighted} graphs.

\begin{deff}
A \textbf{weighted} graph is a couple $(G,w)$ where $G=(V,E)$ is a graph and $w:V \mapsto \ZZ$ is the weight function.
\end{deff}

The weight function assigns a weight to each vertex of the graph. The discharging method will consist in locally moving parts of the weights between certain vertices and some of their neighbours whenever certain conditions are met. Note that here we only put weights on vertices, but it is possible to also put weights on edges and faces.

\begin{deff}
A discharging rule is a quadruplet ($F$, $u$, $v$, $q$): $F$ is a configuration, $u$ (the \textbf{source}) and $v$ (the \textbf{sink}) are vertices of $F$, and $q$ is the weight of the rule.
\end{deff}

\begin{deff}
Applying rule ($F$, $u$, $v$, $q$) to a weighted graph $(G,w)$ results in the weighted graph $(G,w')$ where for all $u \in V$, $w'(u) = w(u) + (a-b)q$ if, in $G$, $F$ appears in $a$ different times with $u$ as a sink and $b$ different times with $u$ as a source.
\end{deff}

Applying a rule consists in, each time $F$ appears in $G$ such that $u \in F$ is matched with $u' \in G$ and $v \in F$ is matched with $v' \in G$, transferring a weight of $q$ from $u'$ to $v'$.

\begin{rk}
Note that, like for weighting a graph, we put weights on the vertices, but other uses of the discharging method can also weight edges and faces. In this case, the discharging rules would also transfer weights between vertices, edges and faces.
\end{rk}

\begin{fact}
\label{fact1-discharging}
Applying any discharging rule to a graph does not change its total weight.
\end{fact}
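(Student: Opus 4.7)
The plan is to compute the new total weight $\sum_{x \in V} w'(x)$ directly from the definition of rule application and show that the correction term vanishes by a double-counting argument.

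First, I would unfold the definition. For each vertex $x \in V$, let $a_x$ be the number of appearances of $F$ in $G$ in which $x$ is matched with the sink $v$, and let $b_x$ be the corresponding count for $x$ matched with the source $u$. Then by definition $w'(x) = w(x) + (a_x - b_x) q$, so summing over all vertices gives
\[
\sum_{x \in V} w'(x) \;=\; \sum_{x \in V} w(x) \;+\; q \left( \sum_{x \in V} a_x \;-\; \sum_{x \in V} b_x \right).
\]
It thus suffices to show that $\sum_x a_x = \sum_x b_x$.

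The key step is a double counting. Let $N$ be the total number of appearances of the configuration $F$ in the graph $G$. Each such appearance is a mapping of $F$ into $G$; in particular it sends the distinguished source vertex $u$ of $F$ to exactly one vertex of $G$, and the distinguished sink vertex $v$ of $F$ to exactly one vertex of $G$. Grouping the appearances according to the image of the sink gives $\sum_{x \in V} a_x = N$, and grouping them according to the image of the source gives $\sum_{x \in V} b_x = N$. Hence the two sums agree and the parenthesised correction term above is zero.

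I do not expect any real obstacle here: the only subtlety is to be careful that an appearance of $F$ may have its source and sink mapped to the same vertex $x$, in which case that appearance contributes $+q$ and $-q$ to $x$'s weight and so cancels out locally as well. This does not affect the global argument since such an appearance still contributes exactly $1$ to both $\sum_x a_x$ and $\sum_x b_x$. Combining, $\sum_x w'(x) = \sum_x w(x)$, which is exactly the claim.
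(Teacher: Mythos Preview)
Your proof is correct and is exactly the natural double-counting argument one would give. The paper itself does not spell out a proof of this fact at all: it simply declares both Fact~\ref{fact1-discharging} and Fact~\ref{fact2-discharging} to be ``trivial'' and moves on, so your write-up is more detailed than what the paper provides.
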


\begin{fact}
\label{fact2-discharging}
If a graph has a positive total weight then it has a vertex of positive weight.
\end{fact}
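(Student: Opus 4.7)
The plan is to prove this by contraposition: instead of showing directly that a positive total weight forces a vertex of positive weight, I would show that if every vertex has non-positive weight then the total weight is non-positive.

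First I would unfold the definitions. The total weight of a weighted graph $(G, w)$ with $G = (V, E)$ is simply $W(G) = \sum_{v \in V} w(v)$, which is a finite sum since our graphs are finite (the Fact is implicitly about finite graphs, consistent with the setting of minimal counter-examples). Next I would assume, for contradiction, that every vertex $v \in V$ satisfies $w(v) \leq 0$. Then each term of the sum is non-positive, hence $W(G) = \sum_{v \in V} w(v) \leq 0$, contradicting the hypothesis that $W(G) > 0$.

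There is no real obstacle here: the statement is essentially the pigeonhole principle applied to signs, and the argument is a one-liner once the definitions are spelled out. The only thing worth being careful about is that the total weight is indeed the plain sum over vertices (as opposed to a weighted sum over vertices, edges and faces, which the remark following the definition of a discharging rule hints at). Since the current definition of a weighted graph only puts weights on vertices, the contrapositive argument above is sufficient and concludes the proof.
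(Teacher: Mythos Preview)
Your proof is correct and matches the paper's treatment: the paper simply declares this fact (together with the preceding one) to be trivial and gives no further argument, and your contrapositive is exactly the one-line justification one would supply.
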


These two facts are trivial. Now, let us assume that the class of minimal counter-examples to the four-colour theorem is not empty. \textbf{Let $\bm{G_0}$ be a minimal counter-example.} We assign the weights in $G_0$ in the following way:
\[ w(v) = 10(6-\dd^{\circ}(v)).\]

\begin{claim}
If $G$ is triangulated, then the sum of the weights of $(G,w)$ is 120.
\end{claim}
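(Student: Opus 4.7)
The plan is to use Euler's formula together with the triangulation hypothesis, exactly as in the proof that every triangulated planar graph has a vertex of degree less than 6 (which appeared earlier in the excerpt). Writing $n = |V|$ and $m = |E|$, the sum to compute is
\[
\sum_{v \in V} w(v) \;=\; \sum_{v \in V} 10(6 - \dd^{\circ}(v)) \;=\; 60n - 10\sum_{v \in V} \dd^{\circ}(v) \;=\; 60n - 20m,
\]
using the handshake identity $\sum_v \dd^{\circ}(v) = 2m$. So the task reduces to showing $60n - 20m = 120$, i.e. $m = 3n - 6$.

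The key step is then purely combinatorial: invoke Euler's formula $n - m + f = 2$ for a connected plane graph, where $f$ denotes the number of faces. Because $G$ is triangulated, every face (including the outer face) is bounded by exactly three edges, and every edge bounds exactly two faces; summing this double-count gives $3f = 2m$, whence $f = 2m/3$. Substituting into Euler's formula yields $n - m/3 = 2$, i.e. $m = 3n - 6$. Plugging this back gives $60n - 20(3n-6) = 120$, as desired.

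The only potential subtlety to address briefly is that Euler's formula applies to a connected plane graph; but a triangulated plane graph is automatically connected (each face is a 3-cycle, so the whole graph is the union of triangles glued along edges, and isolated vertices or disconnected pieces would leave unbounded, non-triangular regions). Thus no real obstacle arises, and the computation above is the whole proof.
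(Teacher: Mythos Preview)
Your proof is correct and uses essentially the same approach as the paper: Euler's formula, the handshake identity, and the triangulation relation $3f=2m$. The only cosmetic difference is that you eliminate $f$ first to obtain $m=3n-6$ and then substitute, whereas the paper eliminates $n$ first; both are one-line algebraic rearrangements of the same computation.
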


\begin{proof}
To show this claim, we use again Euler's formula for planar graphs: $n-m+f=2$ where $n,m$ and $f$ are respectively the number of vertices, edges, and faces of our graph.

Since our graph is triangulated, we have $3f = 2m$: each face has three edges, but each edge is shared by two faces. Now:
\begin{align}
\sum_{v \in V}{w(v)} = \sum_{v \in V}{10(6-\dd^{\circ}(v))} = 10(\sum_{v \in V}{6} - \sum_{v \in V}{\dd^{\circ}(v)}) &= 10(6n - 2m)\\ &= 10(6m-6f+12-2m)\\ &= 10(12+4m-6f)\\ &= 120.
\end{align}
(1.2) comes from Euler's formula and (1.4) from the relations between $f$ and $m$ we mentioned above.
\end{proof}

\begin{figure}[H]
\centering

\includegraphics[scale=0.9]{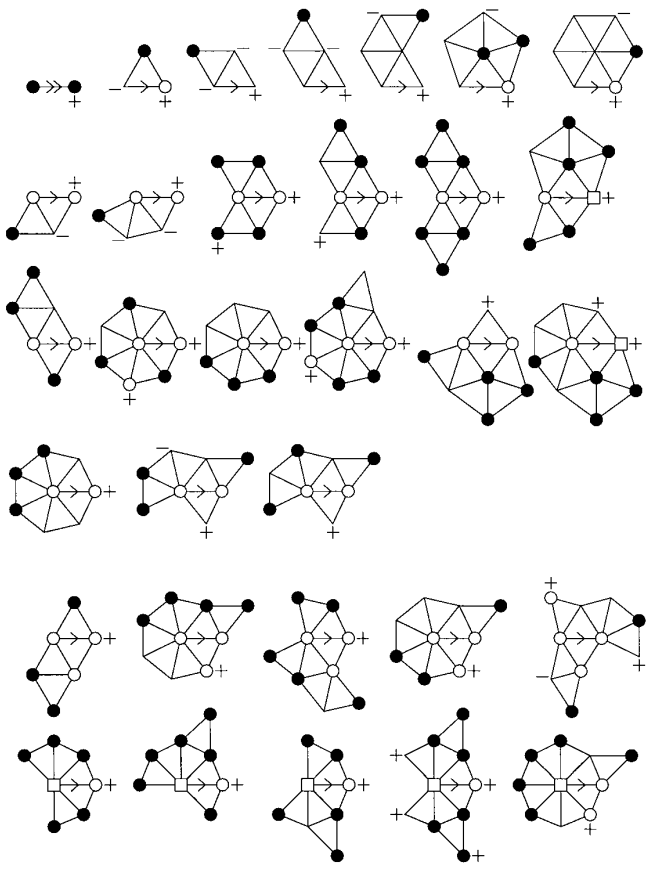}
\caption{The 32 rules used in the proof by Robertson et al. The degrees are given by symbols: black circles, points, white circles, white squares and white triangles respectively represent integers from 5 to 9.
A '-' (resp. '+') sign as exponent means "degree at most" (resp. "degree at least"). All rules have weight 1 except for the first one. The arc in the configuration originates at the source of the rule and ends at its sink.\\
    The figure comes from~\cite{4-col-paper}.}
\label{fig-rules}
\end{figure}

Since $G_0$ has a total weight of 120, and from Facts 1.2 and 1.3, 
\textbf{there is a vertex which has a positive weight after \uline{applying all the discharging rules once}}. \textbf{We name it $\bm{v_0}$}. Our goal will be to show that it cannot exist: any neighbourhood for $v_0$ will necessarily contain reducible configurations. We will enumerate the possible neighbourhood with the help of a computer. This enumeration is detailed in \Cref{discharging-algo}. We show in \Cref{fig-rules} the 32 discharging rules of the proof of Robertson et al.

\section{Contribution}
We began by proving again the four-colour theorem, using the data (the lists of forbidden subgraphs and discharging rules) from the proof of Robertson et al.~\cite{4-col-paper}. We explain here the different parts of the program and some of the algorithms we implemented. A modification of this program is released with a Python interface. It can be included with the Sage 
software, a free mathematical assistant. It enables anyone to try to prove a result using the discharging method: the user has to provide the forbidden subgraphs and discharging rules for their problem, as well as a weight function. Then our program tries to prove that no plane graphs avoiding all the forbidden subgraphs exist. We provide some flexibility to the user since they can provide a python function which, given a partial current graph, decide whether or not it should be discarded: some properties may be better coded than expressed in terms of forbidden subgraphs.

\subsection{The program}
\label{discharging-algo}
We describe here several parts of the program, like the scheme of the branch-and-bound enumeration of the possible neighbourhoods for $v_0$. We recall that $G_0$ is a minimal counter-example we assume the existence, and $v_0$ is one of its vertex which has positive weight after applying all the rules once.

\paragraph{The enumeration.\\}
We start the search for $G_0$ with $v_0$: it is the first vertex we build. We then explore the possible neighbourhoods for $v_0$. In the exploration, we can do several things: add vertices and edges, or choosing the final degree of a vertex (at first, they have some degree interval). By doing so, we will indeed explore every possible neighbourhood for $v_0$, until building $G_0$ or showing that $G_0$ does not exist\footnote{The course of History has taught us that it is this option which happens.}.

Let us describe the algorithm with more details. We first start with the vertex $v_0$, or in fact a triangle containing $v_0$ since $G_0$ is triangulated. The degree of each vertex of the triangle has degree between 5 and 12, except for $v_0$ which has degree between 6 and 11. These restrictions were shown by Robertson et al. (we already showed each degree must be at least 5).

Let us assume that we are at some step with a partial neighbourhood of $v_0$ we call $G$. We have two options. We may choose a vertex $u$ whose current degree lies between $k_1$ and $k_2$ with $k_1 < k_2$. We subdivide its degree interval into two smaller ones: $G'$ derived from $G$ by fixing the degree of $u$ to be $k_1$ and $G''$ in which the degree of $u$ is between $k_1+1$ and $k_2$. We will continue the exploration first with $G'$, then with $G''$. The second option is to create a new edge from a vertex $u$ whose degree interval is not $\llbracket 0; 0 \rrbracket$. This edge may create a new vertex, or its other endpoint may be an existing vertex. In both cases, we create every free triangle we can: if a vertex has degree one, we know it will belong to a triangle we can describe.

Sometimes creating free triangles may fail: a triangle cannot be constructed because a vertex cannot accept new neighbours for instance. When this occurs, we may discard our current graph and backtrack. Before calling recursively our exploration function on a graph $G'$, we do some checks. If $G'$ contains a reducible configuration, we may also discard it. Finally, we also apply all the discharging rules which involve $v_0$ and obtain an interval for the final weight of $v_0$. For instance, for the upper bound we apply the rules we are sure apply, and apply also the rules which may apply and contribute to increasing the weight of $v_0$. If the upper bound is non-positive, this is a contradiction\footnote{We chose $v_0$ such that its weight is positive after applying the rules.} and we may also discard $G'$ and backtrack. Besides, we may use the information about some configurations which almost appeared, or some rule which almost applied to choose how to expand our current graph or which degree to refine. Good heuristics for this choice lead to reducing the number of graphs we explore, hence reducing the running time.

\paragraph{Storing plane graphs.\\}
We manipulate partial plane graphs all the time in the program so they must be stored efficiently for our uses. First, before storing them we had to read the reducible configurations and rules from Robertson et al. The way the reducible configurations are encoded is available at \url{http://people.math.gatech.edu/~thomas/FC/ftpinfo.html}. To parse them requires a good comprehension of their proof. Each one is given as the coordinates of an embedding of the configurations. The format of the rules is described in~\cite{rules-col-paper} and, is also not that handy. Our program uses a different format which we find easier and more convenient. Since a plane graph is completely determined by the \textbf{directed} list of the edges of each vertex, we decided to store a graph as the directed adjacency lists of each of its vertices. We chose the counter-clockwise direction in our program. As we mentioned, each half-edge contains the information about the minimum and maximum number of future edges between itself and the next known half-edge of the vertex.

Concerning the storing of the graphs properly speaking, it essentially boils down to a doubly-linked list of "half-edges" for each vertex. We call them half-edges because an edge $\{a,b\}$ is both stored for $a$ and for $b$. Also, any edge defines an angle: the interval giving the number of other edges between this edge and the next known edge of the vertex. This means that the half-edge joining $a$ to $b$ and the one joining $b$ to $a$ are not the same: they define different angles. For easy triangle making and detection we also store, for each half-edge, a link to its other half.

\begin{figure}[h!]

\includegraphics[scale=0.5]{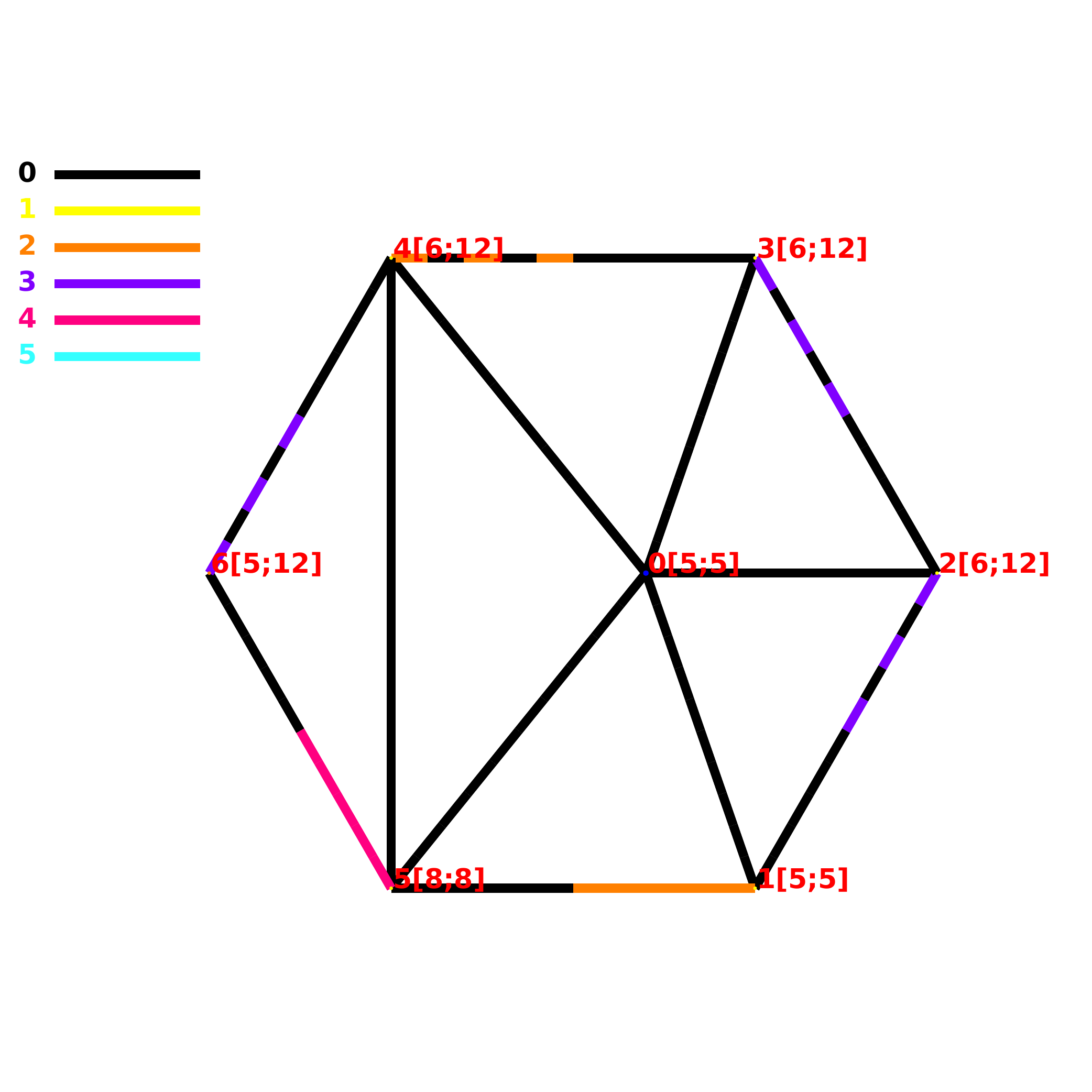}
\centering    
    \vspace*{-1.2cm}
\caption{A screenshot of our program showing a partial graph during the exploration. The colours are indicated on the top left corner. The label of each vertex is of the form "$a[b;c]$": vertex $a$ has degree between $b$ and $c$. Each edge is split in two in the middle: the two different half-edges. The colour of the half-edges indicated the degree interval of the angle until the next edge (in counter-clockwise direction). If a half-edge has two colours, the smallest one is the lower bound on the degree of the angle, the other one is the upper bound.}
\label{screen-prog-discharging}
\end{figure}

\paragraph{Detecting if a configuration appears.\\}
To detect when reducible configurations appear and which rules apply, we need some subgraph detection routine. This is called the induced subgraph isomorphism problem, and is known to be very hard for general graphs. However, here it is on plane graphs that we want to detect some type of isomorphism, which modifies the rules of the problem, and make it polynomial.

Indeed, let us assume that $G$ and $C$ are plane graphs and not configurations: each vertex has a degree instead of a degree interval, and all vertices and edges are known. We want to know if, or even (for the discharging rules) where $C$ appears in $G$.We may choose an edge $e_C$ and try to match it to every edge $e_G$. Now, to verify if one matching of this edge extends to an isomorphism between $C$ and $G$, we may do a graph traversal by walking from faces to faces. We are forced to map the faces which contain $e_C$ to the faces which contain $e_G$. There are two faces, and as soon as one is chosen, then the rest of the isomorphism is forced: there is always one choice (or 0 if $G$ does not contain $C$) for matching every other vertex and edge.

Yet, here we do not test for exact isomorphism because in the rules and reducible configurations, vertices of the crowns do not have every neighbour instantiated. This happens a bit in $C$, and more in $G$. We precise here how a configuration appears in such "partial" graphs $G$ we manipulate here: a configuration appears if whatever the completion of $G$ into any $G_\text{final}$, $C$ appears in $G_\text{final}$. Due to this problem we sometimes cannot conclude: $C$ might for instance be a subgraph but we cannot decide yet if it will be an induced one.

In order to optimise the induced subgraph detection we may first look at the maximum and minimum degree of $C$: each of these two vertices must be assigned a vertex in $G$ with the same degree. After this, we may look at the list of degrees in $C$ and in $G$. By choosing to match first an edge belonging to a vertex of $G$ whose degree appears the least often in the list of degrees of $G$, we may reduce the running time of the check. Indeed, ideally the vertex would only be matched to a single vertex of $G$, if it is alone to verify the degree condition.

\paragraph{A Python/Sage library to use the discharging method.\\}
We decided, since we had programmed it, to let the community use our engine for discharging. We created a Python interface to our program. The user has to provide a file containing the reducible configurations, and a file containing the discharging rules\footnote{for the moment, only with weights on vertices}. They must also provide a function to compute the weights of the vertices. They can provide a python function which has access to a current neighbourhood in the exploration, and decides whether it should be discarded based on some other properties. The program then enumerate the possible neighbourhoods of a vertex which should have a positive weight after applying the rules. It can print the neighbourhoods which could not be discarded (see \Cref{screen-prog-discharging}).\\

Making this program available to Sage meant porting it to Python. Instead of writing the code from scratch in Python, which would in addition make it slower, we decided to interface it with Python. We used the Boost.Python library, which allows us to expose some C++ functions to a python program. It creates a Python module including the specified functions. To do so, we needed to add a bit of code: tell Boost.Python how to translate the functions, i.e. what types they should have. Some problems occurred during this translation, in particular with C++ functions using some C++ features. We sometimes had to define a new C++ function to wrap a problematic one into something with a simpler header (the types of the arguments of the function and its return type). For instance, no functions exposed to Python could contain pointers.

We also said that the user could provide a function to check for properties easier to check with some code than by forbidding graphs. They also need to provide a function to compute the weight of a vertex. This means the C++ code\footnote{called from a python code} should be able to execute a Python function\footnote{If you followed carefully, it means that a python code would execute (some C++ code which executes) some Python code!}. This was harder than running C++ code from Python. We managed to do it with the same Boost.Python library, which provided some "python object" type, which could be anything passed by Python to the C++ program. It also provides functions to specify this object, that is for instance specify its type, and if it is a function. In the latter case, it is possible to run the Python function and retrieve its output. After some time testing how all this was working, we achieved our goal. A good thing is that the modifications involved adding a bit more code but are rather transparent: the exploration function calls the function \texttt{compute\_weights} which can either be a C++ function (as we did ourselves for the four-colour theorem) or a Python function. Our exploration function contains a unique call to the \texttt{compute\_weights} function, which means that the syntax of the calls was not modified and that a C++ function and a Python function can be unified in a single C++ Boost.Python type.

\subsection{A possible more automated approach}
The original goal which motivated us to study and redo the proof of the four-colour theorem was to automatise the proof a bit further. The set of reducible configurations was generated by a computer, so this part is fine. The set of discharging rules, however, was found by hand, by error and trial. This part is somewhat tedious because, as we will see, it is repetitive. It is the part we wanted to improve by making it more automatic, so this section gives ideas on how achieve this.

As we have just written, looking for discharging rules is repetitive. We start with a limited set of discharging rules, and run the program. If it seems not to terminate, it is possible to look at some of the graphs it did not manage to discard, notably (for the four-colour theorem) the ones going further than a neighbourhood of distance two from the first vertex, which was supposed to be non-necessary. Then it is possible to devise some rules, and rerun the program again. Once more, if the program does not terminate in a reasonable amount of time, we can have a look at the graphs which were not discarded. This makes it possible to craft new rules, taking care of not going backwards on the ones which were introduced earlier: a graph which was discarded before should still be discarded after adding a new rule.

The objective was to find a way to generate automatically new discharging rules: the program would be the one to analyse the non-discarded graphs, and find some "optimal" rules. A rule would for instance be optimal if it makes it possible to discard "as many graphs as possible". One obvious way to generate rules would be to enumerate a certain number of "small" configurations and generate rules by defining sources and sinks. Then we could replay the beginning of the run with each rule tested as the new one to be adopted. The one which would lead to discarding the most graphs would be kept.

Another idea could be to try to optimise an existing set of rules: we assume we already have a few rules. We may discard more graphs without adding a new rule by modifying the weight of each rule. A linear program may model our needs: given the list of graphs we have explored so far (the ones we already discarded and some others we would like to discard), we may associate to each one an equation. Let us consider a graph $G$ and $x_{i, G}$ be the net effect of Rule $i$ on our vertex $v_0$ in $G$. If Rule $i$ leads to transferring 4 towards $v_0$ and 7 from $v_0$, then its net effect would be $4-7 = -3$. Now we obtain the condition, for each $G$:
\[w(v_0) + \sum_{i}{x_{i,G} \cdot w_i} \leq 0.\]
The $w_i$'s are proper to the rules, independent from the explored graphs. If solving the linear program shows that some solution exists, it means that with the same set of rules, only by tweaking the coefficients, we may discard more graphs.

We did not investigate much the possible methods to automate the search for good discharging rules. However, finding some automatic ways to obtain them would be a big step forward in the field of automatic proofs for planar graphs. It would make it a lot faster and easier to try the discharging method to attack some other problems on planar graphs.

\resetlinenumber
\chapter{The domination numbers in grid graphs}
\label{domination-chapter}

Societies have always (or for long, at least) seen a group of dominant people emerge and set the rules. Nowadays, it is both publicly said by some people and conveyed by some media that some groups dominate the majority of people. This is refuted or ignored by many in the dominating groups, and little is done to correct the state of things. Such groups may include (very) rich people and countries, men, cisgender\footnote{a person whose (social) gender is the same as the biological sex which was assigned to them at their birth} people, valid\footnote{not disabled} people, white people and so on. Let us state a trivial thing: not all of these people, who enjoy some privileges due to the colour of their skin, their gender or other factors, consciously oppress women or some minorities. Not every person in these groups takes part in the wrongdoings, but many do or take part in the system, which is oppressing. Thus the overall behaviour of these groups is bad. Well, yes, a lot of women contribute to sexism, because society framed them into doing so.  But guess what? Women suffer from this problem which comes from, and is mostly maintained by men: the ones who refuse to see the problem, and those who do nothing to try to fix this state of things. The dominating groups come from all sort of problems such as patriarchy and sexism, queerphobia\footnote{including, but not limited to transphobia and homophobia: (when) did you learn about asexuality and/or stopped thinking this was a disease or a problem?}, disabilism, and a lot of other systemic discriminations. For instance, society leads to the invisibility of disabled people in society: how many times did you watch a sport played by disabled people, either on TV or in real life, or read about it somewhere? Do you realise that in some cities, some disabled people cannot take the underground or the bus, or go to some buildings, because they have mandatory stairs? In Paris for instance, only one line and a dozen stations (out of 303) of the underground are accessible to people in a wheelchair. Patriarchy and sexism also induce a lot of problems: in France women who have a job are four times as likely as men to have a part-time job. They earn overall 18.5\% less than men, 16.3\% less than men when restricting to full-time employees, and they are still paid 12.8\% less than men when considering equivalent positions. They also compose only 23\% of the people in the French parliament. In addition to this, there were 149 people killed in domestic violence in France in 2018: 128 of them were women. 25\% of women between 20 and 65 years declared having suffered some violence in a public place during the last year, and this rate jumps to more than 60\% if we consider women between 20 and 24 years. Among these violences, 1 million women declare having, over the last year, suffered harassment or sexual harassment. During the year 2018 in France, 1905 acts of LGBTQ-phobia\footnote{LGBT stands for Lesbian, Bisexual, Transgender, Queer. The acronym designates the union of people in these non-exclusive sets.} have been recorded, among which 231 involving physical violence. These numbers keep increasing and a poll suggested that only 27\% of the victims of physical violence report it to the police. Two thirds of LGBT people have at some point avoided holding the hands or kissing their companion and 12\% have considered moving to another city to avoid being harassed or assaulted. This means that a majority of LGBT people are denied the right of walking freely in the street without fearing for themselves. \\

However, we study here some forms of domination which makes no one suffer... except maybe the people studying these problems. Indeed, the domination number problem, that is the problem of finding the minimum size of a dominating set\footnote{or to be more formal, the decision problem associated to this optimisation problem} is a NP-complete problem for general graphs. This informally means that if we ask, "Does $G$ admit a dominating set of size less than or equal to $k$?", then it is easy to check that one solution we are given is indeed of size at most $k$ and dominating (the problem is in NP); however we do not know any systematic and "fast" algorithmic way  to prove the "no" answer (the problem is NP-hard: it is as hard as other NP-complete problems).

The basic domination problem consists in selecting a set of vertices in a graph such that any other vertex has a neighbour in that set. As we will see, many variants of this problem exist. These problems can be used to model optimising problems arising in real life, as the Roman domination problem illustrates: it is said to have been used as a model by the Romans to defend their territory. The domination problem can also be used in other contexts, such as some public services: where to put hospitals, fire stations, and other critical places such that every person in a country can benefit from it.

The domination number problem is one of many problems which are hard for general graphs, but are easy to solve for graphs of bounded treewidth. Indeed, Courcelle~\cite{courcelle} showed in 1990 that a particular class of properties, the one being expressible in monadic second-order logic, are decidable in linear time on the class of bounded-treewidth graphs. The treewidth can be understood, intuitively, as a measure of how much a graph "looks like" a tree. The grids are among the simplest graphs which neither have a bounded treewidth nor a bounded cliquewidth (another graph parameter), and for which these kinds of problems are usually difficult to tackle.

The first values (for a number of lines $n \in \{2,3,4\}$) of the domination number in grids were discovered by Jacobson and Kinch in 1983. Then, ten years later Chang and Clark found the ones for 5 and 6 lines. All these results were found without using any computer. Also in 1993, Chang~\cite{chang} conjectured that the domination number for a grid graph of arbitrary size, that is the minimum size of a dominating set, was $\gamma(G_{n,m}) = \left\lceil \frac{(n+2)(m+2)}{5}\right\rceil-4$. He also showed that this was actually an upper bound. Fisher~\cite{fisher}, using computer resources, found the values for $n \leq 19$ and showed that these values were conform to the conjecture. He also found a method to detect and prove the periodicity of the domination number, so as to establish formulas for a fixed number of lines and arbitary number of columns. After a few papers by a few other people, Gonçalves et al.~\cite{rao} finally proved Chang's conjecture in 2011 by showing that his formula was also a lower bound, improving a bound by Guichard~\cite{guichard}.
After 2011, several papers gave formulas for small number of lines for various domination problems. Several generalisations of the domination problem have also been studied in the literature (see for example~\cite{bon}).

The goal of this chapter is to solve the 2-domination and Roman domination problems. We achieve this by giving closed formulas computing the minimum cost of the respective dominating sets for any size of grids as in~\cite{rao-talon}. The formulas we give are simple: they involve multiplications, additions, division and rounding to lower or upper integer. This shows that, like the domination number, the 2-domination and the Roman domination numbers problems are solvable in constant time in grids.\\

This chapter is organised as follows. After giving some definitions in \Cref{section-def-dom}, we will explain in \Cref{section-dom-method} how to solve the 2-domination problem on grid graphs: first on fixed-height grids and then on arbitrary grids, relying on the notion of \emph{loss}. We also use the \textit{Rauzy graphs} to give some complexity information. In \Cref{section-other-problems-dom}, we will study the distance-two domination problem and the total domination problem, but are only able to give formulas for grids of small number of lines. We also give a lower bound for the total loss when both the number of lines and the one of columns are arbitrary. We continue by giving in \Cref{section-conj-dom} some insight on why the method for arbitrary-size grids what we believe the method works in some cases and seems not to in others. We define some properties like the \emph{fixed-height-border-fixing} one which we conjecture explains when the method for arbitrary grids works. In \Cref{section-experimental-dom}, we finally explain some of the optimisations we made and give some implementation details and statistics.

\section{Basic definitions and notations}
\label{section-def-dom}
We define here formally the different types of problems we will study in this chapter. We list them by increasing complexity.

Since we work with two dimensions in this chapter and the following, we will try to be coherent all along. There may be some differences on the order of the indices between these chapters and the associated papers. In what follows, $n$ will always be the number of lines and $m$ the number of columns. When speaking of a grid $G_{n,m}$ or the domination number $\gamma(n,m)$ we first put the number of lines. When using coordinates, we will use the standard order $(x,y)$. The indices will be $j$ when referring to the columns numbers, and $i$ when we refer to the lines indices.

In this thesis, the $x$-values are increasing from left to right, and the $y$-values are when going from top to bottom. In this chapter, the indices and coordinates will always begin at 0 (and not 1), and $(0,0)$ are the coordinates of the top-left cell of a graph or rectangle. 

We denote by $G_{n,m}$ the grid graph with $n$ lines and $m$ columns. In the illustrations to come, the vertices of a grid will be its cells (and not the intersections of the lines).

\begin{deff}
A set $S$ of vertices of $G$ is \textbf{dominating} when any vertex not in $S$ has at least one neighbour in $S$.
\label{def-dominating-set}
\end{deff}

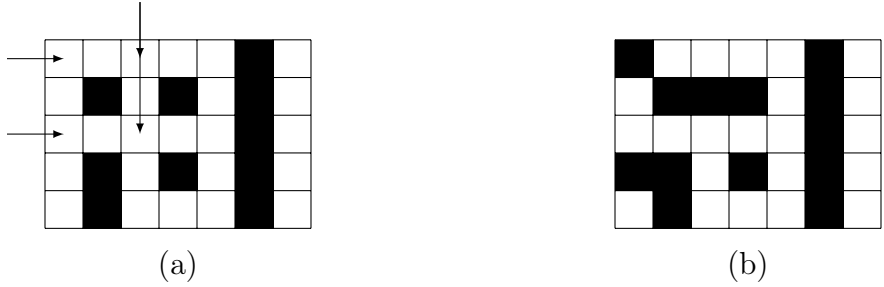
\begin{figure}[h!]
\centering
\begin{tikzpicture}[scale=0.5]
\begin{scope}

\fill (1,0) rectangle (2,2);
\fill (1,3) rectangle (2,4);
\fill (3,3) rectangle (4,4);
\fill (3,1) rectangle (4,2);
\fill (5,0) rectangle (6,5);

\draw (0,0) grid (7,5);

\draw[-latex] (-1,2.5) -- (0.5, 2.5);
\draw[-latex] (-1,4.5) -- (0.5, 4.5);
\draw[-latex] (2.5,6) -- (2.5, 4.5);
\draw[-latex] (2.5,6) -- (2.5, 2.5);

\node at (3.5,-1) {(a)};
\end{scope}

\begin{scope}[xshift=15cm]

\fill (1,0) rectangle (2,2);
\fill (1,3) rectangle (2,4);
\fill (2,3) rectangle (3,4);
\fill (3,3) rectangle (4,4);
\fill (3,1) rectangle (4,2);
\fill (5,0) rectangle (6,5);
\fill (0,4) rectangle (1,5);
\fill (0,1) rectangle (1,2);
\draw (0,0) grid (7,5);

\node at (3.5,-1) {(b)};
\end{scope}

\end{tikzpicture}

\caption{Illustration of a dominating set on $G_{5,7}$:\\
(a) the cells pointed by arrows are not dominated;\\
(b) the set of black cells is dominating.}
\label{fig-dom-intro}
\end{figure}

\begin{deff}
A set $S$ of vertices of $G$ is \textbf{2-dominating} when any vertex not in $S$ has at least two neighbours in $S$.
\end{deff}

\begin{figure}[h!]
\centering
\begin{tikzpicture}[scale=0.5]
\begin{scope}

\fill (1,0) rectangle (2,2);
\fill (1,3) rectangle (2,4);
\fill (3,3) rectangle (4,4);
\fill (3,0) rectangle (4,1);
\fill (4,0) rectangle (5,5);
\fill (6,0) rectangle (7,5);

\draw (0,0) grid (7,5);

\draw[-latex] (-1,2.5) -- (0.5, 2.5);
\draw[-latex] (-1,3.5) -- (0.5, 3.5);
\draw[-latex] (-1,4.5) -- (0.5, 4.5);
\draw[-latex] (-1,0.5) -- (0.5, 0.5);
\draw[-latex] (-1,1.5) -- (0.5, 1.5);

\draw[-latex] (1.5,6) -- (1.5, 4.5);
\draw[-latex] (2.5,-1) -- (2.5, 1.5);
\draw[-latex] (2.5,6) -- (2.5, 4.5);
\draw[-latex] (2.5,6) -- (2.5, 2.5);

\node at (3.5,-1) {(a)};
\end{scope}

\begin{scope}[xshift=15cm]

\fill (0,1) rectangle (1,2);
\fill (0,3) rectangle (1,4);
\fill (1,4) rectangle (2,5);
\fill (2,2) rectangle (3,3);
\fill (3,4) rectangle (4,5);

\fill (1,0) rectangle (2,2);
\fill (1,3) rectangle (2,4);
\fill (3,3) rectangle (4,4);
\fill (3,0) rectangle (4,1);
\fill (4,0) rectangle (5,5);
\fill (6,0) rectangle (7,5);

\draw (0,0) grid (7,5);
\node at (3.5,-1) {(b)};
\end{scope}

\end{tikzpicture}

\caption{Illustration of a 2-dominating set on $G_{5,7}$:\\
(a) the cells pointed by arrows are not 2-dominated: they have 0 or 1 black neighbour;\\
(b) the set of black cells is 2-dominating.}
\label{fig-2dom-intro}
\end{figure}
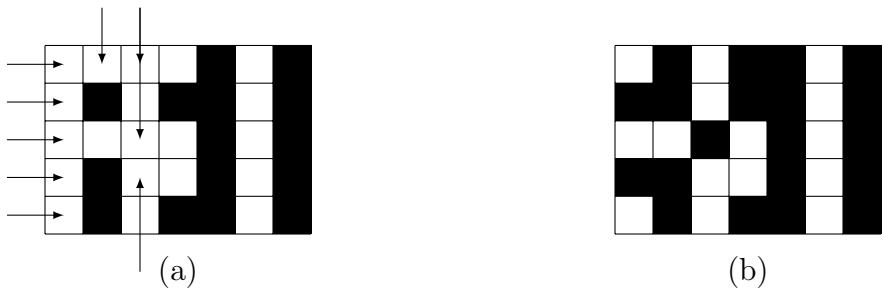

\begin{deff}
A set $S$ of vertices of $G$ is \textbf{distance-2 dominating}\footnote{As for the other types of domination, a point of English grammar arises. We write: "This child is six years old." but "This is a six-year-old child." From that we infer that while there is no need to hyphenate the group when following "to be", we have to when the whole group is an adjective which precedes a noun. We also hyphenate when necessary to avoid confusions.} when any vertex $v$ not in $S$ has a neighbour at distance at most two in $S$: either a neighbour of $v$ or a neighbour of one of $v$'s neighbours.
\end{deff}

\begin{figure}[h!]
\centering
\begin{tikzpicture}[scale=0.5]
\begin{scope}

\fill (1,0) rectangle (2,1);
\fill (1,5) rectangle (2,6);
\fill (2,2) rectangle (3,3);
\fill (5,2) rectangle (6,4);
\fill (6,0) rectangle (7,1);

\draw[-latex] (-1, 3.5) -- (0.5, 3.5);
\draw[-latex] (3.5, 7) -- (3.5, 4.5);
\draw[-latex] (4.5, 7) -- (4.5, 5.5);
\draw[-latex] (6.5, 7) -- (6.5, 5.5);
\draw[-latex] (7.5, 7) -- (7.5, 5.5);
\draw[-latex] (7.5, 7) -- (7.5, 4.5);

\draw (0,0) grid (8,6);

\node at (4,-1) {(a)};
\end{scope}

\begin{scope}[xshift=15cm]

\fill (1,0) rectangle (2,1);
\fill (1,5) rectangle (2,6);
\fill (2,2) rectangle (3,3);
\fill (5,2) rectangle (6,4);
\fill (6,0) rectangle (7,1);

\fill (2,3) rectangle (3,4);
\fill (7,3) rectangle (8,4);
\fill (6,5) rectangle (7,6);

\draw (0,0) grid (8,6);
\node at (4,-1) {(b)};
\end{scope}

\end{tikzpicture}

\caption{Illustration of a distance-2-dominating set on $G_{6,8}$:\\
(a) the cells pointed by arrows are not distance-2 dominated: the closest black cell is at distance at least three;\\
(b) the set of black cells is distance-2 dominating.}
\label{fig-dist-two-intro}
\end{figure}
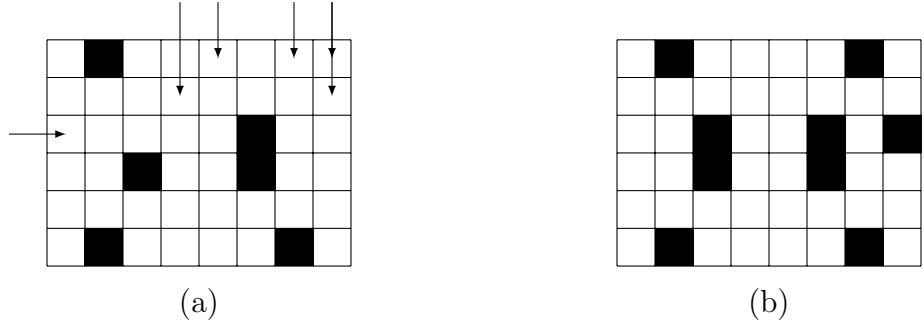

It is easy to note that for instance any 2-dominating set is dominating. Similarly, any dominating or 2-dominating set is distance-2 dominating.

\begin{deff}
A set $S$ of vertices of a graph $G$ is \textbf{total dominating} when any vertex $v \in G$ has at least one neighbour in $S$.
\label{def-total-dominating}
\end{deff}
Notice that, by contrast with the domination, even dominant vertices must have a neighbour which dominates them. In the case of grids, since there are no loops, a vertex is never its own neighbour, therefore each $v \in S$ must be connected to some vertex in $S \setminus \{v\}$.

\begin{figure}[h!]
\centering
\begin{tikzpicture}[scale=0.5]
\begin{scope}

\fill (1,0) rectangle (2,2);
\fill (1,3) rectangle (2,4);
\fill (2,3) rectangle (3,4);
\fill (3,3) rectangle (4,4);
\fill (3,1) rectangle (4,2);
\fill (5,0) rectangle (6,5);
\fill (0,4) rectangle (1,5);
\fill (0,1) rectangle (1,2);

\draw[-latex] (-1.5, 4.5) -- (0, 4.5);
\draw[-latex] (3.5, 6) -- (3.5, 2);
\draw (0,0) grid (7,5);

\node at (3.5,-1) {(b)};
\end{scope}

\begin{scope}[xshift=15cm]

\fill (1,0) rectangle (2,2);
\fill (1,3) rectangle (2,4);
\fill (2,3) rectangle (3,4);
\fill (3,3) rectangle (4,4);
\fill (3,1) rectangle (4,2);
\fill (5,0) rectangle (6,5);

\fill (0,1) rectangle (1,2);

\fill (3,0) rectangle (4,1);
\fill (0,3) rectangle (1,4);
\draw (0,0) grid (7,5);

\node at (3.5,-1) {(b)};
\end{scope}
\end{tikzpicture}

\caption{Illustration of a total dominating set on $G_{5,7}$:\\
(a) the set is not total dominated: the two cells pointed by arrows are not dominated;\\
(b) the set of black cells is total dominating.}
\label{fig-total-intro}
\end{figure}
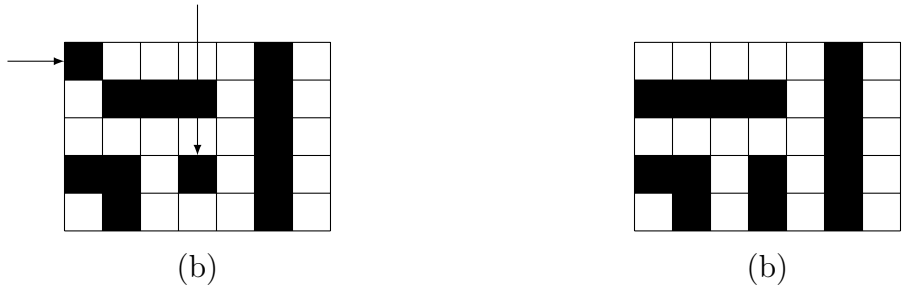

\begin{deff}
A \textbf{Roman-dominating "set"} is a pair $(S_1, S_2)$ such that every vertex $v \notin S_1\cup S_2$ has at least one neighbour in $S_2$.
\end{deff}
Informally, a Roman-dominating set consists in placing troops of soldiers on the vertices. We can either put no troops, one troop or two troops. A single troop can defend the vertex it is placed on while two troops placed on a vertex defend both it and its neighbours. The cost is the total number of troops.

\begin{figure}[h!]
\centering
\begin{tikzpicture}[scale=0.5]
\begin{scope}

\fill (0,0) rectangle (1,1);
\fill (0,4) rectangle (1,5);
\fill (2,1) rectangle (3,2);
\fill (2,3) rectangle (3,4);
\fill (4,0) rectangle (5,1);
\fill (5,2) rectangle (6,3);
\fill (6,0) rectangle (7,1);

\fill[gray] (0,2) rectangle (1,3);
\fill[gray] (3,2) rectangle (4,3);
\fill[gray] (3,4) rectangle (4,5);
\fill[gray] (6,4) rectangle (7,5);

\draw[-latex] (-1, 2.5) -- (1.5, 2.5);
\draw[-latex] (4.5, 6) -- (4.5, 4.5);
\draw[-latex] (4.5, 6) -- (4.5, 3.5);
\draw[-latex] (5.5, 6) -- (5.5, 4.5);
\draw[-latex] (6.5, 6) -- (6.5, 3.5);

\draw (0,0) grid (7,5);

\node at (3.5,-1) {(a)};
\end{scope}

\begin{scope}[xshift=15cm]

\fill (0,0) rectangle (1,1);
\fill (0,4) rectangle (1,5);
\fill (2,1) rectangle (3,2);
\fill (2,3) rectangle (3,4);
\fill (4,0) rectangle (5,1);
\fill (5,2) rectangle (6,3);
\fill (6,0) rectangle (7,1);

\fill (4,4) rectangle (5,5);

\fill[gray] (0,2) rectangle (1,3);
\fill[gray] (1,2) rectangle (2,3);
\fill[gray] (3,2) rectangle (4,3);
\fill[gray] (3,4) rectangle (4,5);
\fill (6,4) rectangle (7,5);

\draw (0,0) grid (7,5);
\node at (3.5,-1) {(b)};
\end{scope}

\end{tikzpicture}

\caption{Illustration of a Roman-dominating set on $G_{5,7}$ (ells with two troops are black and cells with one troop are grey):\\
(a) the cells pointed by arrows are not dominated.\\
(b) the set of black and grey cells is Roman dominating}
\label{fig-roman-intro}
\end{figure}

\begin{deff}
The cost of a dominating, 2-dominating or distance-2-dominating set $S$ is its size: $|S|$. For a Roman-dominating set, the cost is $|S_1|+2|S_2|$.
The \textbf{2-domination number} of a graph $G$, denoted by $\bm{\gamma_2(G)}$, is the \uline{minimum cost} of a 2-dominating set of $G$.\\
We define similarly the total-domination number $\bm{\gamma_\textbf{T}}$, the distance-2-domination number $\bm{\gamma_{\textbf{d2}}}$ and the Roman-domination number by $\bm{\gamma_\textbf{R}}$.\\
\end{deff}

Note that in the previous figures illustrating the various domination problems, the correct dominating sets are not necessarily of minimum size. For some, we may even trivially remove some vertices from the dominating set without breaking the domination property. For instance in \Cref{fig-roman-intro} one grey cell has a black cell neighbour. The grey cell can be removed because it is still dominated by its black neighbour and it does not dominate any cell.

\begin{notation}
For concision we will denote $\gamma_2(G_{n,m})$ by the shorter notation $\bm{\gamma_2(n,m)}$. The same applies to the other domination problems.
\end{notation}

\section{Method for finding the 2-domination number for grids of arbitrary size}
\label{section-dom-method}

Here, we set up a framework for finding the minimum size of dominating sets in a grid. We first explain how to proceed when the number of lines is fixed, then we introduce the notion of \emph{loss} to extend it, when possible, to grids of arbitrary height. The first method is only usable when the number of lines is small, because the number of objects we examine with our computer program becomes too big at some point. The second method begins to work when the number of lines is big enough, but does not always work, depending on some characteristics of the problem. Also, even if it should work on a problem, it may fail by lack of computational power. In fact, the number of lines for which the method works needs to be small enough to make it possible to run the method on a computer. We first explain one after the other the two methods, on the 2-domination problem (see~\cite{rao} for an alternate explanation of the loss method, applied to the domination problem). Both methods rely on the notions of \emph{states} and \emph{compatibility relations}, which are translated into transfer matrix products in the $(\min, +)$-algebra (it can also be viewed as a dynamic algorithm). The $\bm{(\min, +)}$\textbf{-algebra} consists in substituting, in the computations, the operator + by the operator $\min$ as well as replacing the multiplication by the addition. It is a standard method. The second method uses the more recent and more complex method of \emph{loss} introduced by Gonçalves et al.~\cite{rao}. We are the first ones to use this method to find other results. The optimisations we made and some details on the manner they were implemented will also be discussed.

Throughout the explanations of this section, we will prove the following theorem, which confirms the results found by \cite{lu-xu, shaheen} for $n \leq 4$ and slightly corrects the result by \cite{shaheen} for $n = 5$:

\begin{thm}[\cite{rao-talon}]
For all $1 \leq n \leq m$, the 2-domination number equals:\\
\[\gamma_2(n,m) =    \left\{
\setstretch{1.25}
\begin{array}{ll}
      \left\lceil\frac{m+1}{2}\right\rceil & \quad\textrm{if }n = 1 \\
      m & \quad\textrm{if }n=2 \\
      m+\left\lceil\frac{m}{3} \right\rceil & \quad\textrm{if } n=3 \\
      2m-\left\lfloor \frac{m}{4} \right\rfloor& \quad\textrm{if } n=4\textrm{ and } m\mod 4 = 3 \\
      2m-\left\lfloor \frac{m}{4} \right\rfloor+1& \quad\textrm{if } n=4\textrm{ and } m\mod 4 \neq 3 \\
      2m+\left\lceil\frac{m}{7}\right\rceil+1& \quad\textrm{if } n=5\textrm{ and } m\mod 7 \in \{0,6\}\\
      2m+\left\lceil\frac{m}{7}\right\rceil& \quad\textrm{if } n=5\textrm{ and } m\mod 7 \notin \{0,6\}\\
      2m+\left\lfloor\frac{6m}{11}\right\rfloor+1& \quad\textrm{if } n=6\textrm{ and } m\mod 11 \in \{0,2,6\}\\
      2m+\left\lfloor\frac{6m}{11}\right\rfloor+2& \quad\textrm{if } n=6 \textrm{ and } m\mod 11 \notin \{0,2,6\}\\

      3m-\left\lfloor\frac{m}{18}\right\rfloor+1& \quad\textrm{if } n=7 \textrm{, } m > 9\textrm{, }m\mod 18 \leq 9 \textrm{ and } m\mod 18 \neq 7\\
      
      3m-\left\lfloor\frac{m}{18}\right\rfloor& \quad\textrm{if } n=7 \textrm{ and } (m \leq 9\textrm{ or }m\mod 18 > 9 \textrm{ or } m\mod 18 = 7)\\
      
      3m+\left\lfloor\frac{m}{3}\right\rfloor& \quad\textrm{if } n=8 \textrm{ and } m \mod 3 = 1\\
      3m+\left\lfloor\frac{m}{3}\right\rfloor+1& \quad\textrm{if } n=8 \textrm{ and } m \mod 3 \neq 1\\
      \left\lfloor \frac{(n+2)(m+2)}{3}\right\rfloor-6 & \quad\textrm{if } n\geq 9.
\end{array}
\right. \]
\label{th-2dom}
\end{thm}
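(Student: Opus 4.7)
The plan is to carry out a case analysis on $n$, combining two computer-assisted techniques foreshadowed in this section. For each fixed $n\in\{1,\dots,8\}$ I would use the transfer-matrix / dynamic-programming method in the $(\min,+)$-algebra, scanning the grid column by column. States encode, for each cell of the current column, whether it lies in the 2-dominating set $S$ and, if not, how many $S$-neighbours it has so far seen to its left and above (clamped at two). A transition appends a new column, pays the number of cells placed in $S$, updates the counters, and is forbidden (cost $+\infty$) whenever a white cell of the previous column would leave the scan with fewer than two $S$-neighbours. Then $\gamma_2(n,m)$ is the minimum entry of $T_n^{m}$ applied to an initial vector, computed in the $(\min,+)$-algebra. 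Running this and verifying that the sequence $(\gamma_2(n,m))_m$ is eventually periodic (with periods $2,3,4,7,11,18,3$ respectively, as the formulas suggest) yields the closed forms for $n\leq 8$.

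For the generic case $n\geq 9$, I would establish the two bounds separately. The upper bound is obtained by exhibiting an explicit periodic construction whose black cells form a 2-dominating set and whose cardinality matches $\lfloor (n+2)(m+2)/3 \rfloor - 6$. A natural candidate is based on black diagonals spaced by three columns, plus a constant boundary correction of $-6$ cells accounting for the four corners; once the pattern is fixed, the verification reduces to checking 2-domination on a single period in the interior and then treating the four boundary strips by direct inspection.

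The lower bound is where I would invoke the loss method of Gon\c{c}alves et al., adapted from ordinary domination to 2-domination. A simple edge-counting argument gives the asymptotic density bound $\gamma_2(G_{n,m})\geq nm/3$, since each non-dominating cell demands two incident $S$-neighbours while each $S$-cell serves at most four neighbours; the role of the loss is to convert this into a tight additive inequality. Specifically, I would decompose the excess cost of any 2-dominating set above $nm/3$ into contributions attached to the four borders of the grid, each one expressible as the minimum of a one-sided transfer computation of bounded height. By computing those boundary losses with the same dynamic programme as above, one then checks that their sum matches the offset $\lfloor (n+2)(m+2)/3\rfloor - nm/3 - 6$ uniformly in $m$.

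The main obstacle will be justifying that the loss decomposition is both valid and tight for every $n\geq 9$. This rests on what is later called the fixed-height-border-fixing property: the optimal choices along one border of a tall enough grid may be made essentially independently of the opposite border, so that the north, south, east and west losses are additive. Verifying this property uniformly in $m$ is the delicate point, and it is precisely its failure for $n\leq 8$ that forces the individual treatment of those cases by the plain transfer matrix and explains the irregular closed forms listed for small $n$.
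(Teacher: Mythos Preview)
Your overall plan is correct in outline --- transfer matrices in $(\min,+)$ for small $n$, and for $n\ge 9$ an explicit diagonal-based construction for the upper bound together with the loss method for the lower bound --- and this is exactly what the paper does. But your account of the loss step contains a real misconception.

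You write that ``justifying that the loss decomposition is both valid and tight \dots\ rests on \dots\ the fixed-height-border-fixing property'', and that the four border losses must be shown to be additive because opposite borders are essentially independent. The paper does not prove, and does not need, anything like that. The point is that the loss $\ell(D,n,m)=4|D|-2(nm-|D|)$ is a sum of nonnegative \emph{local} contributions over all cells; hence the minimum loss over the whole grid is automatically at least the minimum loss restricted to a height-$h$ border, for any $h$. No independence of opposite sides is required for validity --- it is a pure monotonicity argument. What the paper then does is compute this border loss as a single cyclic $(\min,+)$ transfer computation $\min_S\bigl((T_\mathrm{a}^{\,m-2h-1}C_\mathrm{a}T_\mathrm{a}^{\,n-2h-1}C_\mathrm{a})^2[S][S]\bigr)$, where $T_\mathrm{a}$ handles the bands and $C_\mathrm{a}$ the four corners; the corners are not absorbed into an ``additive north$+$south$+$east$+$west'' split. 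Primitivity of $T_\mathrm{a}$ gives a recurrence $T_\mathrm{a}^{r+3}=T_\mathrm{a}^r+6$ (for $h=6$), hence $\ell_6(n+3,m)=\ell_6(n,m)+12$, and one finishes by checking on a finite window that the resulting lower bound $\lceil(2nm+\ell_6(n,m))/6\rceil$ coincides with the explicit upper bound. Tightness is thus established by numerical coincidence on finitely many cases plus the recurrence, not by proving any structural property of optimal solutions. The fixed-height-border-fixing property you invoke is introduced later in the paper only as a \emph{conjectural} explanation of why the method succeeds here and appears to fail for total domination; it plays no role in the proof itself.

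A minor slip: your list of periods for $n=1,\dots,8$ is missing one entry (the paper finds $2,1,3,4,7,11,18,3$).
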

\subsection{Fixed (small) height and width}

We present here the technique of transfer matrices, a well-known method to solve many problems of this kind. We adapt it here to establish the 2-domination values for grids of fixed height and width. The technique follows a dynamic programming approach. One parameter of this approach (the number of states) is exponential in the number of lines, which is why the latter needs to be small enough: we want the computations we run on a computer to finish within a reasonable\footnote{i.e. less than the time until the end of the PhD... or more truthfully a few hours up to a couple of days on a machine with a hundred of cores} time.

In this paragraph, we introduce the notion of column \emph{states}, which enables us to "enumerate" the partial (i.e. the first columns of) 2-dominating sets by remembering only their "fingerprint" on their last column. This way, when we consider adding a new column, we do not need to remember the whole partial dominating set but only some information stored in the current column. This is possible because the domination problems we study have a \textbf{local characterisation}\footnote{We speak a bit about this notion in \Cref{dom-counting-chapter}}: while the property is global (defined on a whole object, which can be arbitrarily large), it is possible to make local queries (querying neighbours at most at a fixed distance from the vertex we consider) in such a way that we can check if the global property is verified by querying local information at each vertex. This is most helpful for us because we can enumerate a dominating set column by column and, when looking at column $j$, we can forget columns with index less than $j-2$: designing our algorithm smartly, we do not recall the full columns with indices $j-1$ and $j-2$ but only the necessary information.

Before defining the notions, we indicate that \uline{the number of lines $n$ is supposed fixed}. We show in \Cref{prop-primitive} that all the states and compatibility relations we will define implicitly appear when the number of columns $m$ is at least some number. We assume here that \uline{$m$ is big enough} in this respect, that is 8 for the 2-domination. We define $\SSS = \{$ \textsc{stone}, \textsc{need\_one}, \textsc{ok} $\}$ to be the set of \textbf{cell values}. \textsc{stone} means that the cell belongs to the 2-dominating set $D$: in the rest, we refer to the cells of $D$ as \textbf{"containing a stone"}. \textsc{ok} means that the previous and current columns suffice to 2-dominate the cell, before adding the next column. Finally, \textsc{need\_one} means that the cell had so far one neighbour in $D$, so it needs a new one in the next column. If $D$ is a 2-dominating set of cells of the grid $G_{n,m}$ let $f(D) \in (\SSS^n)^m$ be such that\footnote{$f(D)$ is considered as a matrix so $i$ is the index for the lines and appears first.} $f(D)[i][j]$ is \textsc{stone} if\footnote{Here $(j,i)$ are coordinates, hence their inverted order compared to indexing $f(D)$.} $(j,i) \in D$, \textsc{ok} if at least two among $(j-1,i)$, $(j, i-1)$ and $(j, i+1)$ are in $D$, or \textsc{need\_one} otherwise. Note that, since $D$ is 2-dominating, a cell is \textsc{need\_one} if exactly one among $(j-1,i)$, $(j, i-1)$ and $(j, i+1)$ is in $D$. Note that the value of a cell does not depend on the values of the cells of the next column(s). We also define, for $0 \leq j < n$, $f_j(D)$ to be the vector containing the values of column $j$ of $f(D)$: for all $i\in\llbracket 0 , n-1 \rrbracket$, $f_j(D)[i]=f(D)[i][j]$.\\

We now define the set of (column) \textbf{states} $\VV = \cup_{0\leq j < m}{\{f_j(D) : D \textrm{ is 2-dominating}\}}$. $\VV$ is the set of states which appear in some 2-dominating set. Among these states, we define the set of \textbf{first states} $\FF = \{f_0(D) : D \textrm{ is a 2-dominating set}\}$. Finally, we define the set of \textbf{end} (or \textbf{dominated}) \textbf{states} $\EE = \{f_{m-1}(D) : D \textrm{ is a 2-dominating set}\}$. $\FF$ is the set of states which can be the first column of a 2-dominating set, that is whose entries only depend on themselves and not on a presupposed previous column. $\EE$ is the set of states which can be the last column of a dominating set because they do not need a next column to be 2-dominated: they are dominated by themselves and their previous column.\\

We now define the \textbf{relation of compatibility} $\RRR$: we say that a state $S'\in \VV$ is \textbf{compatible} with $S\in \VV$, and write $S\RRR S'$ if there exist a 2-dominating set $D$ and some $j \in \llbracket 0 , m-2 \rrbracket$ such that $\textrm{f}_j(D) = S$ and $\textrm{f}_{j+1}(D) = S'$. 
 Defining these states enables us to use the principles of dynamic programming: instead of enumerating all possible 2-dominating sets, we realise that the information conveyed in $f(D)$ is enough, and that we only need the information at a column $j$ to continue to column $j+1$. In particular, we do not need to know what happened in previous columns with indices less than $j$. This corresponds to the notion of \emph{nearest-neighbour} recoding in the world of SFTs which we will define in \Cref{dom-counting-chapter}: a translation \textit{subshift} into an equivalent one on a different alphabet such that the forbidden patterns are only of size two. In our case, it means that we only forbid couples of columns states $S,S'$ when $S'$ cannot be put just after $S$.

To illustrate these concepts, we give the rules defining the sets $\VV$, $\FF$, $\EE$ and the relation $\RRR$. These explicit definitions correspond exactly to the implicit way we defined these concepts in the previous paragraph. It is easy to convince ourselves of this fact as they are translations of the implicit definitions of $\VV$, $\FF$, $\EE$ and $\RRR$ considered along with the definition of the cell values \textsc{stone}, \textsc{need\_one} and \textsc{ok}. The concept of states and compatibility relation are shown on \Cref{fig-ex-states}.
\newpage

$S \in \VV$ if and only if for all $ i \in \llbracket 0 , n-1 \rrbracket$: 
\begin{itemize}[topsep=0pt, noitemsep]
\item if $S[i] =$ \textsc{need\_one} then at most one among $S[i-1]$, $S[i+1]$ is \textsc{stone};
\item if $S[i] =$ \textsc{ok} then at least one among $S[i-1]$ and $S[i+1]$ is \textsc{stone}.
\end{itemize}
\vspace{5pt}

$S \in \FF$ if and only if for all $i \in \llbracket 0 , n-1 \rrbracket$:
\begin{itemize}[topsep=0pt]
\setlength{\itemsep}{0pt}
\item if $S[i] =$ \textsc{need\_one} then exactly one among $S[i-1]$, $S[i+1]$ is \textsc{stone};
\item if $S[i] =$ \textsc{ok} then both $S[i-1]$ and $S[i+1]$ are \textsc{stone} (so $1 \leq i < m-1$).
\end{itemize}
\vspace{5pt}
A state $S$ belongs to $\EE$ if and only if $S\in \VV$ and none of its entries is \textsc{need\_one}.

Finally, $S \RRR S'$ if and only if for all $i \in \llbracket 0 , n-1 \rrbracket$:
\begin{itemize}
\setlength{\itemsep}{0pt}
\item if $S[i] =$ \textsc{need\_one} then $S'[i] =$ \textsc{stone};
\item if $S'[i] =$ \textsc{need\_one} then exactly one among $S'[i-1]$, $S'[i+1]$ and $S[i]$ is \textsc{stone};
\item if $S'[i] =$ \textsc{ok} then at least two among $S'[i-1]$, $S'[i+1]$ and $S[i]$ are \textsc{stone}.
\end{itemize}

\begin{figure}[h]
\centering
\includegraphics[scale=0.3]{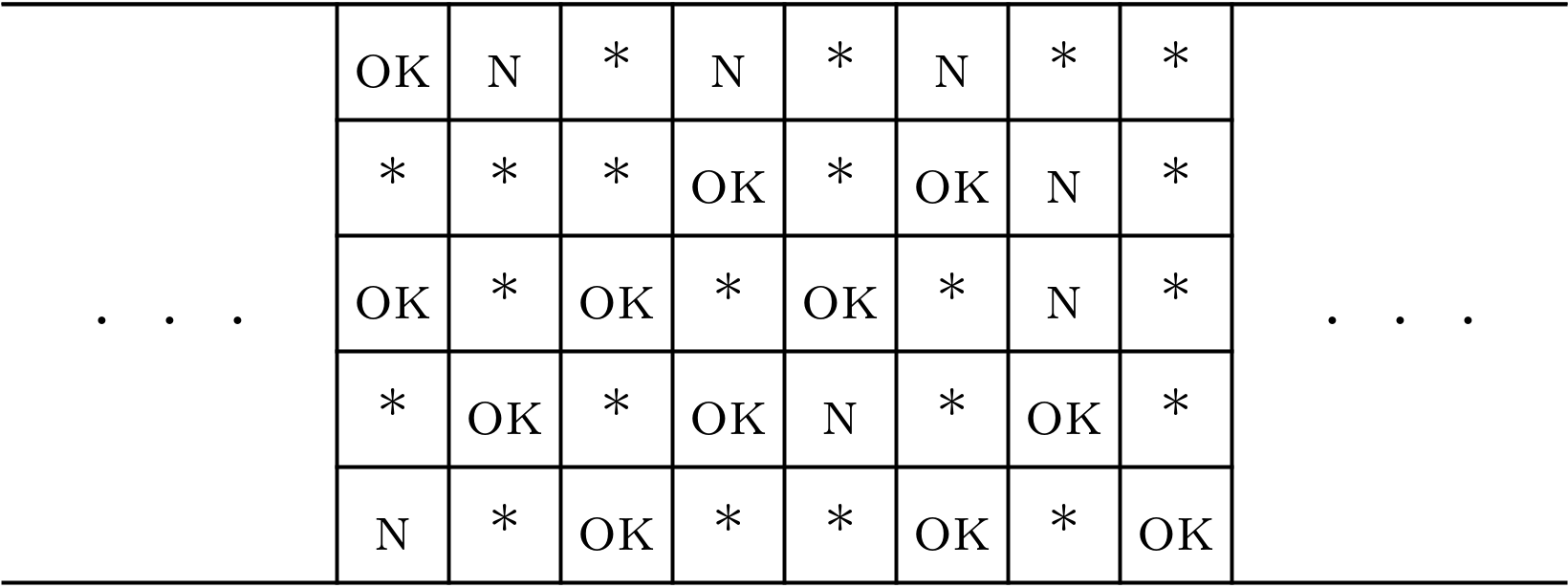}
\caption{Illustration of the states and compatibility relations for the 2-domination problem. '\textsc{n}' corresponds to the state value \textsc{need\_one} and '\textsc{*}' corresponds to \textsc{stone}. The data on one column only depends on this column and the columns at its left.}
\label{fig-ex-states}
\end{figure}

We said above that the number of states is exponential in the number of lines. We can in fact compute the growth rate of the number of states thanks to a concept named the \emph{Rauzy graph} of a language. We will explain in \Cref{section-rauzy}. It allows us to find the number of states in our technique.

\begin{fact}
There are $\Theta(x_0^n) \approx \Theta(2.485584^n)$ states in $\VV$ for $n$ lines, where $x_0$ is the real root of the polynomial $x^3 - 2x^2 - 3$.
\end{fact}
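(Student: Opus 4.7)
The plan is to count $|\VV\cap\SSS^n|$ by setting up a finite-state transfer matrix, collapsing it using a symmetry, and extracting the dominant eigenvalue via Perron--Frobenius. Because the constraints defining $\VV$ are local (the validity of each entry only involves its two vertical neighbours), I would first build a deterministic finite automaton scanning a column from top to bottom. The right finite memory is the pair \emph{(value of the last symbol; whether the cell just before it is a \textsc{stone})}, which records exactly the pending constraint the next symbol must satisfy. This yields five automaton states $q_S,q_N^+,q_N^-,q_O^+,q_O^-$, with the $\pm$ flagging whether the previous cell is a stone. The nine transitions are obtained by applying the two forbidden/required conditions to the \emph{middle} letter, e.g.\ from $q_N^+$ reading a \textsc{stone} is forbidden (the N would have two stone neighbours) and from $q_O^-$ only a \textsc{stone} is allowed. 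The top-boundary convention $S[-1]\notin\textsc{stone}$ fixes the initial vector and the bottom-boundary convention $S[n]\notin\textsc{stone}$ selects the accepting states as everything except $q_O^-$.

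Next I would exploit a symmetry collapse: the rows of the $5\times5$ transition matrix associated with $q_N^+$ and $q_O^+$ coincide (both are fed only by $q_S$) and the rows of $q_N^-$ and $q_O^-$ coincide as well. Setting $A_n=\#q_S$, $B_n=\#q_N^+=\#q_O^+$ and $C_n=\#q_N^-=\#q_O^-$ gives a three-dimensional linear recurrence
\[\begin{pmatrix}A_{n+1}\\ B_{n+1}\\ C_{n+1}\end{pmatrix}=M\begin{pmatrix}A_n\\ B_n\\ C_n\end{pmatrix},\qquad M=\begin{pmatrix}1 & 1 & 2\\ 1 & 0 & 0\\ 0 & 2 & 1\end{pmatrix},\]
and the number of length-$n$ states in $\VV$ is the fixed linear combination $A_n+2B_n+C_n$. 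A direct expansion of $\det(xI-M)$ along the first row yields the characteristic polynomial $\chi_M(x)=x(x-1)^2-(x-1)-4=x^3-2x^2-3$ announced in the statement.

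The final step is spectral. $M$ is non-negative and a one-line computation shows $M^2$ is strictly positive, so $M$ is primitive; by Perron--Frobenius its real positive root $x_0\approx 2.485584$ is a simple eigenvalue of strictly maximal modulus (the other two roots form a complex conjugate pair of modulus $\sqrt{3/x_0}<x_0$, consistent with the negative discriminant of $x^3-2x^2-3$). The initial vector $(A_1,B_1,C_1)=(1,0,1)$ is non-negative and not orthogonal to the strictly positive Perron eigenvector, so the total count is bounded below by $\Omega(x_0^n)$; combined with the trivial upper bound $O(x_0^n)$ coming from the iterated matrix, this yields $|\VV\cap\SSS^n|=\Theta(x_0^n)$. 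The main care required is not in the spectral analysis but in the bookkeeping at the two boundaries: one must check that the initial vector and the set of accepting states encode exactly the constraints on rows $0$ and $n-1$, so that the automaton count truly matches $|\VV|$ and not some off-by-one variant.
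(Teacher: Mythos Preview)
Your proof is correct and follows the same underlying idea as the paper --- a transfer-matrix/automaton description of $\VV$ together with Perron--Frobenius to extract the growth rate --- but your execution is more explicit. In the paper the fact is justified via the Rauzy-graph technique of \Cref{section-rauzy}: the adjacency matrix of the Rauzy graph is produced by the authors' program and its minimal polynomial is computed with Sage, and the boundary rows $i=0$ and $i=n-1$ are handled by arguing that the non-factorial boundary constraints do not alter the exponential growth rate. You instead build the $5$-state automaton by hand, observe the coincidence of the in-neighbourhoods of $q_N^{+}/q_O^{+}$ and of $q_N^{-}/q_O^{-}$ to collapse to the $3\times 3$ matrix $M$, and compute $\chi_M(x)=x^3-2x^2-3$ directly; you also encode the two boundary rows in the initial vector and accepting set, so your recurrence actually gives the \emph{exact} count $A_n+2B_n+C_n$ rather than only the growth rate. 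Both routes arrive at the same polynomial; yours is self-contained and avoids the computer, while the paper's is uniform across the several domination variants treated there.
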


\begin{notation}
If $V$ is a vector, then we denote by $|V|_\mathit{foo}$ the number of its entries equal to $\mathit{foo}$.
\end{notation}

\begin{claim}
Let $F$ be the vector of size $|\VV|$ such that $F[S] = |S|_{\text{\textsc{stone}}}$ if $S\in \FF$ or $+\infty$ otherwise.\\ Let $E$ be the vector of size $|\VV|$ such that $E[S] = 0$ if $S \in \EE$ or $+\infty$ otherwise.\\ Let $T$ be the square matrix with $|\VV|$ lines such that $T[S][S'] = |S'|_{\text{\textsc{stone}}}$ if $S \RRR S'$ or $+\infty$ otherwise.\\Then for any $m > 0$, $\gamma_2(n,m) = F^{\mathsmaller T}T^{m-1}E$.\\
(We recall that the products of matrices are done in the $(\min, +)$-algebra.)
\label{claim-exact}
\end{claim}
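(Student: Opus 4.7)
My plan is to prove the identity by the standard two-sided inequality between the combinatorial quantity $\gamma_2(n,m)$ and the transfer-matrix product. First I would unpack what the right-hand side actually computes: expanding the $(\min,+)$-product, $F^{\mathsmaller T} T^{m-1} E$ equals the minimum, over all sequences $(S_0, \dots, S_{m-1}) \in \VV^m$ such that $S_0 \in \FF$, $S_{m-1} \in \EE$ and $S_j \RRR S_{j+1}$ for every $0 \leq j < m-1$, of $\sum_{j=0}^{m-1} |S_j|_{\text{\textsc{stone}}}$. All other terms contribute $+\infty$ and are irrelevant for the minimum (assuming, as we may, that at least one admissible sequence exists).

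For the inequality $F^{\mathsmaller T} T^{m-1} E \leq \gamma_2(n,m)$, I would take an optimal 2-dominating set $D$ and argue that $(f_0(D), \dots, f_{m-1}(D))$ is admissible in the sense just described. The three membership claims ($f_0(D) \in \FF$, $f_{m-1}(D) \in \EE$, and $f_j(D) \RRR f_{j+1}(D)$) follow directly from the implicit definitions of these objects in terms of restrictions of 2-dominating sets to columns, combined with the fact that the cell value at $(j,i)$ depends only on $(j-1,i)$, $(j,i-1)$ and $(j,i+1)$. The associated cost is $\sum_j |f_j(D)|_{\text{\textsc{stone}}} = |D|$ because the cells labelled \textsc{stone} in $f(D)$ are exactly the cells of $D$.

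For the reverse inequality, I would take an admissible sequence $(S_0, \dots, S_{m-1})$ realising the minimum and reconstruct the set $D = \{(j,i) : S_j[i] = \text{\textsc{stone}}\}$, which by construction has size $\sum_j |S_j|_{\text{\textsc{stone}}}$. The main obstacle is to verify that $D$ is genuinely 2-dominating, since the admissibility conditions are expressed symbolically (in terms of labels \textsc{ok} and \textsc{need\_one}) whereas 2-domination is a statement about the number of actual stone-neighbours in the grid. I would handle this by a short case analysis on each cell $(j,i) \notin D$, splitting on $j \in \{0\}$, $j = m-1$, or $1 \le j \le m-2$ and on the label $S_j[i] \in \{\text{\textsc{ok}}, \text{\textsc{need\_one}}\}$. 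For a representative case, if $1 \le j \le m-2$ and $S_j[i] = \text{\textsc{need\_one}}$, then $S_{j-1} \RRR S_j$ supplies exactly one stone among $(j-1,i)$, $(j,i-1)$ and $(j,i+1)$, and $S_j \RRR S_{j+1}$ forces $S_{j+1}[i] = \text{\textsc{stone}}$, yielding the second stone neighbour at $(j+1,i)$. The \textsc{ok} cases are handled directly by the two-stone lower bounds already built into the rules for $\RRR$, $\FF$ and $\EE$, and the boundary cases $j=0$ and $j=m-1$ use the analogous clauses in $\FF$ and $\EE$ in place of the missing adjacent column. Combining both inequalities yields the claimed equality.
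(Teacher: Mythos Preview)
Your proof is correct and follows essentially the same transfer-matrix argument as the paper. The paper's own proof is terser: it asserts the dynamic-programming interpretation of $V_j = F^{\mathsmaller T}T^{j-1}$ (namely that $V_j[S]$ is the minimum number of stones needed to 2-dominate the first $j-1$ columns while leaving column $j$ in state $S$) and then multiplies by $E$; your version unpacks this into an explicit two-sided inequality with the case analysis needed to verify that an optimal admissible sequence actually yields a 2-dominating set, which is the content the paper's sketch leaves implicit.
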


\begin{proof}
Let $j \geq 1$. $V_j = F^{\mathsmaller T}T^{j-1}$ is a vector such that if $S \in \VV$ then $V_j[S]$ is the minimum size of a set $X$ which 2-dominates the subgrid with $n$ lines and $j-1$ columns, and such that the last column of $X$ is in state $S$. However, we are interested in a 2-dominating set, therefore the last column (assumed to be in state $S$) should be 2-dominated as well. Thus $V_m E = \min_{S \in \EE}{V_m[S]}$ gives us the minimum size of any 2-dominating set.
\end{proof}

This claim leads to a simple algorithm to compute $\gamma_2(n,m)$: generate the different sets and the compatibility relation, and then compute the exponentiation of matrices, and two matrix-vector products. The matrix $T$ is a transfer matrix, whose exponentiation propagates the fact of being 2-dominated one column further.
This is enough to compute the 2-domination numbers for fixed $n$ and $m$, but our goal here is to find all the numbers for fixed $n$ and arbitrary $m$. The next section fills this hole.\\

\subsection{Fixed number of lines but arbitrary number of columns}
The method we have just presented works for a fixed number of lines. Since it is linear in the number of columns, this number must be finite. We use here a less known method to obtain values for arbitrary width of the grid, establishing recurrence relations whose existence are guaranteed by some properties of the transfer matrices. The computations lead to closed formulas for a fixed number of lines, hence we obtain a constant-time algorithm at the end, exploiting this formula. We also, on a theoretical point, make the connection between the effectiveness of the method
and the \emph{primitivity} of the transfer matrix we use.\\

By observing that the transfer matrix we manipulate is \emph{primitive}, we show that this periodicity was expected. That is, even without finding the recurrence relation, it is possible to prove that the formula follows some recurrence relation without running any program. Furthermore, this observation extends to any problem like this one. We can, for instance, guarantee that the distance-three-domination (any vertex not in $S$ must have a neighbour at distance at most three in $S$) number satisfies a recurrence relation. The same applies to all the dominating problems we study in this chapter.

\begin{deff}
A matrix $M$ is \textbf{primitive} (in the $(\min, +)$-algebra)\footnote{(almost) last reminder that we do not work in the standard algebra} when there exists an integer $k > 0$ such that $\max_{i,j}(\{ M^k[i][j] \}) < +\infty$.
\end{deff}

\begin{prop}
$T$ is primitive.
\label{prop-primitive}
\end{prop}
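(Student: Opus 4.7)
The plan is to translate primitivity of $T$ into a graph-theoretic property of the \emph{compatibility graph} $\Gamma = (\VV, \RRR)$, whose nodes are the column states and whose directed edges are the compatibility relations. In the $(\min, +)$-algebra, $T^k[S][S']$ is finite exactly when there is a directed walk of length $k$ from $S$ to $S'$ in $\Gamma$. Hence $T$ being primitive reduces to the existence of some $k$ such that for every ordered pair of states there is a walk of that common length, and by the standard non-negative-matrix argument this is equivalent to $\Gamma$ being strongly connected and aperiodic (its cycle lengths having $\gcd$ equal to $1$).

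First I would single out the all-stones column state $S_*$, defined by $S_*[i] = \textsc{stone}$ for every $i$, and check the self-loop $S_* \RRR S_*$: the three clauses in the definition of $\RRR$ are triggered only by \textsc{need\_one} or \textsc{ok} entries, of which $S_*$ has none. This self-loop will force aperiodicity once strong connectivity is established. The easy half of strong connectivity is that $S \RRR S_*$ for every $S \in \VV$, which is again immediate: only the clause ``$S[i] = \textsc{need\_one} \Rightarrow S_*[i] = \textsc{stone}$'' is non-vacuous, and it is automatically satisfied.

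The main obstacle will be the converse direction, that $S_*$ reaches every $S' \in \VV$. My plan is to exploit the defining existential property of $\VV$ to pick a 2-dominating set $D$ of some $G_{n,m}$ with $f_j(D) = S'$ for some $j$, and then build a new 2-dominating set $D'$ on $G_{n, m+1}$ by prepending a full column of stones, so that columns $1, \ldots, m$ of $D'$ replicate columns $0, \ldots, m-1$ of $D$. Verifying that $D'$ is still 2-dominating is the technical crux: columns of index $\geq 2$ in $D'$ have the same neighbourhoods as their images in $D$ and inherit 2-domination unchanged; the new column $1$ is the former first column of $D$, and every non-stone entry there already had at least one horizontal stone neighbour (since any entry of a state in $\FF$ is either stone, need\_one with one horizontal stone neighbour, or ok with two), so the extra stone coming from column $0$ brings the count to at least two; column $0$ is trivially dominated. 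A direct reading of the definitions then gives $f_0(D') = S_*$ and $f_{j+1}(D') = f_j(D) = S'$, so the sequence $f_0(D'), f_1(D'), \ldots, f_{j+1}(D')$ is a walk from $S_*$ to $S'$ in $\Gamma$.

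Once strong connectivity and the self-loop at $S_*$ are secured, $\Gamma$ is primitive and hence so is $T$. The only non-routine step is the construction and validation of $D'$; everything else unfolds directly from the definitions of $\VV$, $\FF$, and $\RRR$.
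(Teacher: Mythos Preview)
Your approach is sound in spirit and, like the paper, hinges on the all-stones state $S_*$ as a hub; but the paper takes a much shorter route. Rather than invoking the strong-connectivity-plus-aperiodicity criterion, the paper simply exhibits, for any $S_0, S_2 \in \VV$, a walk $S_0 \RRR S_* \RRR S_1 \RRR S_2$ of length exactly three, where $S_1$ is built locally from the explicit clauses defining $\RRR$ (``put the necessary stones and fill the rest accordingly''). This yields the concrete conclusion $T^3 < +\infty$, i.e.\ primitivity index $3$, whereas your argument produces no explicit exponent. Working directly with the compatibility rules also avoids having to manufacture and validate auxiliary 2-dominating sets.

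There is moreover a small gap in your construction. The equality $f_{j+1}(D') = f_j(D)$ is only a ``direct reading of the definitions'' when $j \ge 1$: for $j = 0$, the prepended all-stone column supplies every cell of column~$1$ of $D'$ with a left stone neighbour that column~$0$ of $D$ never had, so any \textsc{need\_one} entry of $f_0(D)$ becomes \textsc{ok} in $f_1(D')$. You therefore still owe the argument that every $S' \in \VV$ occurs as $f_j(D)$ for some $j \ge 1$. This is true (for instance, reflect $D$ horizontally and concatenate the mirror with $D$; the seam column has the same stone pattern on both sides, so the state there reproduces $f_0(D)$), but it is an extra step. The paper's two-step reach from $S_*$ via a hand-built $S_1$ sidesteps this issue entirely.
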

This proposition means that given two states $S_1$ and $S_2$, it is always possible to find some 2-dominating set $D$ and some $i_1$ and $i_2 \leq i_1 + k$ such that $f_{i_1}(D) = S_1$ and $f_{i_2}(D) = S_2$. This can be related to the notion of \emph{irreducibility} of Markov chains. As we mentioned above, it certifies that any possible state appears in some dominating set of a grid if there are $k$ columns separating it from the special first states and the special end states. So any state appears if there are at least $2k+1$ columns. Now, to be sure to see every compatibility relations, we must add another column: each possible state appears at the ${k+1}^\text{th}$ column, and any next state must be separated from the end states by $k$ columns. Therefore, $m \geq 2k+2$ is enough, which means 8 for the 2-domination (see the proof just below).

\begin{proof}
Let $S_0, S_2 \in \VV$. Let $S_*$ be the state whose entries all are \textsc{stone}. There exists some $S_1 \in \VV$ such that $S_0 \RRR S_*$, $S_* \RRR S_1$ and $S_1 \RRR S_2$. We leave the construction of $S_1$ to the reader: put the necessary stones and fill the rest accordingly. We conclude that $T^3 < +\infty$.
\end{proof}

We now prove an interesting property that primitive matrices have, in the\\(min, +)-algebra: the series of their powers satisfy some recurrence relation. This is true for any primitive matrix in this algebra.

\begin{thm}
Let $M$ be a primitive matrix with coefficients in $\NN \cup \{+\infty\}$. Let $k$ be such that $\max(\{ M^k[i][j] \}) < +\infty$. Then there exist some $l_0$, $p$ and $r$ such that for all $l \geq l_0$, $M^{l+r} = M^r + p$, where $M^r+p$ means\footnote{We recall that the replacement of $(+,\times)$ by $(\min,+)$ only applies in the internal operations of products between matrices and matrices or vectors.} that we add $p$ to every element of $M^r$.
\label{th-primitive-rec}
\end{thm}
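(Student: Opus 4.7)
The plan is to derive an eventual additive periodicity of the powers $M^l$ from a pigeonhole argument on normalised matrices, after first establishing two preliminary facts: that powers of $M$ are eventually entrywise finite, and that the ``spread'' of $M^l$ (maximum minus minimum entry) is uniformly bounded.

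First, I would observe that once $M^k$ has all entries finite, so does $M^l$ for every $l\ge k$. Indeed, in the $(\min,+)$-algebra the entry $M^{k+1}[i][j]$ is $\min_\ell (M[i][\ell]+M^k[\ell][j])$; since every $M^k[\ell][j]$ is finite, finiteness reduces to finding one $\ell$ with $M[i][\ell]<+\infty$, and such an $\ell$ must exist, for otherwise row $i$ of $M$ would be all $+\infty$ and hence every row $i$ of every $M^n$ would too, contradicting the primitivity of $M$. A symmetric argument handles columns, and an easy induction concludes.

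Next, I would bound the spread. Set $C=\max_{i,j}M^k[i][j]<+\infty$. For $n\ge k$, write $M^{n+2k}=M^k\otimes M^n\otimes M^k$ so that
\[
M^{n+2k}[i'][j']\;=\;\min_{i,j}\bigl(M^k[i'][i]+M^n[i][j]+M^k[j][j']\bigr).
\]
Choosing $(i,j)$ realising $\min M^n$ gives $\max M^{n+2k}\le \min M^n+2C$, and using that entries of $M^k$ lie in $\mathbb N$ gives $\min M^{n+2k}\ge \min M^n$. Combining, $\max M^{n+2k}-\min M^{n+2k}\le 2C$, so the spread of $M^m$ is at most $2C$ for every $m\ge 3k$.

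The pigeonhole step follows immediately. For each $m\ge 3k$, define the normalised matrix $\widetilde M^m:=M^m-\min M^m$; its entries lie in $\{0,1,\ldots,2C\}$, so the sequence $(\widetilde M^m)_{m\ge 3k}$ takes values in a finite set of size at most $(2C{+}1)^{N^2}$, where $N$ is the dimension of $M$. Therefore there exist $l_0<l_0+r$ (both $\ge 3k$) with $\widetilde M^{l_0}=\widetilde M^{l_0+r}$, i.e.\ $M^{l_0+r}=M^{l_0}+p$ for the constant $p=\min M^{l_0+r}-\min M^{l_0}$.

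Finally, I would propagate this single coincidence to all $l\ge l_0$. In the $(\min,+)$-algebra one has the scalar-translation identity $M\otimes(A+c)=c+M\otimes A$; applying it iteratively to both sides of $M^{l_0+r}=M^{l_0}+p$ yields $M^{l_0+r+j}=M^{l_0+j}+p$ for every $j\ge 0$, which is exactly the desired conclusion $M^{l+r}=M^l+p$ for all $l\ge l_0$. The main obstacle, in my view, is really only the spread bound; once that is in hand, the pigeonhole and propagation steps are essentially formal. The interpretation of the constant $p$ as (a multiple of) the $(\min,+)$-eigenvalue of $M$ is of course the underlying reason this works, but for the statement at hand no spectral theory is needed.
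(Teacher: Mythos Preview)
Your proof is correct and follows essentially the same strategy as the paper: bound the spread of $M^l$, apply pigeonhole to the normalised matrices $M^l-\min M^l$, then propagate the resulting coincidence via the scalar-translation identity $M\otimes(A+c)=(M\otimes A)+c$. The only notable difference is in the spread bound: you sandwich $M^n$ between two copies of $M^k$ to obtain spread $\le 2C$ for $m\ge 3k$, whereas the paper uses a single factorisation $M^l=M^{l-k}M^k$ to claim spread $\le\alpha$ already for $l>k$; your two-sided argument gives a slightly looser constant and later starting point but is arguably the more transparent of the two.
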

\begin{proof}
Notice that, since $\max(\{ M^k[i][j] \}) < +\infty$, each line of $M$ has at least one element different from $+\infty$. The same goes for the columns. We denote by $\alpha$ the maximum value of $M^k$. Let $l > k$. For every $i$ and $j$,
\begin{equation}
M^l[i][j] = \min_{u}(M^{l-k}[i][u]+M^k[u][j]) \leq \min_{u}(M^{l-k}[i][u]+\alpha) \leq \min_{u}(M^{l-k}[i][u])+ \alpha.
\label{eq-primitive-max}
\end{equation}
This quantity is finite since $M^{l-k}$, like $M$, has in each row and each column one element different from $+\infty$.

Now let us bound the value of $M^l[i][j]$ from below:
\begin{equation}
M^l[i][j] = \min_{u}(M^{l-k}[i][u]+M^k[u][j]) \geq \min_{u}(M^{l-k}[i][u])
\label{eq-primitive-min}
\end{equation}

By subtracting \Cref{eq-primitive-min} from \Cref{eq-primitive-max}, we obtain
\begin{equation*}
0 \leq \max_{i,j}(M^l[i][j]) - \min_{i,j}(M^l[i][j]) \leq \alpha.
\end{equation*}

This allows us, for any $l > k$, to define $M'_l \in \llbracket 0; \alpha \rrbracket^{nm}$ by the following decomposition:
\begin{equation*}
M^l = \min(M^l)+ M'_l.
\end{equation*}
This means that each $M^l$ is decomposed into a matrix which has all its coefficients equal to $\min(M^l)$ plus a matrix whose coefficients are bounded by a value independent from $l$. Since there are finitely many matrices in $\llbracket 0; \alpha \rrbracket^{nm}$, we conclude that there are two equal matrices $M_{l_0} = M_{l_1}$ for $l_1 > l_0 > k$.
By letting $p = \min(M^{l_1}) - \min(M^{l_0})$ we obtain
\begin{equation}
M^{l_1} = M^{l_0} + p.
\end{equation}

Now we choose $r = l_1-l_0$, and for $l \geq l_0$:
\begin{equation*}
M^{l+r} = M^{l-l_0}M^{l_0+r} = M^{l-l_0}(M^{l_0}+p) = M^{l-l_0}M^{l_0}+p = M^{l}+p.
\end{equation*}

Note that in our tropical algebra, if $A,B$ and $C$ are matrices, and '+' denotes, as before, the standard plus operation, then $A(B+C) = AB+C$.
\end{proof}

A direct implication of \Cref{th-primitive-rec} is that the transfer matrix $T$ verifies some recurrence relation. Thanks to \Cref{claim-exact}, the relations we obtain for the transfer matrix $T$ directly apply to the 2-domination number. Here are the relations we obtain for $n \leq 12$:
\begin{multicols}{2}
\begin{itemize}
\item $\forall m \geq 3, \gamma_2(1,m) = \gamma_2(1,m-2) + 1$;
\item $\forall m \geq 3, \gamma_2(2,m) = \gamma_2(2,m-1) + 1$;
\item $\forall m \geq 5, \gamma_2(3,m) = \gamma_2(3,m-3) + 4$;
\item $\forall m \geq 8, \gamma_2(4,m) = \gamma_2(4,m-4) + 7$;
\item $\forall m \geq 14, \gamma_2(5,m) = \gamma_2(5,m-7) + 15$;
\item $\forall m \geq 20, \gamma_2(6,m) = \gamma_2(6,m-11) + 28$;
\item $\forall m \geq 31, \gamma_2(7,m) = \gamma_2(7,m-18) + 53$;
\item $\forall m \geq 16, \gamma_2(8,m) = \gamma_2(8,m-3) + 10$;
\item $\forall m \geq 17, \gamma_2(9,m) = \gamma_2(9,m-3) + 11$;
\item $\forall m \geq 14, \gamma_2(10,m) = \gamma_2(10,m-1) + 4$;
\item $\forall m \geq 16, \gamma_2(11,m) = \gamma_2(11,m-3) + 13$;
\item $\forall m \geq 17, \gamma_2(12,m) = \gamma_2(12,m-3) + 14$.
\end{itemize}
\end{multicols}

These relations were obtained by running our program on a computer. For instance, finding the relation for $n=12$ takes around 17.5 seconds on a personal laptop using only one CPU, and it uses 56Mib of RAM. There are around 26.5k states and 10M compatible pairs between states.

Thanks to these relations, and to the first values we obtain for each $n$, we deduce the formulas for $\gamma_2(n,m)$, for $1 \leq n \leq 12$. For instance, for $n = 5$ we only need to know the recurrence relation, plus the first twelve\footnote{$20=14+6$} values. We stopped here at $n=12$ here because the method for arbitrarily large $n$ works from $n \geq 13$.

We may observe that these figures are the ones reflecting the periodicity of the transfer matrix: for $n=12$, this begins when $m \geq 17$. However, once we have this relation we can "go back in time" and check from what point the relations begins to be valid for the $\gamma_2$ numbers we computed. By doing so, we find for instance that the recurrence relation for $\gamma_2(12, m)$ is actually valid from $m = 13$ instead of 17.
\\

Now thanks to \Cref{th-primitive-rec} we can prove a meta-theorem for a class of problems.

\begin{thm}
Any minimisation problem on grids which admits a local characterisation and a primitive transfer matrix has, when the number of lines is fixed, a closed formula for answer.
\label{th-meta-grids}
\end{thm}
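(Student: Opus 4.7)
The plan is to assemble the pieces that have just been set up: the local characterisation gives the column-state framework of the previous subsection, so for any fixed number of lines $n$ the minimisation problem can be encoded as $F^{\mathsmaller T} T^{m-1} E$ in the $(\min,+)$-algebra, exactly as in \Cref{claim-exact}. The primitivity hypothesis on the transfer matrix $T$ then lets us invoke \Cref{th-primitive-rec} directly on $T$: there exist integers $l_0$, $p$ and $r$ such that for every $l \geq l_0$ one has $T^{l+r} = T^l + p$ (coefficient-wise addition of the constant $p$).

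From this matrix identity the recurrence at the level of the minimisation value is immediate. Indeed, for every $m \geq l_0 + 1$,
\begin{equation*}
\gamma(n, m+r) \;=\; F^{\mathsmaller T} T^{m-1+r} E \;=\; F^{\mathsmaller T} (T^{m-1} + p) E \;=\; F^{\mathsmaller T} T^{m-1} E + p \;=\; \gamma(n, m) + p,
\end{equation*}
since adding the constant $p$ to every entry of $T^{m-1}$ merely shifts the final $(\min,+)$-sum by $p$ (the rows of $F$ and $E$ are nonempty by the same argument used in the proof of \Cref{th-primitive-rec}, so no spurious $+\infty$ interferes).

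It remains to turn this arithmetic-progression-type recurrence into a closed formula. Writing any $m \geq l_0 + 1$ uniquely as $m = m_0 + qr + s$ with $0 \leq s < r$ and $m_0 = l_0 + 1$, we obtain
\begin{equation*}
\gamma(n, m) \;=\; \gamma\bigl(n,\, m_0 + s\bigr) \,+\, q\,p \;=\; \Bigl\lfloor \tfrac{m - m_0}{r} \Bigr\rfloor \, p \,+\, C_{(m - m_0) \bmod r},
\end{equation*}
where the finitely many constants $C_0, \dots, C_{r-1}$ (the initial values $\gamma(n, m_0 + s)$ for $0 \leq s < r$) and the finitely many values $\gamma(n, m)$ for $m < m_0$ can all be computed once and for all by the algorithm of \Cref{claim-exact}. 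This is a closed formula consisting of an integer division and a lookup into a finite table, so it evaluates in constant time.

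The only genuinely delicate point is the transition from the matrix-level identity $T^{l+r} = T^l + p$ to the scalar-level identity for $\gamma(n, m)$: one must know that the vectors $F$ and $E$ are not identically $+\infty$, i.e.\ that there exist valid first and end states for each $n$. For every problem covered by the theorem this is essentially trivial (for the dominations of this chapter, the all-\textsc{stone} column witnesses both, as already exploited in the proof of \Cref{prop-primitive}), so I would simply add this as a standing hypothesis, absorbed into what "admits a local characterisation" means.
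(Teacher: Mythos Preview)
Your proof is correct and follows exactly the approach the paper sketches: encode the problem as $F^{\mathsmaller T}T^{m-1}E$ via \Cref{claim-exact}, invoke \Cref{th-primitive-rec} from primitivity to obtain $T^{l+r}=T^l+p$, and deduce the scalar recurrence $\gamma(n,m+r)=\gamma(n,m)+p$ yielding a closed formula. You have simply spelled out in full the details the paper leaves implicit (including the passage from the matrix identity to the scalar one and the explicit shape of the formula), and your remark about $F$ and $E$ not being identically $+\infty$ is a fair technical caveat that the paper omits.
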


\begin{proof}
It suffices to apply the same method as here. The recurrence relation the transfer matrix satisfies guarantees a recurrence relation on the parameter studied, hence the answer is a closed-form expression. This expression involves basic arithmetic operations (+,-,*,/) and the remainder of a Euclidean division.
\end{proof}

\begin{rk}
We can notice that any problem to which we can associate an SFT with the \emph{block-gluing} property (see \Cref{dom-counting-chapter}) will have a primitive transfer matrix and hence falls into the hypothesis of \Cref{th-meta-grids}.
\end{rk}

This method can work for other classes of graphs, as long as we find an "acyclic" way of enumerating parts of the dominating sets, like for the fasciagraphs defined by Bouznif et al.~\cite{fasciagraphs-paper}. Our method, and the theory behind stated by \Cref{th-meta-grids} can for instance be used to prove parts of what they investigate in~\cite{fasciagraphs-paper}. We used it to prove that some local problems with a primitivity condition admits a closed formula as answer, hence can be answered in constant time. In fact, the same applies to rotagraphs: instead of computing $F^{\mathsmaller T}T^{m-1}E$ (for a fasciagraph, with a first and last "column"), we compute $\min_{S\in \VV}(T^{m}[S][S]$ (for $m > 1$): the first and last state must be the same. In their paper, they prove similar results, using a different technique and a result similar to the one we prove for the recurrence of powers of primitive matrices.

The method we described can work also in these structures for other types of problems. We show in \Cref{dom-counting-chapter} that we can also count some types of dominating sets, using different arguments (and working in the standard algebra). Basically, this approach will work when the problem has some sort of local characterisation or local property.

\label{section-primitive-dom}

\subsection{The number of states}
\label{section-rauzy}
We present here a technique to find the growth rate of the number of words of a \emph{factorial} language: a set of words which is defined by forbidding specific patterns to be \emph{factors}. This means for instance that we choose to allow only words  which do not contain "an" or "on": "tomato" and "cherry" would be authorised whereas "banana" and "cinnamon" would not. The method consists in constructing the Rauzy graph of the language, and take its largest eigenvalue, which turns out to be the desired growth rate.

Let $\mathcal{L}$ be a language of words forbidding patterns of sizes up to $k$. Without loss of generality, we may assume that all the forbidden factors are of length equal to $k$. The (directed) Rauzy graph $R_i(\mathcal{L}) = (V_i, E_i)$ of $\mathcal{L}$ of order $i$ is defined by: $V_i$ is the set of factors of size $i$ appearing in words of $\mathcal{L}$, and $(u, v) \in E_i$ (note that the relation is not symmetric) when there exist $u'$  and two letters $a$ and $b$ such that $u = au'$ and $v = u'b$. We will refer to this relation as $R_i$ in the rest of this subsection. The way this graph works is similar to the one in \Cref{counting-numerical-section}, in which we use transfer matrices in the standard algebra to count dominating sets. The number of arcs going to some vertex $u$ of a Rauzy graph is the number of ways to obtain this factor by adding a letter to a smaller authorised factor of $\mathcal{L}$. This way, $R_i^n[u]$ counts the number of words of size $n$ the suffix of which is $u$, forbidding all forbidden factors of $\mathcal{L}$ of size up to $i+1$. In the case when the matrix is primitive (in the standard algebra), the Perron-Frobenius theorem states that it admits a largest eigenvalue which is positive and simple. Let us call it $\lambda_i$ for the graph $R_i$.
This $\lambda_i$ is a lower bound on the growth rate. However, for $i \geq k-1$, this $\lambda_i$ becomes the growth rate of the number of words of the language: all forbidden factors of size up to $k-1+1=k$ are guaranteed not to appear.

Unfortunately here our languages of the valid states are not factorial: for the first and last lines of a state, we forbid specific factors which are not forbidden in the "middle" of the word. Still, if we remove this limitation, the factorial language which we would obtain would have the same growth rate as our language of real valid states. As we showed in \Cref{section-primitive-dom}, the matrices of compatibility between the states are primitive. Therefore we could compute the matrices of the associated factorial languages (without the limitations at the extremities of the states). However, we proceeded here by smartly reusing the existing code: we take, as vertices of the Rauzy graph $R_i$, the factors of size $i$ of our languages which are far enough away from the beginning and from the end of the states. Their behaviour is similar to the behaviour of the middle parts of the states, if the distance to $i=0$ and $i=n-1$ is big enough. With the help of the Sage software, we could compute an approximate form of the $\lambda_i$'s by providing the adjacency matrix of the Rauzy graph, computed by our program. We also obtain a polynomial there are root of (the minimal polynomial of the matrix). We give each time the minimal polynomial of the Rauzy graph, or more precisely this polynomial divided by its lowest degree monomial.

\subsection{Arbitrary height}
\label{section-big-number-lines}

We adapt here the method Gonçalves et al. introduced in~\cite{rao}. This method is much less known than the previous ones, and to our knowledge it is the first time it is adapted. Quite some people provided formulas for some domination problems when the number of lines is small and fixed, but no one provided a formula for the real 2D case.

The idea is to assume that  for a sufficiently large grid, there are always dominating sets of minimum size in which the positions of the stones would be a projection of an optimal dominating set (i.e. one with minimum ratio of stones, one fifth here) of $\ZZ^2$, except on a fixed-height border of the grid. The border are special because not every cell has 4 neighbours at the frontier of a finite grid. This assumption turns out to also be true for the 2-domination problem. The height of the border which needs some rearrangement is a constant: it does not depend on the size of the grid we consider (provided $n$ and $m$ are large enough).

We need to count how many stones we need to make up for the problematic fact that the cells of the border have fewer than four neighbours. We use the dual concept of \emph{loss} to integrate this number to the number of stones at the centre of the grid. The loss denotes how much "influence" produced by the stones of a 2-dominating set is wasted. For instance, two neighbouring stones would cause a loss of 2: each stone cell dominates its stone neighbour which did not need to be dominated by another cell since it has a stone. Instead of computing the minimum number of stones needed, we will compute a lower bound on the minimum loss possible on the border. It happens that this lower bound gives us a direct lower bound on the 2-domination number, and that these bounds are sharp.\\

More formally, given a 2-dominating set $D$ of the $n\times m$ grid, we define the \textbf{loss} to be \[\ell(D,n,m) = 4|D|-2(nm-|D|).\] The idea behind this formula is simple: each stone contributes to the domination of its four neighbours, and each cell not in $D$ should be dominated twice. The difference between these two quantities is the influence of stones that was "lost", or "wasted", i.e. not necessary. The loss function should have several characteristics: while it should be "easy" to compute, it should also be reversible so that given the loss, we can deduce the 2-domination number.

The loss for the 2-domination problem can be computed in the following way. Let us consider a 2-dominating set $S$ and a cell which belongs to $S$. If it is a corner of the grid, it induces a loss of 2 because of this; else, if it is on the border, it induces a loss of 1. If a stone cell has $p$ neighbours also in $S$, then its loss is increased by $p$. Now if a cell does not belong to $S$ and has $p$ neighbours in this set, then it has a loss of $p-2$. We illustrate these computations in \Cref{fig-loss-2dom}. For instance, the top-left cell  has a loss of two because it is a corner: its top and left neighbours do not exist. Two cells at the right, we have a loss of two: this cell in on the border and has a neighbour in $S$. The cell with a three on the one before last column is in $S$ and has $p=3$ neighbours in $S$. Last example: the cell on the second column and one before last line has a loss of two: it does not belong to $S$ but has $p = 4$ neighbours in $S$: two of them are enough to 2-dominate it, hence it has a loss of two.

\begin{figure}[h]
\centering
\includegraphics[scale=0.4]{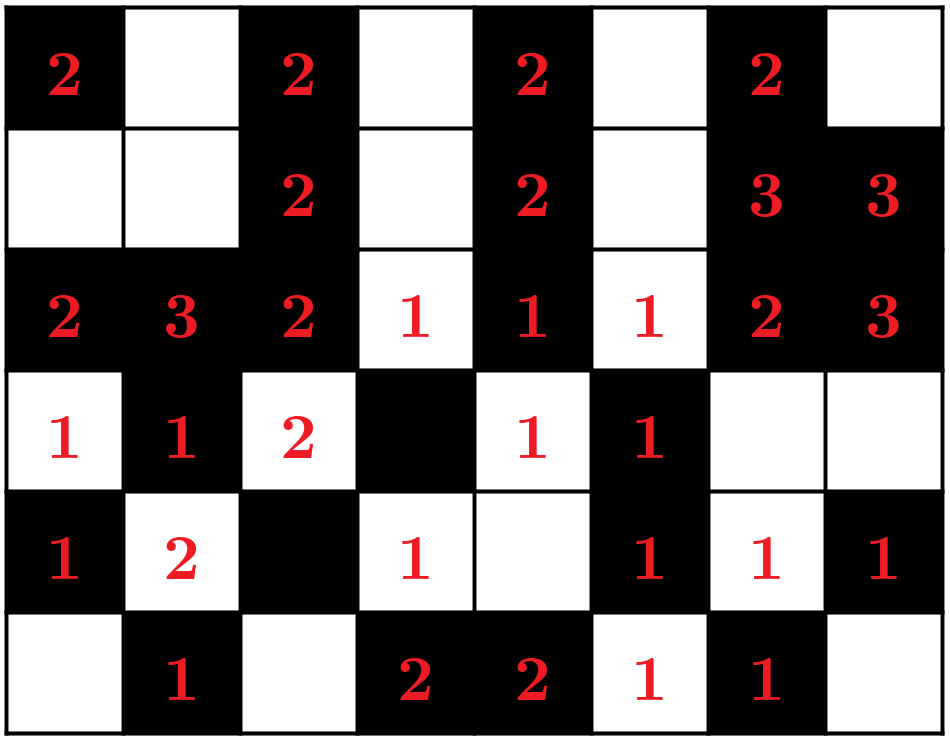}

\caption{Illustration of the local computation of the loss for the 2-domination problem. The red numbers indicate the loss induced by the cell they are on. The cells with no number have a loss of 0.}
\label{fig-loss-2dom}
\end{figure}

\begin{notation}
We denote by $\ell(n,m)$ the \textbf{minimum possible loss} over every 2-dominating set of $G_{n,m}$.
\end{notation}

Here, we can indeed reverse the formula as we wanted: $|D| = (2nm+\ell(D,n,m))/6$.  We then obtain
\begin{equation}
\gamma_2(n,m) = \frac{2nm+\ell(n,m)}{6}.
\label{eq-reversed-loss-2dom}
\end{equation}
The method solves the problem, that is the lower bound we obtain by the loss technique matches the 2-domination numbers: we prove later that it is also an upper bound. Computing the minimum loss over a big grid seems very hard, but we managed to obtain the right values by computing a lower bound for $\ell(n,m)$ which happens to be equal to it: we achieve this by computing the loss only on the border of size $h$ (where $2h < \min(n,m)$, see \Cref{figure-loss}).

\begin{deff}[see \Cref{figure-loss}]
The \textbf{border} of height $h$ of an $n \times m$ grid is the set of cells $(j,i)$ such that either $\min(i, n-1-i) < h$ or $\min(j, m-1-j) < h$.
We define the \textbf{corners} as the four connected parts of the grid composed of cells $(j,i)$ such that both $\min(i, n-1-i) < h$ and $\min(j, m-1-j) < h$.
The remaining four connected parts of the border are called the \textbf{bands}.
\end{deff}

\begin{figure}[H]
\centering

\includegraphics{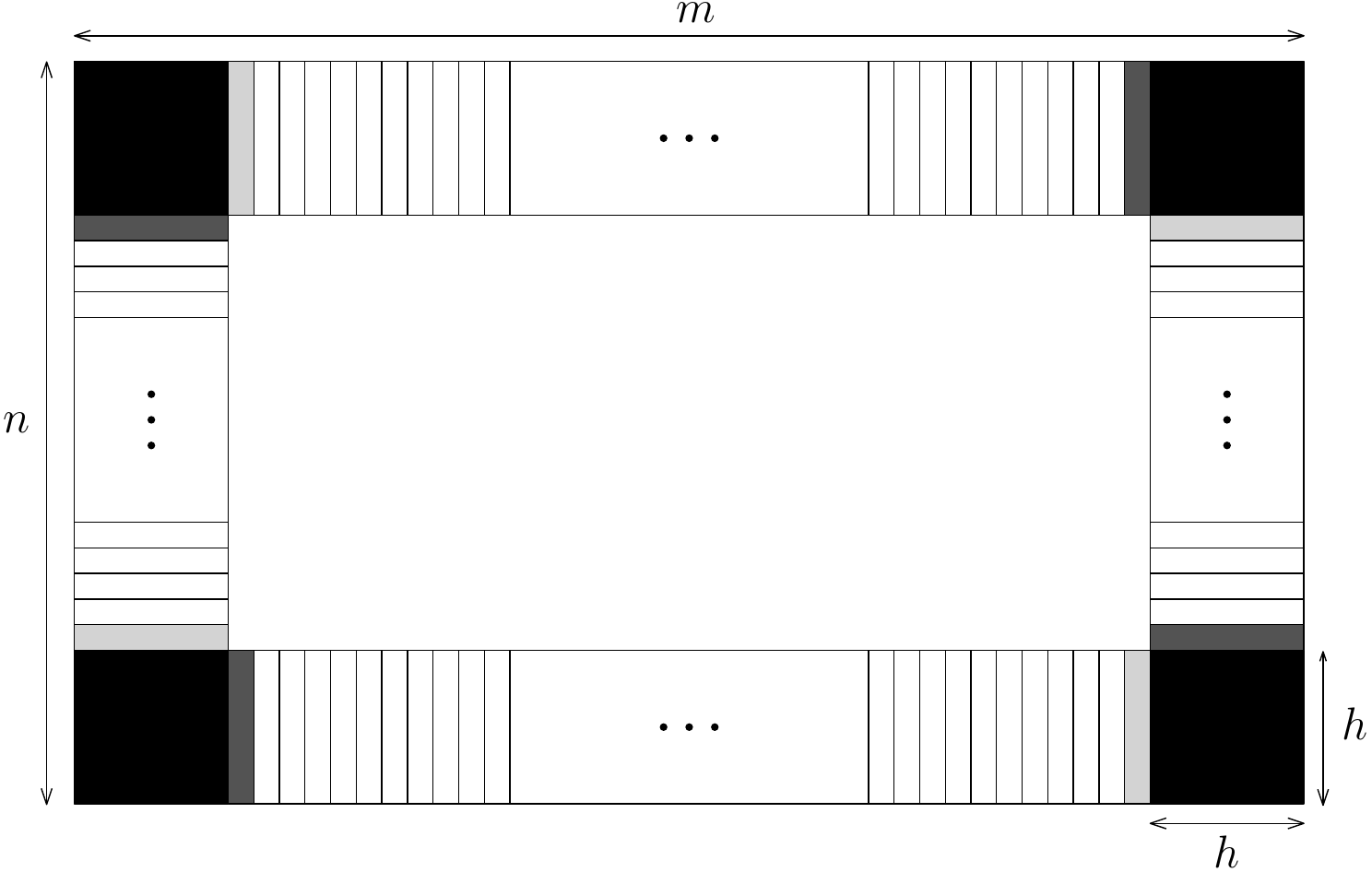}
\caption{The borders of size $h$ of the grid. The four parts coloured in black are the \emph{corners}, and the white parts are the \emph{bands}. The grey cells belong to both the bands and the corners: the ones filled with light grey are the output of a band and input of the corner next to it; the ones in dark grey are the output of a corner and input of the following band.}
\label{figure-loss}
\end{figure}

Once again, we use an algorithm which is faster than an exhaustive search over the dominating sets by working with the notion of states. However, we need to adapt the sets $\VV, \FF, \EE$ and the relation $\RRR$ we worked with.

Let us begin with computing the loss over the bottom band. We will, using transfer matrices as in the beginning of the chapter, compute the loss column by column. We remember, for each state placed on each column, what the minimum possible loss is so that the column is in this state. Thus we need to adapt the sets $\VV, \EE$ and the relation $\RRR$.  In the following we assume, as mentioned before, that $2h < \min(n,m)$.

\begin{notation}
We define the function $\hat{f_j}$ such that, if $D$ is a 2-dominating set,\\$\hat{f_j}(D) = f_j(D)[0], \cdots f_j(D)[h-1]$. $\hat{f_j}$ consists of the bottom $h$ lines of $f_j$.
As previously, $\hat{f_j}(D)$ denotes the column $j$ of $\hat{f}(D)$.
\end{notation}

We continue by defining the set of \textbf{almost valid states}
\[\VV_a = \bigcup_{0\leq j < m}{\{\hat{f_j}(D) : D \textrm{ is a 2-dominating set}\}}.\]
This set contains the states we will enumerate to compute the loss. It is in fact almost the same set as $\VV$: $S \in \VV_a$ if and only if, for $i \in \llbracket 0 , h-1 \rrbracket$:
\begin{itemize}[noitemsep, topsep=0pt]
\item if $S[i] =$ \textsc{need\_one} then at most one among $S[i-1]$, $S[i+1]$ is \textsc{stone};
\item if $S[i] =$ \textsc{ok} then $i = 0$ or at least one among $S[i-1]$ and $S[i+1]$ is \textsc{stone}.
\end{itemize}
\vspace{5pt}
We  need neither first nor dominated states to compute the loss, so we will neither define a set $\FF_a$ nor $\EE_a$. We define the relation of \textbf{almost-compatibility}, which differs a bit from $\RRR$. The main difference with the relation of compatibility we used earlier concerns the top line (of the bottom border) $i = 0$: the first cell will have a neighbour above it, in the centre of the grid, so we need to consider the case when this neighbour has a stone. More explicitly, if $S, S'\in \VV_a$, $S \RRR_a S'$ if and only if for $i \in \llbracket 0 , n-1 \rrbracket$:
\begin{itemize}[noitemsep, topsep=0pt]
\item if $S[i] =$ \textsc{need\_one} $S'[i] =$ \textsc{stone};
\item if $S'[i] =$ \textsc{need\_one} and $i \neq 0$ then exactly one among $S'[i-1]$, $S'[i+1]$ and $S[i]$ is \textsc{stone};
\item if $S'[0] =$ \textsc{need\_one} then at most one among $S'[1]$ and $S[0]$ is \textsc{stone};
\item if $S'[i] =$ \textsc{ok} and $i \neq 0$ then at least two among $S'[i-1]$, $S'[i+1]$ and $S[i]$ are \textsc{stone};
\item if $S'[0] =$ \textsc{ok} then at least one among $S'[1]$ and $S[0]$ is \textsc{stone}.
\end{itemize}
\vspace{5pt}
As said just before, we allow a state whose first cell has value \textsc{need\_one} not to be dominated by the next column: in this case, we consider that its neighbour above dominates it. This is a conservative assumption: we may underestimate the loss this way, but this is fine since we compute a lower bound on the loss.\\

We use again the exponentiation of a transfer matrix to compute the minimum loss over some border of a grid. We define the \textbf{band matrix $\bm{T_\mathrm{a}}$} such that $T_\mathrm{a}[S][S']$ contains the loss induced by putting state $S'$ after state $S$. By exponentiating the matrix $T_\mathrm{a}$ we can compute the minimum loss over a border, excluding the loss induced by the first state alone on itself.

The next step is to compute the loss for the corners. A corner is composed of an $h$ by $h$ square, plus an input column and an output column. Let us consider the bottom right corner of \Cref{figure-loss}. The last column of the bottom band is coloured in light grey: it is the input column of the square (and the output column of the band). At the other side of the square, the horizontal "column" filled with dark grey is the output column of the square (and the input column of the next border). Suppose that the input column of the square is in state $A$ and its output column is in state $B$. The \textbf{loss over the corner} is the sum of:
\begin{itemize}[noitemsep, topsep=0pt]
	\item the loss on the corner by $A, B$ and the corner itself;
	\item the loss on $B$ by the corner and $B$ itself;
	\item the loss on $A$ by the corner.
\end{itemize}
\vspace{5pt}
The explanation is simple: the input state was fixed by the loss computation on the band (so its loss so far was already counted) and the corner provides the first state for the next band (so we have to compute its loss so far).
Similarly to $T_\mathrm{a}$, we define the \textbf{corner matrix $\bm{C_\mathrm{a}}$} for a corner: $C_\mathrm{a}[S][S']$ contains the minimum loss over a corner whose input state is $S$ and output state is $S'$ as defined just before (we do not count the loss induced by $S$ alone on itself).

\begin{lemma}
The minimum loss over the border is \[ \min_{S \in \VV_a}((T_\mathrm{a}^{m-2h-1}C_\mathrm{a}T_\mathrm{a}^{n-2h-1}C_\mathrm{a})^2[S][S]).\]
\label{lemma-compute-loss-matrix}
\end{lemma}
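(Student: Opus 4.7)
The plan is to view the border as a closed cyclic walk through four bands and four corners, and express its minimum loss as a product (in the $(\min,+)$-algebra) of the two transfer matrices $T_\mathrm{a}$ and $C_\mathrm{a}$ already introduced.

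First I would check that the loss function is \emph{local} and decomposes additively along the border, so that the only thing to track from one column to the next is the current almost-valid state in $\VV_a$. Each cell of the border contributes a quantity that depends only on itself and its neighbours; since the band has height $h$ and the grid has height at least $2h+1$, no two opposite bands share a neighbour, so the loss of the border can be split as a sum of contributions indexed by the columns (for bands) and by the corners, with no interaction between one band/corner and the next \emph{other} than through the shared column encoded by the transfer relation $\RRR_a$. By construction $T_\mathrm{a}[S][S']$ records exactly the loss attributed to inserting a column in state $S'$ immediately after a column in state $S$, and $C_\mathrm{a}[S][S']$ records exactly the loss attributed to an entire $h\times h$ corner with input column in state $S$ and output column in state $S'$, in both cases without counting the loss already charged to the input column.

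Second, I would describe the geometry of the border as a cycle: starting at some band column (say on the bottom band), one crosses $m-2h$ columns of the bottom band (requiring $m-2h-1$ applications of $T_\mathrm{a}$ to propagate between these states), then the bottom-right corner ($C_\mathrm{a}$), then $n-2h$ columns of the right band (i.e.\ $n-2h-1$ applications of $T_\mathrm{a}$, the band matrix being the same up to the rotational symmetry that swaps rows with columns), then the top-right corner ($C_\mathrm{a}$), then the top band, the top-left corner, the left band, and finally the bottom-left corner, coming back to the starting column. Using the symmetries of the grid (all four corners are congruent and all bands are congruent to a band of the appropriate length), this cyclic product reads exactly $\bigl(T_\mathrm{a}^{m-2h-1}C_\mathrm{a}T_\mathrm{a}^{n-2h-1}C_\mathrm{a}\bigr)^2$.

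Third, since in the $(\min,+)$-algebra the $(S,S)$-entry of an $n$-th power of a transfer matrix gives the minimum cost of a closed walk of length $n$ starting and ending at state $S$, and since we are free to pick any state as the starting column of the cycle, taking $\min_{S\in\VV_a}$ of this diagonal entry yields the minimum total loss charged to the whole border. The key steps to verify, and the main obstacle, are the correct book-keeping of the loss at the interfaces between a band and the adjacent corner: I must make sure that every cell's contribution to $\ell$ is counted exactly once, i.e.\ that the conventions I chose (each matrix $T_\mathrm{a}$ or $C_\mathrm{a}$ charges the loss of its output column but not of its input column) are consistent all the way around the cycle. Once this accounting is done correctly, the identity of the lemma follows directly from the definitions of $T_\mathrm{a}$, $C_\mathrm{a}$, $\RRR_a$ and the standard interpretation of $(\min,+)$-powers as shortest-walk costs.
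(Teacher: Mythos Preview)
Your proposal is correct and follows essentially the same approach as the paper: both view the border as a closed cycle of four bands and four corners, interpret the $(\min,+)$-product $\bigl(T_\mathrm{a}^{m-2h-1}C_\mathrm{a}T_\mathrm{a}^{n-2h-1}C_\mathrm{a}\bigr)^2$ as the cost of a closed walk around this cycle, and take $\min_{S}$ of the diagonal to close it up. You are slightly more explicit than the paper about the shortest-walk interpretation and the accounting convention (output column charged, input column not), which is exactly the ``book-keeping'' the paper verifies in its own proof.
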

Before diving into the proof, we mention that, since we are in the (min,+)-algebra, the formula in the above lemma can be rewritten as $\textrm{Tr}((T_\mathrm{a}^{m-2h-1}C_\mathrm{a}T_\mathrm{a}^{n-2h-1}C_\mathrm{a})^2)$.

\begin{proof}
$T_\mathrm{a}^{m-2h-1}$ is the minimum loss over a band starting on the output column of the bottom-left corner and ending on the input column of the bottom-right corner. Hence $T_\mathrm{a}^{m-2h-1}C_\mathrm{a}$ means computing the minimum loss on the bottom band we have just described, and extending it to the output state of the bottom right corner. As mentioned above, in the corner loss we take the input state of the corner as it is (which is exactly what $T^{m-2h-1}$ provides: the loss on the last state by itself and its preceding column was already computed). Since $C_\mathrm{a}$ includes the loss induced by the corner onto the output state of the corner, $T_\mathrm{a}^{m-2h-1}C_\mathrm{a}T_\mathrm{a}^{n-2h-1}$ extends the loss to the right band. Now, $T_\mathrm{a}^{m-2h-1}C_\mathrm{a}T_\mathrm{a}^{n-2h-1}C_\mathrm{a}$ corresponds to the loss from the output of the bottom-left corner to the output of the top-right corner, that is the loss of half the border. By squaring this matrix, we obtain the minimum losses over the whole border of the grid: $(T_\mathrm{a}^{m-2h-1}C_\mathrm{a}T_\mathrm{a}^{n-2h-1})^2[S][S]$ means that we compute the minimum loss by starting from the leftmost column of the bottom band, excluding the bottom-left corner, which we suppose is in state $S$, and leaving it in state $S$ after the bottom-left corner computation, after going through all the band in counter-clockwise direction.
\end{proof}

\begin{lemma}
Let $\ell_h(n,m)$ be the minimum loss over the border of height $h$ on a $n \times m$-grid for the 2-domination. Then $\lceil \frac{2nm+\ell_h(n,m)}{6} \rceil$ is a lower bound on $\gamma_2(n,m)$.
\label{lemma-lower-bound-loss-dom-number}
\end{lemma}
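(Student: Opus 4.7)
The plan is to show the chain of inequalities $\ell_h(n,m) \leq \ell(n,m) = \ell(D,n,m)$ for any optimal 2-dominating set $D$, and then apply the formula $\gamma_2(n,m) = (2nm+\ell(n,m))/6$ obtained from inverting $\ell(D,n,m) = 4|D|-2(nm-|D|) = 6|D|-2nm$, together with the integrality of $|D|$.

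First, I would verify that the loss admits the pointwise (local) decomposition described in the text around Figure~\ref{fig-loss-2dom}: each cell contributes independently (corner stone contributes $2$ plus number of stone neighbours, border stone contributes $1$ plus number of stone neighbours, interior stone contributes its number of stone neighbours, and a non-stone cell contributes $p-2$ where $p$ is its number of stone neighbours). A quick bookkeeping check shows the sum of these local contributions equals $6|D|-2nm$. The crucial point is that each contribution is non-negative: for non-stone cells this uses that $D$ is 2-dominating, so $p\geq 2$. Consequently the sum of local losses over the border of height $h$ (bands plus corners) is at most the total loss of $D$.

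Next I would argue that $\ell_h(n,m)$, as computed by the matrix expression in Lemma~\ref{lemma-compute-loss-matrix}, is indeed a lower bound on the border loss induced by any 2-dominating set $D$ of $G_{n,m}$. The point is that the almost-compatibility relation $\RRR_a$ is a relaxation of the true local constraints on the border: at line $i=0$ of $\hat f_j$ (the boundary between the band and the interior of the grid), we permit an \textsc{ok} or \textsc{need\_one} cell to be ``covered from above'' as if there were an extra stone in the interior; this may only decrease the counted loss compared to the true loss the configuration $D$ induces on the border. Moreover, $\VV_a$, $\RRR_a$ and the corner matrix $C_\mathrm{a}$ are built to exactly capture the local loss contributions of the band/corner cells. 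Therefore the minimum over states computed by $(T_\mathrm{a}^{m-2h-1}C_\mathrm{a}T_\mathrm{a}^{n-2h-1}C_\mathrm{a})^2$ bounds from below the border loss of any actual 2-dominating set.

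Putting this together: for every 2-dominating set $D$ of $G_{n,m}$,
\[
\ell_h(n,m)\ \leq\ \text{(loss of $D$ restricted to the border)}\ \leq\ \ell(D,n,m) = 6|D|-2nm,
\]
so $|D|\geq (2nm+\ell_h(n,m))/6$. Since $|D|$ is an integer, $|D|\geq \lceil (2nm+\ell_h(n,m))/6\rceil$, and taking the minimum over $D$ yields $\gamma_2(n,m)\geq \lceil (2nm+\ell_h(n,m))/6\rceil$. The main obstacle is the second step: carefully checking that every relaxation built into $\VV_a$, $\RRR_a$ and $C_\mathrm{a}$ (in particular the treatment of the inner boundary line $i=0$ and of the cells shared between bands and corners in Figure~\ref{figure-loss}) only undercounts, never overcounts, the local loss of any real 2-dominating configuration; one must in particular avoid double-counting contributions at the light-grey/dark-grey cells at the band--corner interfaces.
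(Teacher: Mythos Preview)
Your proposal is correct and follows the same approach as the paper, which gives only a three-line proof: the loss over any subregion is a lower bound on the total loss (because each local contribution is non-negative), so $\ell_h(n,m)\le\ell(n,m)$, and then one substitutes into \Cref{eq-reversed-loss-2dom} and rounds up. You have simply made explicit the two points the paper leaves implicit---the non-negativity of the local loss contributions (which justifies restricting to the border) and the fact that the relaxations in $\VV_a$, $\RRR_a$, $C_\mathrm{a}$ can only undercount the loss---so your argument is a careful expansion of the paper's, not a different route.
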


\begin{proof}
When computing the loss over some part of the grid, we obtain a lower bound on the loss of the whole grid. The same applies when we compute the minimum loss. We then replace $\ell(n,m)$ by $\ell_h(n,m)$ in \Cref{eq-reversed-loss-2dom}. Since $\gamma_2(n,m)$ is an integer, we can take the upper bound.
\end{proof}

We now try to find some $h$ for which the minimum loss over the border of height $h$ matches the minimum loss over the grid. In the rest of this section, \textbf{we consider that $\mathbf{h = 6}$}, unless explicitly specified otherwise. This value is sufficient to obtain the correct bounds with our program. Here again, we have the problem of computing the minimum loss over borders of arbitrary widths. However, we may notice that, if we let $H(n,m) = (T_\mathrm{a}^{m-13}C_\mathrm{a}T_\mathrm{a}^{n-13})^2$, then there exist some $j_0, k$ and $p$ such that $\forall\; r \geq r_0, T_\mathrm{a}^{r+k} = T_\mathrm{a}^{r}+p$, so that $H(n+i, m+j) = H(n,m) + 2(i+j)p$ for all $ n,m \geq 13+r_0$. Indeed, the matrix $T_\mathrm{a}$ is primitive for the same reasons as for the transfer matrix of \Cref{th-primitive-rec}. The factor 2 before $(i+j)p$ comes from the fact that the matrix $T_\mathrm{a}$ appears twice for the horizontal bands and twice for the vertical onew.

With the program considering a band of height 6, we find that
\begin{equation}\text{for all }r \geq 20,\; T_\mathrm{a}^{r+3} = T_\mathrm{a}^r + 6.
\label{rec-loss-2dom}
\end{equation}

Note that if we choose a border of height 7, the same recurrence relation on $T_\mathrm{a}$ is true from $r \geq 17$. From \Cref{rec-loss-2dom} and because $\ell(n,m)$ is symmetric, we deduce:

\begin{claim}
$\ell_6(n+3,m) = \ell_6(n,m+3) = \ell_6(n,m)+12$ for every $n,m \geq 33$.
\label{claim-loss-rec}
\end{claim}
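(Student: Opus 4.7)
The plan is to derive the recurrence on $\ell_6$ directly from the matrix expression established in \Cref{lemma-compute-loss-matrix} together with the primitivity-type relation~\eqref{rec-loss-2dom} on $T_\mathrm{a}$. Since \Cref{lemma-compute-loss-matrix} gives
\[
\ell_6(n,m) = \min_{S\in\VV_a}\bigl((T_\mathrm{a}^{m-13}C_\mathrm{a}T_\mathrm{a}^{n-13}C_\mathrm{a})^2[S][S]\bigr),
\]
the task is to understand how this quantity changes when we replace $m$ by $m+3$ (the case $n \mapsto n+3$ then follows by symmetry, since the 2-domination problem is invariant under transposition of the grid, hence $\ell_6(n,m)=\ell_6(m,n)$).

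Now I would record the key algebraic fact about the $(\min,+)$-algebra: if a matrix $A$ is obtained from $B$ by adding the same integer $c$ to every entry, then for any matrix $X$ we have $AX = BX + c$ and $XA = XB + c$ (each entry in a $(\min,+)$-product is a minimum of sums, so shifting every entry of one factor by $c$ simply shifts every entry of the product by $c$). In particular, adding a constant to one factor of a $(\min,+)$-product propagates unchanged through any number of further products. Applying~\eqref{rec-loss-2dom} with $r = m-13$ (which is allowed precisely when $m \geq 33$), each occurrence of $T_\mathrm{a}^{m-13}$ may be replaced by $T_\mathrm{a}^{m+3-13} - 6$.

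Next I would count occurrences. When one expands the square, the product inside the $\min$ is
\[
T_\mathrm{a}^{m-13}C_\mathrm{a}T_\mathrm{a}^{n-13}C_\mathrm{a}T_\mathrm{a}^{m-13}C_\mathrm{a}T_\mathrm{a}^{n-13}C_\mathrm{a},
\]
so $T_\mathrm{a}^{m-13}$ appears exactly twice. Replacing $m$ by $m+3$ therefore shifts every entry of the product by $2\cdot 6 = 12$. Taking the minimum over the diagonal preserves the shift, yielding $\ell_6(n,m+3)=\ell_6(n,m)+12$; the symmetric identity $\ell_6(n+3,m)=\ell_6(n,m)+12$ comes either from the analogous argument applied to the two copies of $T_\mathrm{a}^{n-13}$ or from the invariance under transposition combined with the previous identity.

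There is no genuine obstacle here beyond bookkeeping: the only care needed is to verify that the hypothesis of~\eqref{rec-loss-2dom} is satisfied at each invocation, i.e.\ $n-13 \geq 20$ and $m-13 \geq 20$, which is exactly the assumption $n,m\geq 33$ in the statement, and to justify that scalar additions commute with $(\min,+)$-products so that the $+6$ shifts may be pulled outside the entire expression before taking the minimum over $S$.
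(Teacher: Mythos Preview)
Your proposal is correct and follows essentially the same approach as the paper: apply \eqref{rec-loss-2dom} to the two occurrences of $T_\mathrm{a}^{m-13}$ (resp.\ $T_\mathrm{a}^{n-13}$) in the expanded square from \Cref{lemma-compute-loss-matrix}, use the tropical identity $A(B+c)=AB+c$ to pull the two shifts of $+6$ outside, and read off the threshold $33$ from $m-13\ge 20$. The paper's justification is terser but identical in substance.
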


The recurrence begins at 33 because in \Cref{lemma-compute-loss-matrix}, $T_\mathrm{a}$ is put to the power $n-2h-1$ for instance, so that to have $n-2h-1 \geq 20$ we need to have $n \geq 20+2h+1 = 33$ for $h=6$. Despite \Cref{rec-loss-2dom}, it is 12 that we add and not 6, because in the formula in \Cref{lemma-compute-loss-matrix}: as we wrote just above, there are two vertical bands and two horizontal ones. The formula can be rewritten as $\ell_6(n,m) = \ell_6(n,m-3)+12$ for every $n \geq 36$.
So, to complete the proof, we must check that the formula holds for $n,m \leq 36$ to initialise the recurrence. We check with the method for fixed height the values for $9 \leq n \leq 12$ because the loss method can only by used when $2h < \min(n,m)$. This means that, when the height is 6, we must check the $\gamma_2$ values for $n < 13$ by another method. Then, we can compute our lower bound for $\gamma_2(n,m)$ values for $13 \leq n,m \leq 36$ using the loss method with height 6. These values match the upper bound we show below, so we obtain the values of $\gamma_2(n,m)$.

Now the recurrence relation on $T_\mathrm{a}^{r}$ from \Cref{rec-loss-2dom} and exploited in \Cref{claim-loss-rec} completes the proof of the theorem. Indeed, if $13 \leq n,m \leq 33$ then for all $k \in \NN$:
\begin{align*}
\gamma_2(n,m+3k) \geq \frac{2n(m+3k)+\ell(n,m+3k)}{6} \geq & \frac{2n(m+3k)+\ell_6(n,m+3k)}{6} \\ \geq & \frac{2nm+\ell_6(n,m)}{6} + nk+2k\\ \geq & \left\lfloor \frac{(n+2)(m+2)}{3}-6 \right\rfloor +nk +2k \\ \geq & \left\lfloor \frac{(n+2)(m+3k+2)}{3}-6 \right\rfloor.
\end{align*}

The first inequality comes from \Cref{lemma-lower-bound-loss-dom-number}. The transition from the first line to the second one comes from \Cref{claim-loss-rec}.
The transition from the second to the third line comes from the fact that we checked values of $\gamma_2$ for $n$ and $m$ between 13 and 33 and they match the equality we use to substitute $(2nm+\ell_6(n,m))/6$.
This proves the lower bound for every $13 \leq n \leq 36$ and $m \in \NN$. To prove it for $n > 33$, it suffices to do the exact same computation, namely computing $\gamma_2(n+3k, m)$ for any $n \geq 36$ and any $m \geq 2 \cdot 6 +1 = 13$.

\begin{figure}[h!]
\begin{center}
\includegraphics[scale=0.38]{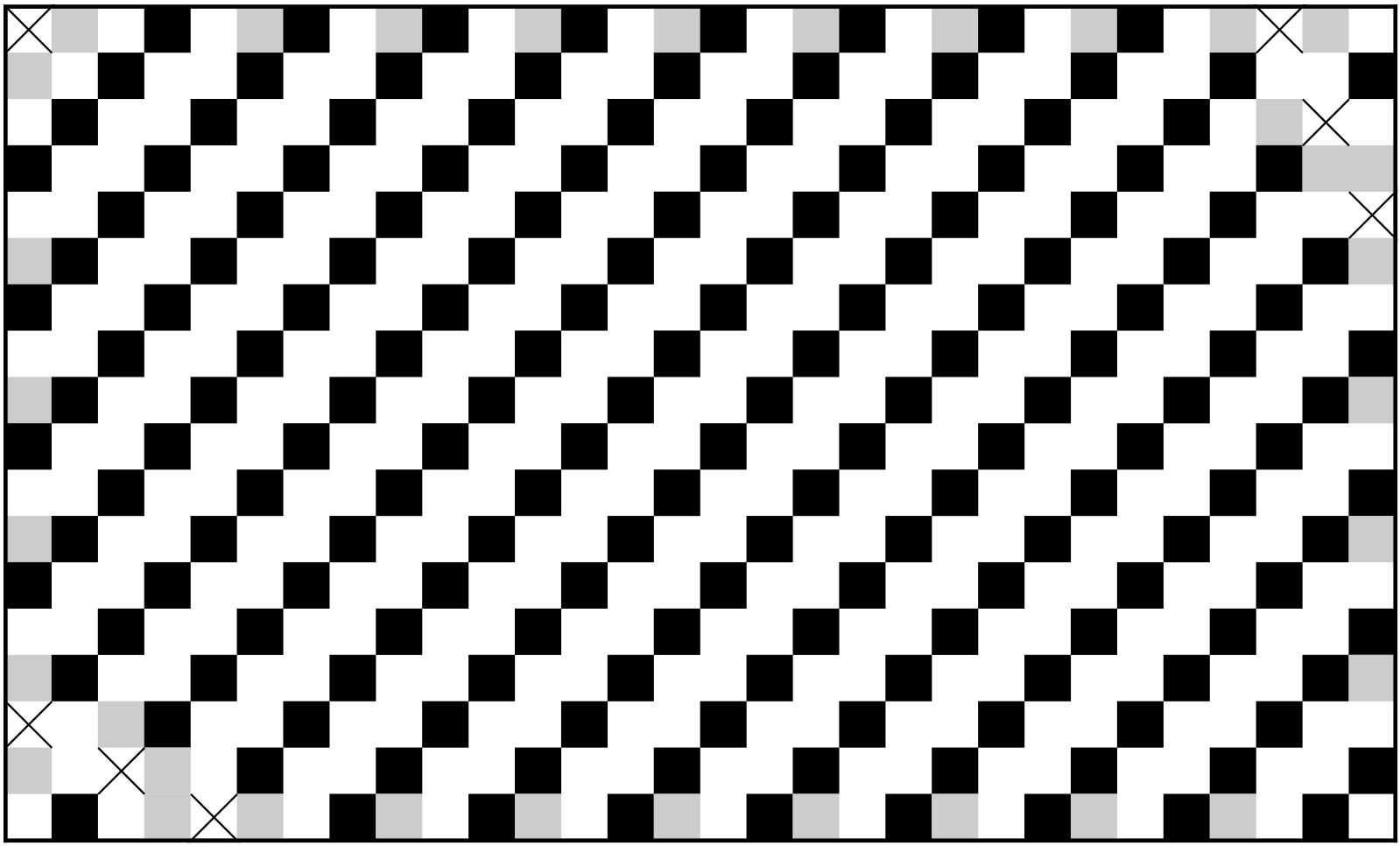}
\caption{Example of an optimal 2-dominating set $D$ on a $18 \!\times\! 30$ grid. $D$ is the set of cells which are filled with grey or black. The black cells and the cells with a cross constitute the projection of a minimal 2-dominating set on the grid $\ZZ^2$.}
\label{ex-dominating-set}
\end{center}
\end{figure}

\begin{figure}[H]
\centering
\includegraphics[scale=1]{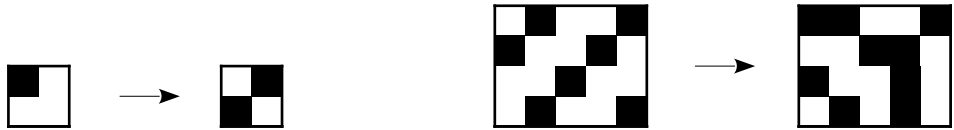}
\caption{The two rules used to convert a restriction of an optimal 2-dominating set of $\ZZ^2$ into optimal 2-dominating set of a rectangle. For each corner, if one of the two patterns before the arrows appear, we replace them by the version on the right of the arrow.}
\label{figure-rules-2dom}
\end{figure}

To show that this bound is sharp, we show that our lower bound is also an upper bound, by giving general 2-dominating sets of the right sizes. To construct these 2-dominating sets we consider, for the infinite grid $\ZZ^2$, the 2-dominating set $D = \{(j,i) : i+j \mod 3 = 0\}$ and its rotations. We then take all the different restrictions of these 2-dominating sets for $\ZZ^2$ into a finite $n \!\times\! m$ grid. For each restriction, we modify each corner of size 6 according to two rules which depend on the pattern of that corner. The two rules are shown in \Cref{figure-rules-2dom}. A rule corresponds to removing some cells and adding some other cells to $D$ in that corner. For instance, Rule~1 could be stated as follows: if the cell at the angle of the grid is in the dominated set, we remove it from the set and add instead its two neighbours. Finally, we put a stone on the cells of the first and last rows and columns which are not 2-dominated. One can show that for $13 < n \leq m$ one of the resulting 2-dominating $D_{n,m}$ set has the right size. We can see an example of such a $D_{n,m}$ for a $18 \!\times\! 30$ grid in \Cref{ex-dominating-set}. The first rule is used in the top-left corner and the second rule is used in the top right and bottom left corner. No modifications need to be done in the bottom right corner. By counting the number of stones in the regular pattern (black and crossed cells in \Cref{ex-dominating-set}), removing the number of crossed cells, and adding the number of grey cells, we get $D = \frac{nm+2n+2m}{3}-5$, which is equal to the number in \Cref{th-2dom} when $n$ and $m$ are multiple of 3.

The grid we show is of size $18 \!\times\! 30$, but it extends immediately to any $n\!\times\! m$ grid when $n$ and $m$ are both greater than 14 and multiple of 3. Applying the same method for $14 \leq n,m$ when the two numbers have other congruences modulo 3 leads to 2-dominating sets having the right size.

\section{Adaptation to other problems and results}
\label{section-other-problems-dom}

In this section we explain the adaptations needed to make the method presented in the previous section work for one other type of domination, namely the \emph{Roman domination}. We will give the corresponding theorem obtained by applying this method. We will also give partial results on some dominations problems, namely the \emph{distance-2 domination} and the \emph{total domination}, for which we could not get the method to fully work. We will conjecture the possible reasons explaining the partial failure of the method to obtain a closed formula for arbitrary heights and widths of the grid. We also give a lower bound for the total domination.

\subsection{The Roman domination}

We explain here how the code was adapted to the Roman domination to obtain the following result.

\begin{thm}[\cite{rao-talon}]
The Roman domination number is such that, for all $1 \leq n \leq m$,\\
\[\gamma_\emph{R}(n,m) =    \left\{
\setstretch{1.5}
\begin{array}{ll}
      \left\lceil\frac{2m}{3}\right\rceil & \quad\textrm{if }n = 1 \\
      m+1 & \quad\textrm{if }n=2 \\
      \left\lceil \frac{3m}{2} \right\rceil& \quad\textrm{if } n=3\textrm{ and } m\mod 4 = 1 \\
      \left\lceil \frac{3m}{2} \right\rceil+1& \quad\textrm{if } n=3\textrm{ and } m\mod 4 \neq 1 \\
      2m+1& \quad\textrm{if } n=4\textrm{ and } m = 5 \\
      2m& \quad\textrm{if } n=4\textrm{ and } m > 5 \\
      \left\lfloor\frac{12m}{5}\right\rfloor+2& \quad\textrm{if } n=5\\
      \left\lfloor\frac{14m}{5}\right\rfloor+2& \quad\textrm{if } n=6\textrm{ and } m\mod 5 \in \{0,3,4\}\\
      \left\lfloor\frac{14m}{5}\right\rfloor+3& \quad\textrm{if } n=6\textrm{ and } m\mod 5 \notin \{0,3,4\}\\
      \left\lfloor\frac{16m}{5}\right\rfloor+2& \quad\textrm{if } n=7 \textrm{ and } m = 7 \textrm{ or }m\mod 5 = 0\\
      \left\lfloor\frac{16m}{5}\right\rfloor+3& \quad\textrm{if } n=7 \textrm{ and } (m > 7\textrm{ and }m\mod 5 \neq 0)\\
      \left\lfloor\frac{18m}{5}\right\rfloor+4& \quad\textrm{if } n=8 \textrm{ and } m \mod 5 = 3\\
      \left\lfloor\frac{18m}{5}\right\rfloor+3& \quad\textrm{if } n=8 \textrm{ and } m \mod 5 \neq 3\\
      
      \left\lfloor \frac{2(n+1)(m+1)-2}{5}\right\rfloor-1 & \quad\textrm{if } n\geq 9 \textrm{ and } n\mod 5 = 4 \textrm{ and } m \mod 5 = 4\\
      \left\lfloor \frac{2(n+1)(m+1)-2}{5}\right\rfloor & \quad\textrm{if } n\geq 9 \textrm{ and } n\mod 5 \neq 4 \textrm{ or } m \mod 5 \neq 4
\end{array}
\right. \]
\label{th-Roman-domination}
\end{thm}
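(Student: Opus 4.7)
The plan is to follow the two-step strategy used for the 2-domination in \Cref{section-dom-method}, adapting the state alphabet, the transfer matrix and the loss function to the Roman setting.

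For each fixed $1 \leq n \leq 8$, I would work with column states over the alphabet $\{$\textsc{two}, \textsc{one}, \textsc{ok}, \textsc{need\_two}$\}$, where \textsc{two} and \textsc{one} mean the cell carries respectively two or one troop, \textsc{ok} means the cell is empty and already has a \textsc{two} neighbour in the preceding or current column, and \textsc{need\_two} means the cell is empty and still requires a \textsc{two} neighbour in the next column. The sets $\mathcal{V}$, $\mathcal{F}$, $\mathcal{E}$ and the compatibility relation $\mathcal{R}$ are then obtained by literally translating the Roman rule ``only \textsc{two}-cells dominate their empty neighbours''; the associated transfer matrix $T_\mathrm{R}$ in the $(\min, +)$-algebra assigns to $S \to S'$ the weight $2a + b$, where $a$ and $b$ count the \textsc{two} and \textsc{one} entries of $S'$, whenever $S\mathcal{R}S'$, and $+\infty$ otherwise. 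Primitivity of $T_\mathrm{R}$ is immediate exactly as in \Cref{prop-primitive}: from any state one can reach in one step the all-\textsc{two} column, from which any other valid state is reachable in at most two steps. Thus \Cref{th-meta-grids} applies and yields a closed formula for each fixed $n$; the case analysis for $n \leq 8$ in \Cref{th-Roman-domination} is then read off the recurrence relations produced by the program together with the first few explicit values.

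For $n \geq 9$, I would adapt the loss method of \Cref{section-big-number-lines}. The natural loss is $\ell_\mathrm{R}(D, n, m) = 5(|S_1| + 2|S_2|) - 2nm$, which reverses cleanly into $\gamma_\mathrm{R}(n, m) = (2nm + \ell_\mathrm{R}(n, m))/5$ and vanishes on the optimal pattern of $\mathbb{Z}^2$ given by $S_2^\infty = \{(j, i) : 2i + j \equiv 0 \pmod 5\}$, where each \textsc{two}-cell dominates its five closed neighbours exactly once. The key point, as for the 2-domination, is that $\ell_\mathrm{R}$ decomposes as a local sum: a \textsc{two}-cell on a corner (resp.\ on an edge) contributes two (resp.\ one) units for its missing neighbours, every pair of neighbouring troop-cells contributes one unit of overlap, every empty cell with $p \geq 2$ \textsc{two}-neighbours contributes $p-1$ units of redundancy, and every \textsc{one}-cell contributes a fixed positive constant reflecting its poor efficiency; the coefficients are uniquely determined by the requirement that the total equals $5\cdot\mathrm{cost} - 2nm$. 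This locality is what allows the loss restricted to a border of fixed height $h$ to give a valid lower bound on $\ell_\mathrm{R}(n, m)$, hence on $\gamma_\mathrm{R}(n, m)$.

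With this in hand, I would build a band matrix $T_a$ and a corner matrix $C_a$ over the truncated alphabet exactly as in \Cref{lemma-compute-loss-matrix}, and, by primitivity of $T_a$, derive a recurrence of the form $\ell_h(n, m+5) = \ell_h(n, m) + c$ for $n, m$ beyond some threshold. Checking the finitely many initial values with the program, together with a matching upper-bound construction obtained by restricting one of the five translates of $S_2^\infty$ to an $n \times m$ window and locally patching each of the four corners via a bounded list of rewriting rules analogous to \Cref{figure-rules-2dom} (a few swaps of \textsc{two} for combinations of \textsc{one} and \textsc{two} along the border, to dominate the remaining cells at minimum cost), yields the announced formula together with the $-1$ correction in the case $n \equiv m \equiv 4 \pmod 5$. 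The main obstacle, as in the 2-domination case, is the calibration of the border height $h$: it must be small enough to keep $T_a$ and $C_a$ tractable, yet large enough for the lower bound to be tight; the case split for $n \leq 8$ in \Cref{th-Roman-domination} reflects the fact that below this threshold the bulk construction cannot be realised without losing more than the asymptotic cost saved by the recurrence, so each small case must be handled separately.
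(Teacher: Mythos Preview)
Your proposal follows essentially the same route as the paper: the same four-letter state alphabet (the paper writes \textsc{two\_stones}, \textsc{stone}, \textsc{ok}, \textsc{need\_one}, which is your \textsc{two}, \textsc{one}, \textsc{ok}, \textsc{need\_two}), primitivity via the all-\textsc{two} column, and the loss method on a fixed-height border for $n\ge 9$. Your loss $5\cdot\mathrm{cost}-2nm$ is exactly twice the paper's $\tfrac{5}{2}\cdot\mathrm{cost}-nm$; the paper even remarks that it computes $2\ell$ in practice to avoid fractions, and the penalty you allude to for a \textsc{one}-cell is made explicit there as $3/2$ (so $3$ in your normalisation). The period-$5$ recurrence you anticipate is precisely what the paper's program finds: $T_{\mathrm a}^{r+5}=T_{\mathrm a}^{r}+5$ for $r\ge 12$.

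The one genuine divergence is the upper bound. You propose to build it yourself by projecting the optimal $\mathbb{Z}^2$ pattern and patching the corners, mirroring what was done for the $2$-domination. The paper does not do this: it simply invokes Curr\`o's thesis, which had already established the matching upper bound, and observes that the computed lower bound coincides with it. Your route would work in principle, but you should be aware that carrying it out in full (finding the right corner rules, checking all residue classes, and accounting for the $-1$ when $n\equiv m\equiv 4\pmod 5$) is real work that the paper sidesteps entirely by citation. If you intend a self-contained proof, that construction is the piece you would actually have to write down; otherwise, citing the existing upper bound as the paper does is both legitimate and shorter.
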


 First, for this problem the set of values for a cell is 
$\SSS = \{$ \textsc{two\_stones}, \textsc{stone}, \textsc{ok}, \textsc{need\_one} $\}$. \textsc{stone} means that we put a troop (here we will talk of stones instead) on the cell, so it does not need to be dominated by another cell. \textsc{two\_stones} means that we put two troops on the cell, so it dominates its neighbours. Now a state $S$ is in $\VV$ if and only if for each $i \in \llbracket 0, n-1 \rrbracket$:

\begin{itemize}[noitemsep, topsep=0pt]
\item if $S[i] = $ \textsc{need\_one} then neither $S[i-1]$ nor $S[i+1]$ is \textsc{two\_stones};
\item if $S[i] = $ \textsc{stone} then neither $S[i-1]$ nor $S[i+1]$ is \textsc{two\_stones} or \textsc{stone};
\end{itemize}
Note that the second rule is not required for the consistency of the state, but it is an optimisation which allows us to reduce a lot the number of states, as we will see in \Cref{section-rauzy}. It is justified by the fact that in a minimum Roman dominating set, we can always remove any stone neighbouring a cell with two stones, and if there are two neighbouring cells with a stone each we still have a dominating set of same value by removing one of the stones and putting a second stone on the other cell. This implies that there exist minimum Roman dominating sets matching the extra rules we enforce.

\begin{fact}[see \Cref{section-rauzy}]
There are $\Theta(x_0^n) \approx \Theta(2.956295^n)$ states in $\VV$ for $n$ lines, where $x_0$ is the largest real root of the polynomial $x^4 - 3x^3 - x^2 + 3x + 1$.
\end{fact}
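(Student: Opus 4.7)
The plan is to apply the Rauzy-graph / transfer-matrix technique from \Cref{section-rauzy} directly to the language $\VV$ of valid column states for the Roman domination problem. The first observation is that the two bullet-point rules defining membership in $\VV$ constrain only pairs of adjacent coordinates: each rule is of the form "if $S[i] = a$ then $S[i\pm 1] \notin B_a$". Consequently $\VV$ is a factorial language on the four-letter alphabet $\SSS = \{\textsc{ok},\textsc{need\_one},\textsc{stone},\textsc{two\_stones}\}$ whose forbidden factors all have length two. Reading off the rules, the forbidden ordered pairs are precisely
\[
(\textsc{need\_one},\textsc{two\_stones}),\ (\textsc{two\_stones},\textsc{need\_one}),\ (\textsc{stone},\textsc{stone}),\ (\textsc{stone},\textsc{two\_stones}),\ (\textsc{two\_stones},\textsc{stone}),
\]
and every other pair is allowed.

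Next I would form the $4\times 4$ adjacency matrix $M$ of the Rauzy graph of order $1$, indexed by $\SSS$, with $M[a][b] = 1$ if $(a,b)$ is an allowed pair and $0$ otherwise. Concretely, ordering the letters as above,
\[
M = \begin{pmatrix} 1 & 1 & 1 & 1 \\ 1 & 1 & 1 & 0 \\ 1 & 1 & 0 & 0 \\ 1 & 0 & 0 & 1 \end{pmatrix}.
\]
By the exact same transfer-matrix argument used in \Cref{section-rauzy}, the number of valid states of length $n$ equals $\mathbf{1}^{\mathsmaller T} M^{n-1} \mathbf{1}$, where $\mathbf{1}$ is the all-ones column vector. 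I would then check that $M$ is primitive in the standard algebra: since the row and column corresponding to \textsc{ok} are entirely ones, $M^2$ has strictly positive entries everywhere, so primitivity is immediate. Perron--Frobenius then guarantees a unique simple real eigenvalue $x_0 > 0$ of largest modulus, and $\mathbf{1}^{\mathsmaller T} M^{n-1} \mathbf{1} = \Theta(x_0^n)$.

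The last step is the purely computational one: expand the characteristic polynomial $\det(xI - M)$ of this $4\times 4$ matrix. Cofactor expansion along the last row (which contains two zeros) reduces the computation to two $3\times 3$ determinants; assembling them gives
\[
\det(xI-M) = x^4 - 3x^3 - x^2 + 3x + 1,
\]
matching the polynomial in the statement. Its largest real root is the desired $x_0 \approx 2.956295$, which may be certified numerically (e.g.\ by Sturm or by Newton's method on the monotone tail of the polynomial together with Perron--Frobenius uniqueness). There is no genuine obstacle here; the only mild subtlety is translating the symmetric "$S[i-1]$ and $S[i+1]$" phrasing of the rules into the correct list of forbidden ordered pairs without double counting or missing the pair $(\textsc{stone},\textsc{stone})$ coming from the same bullet applied twice.
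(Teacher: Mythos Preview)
Your proof is correct and follows the same Rauzy-graph/transfer-matrix approach outlined in \Cref{section-rauzy}; you simply carry out by hand the $4\times 4$ characteristic-polynomial computation that the paper delegates to Sage. One small remark: for the Roman domination problem the valid-state rules are purely forbidden-pair constraints (no ``at least one neighbour is \dots'' requirement on \textsc{ok}), so the language really is factorial on the nose and the boundary caveat mentioned in \Cref{section-rauzy} is vacuous here---your implicit use of this fact is justified.
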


A state $S \in \VV$ is in $\FF$ if and only if for every $i \in \llbracket 0, \ldots, n-1 \rrbracket$, if $S[i] =$ \textsc{ok} then at least one among $S[i-1]$ and $S[i+1]$ is \textsc{two\_stones}.

$(S, S')$ is a compatible pair if and only if for $i \in \llbracket 0, \ldots, n-1 \rrbracket$:
\begin{itemize}[noitemsep, topsep=0pt]
\item if $S[i] =$ \textsc{need\_one} then $S'[i] =$ \textsc{two\_stones};
\item if $S'[i] =$ \textsc{need\_one} then $S[i] \neq$ \textsc{two\_stones};
\item if $S'[i] =$ \textsc{ok} then at least one among $S[i]$, $S'[i-1]$ and $S'[i+1]$ is \textsc{two\_stones};
\item if $S[i] \in \{\text{\textsc{two\_stones}, \textsc{stone}}\}$ then $S'[i] \neq$ \textsc{stone};
\item if $S[i] =$ \textsc{stone} then $S'[i] \neq$ \textsc{two\_stones}.
\end{itemize}
\vspace{5pt}
Finally, a state $S \in \VV$ is in $\EE$ if and only if none of its entry is \textsc{need\_one}. We may notice that we do not enumerate every $f_j(D)$: we forbid for instance two neighbouring cells when each contains one or two stones. This is possible because there exist Roman-dominating sets of minimum size which do not contain this pattern. We will discuss this optimisation in \Cref{section-experimental-dom}. Here we can always avoid this pattern by removing the single stone of one cell. If the other one also had a single stone, we add one stone to it..\\

We now need to adapt the loss. We define, for the Roman domination, \[\ell(n,m) = 5|S_2|+\frac{5}{2}|S_1|-nm = \frac{5}{2}(2|S_2|+|S_1|) -nm.\] Indeed, each cell with two stones dominates five cells, and each cell in $S_1$ dominates only itself, but we add to it an additional loss of $3/2$ to penalize its bad ratio of number of dominated cells compared to number of stones used. This allows us to obtain \[\gamma_\textrm{R}(n,m) \geq \frac{2}{5}(\ell(n,m)+nm).\] Note that in the program, what we compute is actually $2\ell(n,m)$ to avoid to manipulate fractions or floating numbers.
Let us define the almost-valid states which, for this problem, coincide with the valid states: $\VV_a = \VV$.
Now if $S, S' \in \VV_a$, $S \RRR_a S'$ if and only if for $i \in \llbracket 0, n-1 \rrbracket$:
\begin{itemize}[noitemsep, topsep=0pt]
\item if $S[i] =$ \textsc{need\_one} then $S'[i] =$ \textsc{two\_stones};
\item if $S'[i] =$ \textsc{need\_one} then $S[i] \neq$ \textsc{two\_stones};
\item if $S'[i] =$ \textsc{ok} and $i \neq 0$ then at least one among $S[i]$, $S'[i-1]$ and $S'[i+1]$ is \textsc{two\_stones};
\item if $S[i] \in \{\text{\textsc{two\_stones}, \textsc{stone}}\}$ then $S'[i] \neq$ \textsc{stone};
\item if $S[i] =$ \textsc{stone} then $S'[i] \neq$ \textsc{two\_stones}. 
\end{itemize}

Here we do not give complete details on how we compute the loss. Each cell with two stones having $k < 4$ neighbours with one or two stones contributes for $k$, and each cell dominated by $k > 1$ cells (with two stones) also contributes for $k-1$. Finally, each stone with one cell contributes for $3/2$. All these contributions sum up to make the loss. We recall that in the program we compute twice these values.\\

As in the previous section, we get exact values for "small" values of $n$, and a lower bound for bigger values of $n$. This time, we obtain the following recurrence relation:
\begin{equation} \text{for all }r \geq 12, T_\mathrm{a}^{r+5} = T_\mathrm{a}^r + 5.
\label{rec-loss-roman}
\end{equation}
We prove the formula for an arbitrary number of columns and lines the same way as for the 2-domination problem, using \Cref{rec-loss-roman} and the definition of the loss function. We conclude that our lower bound is the exact value of $\gamma_\textrm{R}$ thanks to the thesis of Currò (see ~\cite[Chapter 4, Theorem 10]{curro}). Indeed, he showed some upper bound for the Roman-domination number. The lower bound we find is the same as his upper bound, hence both are sharp and are the Roman-domination number.

\subsection{The total domination}
We present here the details for the total domination. Unfortunately, for reasons we will discuss in \Cref{section-conj-dom}, we were not able to find the values for grids of arbitrary size, so we give only partial results. The total domination in grids was studied by Gravier~\cite{gravier}. He gives the values for up to 4 lines, and provides some lower and upper bound. We improve his lower bound. Crevals and Ostergård~\cite{total-dom-article-28}, on their side, gave values up to 28 lines, which is more than we do (up to 15 lines for us).\\

First, we can see that the domination, 2-domination and total domination are in fact part of a more general class of problems:

\begin{deff}
A set $S \subset V$ is $\bm{(a,b)}$\textbf{-dominating} a graph $G = (V,E)$ when any vertex $v$ in $S$ has at least $a$ neighbours in $S$ and every vertex outside $S$ has at least $b$ neighbours in $S$.
\end{deff}

It is clear that the domination is the (0,1)-domination while the total domination is the $(1,1)$-domination. Given this fact, storing information about whether a cell has a stone, or whether one or zero of its "current" neighbours have one is no longer enough. Fortunately, we can encode in a cell value the number of neighbouring stones, or alternately the number of stones it lacks to be $(a,b)$-dominated, plus the knowledge of whether or not it contains a stone. Depending on the problem we consider, we may need more or less information: for the total domination for instance, we also need to store for the cells with a stone whether or not they are dominated by another cell.

One way to encode the necessary information for an $(a,b)$-domination problem is to have the following set for the cells values: \textsc{stone\_prev, stone, none\_prev, none}. The "\textsc{\_prev}" suffix means that the neighbouring cell from the previous column also contains a stone. "\textsc{none}" means that the cell does not contain a stone itself. We then compute which states are valid and the compatibility relations just from that. Indeed, we can recover how many times a cell is dominated from this piece of information.

However, as we will see in details in \Cref{section-rauzy}, we can make the computations a lot faster by choosing carefully how to encode the necessary information. So the set of cell values is $\SSS = \{$\textsc{stone\_ok, stone\_need\_one, stone\_need\_two, ok, need\_one, need\_two}$\}$.
Some of these values are not necessary: for the computations of the exact values, when the number of lines is fixed, any cell would have at most one new neighbour (in the next column). This makes the values ending by '\textsc{\_two}' useless: they are used only for the loss computation. We do not give explicitly the set of valid states, first states, ending states and compatibility relation: the logic behind them is very similar to the one for the domination, and they are present in the code (see respectively functions \texttt{is\_state\_valid, can\_cell\_neighbourhood\_be\_first, is\_state\_dominated} and \texttt{are\_state\_compatible} in the source code).

\begin{fact}[see \Cref{section-rauzy}]
There are $\Theta(x_0^n) \approx \Theta(2.618034^n)$ states in $\VV$ for $n$ lines, where $x_0$ is the real root of the polynomial $x^4 - 3x^3 + 3x - 1$.
\end{fact}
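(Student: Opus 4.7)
The plan is to follow the Rauzy-graph / transfer-matrix approach of Section~\ref{section-rauzy}: encode the set $\VV$ of length-$n$ valid states as paths in a finite directed graph on an alphabet of ``effective'' cell values, and extract the growth rate from the dominant eigenvalue via Perron--Frobenius. As noted just before the fact, the two values \textsc{stone\_need\_two} and \textsc{need\_two} cannot occur in states enumerated for fixed height (a cell has at most one ``next-column'' neighbour), so the effective alphabet has four symbols \textsc{stone\_ok}, \textsc{stone\_need\_one}, \textsc{ok}, \textsc{need\_one}, which I abbreviate $A,B,C,D$.

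First I would read off, from the semantics of the labels, the admissible pairs $(S[i],S[i+1])$ within a valid state. A short case analysis gives $A \to \{A,C\}$ (a stone at $S[i]$ forces $S[i+1]$ to have a stone neighbour above, excluding $B$ and $D$); $B \to \{C\}$ (same reason, together with the fact that $S[i+1]$ cannot itself be a stone, otherwise it would dominate $S[i]{=}B$ and force $S[i]{=}A$); $C \to \{A,B,C,D\}$ (no constraint is pushed down from above); and $D \to \{C,D\}$ (no stone above, and $S[i+1]$ cannot be a stone either, otherwise it would dominate $S[i]{=}D$ and force $S[i]{=}C$). The resulting transition matrix is
\[
M = \begin{pmatrix} 1 & 0 & 1 & 0 \\ 0 & 0 & 1 & 0 \\ 1 & 1 & 1 & 1 \\ 0 & 0 & 1 & 1 \end{pmatrix}.
\]
Expanding $\det(M-xI)$ along the first column yields $(1-x)(-x^3+2x^2+2x-1) = (x-1)(x^3-2x^2-2x+1)$; since the cubic vanishes at $x=-1$ it further factors as $(x-1)(x+1)(x^2-3x+1)$, which expands to $x^4-3x^3+3x-1$, precisely the polynomial in the statement. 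Its dominant real root is $x_0=(3+\sqrt{5})/2=\varphi^2\approx 2.618034$.

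To conclude, I would check that $M$ is irreducible (the four-vertex directed graph is strongly connected through $C$, which is reached from and reaches every other symbol) and aperiodic (there is a self-loop at $A$). Perron--Frobenius then guarantees that $x_0$ is a simple eigenvalue, strictly dominating all others in modulus. Writing $|\VV|=u^{\mathsmaller T} M^{n-1} v$, where $u$ and $v$ encode the symbols admissible at the top ($i=0$) and bottom ($i=n-1$) cell of a column, one checks that each of the four symbols is realisable at either boundary, so $u$ and $v$ are componentwise positive and in particular have strictly positive inner product with the Perron eigenvectors of $M^{\mathsmaller T}$ and $M$; this yields $|\VV|=\Theta(x_0^n)$. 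The one thing that will need care, rather than deep machinery, is the exhaustive case analysis justifying the entries of $M$ and verifying that the slightly different ``first/last row'' consistency rules only change the constants hidden in the $\Theta$, since the asymptotic behaviour is controlled entirely by the spectral radius of $M$.
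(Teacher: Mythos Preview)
Your proposal is correct and follows the Rauzy-graph/transfer-matrix methodology the paper describes in \Cref{section-rauzy}; the paper obtains the polynomial by feeding the adjacency matrix of the Rauzy graph (built by their C++ code) to Sage, whereas you derive the same $4\times 4$ matrix by hand from the semantics of the four effective labels and factor its characteristic polynomial explicitly. Your treatment is in fact more self-contained: the case analysis for the transitions is sound, the factorisation $(x-1)(x+1)(x^2-3x+1)$ is correct, and your remark that the boundary rules for $i=0$ and $i=n-1$ only perturb the constants in the $\Theta$ is exactly the point the paper makes about the language not being strictly factorial. One cosmetic note: the polynomial $x^4-3x^3+3x-1$ has four real roots, so ``the real root'' in the statement should be read as ``the largest real root'' (as the paper does for the Roman and distance-2 cases); your identification $x_0=(3+\sqrt{5})/2=\varphi^2$ is the intended one.
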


As we warned above, we did not manage to get a closed formula which would work for every value of $n$ and $m$. Hence we give here values for small number of lines (up to 15), and some bounds on the quantity $\gamma_\textrm{T}$ when the number of lines is arbitrary. The lower bound is obtained by the same loss method, and again the transfer matrix we use for the bands verifies some recurrence property, hence we can extend our lower bound for an arbitrary number of lines and columns. However, this lower bound does not seem to match the actual value: it seems to increase little by little, and we exhaust the computing resources while it still wants to grow\footnote{like a poor vegetable running out of water}.

\begin{thm}
For $1 \leq n \leq 15$ and any $m \geq n$, the following equalities about the total domination number hold:
\allowdisplaybreaks
\begin{fleqn}
\begin{align*}
\gamma_\emph{T}(1,m) &= \left\lbrace\setstretch{1.5}
		\begin{array}{ll} 
      \left\lfloor\frac{m}{2}\right\rfloor & \text{ if } m \mod 4 = 0 \\
      \left\lfloor\frac{m}{2}\right\rfloor+1 & \text{ otherwise}\\
      \end{array}
      \right.\\
\gamma_\emph{T}(2,m) &= \left\lbrace\setstretch{1.5}
		\begin{array}{ll} 
      \left\lfloor\frac{2m+2}{3}\right\rfloor+1 & \text{ if } m \mod 3 = 1 \\
      \left\lfloor\frac{2m+2}{3}\right\rfloor & \text{ otherwise}\\
      \end{array}
      \right.\\
\gamma_\emph{T}(3,m) &= n\\
\gamma_\emph{T}(4,m) &= \left\lbrace\setstretch{1.5}
		\begin{array}{ll} 
      \left\lfloor\frac{6m+3}{5}\right\rfloor+2 & \text{ if } m \mod 5 \in \{0,3\} \\
      \left\lfloor\frac{6m+3}{5}\right\rfloor+1 & \text{ otherwise}\\
      \end{array}
      \right.\\
\gamma_\emph{T}(5,m) &= \left\lbrace\setstretch{1.5}
		\begin{array}{ll} 
      \left\lfloor\frac{6m+3}{4}\right\rfloor+2 & \text{ if } m \mod 4 = 0 \\
      \left\lfloor\frac{6m+3}{4}\right\rfloor+1 & \text{ otherwise}\\
      \end{array}
      \right.\\
\gamma_\emph{T}(6,m) &= \left\lbrace\setstretch{1.5}
		\begin{array}{ll} 
      \left\lfloor\frac{12m}{7}\right\rfloor+4 & \text{ if } m \mod 7 = 5 \\
      \left\lfloor\frac{12m}{7}\right\rfloor+3 & \text{ if } m \mod 7 \in \{1,2,3\} \\
      \left\lfloor\frac{12m}{7}\right\rfloor+2 & \text{ otherwise}\\
      \end{array}
      \right.\\
\gamma_\emph{T}(7,m) &= \left\lbrace\setstretch{1.5}
		\begin{array}{ll} 
      2m+2 & \text{ if } n \mod 2 = 0 \text{ or } m \in \{9,11,15,21\} \\
      2m+1 & \text{ otherwise}\\
      \end{array}
      \right.\\
\gamma_\emph{T}(8,m) &= \left\lbrace\setstretch{1.5}
		\begin{array}{ll} 
      \left\lfloor\frac{20m+6}{9}\right\rfloor+4 & \text{ if } m \mod 9 \in \{0,7\} \text{ and } n \notin \{9,16\} \\
      \left\lfloor\frac{20m+6}{9}\right\rfloor+3 & \text{ if } m \mod 9 \in \{2,3,4,5\}\\
      \left\lfloor\frac{20m+6}{9}\right\rfloor+2 & \text{ otherwise}\\
      \end{array}
      \right.\\
\gamma_\emph{T}(9,m) &= \left\lbrace\setstretch{1.5}
		\begin{array}{ll} 
      \left\lfloor\frac{10m+3}{4}\right\rfloor+3 & \text{ if } m \mod 4 = 2 \\
      \left\lfloor\frac{10m+3}{3}\right\rfloor+2 & \text{ otherwise}\\
      \end{array}
      \right.\\
\gamma_\emph{T}(10,m) &= \left\lbrace\setstretch{1.5}
		\begin{array}{ll} 
      \left\lfloor\frac{30m+1}{11}\right\rfloor+6 & \text{ if } m \mod 11 = 9 \text{ and } m \neq 20 \\
      \left\lfloor\frac{30m+1}{11}\right\rfloor+5 & \text{ if } m \mod 11 \in \{2,5,7\} \text{ and } m \notin \{13,18 \} \\
      \left\lfloor\frac{30m+1}{11}\right\rfloor+4 & \text{ if } m \mod 11 \in \{0,1,3,6\} \text{ or } m = 20 \\
      \left\lfloor\frac{30m+1}{11}\right\rfloor+3 & { otherwise}\\
      \end{array}
      \right.\\
\gamma_\emph{T}(11,m) &= \left\lbrace\setstretch{1.2}
		\begin{array}{ll} 
      3m+4 & \text{ if } m \in \{12,22\}\\
      3m+3& \text{ if } m  \in \{13,15,17,19,23,27,29,33,37,43,47,57\}\\
      3m+2 & \text{ otherwise}\\
      \end{array}
      \right.\\
\gamma_\emph{T}(12,m) &= \left\lbrace\setstretch{1.5}
		\begin{array}{ll} 
      \left\lfloor\frac{42m+9}{13}\right\rfloor+6 & \text{ if } m \mod 13 \in \{0,11\} \text{ and } m \notin \{13,24,26,37\} \\
      \left\lfloor\frac{42m+9}{13}\right\rfloor+5 & \text{ if } m \mod 13 \in \{2,4,7,9\} \text{ and } m \notin \{15,17,20\} \\
      \left\lfloor\frac{42m+9}{13}\right\rfloor+4 & \text{ if } m \mod 13 \in \{3,5,6,8\} \text{ or } m \in \{13,24,26,37\} \\
      \left\lfloor\frac{42m+9}{13}\right\rfloor+3 & \text{ otherwise}\\
      \end{array}
      \right.\\
\gamma_\emph{T}(13,m) &= \left\lbrace\setstretch{1.5}
		\begin{array}{ll} 
      \left\lfloor\frac{14m+3}{4}\right\rfloor+5 & \text{ if } m \in \{14,26\}\\
      \left\lfloor\frac{14m+3}{4}\right\rfloor+4& \text{ if } m \mod 4 = 0 \text{ or } m  = 19 \\
      \left\lfloor\frac{14m+3}{4}\right\rfloor+3 & \text{ otherwise}\\
      \end{array}
      \right.\\
\gamma_\emph{T}(14,m) &= \left\lbrace\setstretch{1.5}
		\begin{array}{ll} 
      \left\lfloor\frac{56m+2}{15}\right\rfloor+8 & \text{ if } m \mod 15 = 13 \text{ and } m \notin \{28, 43\} \\
      \left\lfloor\frac{56m+2}{15}\right\rfloor+7 & \text{ if } m \mod 15 \in \{2,9,11\} \text{ and } m \notin \{17,24,26,32,41\} \\
      \left\lfloor\frac{56m+2}{15}\right\rfloor+6 & \text{ if } (m \mod 15 \in \{0,5,6,7\} \text{ and } m \notin \{15,21,22,30\})\\ &\quad \text{ or } m \in \{28, 43\} \\
            \left\lfloor\frac{56m+2}{15}\right\rfloor+5 & \text{ if } m \mod 15 \in \{1,3,4,10\} \text{ or } m \in \{17,24,26,32,41\} \\
      \left\lfloor\frac{56m+2}{15}\right\rfloor+4 & \text{ otherwise}\\
      \end{array}
      \right.\\
      \gamma_\emph{T}(15,m) &= \left\lbrace\setstretch{1.2}
\begin{array}{ll} 
       4m+6  & \text{ if } m \in \{16,30\} \\
       4m+5  & \text{ if }  m \in \{21,23\} \\
      4m+4  & \text{ if } (m \mod 2 = 0 \text{ and } m \notin \{16, 30\}) \\
      & \quad\text{ or } m \in \{17,19,25,27,31,35,37,39,41,45,49,53,55,59,63,\\
      &\hspace{2cm}\;\;67,73,77,81,91,95,109 \} \\
     4m+3 & \text{ otherwise}\\
      \end{array}
      \right.\\
\end{align*}
\end{fleqn}

\label{th-total-domination}
\end{thm}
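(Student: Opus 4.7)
The plan is to apply the fixed-height framework of Section 2.2 (the transfer-matrix / states method in the $(\min, +)$-algebra) to the total-domination problem, row-by-row for each $n \in \llbracket 1, 15 \rrbracket$. First I would fix the alphabet of cell values $\SSS = \{\textsc{stone\_ok}, \textsc{stone\_need\_one}, \textsc{stone\_need\_two}, \textsc{ok}, \textsc{need\_one}, \textsc{need\_two}\}$ already introduced in the section, and translate the definition of total domination (every vertex, including those in $S$, must have a neighbour in $S$) into the corresponding sets $\VV$, $\FF$, $\EE$ and compatibility relation $\RRR$. The crucial extra bookkeeping compared with the $2$-domination is that a cell containing a stone must itself be dominated, which is why \textsc{stone\_need\_one} and \textsc{stone\_need\_two} are needed. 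By Claim 2.4 (adapted), $\gamma_\emph{T}(n,m) = F^{\mathsmaller T} T^{m-1} E$, where $T$ is the total-domination transfer matrix.

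Next I would show that $T$ is primitive in the $(\min,+)$-algebra, exactly as in Proposition 2.5: for any two states $S_0, S_2$, one can interpose an all-\textsc{stone} column (plus a buffer column) and check the compatibility conditions. By Theorem 2.6, there then exist constants $l_0$, $p$, $r$ (depending only on $n$) such that $T^{l+r} = T^l + p$ for all $l \geq l_0$. Translating this into $\gamma_\emph{T}$ via Claim 2.4 yields, for each $n \in \llbracket 1, 15\rrbracket$, a recurrence relation $\gamma_\emph{T}(n, m+r_n) = \gamma_\emph{T}(n, m) + p_n$ valid from some $m_0(n)$. The constants $r_n, p_n, m_0(n)$ are extracted by running the program for each small $n$; from these one verifies by direct computation that the closed formula in the statement matches the recurrence for all $m \geq m_0(n)$ and matches the finitely many exceptional values $m < m_0(n)$ listed in each case (this is what produces the many "$m \in \{\ldots\}$" sporadic cases).

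To close the proof, one needs a matching upper bound by explicit construction. I would, for each $n$, exhibit a family of total-dominating sets $D_{n,m}$ whose sizes equal the claimed formula. In the generic case these are obtained by taking a periodic pattern of stones in a $\ZZ \times \llbracket 0, n-1 \rrbracket$ strip achieving the right density, truncating it to the $n \times m$ grid, and then repairing the two vertical borders in constant-size windows via a small catalogue of local patching rules (one per residue class of $m$ modulo $r_n$), analogous to the two rules of Figure 2.10 in the $2$-domination proof. The sporadic exceptional values (e.g.\ $m \in \{13,15,17,\ldots\}$ for $n=11$) correspond to residues for which no patching rule of the generic catalogue achieves the loss lower bound; for these I would exhibit ad-hoc optimal total-dominating sets, also produced by the same computer program used for fixed-height values (and stored together with the proof).

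The main obstacle is bookkeeping rather than conceptual: the number of sporadic exceptions grows with $n$, because the total-domination problem is "tight" in a sense made precise in Section 2.4 (the loss bound does not always match the true optimum), so one cannot rely on a uniform loss argument as in Theorem 2.10. Consequently, both the recurrence constants $(r_n, p_n, m_0(n))$ and the upper-bound constructions for the exceptional residues must be produced and verified case by case for each $n$, and much of the work is ensuring that the computer-extracted formula correctly lists every exception; the theorem-proving content is light once the program's output and a companion library of optimal patterns have been audited.
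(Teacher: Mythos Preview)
Your first two paragraphs match the paper's approach exactly: fix $n$, build the transfer matrix $T$ over the state alphabet you list, invoke primitivity and \Cref{th-primitive-rec} to get a recurrence $\gamma_\mathrm{T}(n,m+r_n)=\gamma_\mathrm{T}(n,m)+p_n$ for $m\ge m_0(n)$, and read off the closed formula from the recurrence plus the finitely many initial values. That is all the paper does.

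Your third paragraph, however, rests on a misunderstanding. The formula $\gamma_\mathrm{T}(n,m)=F^{\mathsmaller T}T^{m-1}E$ from (the total-domination analogue of) \Cref{claim-exact} is an \emph{equality}, not a lower bound: the $(\min,+)$ dynamic programme enumerates, column by column, every total-dominating set of $G_{n,m}$ and returns the minimum size among them. Hence no separate upper-bound construction is needed at all, and the ``sporadic exceptions'' in the statement are not residues where a loss bound fails to be tight --- they are simply the exact values output by the same transfer-matrix computation that happen not to fit the generic arithmetic-progression pattern before $m$ reaches $m_0(n)$. You are conflating the fixed-$n$ method (exact) with the loss method of \Cref{section-big-number-lines} (a lower bound that must be matched by an explicit construction); only the latter, used for arbitrary $n$, requires the patching-rules catalogue you sketch. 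Drop the third paragraph and the proof plan is complete.
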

\Cref{th-total-domination} confirms the results from Crevals and Ostergård~\cite{total-dom-article-28} for values up to 15. In their paper they managed to go up to $n = 28$, using another approach. They do not enumerate full dominating sets as we do, but have like us some notion of states. However, instead of enumerating the states of the columns, they only enumerate the states of partial dominating sets: they store only the elements of the minimal dominating set. 
Using clever arguments they also manage to cut down on the number of states they enumerate, and it turns out this number grows less fast than in our method.

The loss can be adapted to the generic $(a,b)$-domination problem: each stone contributes to dominating its 4 neighbouring cells. The dominating $|D|$ cells need to be dominated $a$ times and the cells $nm-|D|$ cells not in $D$ need to be dominated $b$ times. When reversing the formula we obtain:
\begin{equation} 
\gamma_{a,b}(n,m) = \frac{b \cdot nm+\ell(n,m)}{4-a+b}. 
\label{formule-a-b-dom}
\end{equation}

With our program to compute the loss, we find that, for a band of height 10:

\begin{equation}
\text{for all }r \geq 31,\; T_\mathrm{a}^{r+22} = T_\mathrm{a}^{r} + 10.
\label{rec-loss-total}
\end{equation}

From this, as for the other problems we studied above, we deduce:

\begin{claim}
$\ell_{10}(n+22,m) = \ell_{10}(n,m+22) = \ell_{10}(n,m)+10$ for every $n,m \geq 52$.
\label{claim-loss-rec-total}
\end{claim}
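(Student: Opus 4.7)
The plan is to mimic the argument sketched for Claim 2.2 and deduce the present claim directly from the loss-matrix formula (Lemma 2.1) together with the recurrence stated just before. I would start by instantiating the loss-matrix formula for the total-domination variants of $T_\mathrm{a}$ and $C_\mathrm{a}$, which expresses
\[
\ell_{10}(n,m) \;=\; \min_{S \in \VV_a}\bigl((T_\mathrm{a}^{m-2h-1}C_\mathrm{a}T_\mathrm{a}^{n-2h-1}C_\mathrm{a})^2[S][S]\bigr)
\]
with $h=10$. Under the hypothesis $n,m \geq 52$ we have $n - 2h - 1 = n - 21 \geq 31$, which is exactly the threshold from which the recurrence $T_\mathrm{a}^{r+22} = T_\mathrm{a}^{r} + 10$ applies. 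Replacing $n$ by $n+22$ therefore permits the substitution $T_\mathrm{a}^{(n+22)-21} = T_\mathrm{a}^{n-21} + 10$ at every occurrence of that factor inside the squared product.

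I would then propagate this substitution through the $(\min,+)$-arithmetic using the elementary identity that if $A,B$ are matrices and $c$ is a scalar, then $A(B+c) = AB+c$, where $B+c$ denotes the matrix obtained by adding $c$ to each entry. Iterating this, any additive scalar attached to a single factor in a product is inherited by the whole product. Since the operation $\min_S\,((\cdot)[S][S])$ also commutes with adding a constant to every entry, the cumulative effect of the substitution on $\ell_{10}(n+22,m)$ is a single global additive shift, whose size is determined by counting the occurrences of the modified factor inside the squared expression. The case $\ell_{10}(n,m+22)$ then follows at once from the symmetry of the loss function in $n$ and $m$: transposing the grid exchanges the roles of the horizontal and vertical bands and leaves the border (and hence the minimum loss) unchanged.

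The main point of care is the bookkeeping of how many shifts of $+10$ survive after squaring and after taking the diagonal minimum; this is the same subtlety the author already flags in the proof of Claim 2.2, where the factor of two coming from the squared expression is explicitly responsible for the $12 = 2\cdot 6$ increment rather than $6$. I expect the rest of the argument to be a mechanical translation of the 2-domination case with no new ingredient, the only remaining verification being that the recurrence constant reported in Equation 2.9 combines with the multiplicities inside the squared formula to yield exactly the $+10$ increment asserted in the claim.
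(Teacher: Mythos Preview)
Your approach is exactly the one the paper intends: it offers no separate proof but simply writes ``From this, as for the other problems we studied above, we deduce'' and expects the reader to rerun the argument of Claim~2.2 with the new parameters $h=10$, period $22$, threshold $r\ge 31$, giving $n,m\ge 31+2h+1=52$.

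Your caution about the bookkeeping is well placed, and in fact this is where the statement as printed does not survive the computation. Inside the squared product $(T_\mathrm{a}^{m-21}C_\mathrm{a}T_\mathrm{a}^{n-21}C_\mathrm{a})^2$ the factor $T_\mathrm{a}^{n-21}$ occurs twice, so replacing it by $T_\mathrm{a}^{n-21}+10$ shifts the whole expression by $2\cdot 10=20$, exactly as the $2\cdot 6=12$ arose in Claim~2.2. The paper's own downstream use of the claim confirms this: it derives $\ell_{10}(n,m)\ge 2\cdot\tfrac{10}{22}(n+m)+c=\tfrac{10}{11}(n+m)+c$ and explicitly attributes the factor~$2$ to the two vertical and two horizontal bands, which is only consistent with an increment of $+20$ per step of $22$ in a single variable. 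So your argument is correct; it simply proves the claim with $+20$ in place of $+10$, and the $+10$ in the displayed statement appears to be a typo.
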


First, we may notice that our lower bounds obtained thanks to the loss are not too far from the actual values Crevals and Ostergård found. For 28 lines, they find 416 for $m=56$ and 427 for $m=58$ when we find that it must respectively be at least 411 and 426.

Also, the bounds we obtain agree to some extent to the conjecture of~\cite{total-dom-article-28}. Indeed, their formulas imply that, when $m \mod 4 \in \{1,3\}$ then $\gamma_\mathrm{T}(n,m) = \Theta(\frac{nm+n+m}{4})$ and our values imply 
\begin{equation}
\label{total-lower-bound}
\gamma_\mathrm{T}(n,m) \geq \frac{nm+10/11(n+m)}{4}+O(1).
\end{equation}
We can even go a bit further on the constant after this equivalent. By using \Cref{claim-loss-rec-total} we deduce that $\ell_{10}(n,m) \geq 2\frac{10}{22}(n+m) + c = \frac{10}{11}(n+m)+c(n \mod 22, m \mod 22)$, where the $c(i,j)$ for $0 \leq i,j \leq 22$ are some constants depending on the actual values of $\ell_{10}(n,m)$. The factor 2 comes from the fact that a rectangle has two vertical bands and two horizontal bands, as we mentioned for the 2-domination. We determine, thanks to the values of the loss of height 10 for $52 \leq n,m \leq 74$, a lower bound on the $c(i,j)$ constants.
From what just precedes and by \Cref{formule-a-b-dom}, we obtain:
\begin{prop} 
\[\gamma_\mathrm{T}(n,m) \geq \frac{nm+\frac{10}{11}(n+m)}{4}-1.\]
\end{prop}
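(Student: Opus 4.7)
The plan is to follow the exact same template as used for the 2-domination problem in \Cref{section-big-number-lines}, but applied to the constants produced by the total-domination loss program. The starting ingredient is the total-domination analogue of \Cref{lemma-lower-bound-loss-dom-number}, which combines with \Cref{formule-a-b-dom} (specialised to $a=b=1$) to give
\[
\gamma_\mathrm{T}(n,m) \;\geq\; \Big\lceil\tfrac{nm + \ell(n,m)}{4}\Big\rceil \;\geq\; \tfrac{nm + \ell_{10}(n,m)}{4},
\]
since restricting the loss computation to a border of height $10$ only underestimates the true minimum loss. So it is enough to prove that $\ell_{10}(n,m) \geq \tfrac{10}{11}(n+m) - 4$ for every $n,m$ large enough, and then to check by direct computation that the resulting bound, combined with the first few (small) values, implies the claimed inequality globally (up to the $-1$ additive constant).

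First I would reduce to a finite problem using the periodicity of the band matrix. By \Cref{rec-loss-total}, $T_\mathrm{a}^{r+22} = T_\mathrm{a}^{r} + 10$ for $r\geq 31$, and the discussion surrounding \Cref{claim-loss-rec-total} already records the consequence
\[
\ell_{10}(n+22, m) \;=\; \ell_{10}(n, m+22) \;=\; \ell_{10}(n,m) + 10 \qquad (n,m \geq 52).
\]
Iterating this in both variables, I would write any pair $(n,m)$ with $n,m\geq 52$ as $(n_0 + 22\alpha,\, m_0 + 22\beta)$ with $52 \leq n_0, m_0 < 74$. Then
\[
\ell_{10}(n,m) \;=\; \ell_{10}(n_0,m_0) + 10(\alpha+\beta) \;=\; \tfrac{10}{22}\bigl((n-n_0)+(m-m_0)\bigr) + \ell_{10}(n_0,m_0),
\]
which I rearrange as $\ell_{10}(n,m) = \tfrac{10}{11}(n+m) + c(n_0,m_0)$ with $c(n_0,m_0) := \ell_{10}(n_0,m_0) - \tfrac{10}{11}(n_0+m_0)$.

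Next I would run the program to produce the finite table of values $\ell_{10}(n_0,m_0)$ for all $52 \leq n_0, m_0 < 74$, i.e. exactly one full period in each direction (this is a finite, explicit computation: it is the main practical obstacle but the band/corner transfer matrices of height $10$ are the same ones already used in \Cref{rec-loss-total}). Taking the minimum of the resulting $c(n_0,m_0)$ over this $22\times 22$ grid of residues gives a single constant $C := \min_{n_0,m_0} c(n_0,m_0)$, so that $\ell_{10}(n,m) \geq \tfrac{10}{11}(n+m) + C$ for all $n,m\geq 52$. Plugging this into the loss bound yields
\[
\gamma_\mathrm{T}(n,m) \;\geq\; \tfrac{nm + \tfrac{10}{11}(n+m) + C}{4} \;=\; \tfrac{nm + \tfrac{10}{11}(n+m)}{4} + \tfrac{C}{4}.
\]
The hard part will simply be verifying computationally that $C \geq -4$, so that $C/4 \geq -1$; this is the step where the announced constant $-1$ comes from. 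To finish, the only remaining cases are the "corner" pairs $(n,m)$ with $\min(n,m) < 52$, which are handled either by the fixed-height formulas of \Cref{th-total-domination} (for $n\leq 15$, and symmetrically $m\leq 15$) or by a direct computation of $\ell_{10}(n,m)$ for the remaining finitely many pairs with $16 \leq \min(n,m) < 52$; in every such case a straightforward check confirms that the same inequality $\gamma_\mathrm{T}(n,m) \geq \tfrac{nm + \tfrac{10}{11}(n+m)}{4} - 1$ holds, completing the proof.
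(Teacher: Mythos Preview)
Your plan is exactly the paper's: the proposition is stated immediately after (and as a consequence of) the discussion you reproduce—use $\gamma_\mathrm{T}(n,m) \geq \tfrac{nm+\ell_{10}(n,m)}{4}$ from \Cref{formule-a-b-dom}, invoke the period-$22$ recurrence on the border loss, tabulate $\ell_{10}$ on one full period $52 \le n,m \le 74$, and read off the worst residue constant. Your treatment of the small-$(n,m)$ cases is more explicit than the paper's, but the route is the same.

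One arithmetic point to fix: the step
\[
\tfrac{10}{22}\bigl((n-n_0)+(m-m_0)\bigr) \;\Longrightarrow\; \tfrac{10}{11}(n+m)+c(n_0,m_0)
\]
does not type-check, since $\tfrac{10}{22}=\tfrac{5}{11}$. The source of the discrepancy is the increment you quoted from \Cref{claim-loss-rec-total}. As the paper itself remarks right after that claim (the ``factor $2$'' comment, paralleling the $2$-domination case where the band recurrence $+6$ becomes $+12$ on the full border in \Cref{claim-loss-rec}), the border loss actually jumps by $+20$ per period of $22$, not $+10$: each of $T_\mathrm{a}^{m-2h-1}$ and $T_\mathrm{a}^{n-2h-1}$ occurs twice in the squared expression of \Cref{lemma-compute-loss-matrix}. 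With $+20$ you get $20(\alpha+\beta)=\tfrac{20}{22}\bigl((n-n_0)+(m-m_0)\bigr)=\tfrac{10}{11}\bigl((n-n_0)+(m-m_0)\bigr)$, and your rearrangement to $\tfrac{10}{11}(n+m)+c(n_0,m_0)$ then goes through, after which checking $\min c(n_0,m_0)\geq -4$ over the $22\times 22$ table yields the $-1$.
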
 

We may even notice that the fraction which is multiplied by $(n+m)$ in \Cref{total-lower-bound} increases when the height increases and may converge towards $1$, which is the values from their conjecture. Indeed, the fractions we obtain equal $6/7$ for 6 and 7 lines, $8/9$ for 8 and 9 lines, and $10/11$ for 10 lines. It may even be possible that it takes all the values of the shape $2l/(2l+1)$ when the height becomes arbitrarily big.

\begin{conj}
\[\gamma_\mathrm{T}(n,m) = \frac{nm+n+m}{4}+O(1).\]
\end{conj}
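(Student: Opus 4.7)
The plan is to prove matching upper and lower bounds of $\frac{nm+n+m}{4}+O(1)$ for $\gamma_\mathrm{T}(n,m)$. For the total domination we have $(a,b)=(1,1)$, so by \eqref{formule-a-b-dom} the conjecture is equivalent to showing that the minimum loss satisfies $\ell(n,m) = n+m+O(1)$. I would therefore split the work into producing a construction with loss at most $n+m+O(1)$ (upper bound on $\gamma_\mathrm{T}$), and sharpening the loss lower bound from $\tfrac{10}{11}(n+m)$ to $n+m$.

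For the upper bound, the first step is to exhibit a periodic total-dominating set $D^\ast$ of $\mathbb{Z}^2$ of density exactly $1/4$. A natural candidate is the "domino diagonal" pattern: stones at all $(i,j)$ with $\lfloor j/2 \rfloor \equiv i \pmod{2}$ and $j \bmod 2 \equiv \lceil i/2 \rceil \pmod{2}$, or equivalently tile the plane by $2{\times}4$ blocks each containing one horizontal pair of stones placed so that every cell has exactly one stone neighbour. One checks that $D^\ast$ is total-dominating and contains one stone out of every four cells; moreover the stones come in adjacent pairs, so every stone has a stone neighbour for free. Then, for each residue class $(n \bmod 4, m \bmod 4)$, I would restrict a suitable translate of $D^\ast$ to $G_{n,m}$ and repair the four borders. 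Each border cell that fails the total-domination condition can be fixed by adding one stone in the adjacent row/column; since $D^\ast$ already has density $1/4$ along any boundary, only about one cell out of four needs to be added, and the repairs on the four borders contribute a total of at most $(n+m)/4 + O(1)$ extra stones. Combined with the $\lceil nm/4\rceil + O(1)$ stones of the restricted pattern, this yields the desired upper bound.

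For the lower bound, the idea is to push the loss method to bands of arbitrary height $h$. The empirical pattern observed in the paper --- heights $2l,2l+1$ give a coefficient $\tfrac{2l}{2l+1}$ on $(n+m)$ --- suggests that the loss on a semi-infinite band of height $h = 2l$ is at least $\tfrac{2l}{2l+1}$ per column, and I would first prove this rigorously. The approach is to analyse the primitive band transfer matrix $T_a^{(h)}$ associated with height $h$: by the same argument as in Theorem 2.4, its powers satisfy a recurrence $T_a^{(h),r+k_h} = T_a^{(h),r} + p_h$ for some period $k_h$ and shift $p_h$, and $p_h/k_h$ is the asymptotic loss per column along a band of height $h$. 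The key step is to show $p_h/k_h \geq \tfrac{h}{h+1}$ by a direct combinatorial argument: in any optimal configuration on an infinite strip of height $h$, the boundary row contributes an unavoidable loss coming from the fact that stones on this row have only three neighbours inside the strip. A counting argument comparing stones in the boundary row to stones in the interior then gives the stated lower bound. Taking $l \to \infty$ then yields $\ell(n,m) \geq (n+m)(1-\varepsilon)+O_\varepsilon(1)$ for every $\varepsilon > 0$, and a tightening (for example by also summing the deficit at the four corners) removes the $\varepsilon$ to give exactly $n+m$.

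The main obstacle is the lower bound: while the upper bound is a finite case analysis once the right periodic pattern is chosen, closing the $\tfrac{1}{2l+1}$ gap uniformly in $l$ requires either a uniform discharging argument valid for all band heights simultaneously, or a sharp structural description of the loss-minimising configurations along a semi-infinite strip. In particular one must rule out the possibility that the extremal configurations achieve only $(1-\tfrac{1}{h+1})(n+m)$ and are intrinsically bounded away from $n+m$; the paper's numerical data is consistent with the coefficient converging to $1$ but does not prove it. I expect that the cleanest route is in fact to bypass the loss method for this step and to prove the refined inequality $\gamma_\mathrm{T}(n,m) \geq \tfrac14(nm+n+m)-C$ directly, for example by assigning weight $1$ to each stone and $\tfrac14$ to each cell and arguing that each stone can discharge at most $1$ unit of weight into its closed neighbourhood, with strict loss along the boundary precisely accounting for the $(n+m)/4$ term.
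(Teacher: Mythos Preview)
The statement you are attempting to prove is labelled as a \emph{conjecture} in the paper, and the paper does not provide a proof. There is therefore no paper proof to compare your proposal against; the conjecture remains open as far as the paper is concerned.

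That said, a few comments on your sketch relative to what the paper does establish. Your upper-bound plan is plausible and essentially aligns with the constructions implicit in the literature the paper cites (Gravier, Crevals--Ostergård); a density-$1/4$ total-dominating pattern of $\mathbb{Z}^2$ does exist (the paper's Figure on the total-domination shapes shows exactly the relevant $2\times 4$ tiles), and border repair is a finite case analysis. This half is not where the difficulty lies.

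The lower bound is where your proposal has a genuine gap, and you correctly identify it. Your argument ``take $l\to\infty$ to get $(n+m)(1-\varepsilon)+O_\varepsilon(1)$, then tighten'' does not work as stated: the implicit constant in $O_\varepsilon(1)$ comes from the band height $h=2l$ and blows up as $l\to\infty$, so no uniform bound of the form $n+m-C$ follows. More importantly, the paper gives a structural reason to doubt that the loss method can be pushed to close this gap at all (Section~\ref{section-conj-dom}): the total-domination shape fails the \emph{crystallisation property}, meaning optimal configurations can freely mix horizontal and vertical domino tiles, and the authors suspect this forces nonzero loss in the interior of large grids---not just on the border. If so, the band-height limit you propose is attacking the wrong quantity. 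Your fallback suggestion of a direct discharging argument bypassing the loss framework is the more promising route, but it is only stated as an intention; the actual discharging scheme that would produce the boundary term $(n+m)/4$ is not supplied, and designing one is precisely the open problem.
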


\subsection{The distance-2 domination}
As was written in the introduction, a grid is distance-2 dominated by $S$ when any vertex not in $S$ is at distance at most two of an element of $S$. The more general distance-$k$-domination problem was studied by Farina and Grez~\cite{distance-k-domination} who proved some upper bound on the associated domination number.\\

Here again we did not manage to get a closed formula for arbitrary numbers of lines and columns. The problem is not the same as for the total domination: we more likely just lacked of a bit a computing resources instead of the problem being much more difficult to tackle. Indeed, since now a vertex may be dominated by a vertex at distance two, we need to store more information on previous columns: not just some information about the previous columns, but also some about the one even before.

The set of cell values is $\SSS = \{$ \textsc{stone\_prev, stone, ok\_prev, ok, need\_dist\_two},\linebreak \textsc{need\_dist\_one} $\}$. The "prev" suffix, here again, means that the cell of the previous column has a stone. \textsc{ok} means that the cell is dominated, whereas \textsc{need\_dist\_two} means that the cell is not dominated so that it requires a cell at distance at most two with a stone. \textsc{need\_dist\_one} is similar but it means here that the cell in the previous column was not dominated, hence the next cell needs to have a stone to dominate this ante-predecessor cell. As for the total domination we do not detail the other special sets and the compatibility relation, which are only a matter of logic and optimisations, and can be found in the source code (see the functions \texttt{is\_state\_valid, can\_cell\_neighbourhood\_be\_first, is\_state\_dominated} and \texttt{are\_state\_compatible}).

\begin{fact}[see \Cref{section-rauzy}]
There are $\Theta(x_0^n) \approx \Theta(2.958770^n)$ states in $\VV$ for $n$ lines, where $x_0$ is the largest real root of the polynomial $x^{24} - 5x^{23} + 6x^{22} + 2x^{21} - 7x^{20} + 6x^{19} - 8x^{18}\\ + 8x^{17} + 4x^{16} - 13x^{15} + 5x^{14} + 6x^{13} + 8x^{12} - 14x^{11} - 17x^{10} + 14x^{9} - 8x^{8} - 10x^{7} - 8x^{6} + 5x^{5}\\ + 9x^{4} - x^{3} - 5x^{2} + 4x - 1$.
\end{fact}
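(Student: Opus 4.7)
The plan is to apply the Rauzy-graph machinery described in \Cref{section-rauzy} to the language $\mathcal{L}$ of valid column states for the distance-2 domination. Each state is a word of length $n$ over the six-letter alphabet $\SSS = \{\text{\textsc{stone\_prev}, \textsc{stone}, \textsc{ok\_prev}, \textsc{ok}, \textsc{need\_dist\_two}, \textsc{need\_dist\_one}}\}$, and the validity constraints are all local: the value at cell $i$ is constrained by the values at cells $i-1$ and $i+1$ (and implicitly, through the "\_prev" suffixes, by the state one column earlier). In particular, validity is determined by forbidding a finite list of factors of bounded length $k$ over $\SSS$, so $\mathcal{L}$ (ignoring the two extremities $i=0$ and $i=n-1$) is a factorial language, as required by the Rauzy-graph method.

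First I would enumerate explicitly the forbidden factors implied by the definitions of \texttt{is\_state\_valid}. The distance-2 information is already packaged into the alphabet, so the rules reduce to statements of the form "letter $a$ followed immediately (or within two steps) by letter $b$ is inconsistent"; collecting these gives an upper bound $k$ on forbidden-factor length. Once $k$ is known, I would build the Rauzy graph $R_{k-1}(\mathcal{L})$, whose vertices are the allowed factors of length $k-1$ and whose arcs $(u,v)$ are defined by $u=au'$, $v=u'b$ with $au'b$ allowed. By the general principle recalled in \Cref{section-rauzy}, the number of valid words of length $n$ in $\mathcal{L}$ is then $\Theta(\lambda^n)$, where $\lambda$ is the spectral radius of the adjacency matrix of $R_{k-1}$.

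To turn this $\Theta$ into the claimed asymptotics I need two ingredients. The first is primitivity (in the standard algebra) of the Rauzy adjacency matrix, which yields a positive simple dominant eigenvalue via Perron--Frobenius; it is obtained exactly as in the proof of \Cref{prop-primitive}, by exhibiting a "padding" factor (e.g.\ a block of \textsc{stone}s or \textsc{ok}s) that connects any two allowed factors. The second is dealing with the boundary rows $i=0$ and $i=n-1$, where the validity rules differ from those inside a column and so $\mathcal{L}$ is not strictly factorial. This is handled as in \Cref{section-rauzy} by restricting the Rauzy vertices to factors lying at distance more than $k$ from both extremities, so that only the bulk rules apply; the extra multiplicative factor contributed by the two boundary strips is bounded and affects only the implicit constants in $\Theta$.

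The main obstacle will be the combinatorial explosion in the size of the Rauzy graph: the degree-24 minimal polynomial suggests that the quotiented Rauzy graph for this problem has on the order of two dozen orbit classes, far more than for the domination, 2-domination, total or Roman variants. Concretely, I would compute the adjacency matrix of $R_{k-1}$ with the program of \Cref{section-experimental-dom}, hand it to Sage, and extract its minimal polynomial; after dividing by its lowest-degree monomial this is exactly the stated polynomial of degree $24$. Its largest real root $x_0 \approx 2.958770$ is the Perron eigenvalue, and the fact follows. The only non-mechanical part of the argument is verifying by hand that the irrelevant spurious eigenvalues (corresponding to cycles in the Rauzy graph that cannot extend into valid long words) have been correctly pruned, which is automatic if the Rauzy graph is built only from strongly-connected components reachable in the primitivity argument.
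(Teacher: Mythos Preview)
Your proposal is correct and follows essentially the same approach as the paper: build the Rauzy graph of the (bulk) factorial language of valid states, feed its adjacency matrix to Sage, and read off the growth rate as the largest real root of the minimal polynomial divided by its lowest-degree monomial. The paper does not give a separate proof of this fact beyond pointing to \Cref{section-rauzy} and the computer-assisted computation, which is exactly what you outline.
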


As for the total domination, we give the formulas for small values of $n$.

\begin{thm}

For $1 \leq n \leq 14$ and any $m \geq n$, the following equalities about the distance-two-domination number hold:
\allowdisplaybreaks
\begin{fleqn}
\begin{align*}
\gamma_\emph{d2}(1,m) &=  \left\lceil \frac{m}{5} \right\rceil\\
\gamma_\emph{d2}(2,m) &=  \left\lfloor \frac{m}{4}+1 \right\rfloor\\
\gamma_\emph{d2}(3,m) &=  \left\lceil \frac{m}{3} \right\rceil\\
\gamma_\emph{d2}(4,m) &= \left\lbrace\setstretch{1.5}
		\begin{array}{ll} 
      \left\lfloor \frac{3m}{7} \right\rfloor+1 & \quad\text{if } m \mod 7 \in \{0,1,3,5\} \\
      \left \lfloor \frac{3m}{7} \right\rfloor+2& \quad\text{otherwise}  \\
      \end{array}
      \right.\\
\gamma_\emph{d2}(5,m) &= \left\lbrace\setstretch{1.5}
		\begin{array}{ll} 
      \left \lfloor \frac{m+1}{2} \right\rfloor& \quad\textrm{if } m\mod 6 = 1 \\
      \left \lfloor \frac{m+1}{2}\right\rfloor + 1& \quad\textrm{otherwise} \\
      \end{array}
      \right.\\
\gamma_\emph{d2}(6,m) &= \left\lbrace\setstretch{1.5}
		\begin{array}{ll} 
      \left \lfloor \frac{3m}{5} \right\rfloor+1& \quad\textrm{if }  m \mod 5 \neq 3 \\
      \left \lfloor \frac{3m}{5} \right\rfloor+2& \quad\textrm{otherwise}\\      
      \end{array}
      \right.\\
\gamma_\emph{d2}(7,m) &= \left\lbrace\setstretch{1.5}
		\begin{array}{ll} 
      7& \quad\textrm{if } m = 9 \\
      \left \lfloor \frac{2m}{3} \right\rfloor+2& \quad\textrm{otherwise} \\
      \end{array}
      \right.\\
\gamma_\emph{d2}(8,m) &= \left\lbrace\setstretch{1.5}
		\begin{array}{ll} 
      12& \quad\textrm{if } m = 13 \\
      \left \lfloor \frac{3m}{4} \right\rfloor+1& \quad\textrm{if } m \mod 8 \in \{4, 7\}\\
            \left \lfloor \frac{3m}{4} \right\rfloor+2& \quad\textrm{otherwise}\\
      \end{array}
      \right.\\
\gamma_\emph{d2}(9,m) &= \left\lbrace\setstretch{1.5}
		\begin{array}{ll} 
      \left \lfloor \frac{5m}{6} \right\rfloor+1& \quad\textrm{if } m \in \{ 11, 18\} \\
      \left \lfloor \frac{5m}{6} \right\rfloor+3& \quad\textrm{if } m \mod 6 \in \{2,3,9\} \text{ and } m \notin \{ 4,15,20,21,27,32,39\} \\
      \left \lfloor \frac{5m}{6} \right\rfloor+2& \quad\textrm{ otherwise} \\
      \end{array}
      \right.\\
\gamma_\emph{d2}(10,m) &= \left\lbrace\setstretch{1.5}
		\begin{array}{ll} 
      \left \lfloor \frac{10m}{11} \right\rfloor+3& \quad\textrm{if } m \mod 11 \in \{ 2,3,5,8\} \\
      \left \lfloor \frac{10m}{11} \right\rfloor+2& \quad\textrm{otherwise} \\
      \end{array}
      \right.\\   
      \gamma_\emph{d2}(11,m) &= \left\lbrace\setstretch{1.5}
		\begin{array}{ll} 
      m+1& \quad\textrm{if } m \mod 30 \in \{1,4,6,7,9,11,14,16,17,19,21,24,26,27,29,30\} \\
      m+2& \quad\textrm{otherwise} \\
      \end{array}
      \right.\\   
\gamma_\emph{d2}(12,m) &= \left\lbrace\setstretch{1.5}
		\begin{array}{ll} 
      \left \lfloor \frac{15m}{14} \right\rfloor+4& \quad\textrm{if } m \mod 14 = 11 \textrm{ and } m \neq 25 \\
      \left \lfloor \frac{15m}{14} \right\rfloor+2& \quad\textrm{if } m \mod 14 \in \{1,4,7\} \textrm{ or } m \in \{14,17,19,28\} \\
      \left \lfloor \frac{15m}{14} \right\rfloor+3 & \quad\textrm{otherwise} \\
      \end{array}
      \right.\\
\gamma_\emph{d2}(13,m) &= \left\lbrace\setstretch{1.5}
		\begin{array}{ll} 
      \left \lfloor \frac{15m}{13} \right\rfloor+4& \quad\textrm{if } m \mod 13 = 5 \text { and } n \neq 31\\
      \left \lfloor \frac{15m}{13} \right\rfloor+3& \quad\textrm{if } (m \neq 13 \text{ and } \mod 13 \in \{0,2,3,6,8,10,11,12\}) \text { or } n = 31\\
      \left \lfloor \frac{15m}{13} \right\rfloor+2 & \quad\textrm{otherwise} \\
      \end{array}
      \right.\\   
\gamma_\emph{d2}(14,m) &= \left\lbrace\setstretch{1.5}
		\begin{array}{ll} 
      \left \lfloor \frac{21m}{17} \right\rfloor+2& \quad\textrm{if } m \mod 17 = 1 \text { or } n \in \{23,30,47\}\\
            \left \lfloor \frac{21m}{17} \right\rfloor+4& \quad\textrm{if }m = 36 \text{ or } (m > 46 \text{ and } n \mod 17 \in \{2,3,8,11,14,16\} \\
       & \quad\quad\quad\quad\quad\quad\quad\quad\quad\quad\;\text{ and } n \notin \{54,59,71\})  \\
      \left \lfloor \frac{21m}{17} \right\rfloor+3 & \quad\textrm{otherwise} \\
      \end{array}
      \right.\\  
\gamma_\emph{d2}(15,m) &= \left\lbrace\setstretch{1.5}
		\begin{array}{ll} 
      \left \lfloor \frac{21m}{16} \right\rfloor+2& \quad\textrm{if } m \mod 16 \in \{1,4,7\} \\
            \left \lfloor \frac{21m}{16} \right\rfloor+4& \quad\textrm{if } m \mod 16 \in \{2,3,5,8\} \text{ and } n \notin \{19,21\}\\
      \left \lfloor\frac{21m}{16} \right\rfloor+3 & \quad\textrm{otherwise}\\
      \end{array}
      \right.\\
\end{align*}
\end{fleqn}

\label{th-dist2-domination}

\end{thm}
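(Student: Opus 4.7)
The plan is to apply the fixed-height transfer matrix method developed in \Cref{section-dom-method} to the distance-2 domination problem for each $n \in \llbracket 1, 15 \rrbracket$ separately. First I would formalize the encoding: the cell-value set
$\SSS = \{\text{\textsc{stone\_prev}}, \text{\textsc{stone}}, \text{\textsc{ok\_prev}}, \text{\textsc{ok}}, \text{\textsc{need\_dist\_two}}, \text{\textsc{need\_dist\_one}}\}$
captures, for each cell of the current column, whether it contains a stone (and whether its left neighbour did), whether it is already dominated (and whether its left neighbour also was), or how soon it must be dominated. Because a cell can be dominated by a stone at graph-distance 2, the suffixes ``\_prev'' and ``\_dist\_one'' propagate exactly the extra one-column look-ahead needed for the local characterisation to hold, so that distance-2 domination is determined by the compatibility of consecutive column states as required.

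Next I would define the sets $\VV$, $\FF$, $\EE$ and the relation $\RRR$ from the local constraints on $\SSS$ (a \textsc{need\_dist\_one} value forces the next column to contain a stone at that row, a \textsc{need\_dist\_two} may be satisfied either by a stone at distance one or distance two, and so on), mirroring the construction of \Cref{section-dom-method}. With this in place, \Cref{claim-exact} adapted to distance-2 domination gives $\gamma_{d2}(n,m) = F^{\mathsmaller T} T^{m-1} E$ in the $(\min,+)$-algebra, where $T$ is the associated transfer matrix. The next step is to verify that $T$ is primitive: exactly as in \Cref{prop-primitive}, the all-\textsc{stone} state is reachable from any state and reaches any state in a bounded number of steps (one can always saturate a fixed window with stones to ``reset'' the configuration), so some power of $T$ has all finite entries. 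By \Cref{th-primitive-rec}, the sequence $(T^m)_{m \geq m_0}$ is ultimately periodic up to an additive shift, which translates, via the boundary vectors $F$ and $E$, into an affine recurrence $\gamma_{d2}(n, m + r_n) = \gamma_{d2}(n, m) + p_n$ for all $m$ large enough.

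I would then run the implementation of \Cref{section-dom-method} for each height $n \in \llbracket 1, 15 \rrbracket$ to obtain the period $r_n$, the increment $p_n$, and the finitely many initial values $\gamma_{d2}(n, m)$ for $n \leq m < m_0 + r_n$. From these two pieces of data the closed formulas of the statement are obtained by a case split on $m \bmod r_n$, collecting the finitely many ``irregular'' small widths (e.g.\ the exceptional values $m \in \{23, 30, 47\}$ for $n = 14$) into explicit sporadic cases. Correctness of each formula is then a finite verification: the formula matches the initial computed values, and the recurrence propagates it to all larger $m$.

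The main obstacle is computational rather than conceptual. For distance-2 domination the alphabet $\SSS$ has 6 symbols and, by the Rauzy graph computation, the number of valid states grows like $(2.958770)^n$ — noticeably faster than for the 2-domination and total-domination variants — and the compatibility relation, which effectively looks two columns back, is much denser. Already for $n = 15$ the state space and transition count make the computation of $T^{m_0 + r_n}$ heavy in both time and memory, and the periods $r_n$ are evidently large (up to 30 for $n = 11$ and 17 for $n = 14$), which explains both why we stop at $n = 15$ and why the case analysis in the statement contains so many modular subcases and sporadic exceptions; managing this explosion carefully, with the optimisations described in \Cref{section-experimental-dom}, is what makes the proof go through.
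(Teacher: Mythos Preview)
Your proposal is correct and matches the paper's approach exactly: the paper applies the fixed-height transfer-matrix method of \Cref{section-dom-method} with the six-letter cell alphabet you describe, verifies primitivity, invokes \Cref{th-primitive-rec} for the eventual periodicity, and extracts the formulas by running the program for each $n$ up to the computational limit. The paper itself gives even less detail than you do, simply stating that the specific sets and the compatibility relation are ``only a matter of logic and optimisations, and can be found in the source code'', and noting the same growth rate $\approx 2.958770^n$ as the reason the method stops at $n=15$.
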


\newpage
We adapt again the loss function: now a cell contributes to the domination of 12 cells, and each cell not in the dominating set needs to be dominated once. We then obtain:
\[\gamma_{\mathrm{2d}}(n,m) = \frac{nm+\ell(n,m)}{13}.\]

However, we were not able to find good bounds with the method we used for the other problems.

\section{Conjectures about why the method works}
\label{section-conj-dom}
As we said earlier, the method for a fixed number of lines should work for any problem the properties of which can be checked locally, that is by the means of a finite list of forbidden patterns.
Some authors investigated the problem of domination in Cartesian products of cycles (see for instance~\cite{kla, pav}). The first part of the technique (when $n$ is fixed and small) may be adapted, as stated by \Cref{th-meta-grids} but the second part (for arbitrary number of lines) does not apply directly since a crucial property is that the loss can be concentrated inside the \emph{borders} of the grids.

Some necessary and sufficient conditions for the loss method (the one described in \Cref{section-big-number-lines}, for arbitrary large number of lines) to work are yet to be discovered.
We try here to infer what these conditions might be by giving some properties we believe to be related to the effectiveness of the method. 

We believe that the reasons why the method gives sharp bounds can be expressed as some tiling properties.
Indeed, the domination problems are related to covering problems. For instance, \Cref{domination-shape} shows the shape associated to the domination problem. A smallest dominating set in a grid is equivalent to a smallest covering set of the rectangle with this shape. The method of Gonçalves et al. works thanks to the fact that the shape has the following two properties. First, it can tile (that is, cover without overlaps) the infinite plane. Second, we can find optimal solutions which consist of projecting a tiling of the plane, cropping it and modifying only tiles at bounded distance from the border.

\begin{figure}
\centering
\includegraphics[scale=0.3]{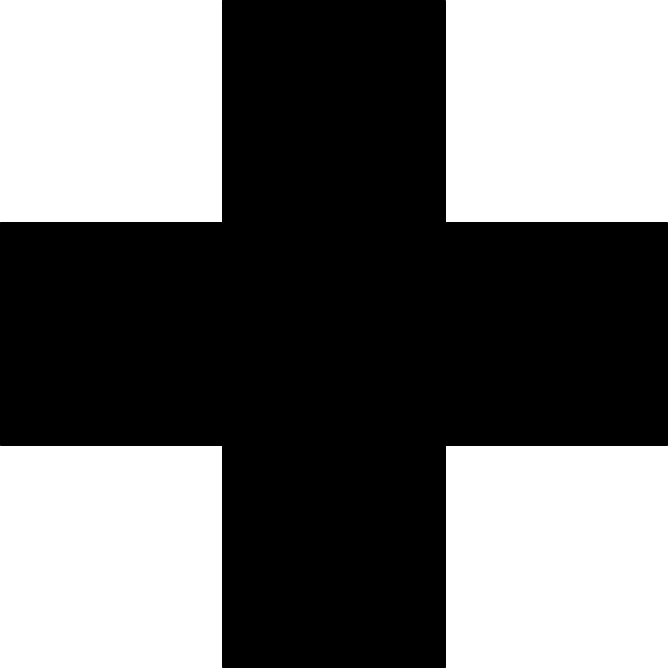}
\caption{The shape corresponding to the domination tiling problem.}
\label{domination-shape}
\end{figure}
In the case of the 2-domination and the Roman domination, it is not properly speaking a covering problem, but a generalised covering problem with some weights (see \Cref{2-roman-figures}). The properties we write below are rather focused on standard tilings than on generalised tiling.
\begin{figure}[h]
\hspace{-1.7cm}
\begin{subfigure}[b]{0.6\textwidth}
\includegraphics[scale=0.6]{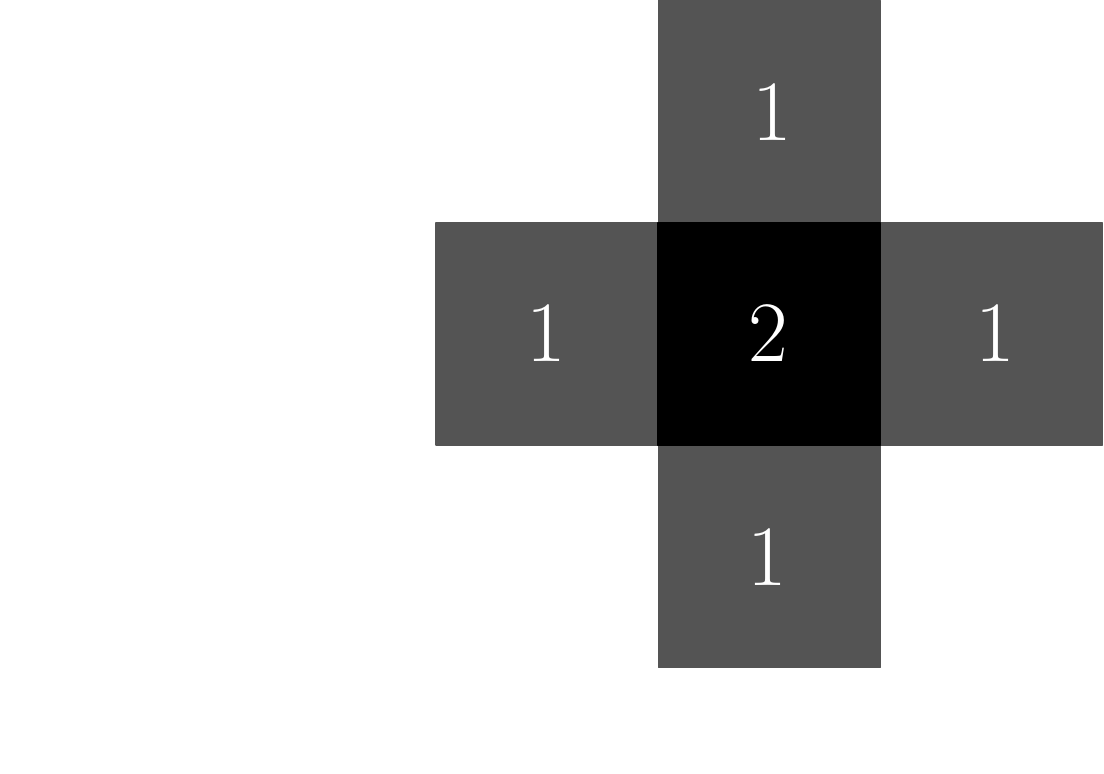}
\caption{The 2-domination shape.}
\end{subfigure}
\begin{subfigure}[b]{0.4\textwidth}
\includegraphics[scale=0.6]{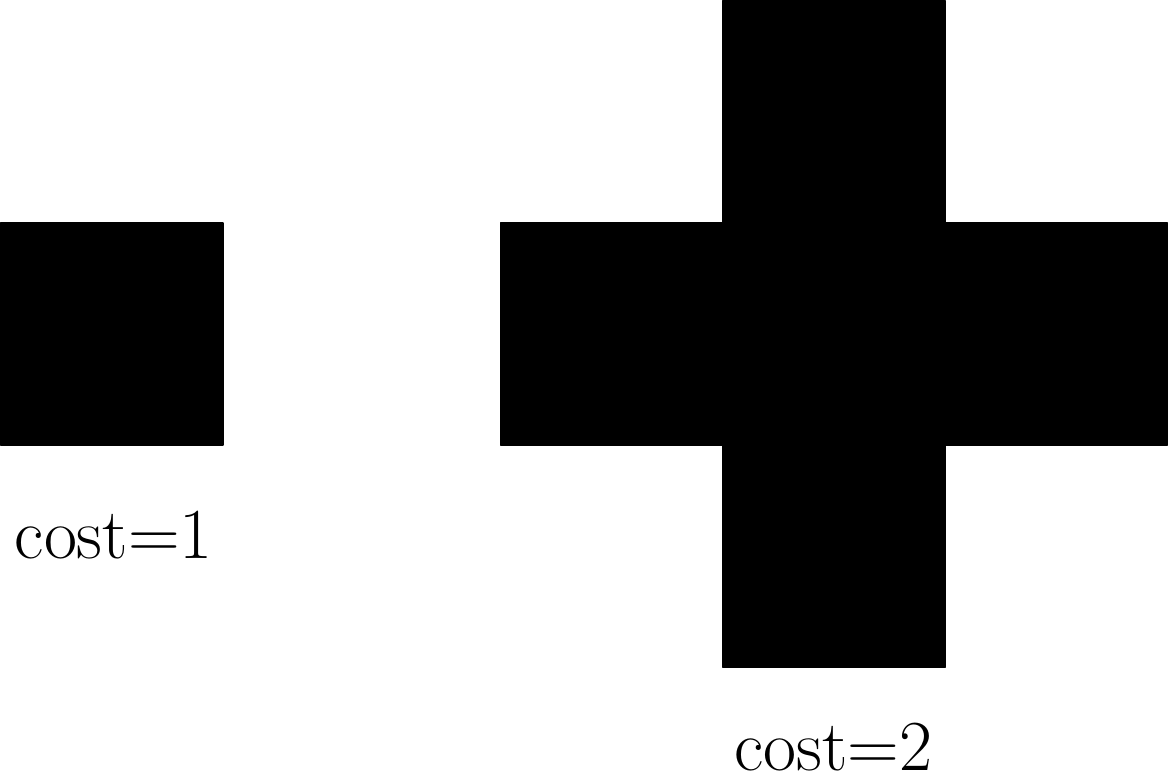}
\caption{The Roman-domination shapes.}
\end{subfigure}
\caption{The shapes for the 2-domination and the Roman domination. For the 2-domination, we look for a covering such that the sum of the weights (in white) on a cell is at least 2. For the Roman domination, we cover with two tiles, but they have different costs. We are interested in a covering of minimum weight.}
\label{2-roman-figures}
\end{figure}

One crucial point is the following property. 
\begin{ppty}[Fixed-height border-fixing]
Let $X$ be a shape. $X$ has the fixed-height border-fixing property if there exist $k, n_0, m_0$ such that, for any $n \geq n_0$ and $m \geq m_0$, there exists an optimal \underline{covering} of the $n \times m$ rectangle whose cells at distance greater than $k$ of the border are included in a tiling (so with no overlaps) of the plane.
\end{ppty}
For instance, the 2-domination shape has this property for $k=3$: any optimal solution to the 2-domination problem can be obtained from an infinite optimal 2-domination set of which we modify only cells at distance at most 3 from the border. Note that, due to the automation feature of the algorithm, this is indeed $k=3$ here even if the program needs to explore borders of size 6 to find the correct bounds.

The fixed-height border-fixing property implies that the bounds given by the method are sharp for some  constant height band, independent of the size of the rectangle. This seems to be related to the following property.

\begin{ppty}[Crystallisation]
Let $X$ be a shape. We say that $X$ has the crystallisation property if there exists $k \in \NN$ such that for every partial tiling of size $k$ with the shape $X$, either this tiling cannot be extended to tile the plane, or there is a unique way to do so up to rotation/symmetry.
\end{ppty}

For instance, the domination shape has this property for $k=2$: any two cross shapes put on a grid either cannot be extended into a tiling of $\ZZ^2$ or can be completed into only one such tiling. On the contrary, the total domination does not have this property.
The total-domination problem has been studied a lot in other graphs (see~\cite{review} for example), but remains open for grids. It is related to the shapes in \Cref{total-shape-fig}. The small one corresponds to the influence of one "stone": note that the centre cell does not dominate itself. The big ones are the unions of two copies of the small one. One can see that tiling the plane with the small shape is equivalent to tiling the plane with the set of the two big shapes: in the small shape, the middle cell must be dominated. As shown, the big shape can be vertical or horizontal. The problem with our technique is that a tiling of the plane can, with a certain degree of freedom, mix the vertical and the horizontal big shapes. This probably leads to some non-zero loss in the centre of a big grid to be necessary for a covering to be of minimum size. In this case, the assumption of the loss on the centre of the grid being zero would be false, making our technique unusable.

\begin{figure}
\centering
\includegraphics[scale=1]{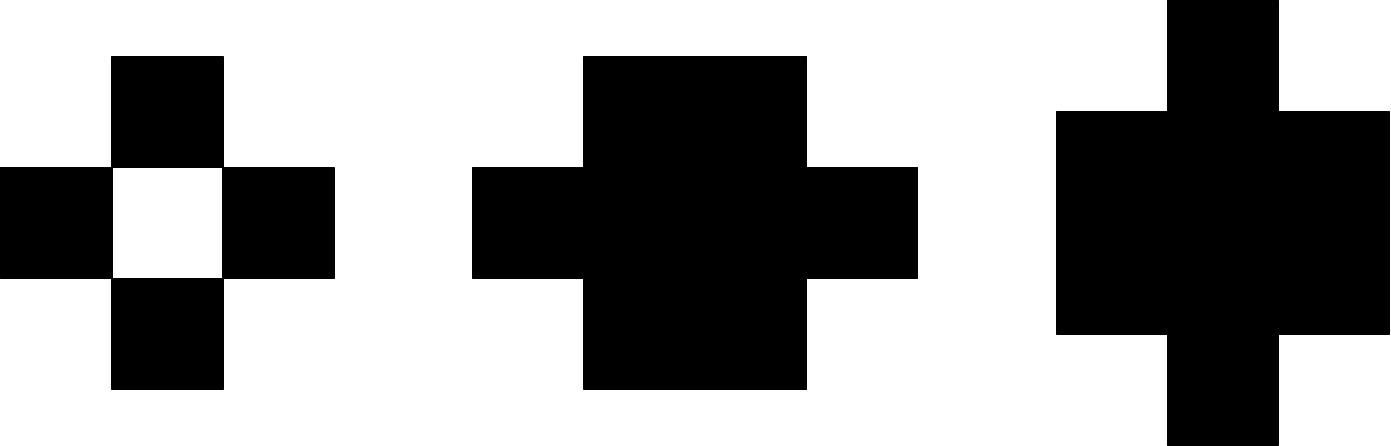}
\caption{The shapes associated to the total domination. The big ones are the two different unions of two copies of the small one. Tiling the plane with the small one boils down to tiling the planes with the two big ones.}
\label{total-shape-fig}
\end{figure}

\begin{conj}
If a shape $X$ tiles the plane and has the crystallisation property then it also has the fixed-height border-fixing property.
\end{conj}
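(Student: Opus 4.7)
The plan is to assume $X$ tiles the plane and has the crystallisation property with parameter $k_0$, and construct an optimal covering of the $n \times m$ rectangle whose interior (at distance more than some $k \geq k_0$ from the border) agrees with a plane tiling. First I would establish a \emph{baseline}: since $X$ tiles $\ZZ^2$, pick one such tiling $T_0$ of density $\rho$, restrict it to the $n \times m$ rectangle, and cover the cells which are only partially covered by tiles of $T_0$ (all within some fixed distance of the border) by adding a bounded number of extra tiles per unit of boundary length. This yields a covering of cost $\rho nm + O(n+m)$, which therefore upper-bounds the cost of any optimal covering.

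Next I would prove an \emph{interior rigidity lemma}: in any optimal covering $C$ of the $n \times m$ rectangle, the local density of overlaps and of multiply-covered weight in any interior sub-rectangle of side $s$ (with $s \ll n,m$) is $O((n+m)/s^2)$. The idea is that excess in the interior cannot be balanced by savings at the border, since the baseline already shows that the boundary contributes only $O(n+m)$ to the cost. Consequently, for $s$ chosen much larger than $k_0$ but much smaller than $\min(n,m)$, one can find a sub-rectangle of side $k_0$ in the interior of $C$ on which the restriction of the tiles is a genuine (non-overlapping, gap-free) partial tiling of size $k_0$. By the crystallisation hypothesis this partial tiling extends in a unique way (up to symmetry) to a plane tiling $T$ of $X$.

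Then I would argue that the interior of $C$ must \emph{globally} agree with $T$. Suppose not: pick a tile $t \in C$ lying in the interior but not matching $T$. Its placement forces either an overlap with a neighbouring tile of $C$, or a gap that must itself be covered by yet another misaligned tile, producing a localised chain of deviations. Using the crystallisation property again on a shifted patch, such a chain must either terminate at the boundary or produce a bounded pocket of waste; in the latter case we can excise the pocket and replace it with the corresponding piece of $T$, strictly reducing the cost, contradicting optimality of $C$. Hence, after a finite sequence of such local modifications, we obtain an optimal covering whose interior (cells at distance more than $k$ from the border, for some $k = k(k_0, \mathrm{diam}(X))$) is exactly the restriction of the plane tiling $T$, which is the fixed-height border-fixing property.

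The main obstacle I expect is the rigorous formulation of \emph{localisation} of deviations in the third paragraph: one needs to show that a single mismatch cannot propagate arbitrarily far into the interior without either hitting the boundary or yielding a strictly improvable pocket. This likely requires an exchange argument specific to the geometry of $X$ and a quantitative version of crystallisation controlling how a misaligned partial tiling of size $k_0$ "disagrees" with the unique extension $T$. A secondary difficulty is handling the weighted/generalised coverings arising in the $2$-domination and Roman-domination variants, where the baseline density argument must be replaced by a weighted analogue.
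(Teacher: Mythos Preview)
The statement you are attempting to prove is presented in the paper as a \emph{conjecture}, not a theorem: the paper offers no proof at all. It is introduced precisely because the authors could not establish it, as part of their discussion of why the loss method succeeds for some domination variants and apparently fails for others (notably total domination). So there is no ``paper's own proof'' to compare against.

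As for your proposal itself, the outline is reasonable but does not close the problem, and you correctly put your finger on the gap. Steps~1--3 are plausible: the baseline bound and a pigeonhole argument should indeed produce, in any optimal covering, an interior $k_0\times k_0$ window that is a genuine partial tiling, hence by crystallisation determines a unique plane tiling~$T$. The difficulty is entirely in step~5. Your exchange argument (``excise the pocket and replace it with the corresponding piece of~$T$'') presupposes that a deviation from~$T$ is contained in a bounded region whose boundary already agrees with~$T$, so that the swap is well defined and cost-decreasing. But nothing in the hypotheses prevents a single misaligned tile from forcing a cascade of further misalignments along a path that reaches the boundary rather than closing up into a pocket; in that case there is no local swap available, and the deviation is not strictly improvable even though it may carry zero excess cost in the interior. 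The crystallisation property only tells you that a size-$k_0$ partial tiling has a unique \emph{extension as a tiling}; it says nothing about how a \emph{covering} that locally disagrees with~$T$ must behave, nor does it bound the diameter of the set of tiles affected by one mismatch. Making this rigorous would require a genuinely new structural lemma, and the paper's authors evidently did not have one either --- hence the conjecture.
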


These properties could also be used on tiling problems with other shapes, even if they have no relation with any domination problem on grids.

\section{Experimental details: implementation and optimisations}
\label{section-experimental-dom}
All of the results we obtained required a lot of computing resources. Some of them were obtained after quite a lot of optimisations, and some partial results could have been improved if the memory consumption and the running time had been smaller. We explain here some optimisations we applied, some of them being more theoretical while others are technical.

\paragraph{Problem-dependent optimisations.\\}

As we said before, we made optimisations to reduce the number of states in some problems, using their properties. For instance for the Roman domination, there exist minimum dominating sets which never have one stone next to a cell with either one (case a) or two stones (case b). We can simply remove the stone in case a, or make it a two stones cell and remove its neighbours with one stone in case b. If we remove the optimisations for the states of the Roman domination, the growth rate of the number of states becomes $3.561553$ instead of $2.956295$. We achieved a very good improvement. For instance, for 10 lines the optimised version generates 32.5k states and 6.2M elements in the transfer matrix, while the unoptimised version generates 182k states and 2500M elements in the transfer matrix, which is very significant.

Another part where the improvements were big was about the $(a,b)$ domination. The first version we programmed encoded in the states whether or not each cell had a stone, and whether its predecessor had a stone. The set of states of this version is: $\SSS = \{$ \textsc{stone\_prev, stone, none\_prev, none} $\}$. The other version we programmed after, which we will call the fast version, instead encoded for each cell the fact of whether it has a stone and how many stones it needs (that is 0 or 1 for computing the domination numbers, and 0, 1 or 2 for the computation of the loss). The states of this version for a fixed number of lines, is included\footnote{If $a$ or $b$ equals zero, some states are no longer useful.} in $\{$ \textsc{stone\_ok, stone\_need\_one, stone\_need\_two, ok, need\_one, need\_two} $\}$. This version is faster because if for instance a cell is dominated by its state, we do not need to convey the information of whether or not its predecessor cell had a stone. We will compare the two versions for the total domination, i.e. the (1,1) domination. In the slow version, the growth rate of the number of states is 4, whereas in the fast version, it is $2.618034$, which is a huge gain. Using 95 cores, computing the transfer matrix for 10 lines uses 150s for the slow version compared to 1.3s for the fast one. In the slow version there are 525k states and 252M compatible pairs in the transfer matrix; in the fast version there are 10.4k states and 2.2M compatible pairs, once again a very important gain.

\paragraph{Pruning symmetries in the states.\\}

In order to reduce the number of states, we can do a simple observation: take a dominating set for any domination problem, and apply a horizontal symmetry to it. It remains dominating. This means that we do not need to store all the states: when they are not symmetrical we may keep only one representative out of the two since they both have the same number of stones. Also, let us assume that $S$ and 
$S'$ are compatible, and let us denote the symmetric of $S$ by $\mathrm{rev}(S)$. Then let us assume that $S$ and $S'$ are compatible, but we only stored $\mathrm{rev}(S)$ and not $S'$. This is not a problem since $S$ and $S'$ being compatible implies that $\mathrm{rev}(S)$ and $\mathrm{rev}(S')$ are compatible. When computing the transfer matrix, it suffices to check for compatibility with the reversed versions of one of the two states. If a state can be put first, then its symmetrical can also. The same applies for the end states. This implies that, by using only one representative for each pair of symmetrical states, we compute the same domination numbers as when we store all states. This optimisation saves us around half of the states, half of the compatible pairs.
This optimisation does not, however, change the growth rate of the number of states of a problem: it would otherwise make it decrease exponentially when the number of lines increase instead of roughly dividing it by two.

For the 2-domination with 12 lines, without pruning symmetrical states we have around 142k states and 10M compatible pairs, and the computation of the transfer matrix with 95 cores takes around 59s. When we prune symmetrical states it takes 43.5s: we have 71k states and 4.6M compatible pairs. This a slight improvement, however the optimised version has a memory peak at 236MB whereas the other uses 493MB at peak. This may make a difference, for some problems, when computing the transfer matrices: it may make it possible to go one line further. Unfortunately, this optimisation does not apply to the computation of the loss matrices. Indeed, the top cells have a neighbour above which can contain a stone

\paragraph{Checking validity while generating states.\\}

To generate the valid states, we may recursively enumerate all the states in $\SSS^n$ and each time we complete a state (by choosing its bottom value) we check whether or not the state is valid. However, in the program the generation of the valid states was not parallelised and so it took a non-negligible part of the total running time when using a machine with a lot of cores. This justified trying to optimise this part of the program. The optimisation we programmed was to partially check the validity of a state while generating the state: each time we fix the value of the next cell we check whether or not this beginning is valid instead of waiting until the state is complete.

This means that in many cases we can cut some branches of exploration of the states values before reaching the end. In the minimal-domination problem we investigate in \Cref{dom-counting-chapter}, this helped reduce the time a lot: it takes 5.1s instead of 9.7s for a height of 7 and 46.4s instead of 121.4s for a height of 8. We did not implement this feature for the problems in this chapter since the generation of states did not take that long compared to the other parts of the computation. Computing the loss matrix is, as we will see, cubic in the number of states.

\paragraph{Pre-computing a subset of the potentially compatible states.\\}

The previous paragraph was about precomputing the validity of possible states while we are generating the states. The idea here is similar: when considering a state and checking which other states might be compatible with it, we waste some time trying a lot of states which have no chance of being compatible. 
For instance, $S$ may be incompatible with $S'$ because of the values of the first three cells... in that case, any state $S''$ with the same first three cells will not be compatible with $S'$. For some problems, we may precompute a list of \emph{candidates} to the compatibility relation with a specific state $S'$: any state outside this list will not be compatible, but not every state in the list will be compatible. For other problems, like the simple domination or the 2-domination, precomputing the list of candidates to the compability slows down the program: the ratio of incompatible "pairs" is less than the one of the minimal domination, for instance. This method will be used in \Cref{dom-counting-chapter} for the minimal versions of the domination problems.

 In some problems, mainly the ones storing a lot of things in the past: the distance-2 domination, and especially the minimal and minimal total domination that we study in \Cref{dom-counting-chapter} are the best examples to illustrate this fact. Since the latter stores whether or not the current cell has a stone as well as the information about the most recent three columns, compatibility rules follow some structure. Indeed, if $S'$ can be put after $S$ then the information about columns "-1", "-2", and "-3" must respectively be the same as the information in $S$ about its relative current column and columns "-1" and "-2". So we can discard any state where the information differs. Since each value is stored in four bits (one for each column we store information about), the list of possible next states $S'$ is obtained by shifting each value of $S$ by one bit. This means that we test only one eights of the number of states as possible successors for a state $S$, as opposed to the whole set of states we would test otherwise. To simplify, for instance, 1110 cannot be compatible with 1111 because its one-before-last bit should match the last bit of 1110 (i.e. a zero). 1110 might, on the contrary, be compatible with 1100 or 1101: the zero has shifted towards the left.
 
For the minimal domination of height 7 (around 6.2 millions states), using a machine with 20 cores, the computation of the transfer matrix takes 8 seconds with the optimisation and more than 22 minutes without (we did not wait for it to end).

\paragraph{Optimising matrices products.\\}

The matrices we handle are big, or even huge: as many lines and columns as there are states, which can be around one million (or a bit more) for some problems in this chapter. So the products between matrices take a lot of time and this is a concern of primary importance. Yet, our matrices are very sparse, the ratio of the number of states squared divided by the number of actual (different from $+\infty$) values are the following:

\begin{itemize}[noitemsep]
\item for the 2-domination, it is 32 for 10 lines and 153 for 14 lines;
\item for the Roman domination, it is 314 for 10 lines and 2899 for 14 lines;
\item for the distance-2-domination, it is 226 for 10 lines and 1884 for 14 lines.
\end{itemize} This is very fortunate because the algorithm we use have a running time of roughly $\OO(nbState^2)$ when the number of lines is fixed and $\OO(nbState^3)$ for the loss method. Since the matrices are sparse, the running time is in fact proportional to the number of compatible pairs, that is the number of values of the matrix.
Still, the matrices products take a very long time. One first observation is that since we know that our matrices are primitive (i.e. there exists $k$ such that $M^k$ has its full $nbState^2$ coefficents) products between matrices rapidly take a very very long time. This can be avoided, for the first method (fixed number of lines) by doing only matrix-vector products when computing the domination numbers. This way, we do a number of operations which is around the number of values in the matrix which are different from $+\infty$.

\paragraph{Ordering the corner computations.\\}

The final optimisation we talk about here is done at a higher level: it is more a design detail of the algorithm. To compute the loss, we need to compute the corner transfer matrix (see \Cref{section-big-number-lines}) $C_\mathrm{a}$. $C_\mathrm{a}[a][b]$ contains the minimum loss inside the corner if $a$ and $b$ are the "internal" edges of the corner. So the standard algorithm is basically the following: 
for any two states $a$ and $b$, compute the minimum loss over the corner (i.e. compute the loss over $h$ columns, where $h$ is the height of the border we chose). This leads, letting $nbState = |\VV_a|$, to $nbState^2$ possibilities for the choices of $a$ then $b$, multiplied by a factor of $h \cdot |\RRR_a|$ where $|\RRR_a|$ is the number of almost compatible pairs. Roughly, this would be some $\OO(nbState^4)$. However, things can be more finely tuned: let us assume we compute the loss over the bottom-right corner, let us fix $b$ as the output state of the corner (the "upper horizontal" state). We then compute the minimum loss inside the corner, and finally obtain, for every possible input state of the corner ("left vertical" state) $a$ the value of $C_\mathrm{a}[a][b]$ by considering all the possible states $a$ as a $(h+1)^\text{th}$ compatibility computation. This leads to a roughly $\OO(nbState^3)$ algorithm, which is much better: we handle matrices with $nbState = 20000$, so we save a factor around 20000 in this case.

\paragraph{Using threads.\\}

When you lack time, you may want to save up time eating, so a brunch is a good compromise. Or rather, "brunch" was the name of the machine with the most (efficient) cores in our laboratory: it has 96 cores. So this was a great incentive to parallelise the code. As we mentioned in the introduction, the key point when parallelising a code is to identify how to make the parallel computations independent. Here we were lucky: computing the transfer matrix (i.e. the compatibility and almost-compatibility relations) can be done fairly easily in parallel: we divide the states into $p$ share of equal size and each thread computes, for each state $S$ in its share, the set of states which are compatible with $S$. The good thing is that all threads write the list of compatible states at different points of memory, so there is no risk of writing at a location of the memory at the same time as another write or a read. They may read the same variable at the same time but this is not an issue. Due to the fact that not all portions are easily parallelisable and to a few technical issues the speedup is not 1: if we use $p$ cores, the running time is not divided by $p$. Some of the problems involve context switching (to a minor impact): sometimes a thread is executed in one core and then switched to another so that all the values in the registers need to be copied, and all the thread do not have exactly the same amount of work to do: when multiplying two matrices and giving a portion of the multiplication to each core, some may be "lucky" and finish well before the others. For instance, computing the transfer matrix of height 13 for the 2-domination problem takes 167.6 seconds using one thread. On a machine with 96 cores, using 20 threads, it takes 15.2s and using the 100 threads, it takes 4.8s.\\

\emph{It is beautiful to create a lot of threads and then see the usage of CPU decrease little by little, as though the finishing times of the threads followed a normal law.}

\newcommand{\aarg}{\textrm{arg}}
\tikzstyle{mydashed}=[dash pattern=on 1.5pt off 2pt]
\pgfdeclarepatternformonly{my crosshatch dots}{\pgfqpoint{-1pt}{-1pt}}{\pgfqpoint{5pt}{5pt}}{\pgfqpoint{1.5pt}{1.5pt}}%
{
    \pgfpathcircle{\pgfqpoint{1pt}{0pt}}{.4pt}
    \pgfpathcircle{\pgfqpoint{1pt}{0pt}}{.4pt}
    \pgfusepath{fill}
}
\resetlinenumber
\chapter{Asymptotic growth of the number of dominating sets}
\label{dom-counting-chapter}
We live in a world whose main concept is the one of growth. All countries try to achieve the maximum possible economic growth so that their people may enjoy a better life because they have more money and comfort. However, alongside with the growing number of people on the Earth, such an economical and technological growth leads to huge problems, some of which are climate change and the loss of biodiversity. For instance, we have lost 52\% of the existing animal species between 1970 and 2012, according to the WWF. Some animals are small yet very useful to humans, like bees. They also have suffered a big loss over the past decades, and still decrease now. Concerning the resources and energy that we \textbf{consume}, we can cite the IEA\footnote{International Energy Agency} which says that between 1971 and 2017 the total of produced energy \textbf{increased by more than 2.5 times}. Our consumption of resources increased a lot as well, which is not sustainable. Indeed, in 2019 the \textbf{Earth Overshoot Day} was July 29th\footnote{It happened to be the precise day I wrote this paragraph.}: between January 1st and this date, the world consumed the amount of resources that the Earth can renew in one year. This means that we consume at least 1,42 times what the earth renews each year... asymptotically, this leads to the total depletion of the available resources.\\

However, in this chapter the only thing which grows is the number of inoffensive\footnote{yes, remember \Cref{domination-chapter}} dominating sets. In the previous chapter, we studied these sets from an optimisation point of view: trying to minimise the size of a dominating set. It is also of interest to study the related counting problem: how many different dominating sets are there? We will tackle this question on various domination problems: the domination, total domination, minimal domination and minimal total domination. This means that, for each of these problems, we will show the existence of some constant $\nu$, depending on the problem, such that the number of appropriate dominating sets is $\nu^{nm+o(nm)}$ when both dimensions $n$ and $m$ of the grid tend to infinity. We will also give bounds and estimates on the growth rate $\nu$. We will do the same for the other domination problems listed just above. In this journey, we will see that counting  dominating sets is related to some problems in dynamical systems, and particularly to the notion of entropy in subshifts of finite type. This notion of subshift, which we define later, encompasses the problems which can be defined as colouring $\ZZ^d$ forbidden a specific set of patterns to appear. In dimension 2, the subshifts can be seen as factor-free languages of bidimensional words.\\

We begin in \Cref{section-def-counting} by giving the missing definitions of some dominating sets. We then introduce in \Cref{SFT-local-characterisation} the simple concept of \emph{local characterisation} and the notion of \emph{subshifts (of finite type)}, and explain their similarities with the domination problems. These concepts are tools we will use to obtain our results: our domination problems can be viewed as some subshifts of finite type (SFT). In \Cref{section-SFT-dom-comparison}, we explain the link between these tools and our objectives: counting the patterns of a certain size which can appear in one of our SFTs helps us count the number of dominating sets associated to this SFT. We then introduce in \Cref{section-entropy} the notion of \emph{entropy} of a subshift, which, in our cases, turns out to be equal to the growth rates we are looking for. In one subsection we show that some of the subshifts we have defined are \emph{block gluing}. This fact implies that their entropies are computable numbers, hence the same is true for their asymptotic growth rates. In \Cref{section-bounds-growth-rates} we finally show how to obtain bounds on the growth rates. We give the numerical approximations and bounds we obtained for each problem, thanks to computer resources, using a program similar to the one of \Cref{domination-chapter}. Finally, we introduce a family of subshifts with particular properties in \Cref{section-meta-k-domination}. This family generalises the domination and total domination, and we provide a second family of problems to do the same for their minimal counterparts. We study the block-gluing property for this second family and show that each of them are block gluing, but the block-gluing constant is a function of the parameter of the family.

The work of this chapter was done with Silvère Gangloff. Most of the results can be found in~\cite{article-counting}. Beware that some notations, like $D_{n,m}$, differ a bit from the article in order to keep a coherency with \Cref{domination-chapter}.

\section{Basic definitions and notation}
\label{section-def-counting}
In this chapter, $G$ will be an undirected graph of which we examine the diverse dominating sets, and we will focus on grid graphs only. Contrarily to the previous chapter, the graphs here may be infinite when we specify it. We recall here \Cref{def-total-dominating}: a set $S$ of vertices of a graph $G$ is \textbf{total dominating} when any vertex $v \in V$ has at least one neighbour in $S$

\begin{deff}
A \textbf{minimal dominating} set $S$ is a dominating set which is inclusionwise minimal: any $S' \subsetneq S$ is not dominating. Or equivalently: for each $v \in S$, the set $S \setminus \{v\}$ is not dominating.\\
Likewise, a \textbf{minimal total dominating} set is  a total dominating set which is inclusionwise minimal.
\end{deff}

\begin{notation}
In the following, for all integers $n$ and $m$, we denote by $\bm{D_{n,m}}$, $\bm{M_{n,m}}$, $\bm{T_{n,m}}$ and $\bm{\mathit{MT}_{n,m}}$ respectively the number of dominating sets of the grid $G_{n,m}$, the number of its minimal dominating sets, the number of its total dominating sets and the number of its minimal total dominating sets.
\label{notation-counting-numbers}
\end{notation}

To familiarise the reader with the notions of domination we study here, some of them are illustrated in \Cref{figure.domination.notions}.

\begin{figure}[h!]
\centering
\begin{tikzpicture}[scale=0.5]
\begin{scope}

\fill[gray!90] (0,0) rectangle (2,2);
\fill[gray!90] (1,3) rectangle (2,4);
\fill[gray!90] (2,2) rectangle (3,3);
\fill[gray!90] (3,3) rectangle (4,4);
\fill[gray!90] (3,1) rectangle (4,2);
\draw (0,0) grid (4,4);
\node at (1.5,-1) {(a)};
\node at (-0.8,0) {\scriptsize (1,1)};
\end{scope}

\begin{scope}[xshift=6cm]

\fill[gray!90] (1,3) rectangle (2,4);
\fill[gray!90] (0,1) rectangle (1,2);
\fill[gray!90] (1,0) rectangle (2,1);
\fill[gray!90] (3,1) rectangle (4,3);

\draw (0,0) grid (4,4);
\node at (1.5,-1) {(b)};
\end{scope}

\begin{scope}[xshift=12cm]

\fill[gray!90] (0,1) rectangle (2,2);
\fill[gray!90] (1,0) rectangle (2,1);
\fill[gray!90] (1,3) rectangle (3,4);
\fill[gray!90] (3,1) rectangle (4,3);

\draw (0,0) grid (4,4);

\node at (1.5,-1) {(c)};
\end{scope}

\end{tikzpicture}
\caption{Illustration on $G_{4,4}$:\\
(a) a dominating set which is neither minimal dominating nor total dominating;\\
(b) a minimal dominating set which is not total dominating (the bottom-left dominant vertices are not dominated);\\
(c) a minimal total dominating set.}
\label{figure.domination.notions}
\end{figure}

\section{Local characterisations and relation with SFTs}
\label{SFT-local-characterisation}
In this section we recall, and for completeness 
prove, the local characterisations 
of some notions of dominating sets. This means that 
one can check if a set $S$ is dominating (or minimal dominating, and so on) by examining, for each vertex, whether or not this vertex and its (possibly extended) neighbourhood 
are in $S$. This was what enabled \Cref{domination-chapter} to work: checking properties locally makes it possible to encode only some information about the (few) previous column(s). This allows us to enumerate the dominating sets without keeping (and hence enumerating) them fully in memory. We then introduce and define the \emph{subshifts of finite type} (SFT) which are dynamical systems objects strongly linked to the local characterisation property. Each domination problem will be associated to its SFT counterpart, with which it shares some properties like the growth rate (called \emph{entropy} in the world of SFTs). We will later use the framework around the SFTs and prove some properties they have to deduce the counterpart in the domination growth rates.

\subsection{Local characterisations}
\label{local-charac-section}

\begin{deff}
\label{definition.local.domination.notions}
When a set $S$ of a graph $G$ is fixed, a vertex is called a \textbf{dominant} element of $G$ when it is in $S$, and a \textbf{dominated} element when it has a neighbour in $S$.\\$w$ is said to be a \textbf{private neighbour} of a dominant element $v$ when $v$ is the only neighbour of $w$ in the set $S$.
\end{deff}

\begin{fact}
\label{fact.dominating.sets}
Let $S$ be a set of vertices of a graph $G$. 
Then for all vertices $v$ and $w$ such that $w$ is not a neighbour of $v$, $w$ is dominated by $S$ if and only if it is dominated by $S \setminus \{v\}$.
\end{fact}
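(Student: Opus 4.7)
The plan is to unfold the definition of domination and exploit the hypothesis that $v \notin N(w)$. Recall from \Cref{def-dominating-set} that a vertex $w \notin X$ is dominated by a set $X$ exactly when $N(w) \cap X \neq \varnothing$ (and if $w \in X$, it is dominant, so the claim holds trivially after checking that $w \in S \iff w \in S \setminus \{v\}$, which follows from $w \neq v$ since $w$ is not adjacent to $v$ in a simple graph context, or more cleanly by treating the cases $w=v$ and $w \neq v$ separately).

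First I would observe that the key algebraic identity is
\[
N(w) \cap S \;=\; N(w) \cap (S \setminus \{v\}),
\]
which holds precisely because the hypothesis $v \notin N(w)$ forces $v$ not to contribute to the intersection: removing $v$ from $S$ cannot remove any element of $N(w)$ from the intersection. Then I would split into two short cases. In the case $w \neq v$, membership of $w$ in $S$ and in $S \setminus \{v\}$ coincide, so $w$ is dominated by $S$ (either because $w \in S$ or because $N(w) \cap S \neq \varnothing$) if and only if it is dominated by $S \setminus \{v\}$, by the identity above. In the case $w = v$, the hypothesis "$w$ is not a neighbour of $v$" is vacuous in the usual sense, but then $w \notin S \setminus \{v\}$ and any neighbour of $w=v$ in $S$ is also in $S\setminus\{v\}$, so again both domination statements are equivalent.

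I do not expect any real obstacle: the statement is essentially a reformulation of the fact that domination of $w$ depends only on $N(w)$ (together with $w$ itself), so altering $S$ outside $N(w) \cup \{w\}$ cannot change whether $w$ is dominated. The only minor subtlety is handling the corner case $w=v$, which I would dispatch by a single sentence.
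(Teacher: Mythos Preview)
Your argument is correct: the identity $N(w)\cap S = N(w)\cap(S\setminus\{v\})$, valid because $v\notin N(w)$, is exactly what is needed, and this already covers the case $w=v$ as well (in a simple graph $v\notin N(v)$, so the hypothesis is automatically satisfied and the same identity applies). The paper itself gives no proof of this fact---it is stated as self-evident---so your write-up is more detailed than necessary; in particular the case split on $w=v$ versus $w\neq v$ and the digression on $w\in S$ versus $w\notin S$ can be dropped, since ``dominated'' here simply means $N(w)\cap S\neq\varnothing$ regardless of whether $w$ itself lies in $S$ (see \Cref{definition.local.domination.notions} rather than \Cref{def-dominating-set}).
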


\begin{deff}
Let $S$ be a set of vertices of a graph $G$. 
We say that a dominant element is \textbf{isolated} in $S$ when it has no neighbours 
in $S$.
\end{deff}
In the previous three definitions, $S$ is a dominating set for any of the variant of domination we study.

\begin{prop}
Let $S$ be a dominating set of a graph $G$. $S$ is minimal dominating if and only if every element of $S$ is isolated in $S$ or has a private neighbour not in $S$.
\end{prop}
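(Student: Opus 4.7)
The plan is to prove both directions of the equivalence by a direct case analysis, using the very definition of minimality: $S$ is minimal dominating if and only if for every $v \in S$, the set $S \setminus \{v\}$ fails to dominate some vertex $w$ (which then necessarily satisfies $w \notin S\setminus\{v\}$ and has no neighbour in $S \setminus \{v\}$).

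For the forward direction, I would fix $v \in S$ and pick a witness $w$ not dominated by $S \setminus \{v\}$. I would then split according to whether $w = v$ or $w \neq v$. If $w = v$, the fact that $v$ has no neighbour in $S \setminus \{v\}$ (combined with $v \notin S \setminus \{v\}$) means $v$ has no neighbour in $S$ at all, so $v$ is isolated in $S$. If $w \neq v$, then $w \notin S$ (since $w$ is absent from $S \setminus \{v\}$), and $w$ must have at least one neighbour in $S$ because $S$ itself is dominating; this neighbour can only be $v$, so $w$ is a private neighbour of $v$ not in $S$.

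For the backward direction, I again fix $v \in S$ and split on the two allowed cases. If $v$ is isolated in $S$, then $v$ itself is not dominated by $S \setminus \{v\}$: it is not in the set and has no neighbour in $S$ (hence none in $S \setminus \{v\}$). If $v$ instead admits a private neighbour $w \notin S$, then $w \notin S \setminus \{v\}$, and by definition of "private" the only neighbour of $w$ in $S$ is $v$, so $w$ loses its unique dominator when we remove $v$. In either case, $S \setminus \{v\}$ fails to dominate, and since this holds for every $v \in S$, we conclude that $S$ is minimal.

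There is no real obstacle here; the only mild subtlety is to remember, in the forward direction, that the undominated witness $w$ could coincide with $v$ itself (which is what produces the "isolated" alternative), and in the backward direction to use the hypothesis $w \notin S$ to ensure $w \notin S \setminus \{v\}$. Fact~\ref{fact.dominating.sets} is not strictly needed but could be invoked to streamline the remark that, for $w \neq v$ not adjacent to $v$, domination by $S$ and by $S \setminus \{v\}$ coincide, thereby justifying that the failure witness $w$ either equals $v$ or is adjacent to $v$.
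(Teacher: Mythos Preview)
Your proof is correct and follows essentially the same approach as the paper: both directions proceed by fixing $v\in S$, locating a vertex not dominated by $S\setminus\{v\}$, and splitting into the cases $w=v$ (yielding the ``isolated'' alternative) versus $w$ a neighbour of $v$ (yielding the ``private neighbour'' alternative). The paper invokes Fact~\ref{fact.dominating.sets} explicitly in the forward direction where you mention it as optional, but the substance is identical.
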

\newpage

\begin{proof} \leavevmode
    \begin{itemize}
	\item $(\Rightarrow)$: Let us assume that $S$ is minimal dominating. Every vertex not in $S$ has a neighbour in $S$ because $S$ is a dominating set. Now let us fix $v \in S$. 	From \Cref{fact.dominating.sets} and by definition of a minimal dominating set, any $w$ which is not in the neighbourhood of $v$ is dominated by $S \setminus \{v\}$. Since $S \setminus \{v\}$ is not dominating, it means that: 
	\begin{enumerate}
		\item $v$ is not dominated by $S \setminus \{v\}$, which means that $v$ is isolated in $S$;
		\item or there exists some $u \notin S$ connected to $v$ which is not dominated by $S \setminus \{v\}$, hence $u$ is a private neighbour of $v$ which is not in $S$.
		
	\end{enumerate}
	\item $(\Leftarrow)$: Conversely, let us fix some dominating set $S$ such that every $v \in S$ has a private neighbour not in $S$, or is isolated. Fix some $v \in S$. If it has a private neighbour $u$, then $u$ is not dominated by $S \setminus \{v\}$, and thus $S \setminus \{v\}$ is not dominating. If it has no private neighbours, then it is isolated. This means that $v$ is not dominated by $S \setminus \{v\}$, therefore the set is not dominating. In both cases, we conclude that $S$ is minimal dominating.
	\end{itemize}
\end{proof}

With a similar proof, we obtain the following:

\begin{prop}
A total dominating set $S$ of a graph $G$ is minimal total dominating if and only if any $v \in S$ has a private neighbour.
\end{prop}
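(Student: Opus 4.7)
The plan is to mimic the proof of the previous proposition, adapting it to the total-domination setting. The key simplification is that, for total domination, every vertex (including elements of $S$) must have a neighbour in $S$; so there is no analogue of the ``isolated vertex'' disjunct from the previous proposition, and the characterisation collapses to the single condition ``every $v \in S$ has a private neighbour'' (which, note, need not lie outside $S$, unlike in the minimal dominating case).

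For the forward direction, I would assume $S$ is minimal total dominating and fix an arbitrary $v \in S$. By minimality, $S \setminus \{v\}$ fails to be total dominating, so there exists some vertex $w$ of $G$ with no neighbour in $S \setminus \{v\}$. By \Cref{fact.dominating.sets} (which applies verbatim since its statement concerns domination by neighbours and is independent of whether $w \in S$), $w$ must be a neighbour of $v$, and the only neighbour of $w$ in $S$ is $v$ itself. This is exactly the definition of $w$ being a private neighbour of $v$.

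For the converse, I would assume $S$ is a total dominating set in which every $v \in S$ has a private neighbour $w_v$. Fix $v \in S$: by definition of a private neighbour, $v$ is the only neighbour of $w_v$ in $S$, so in $S \setminus \{v\}$ the vertex $w_v$ has no neighbour, and therefore $S \setminus \{v\}$ is not total dominating. Since this holds for every $v \in S$, the set $S$ is inclusionwise minimal among total dominating sets.

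I do not expect any real obstacle here; the argument is a direct transcription of the previous proposition's proof, with the simplification that the ``isolated'' alternative disappears because a vertex $v \in S$ with no neighbour in $S$ cannot belong to a total dominating set to begin with. The only subtlety worth flagging explicitly is that private neighbours of $v$ may themselves lie in $S$ (this is what forces the case distinction from the previous proof to collapse), and the argument works uniformly in both cases.
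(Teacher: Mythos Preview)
Your proposal is correct and is exactly the adaptation the paper has in mind: the paper does not spell out a proof but writes ``With a similar proof, we obtain the following,'' deferring to the argument of the preceding proposition. Your observation that the ``isolated'' disjunct disappears (since no vertex of a total dominating set can be isolated) and that private neighbours may now lie inside $S$ is precisely the point of the simplification.
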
 

\subsection{Subshifts of Finite Type (SFTs)}
We now introduce the notion of SFTs: they intuitively correspond to sets of possible colourings of the infinite grid $\ZZ^d$ which avoid some fixed finite set of forbidden patterns. We give the definitions in the general context of arbitrary dimension $d$, but in this chapter we will mostly work in dimension two, and a bit in dimension one in \Cref{section-nearest}.

\begin{deff}
Let $\mathcal{A}$ be a finite set, and $d \ge 1$ an integer. A \textbf{pattern} $p$ on alphabet $\mathcal{A}$ is an element of $\mathcal{A}^{\mathbb{U}}$ for some finite $\mathbb{U} \subset \mathbb{Z}^d$. The set $\mathbb{U}$ is called the \textbf{support} of $p$, and is denoted $\text{supp}(p)$.
\end{deff}

\begin{deff}
Given an alphabet $\mathcal{A}$, any colouring of $\ZZ^d$ with values in $\mathcal{A}$, that is any element of $\mathcal{A}^{\ZZd}$, is called a \textbf{configuration}.
\end{deff}

\begin{deff}
Let $C_1$ and $C_2$ be two configurations. Let $S_n$ be the square of size $2n+1$ centred in $(0,0)$. Let $n$ be the maximum integer such that the pattern on support $S_n$ of $C_1$ and the one on same support of $C_2$ coincide, or $+\infty$ if $C_1 = C_2$.

We define the distance on the set of configurations to be $d(C_1, C_2) = 2^{-n}$ if $C_1 \neq C_2$ or 0 otherwise.

\label{def-distance-colouring}
\end{deff}

\begin{notation}
For a configuration $x = (x_\textbf{u})_{\textbf{u} \in \mathbb{Z}^d}$ of $\mathcal{A}^{\mathbb{Z}^d}$ (resp. for a pattern $p \in \mathcal{A}^{\mathbb{U}}$ for some $\mathbb{U} \subset \mathbb{Z}^d$), we denote by $x_{|\mathbb{V}}$ the restriction of $x$ to some subset $\mathbb{V} \subset \mathbb{Z}^d$ (resp. the restriction of $p$ to $\mathbb{V} \subset \mathbb{U}$).
\end{notation}

\begin{deff}
Let $\mathcal{A}$ be a finite set, and $d \ge 1$ integer. A \textbf{$d$-dimensional subshift} on alphabet $\mathcal{A}$ is a subset of $\mathcal{A}^{\mathbb{Z}^d}$ defined by a set of forbidden patterns. Formally, a subset 
$X$ of $\mathcal{A}^{\mathbb{Z}^d}$ is a subshift when there exist some finite sets $\mathbb{U} \subset \mathbb{Z}^d$ and $\mathcal{F} \subset \mathcal{A}^{\mathbb{U}}$ such that:
\[X= \left\{x \in \mathcal{A}^{\mathbb{Z}^d}: 
\forall\, \textbf{u} \in \mathbb{Z}^d, x_{|\textbf{u}+\mathbb{U}} \notin \mathcal{F}\right\}.\]
The elements of $\mathcal{F}$ are called the \textbf{forbidden patterns}.
\end{deff}

\begin{deff}
Let $X$ be a subshift defined by a set of forbidden patterns $\mathcal{F}$. When $\mathcal{F}$ is finite, then $X$ is called a \textbf{subshift of finite type} or SFT.
\end{deff}

\begin{deff}
For a subshift $X$, a \textbf{globally-admissible} pattern of size $\llbracket 1, n \rrbracket ^d$ is some pattern $p \in \mathcal{A}^{\llbracket 1,n\rrbracket ^d}$ which appears in a configuration of $X$, that is when $x_{|\llbracket 1,n\rrbracket ^d}=p$.
When $d = 2$, we extend the definition to patterns $p \in \mathcal{A}^{\llbracket 1,n\rrbracket \times \llbracket 1, m \rrbracket}$ when there exists a configuration $x$ of $X$ such that $x_{|\llbracket 1,n\rrbracket \times \llbracket 1, m \rrbracket}=p$.
\end{deff}

Although here we limit ourselves to SFTs, the world of subshifts contains many subshifts which are not of finite type. We will give a few examples in dimension one, on alphabet $\{a,b\}$. The set $X_{\leq 1}$ of words with \emph{at most} one 'a' is not of finite type: an infinite word may contain two 'a's at arbitrarily large distance from each other, which we cannot forbid with finite forbidden patterns. However, there exists a SFT on an alphabet with three symbols which, after remaping one to $a$ and the other two to $b$, gives the same language as $X_{\leq 1}$: we say that $X_{\leq 1}$ is \emph{sofic}. Now, by modifying slightly the condition, we define $X_{=1}$ to be the set of words with \emph{exactly} one 'a'. This set is not a subshift. It is due to the fact that the desired 'a' may be arbitrarily "far away". No sets of forbidden patterns  can enforce that an 'a' is present. We can also consider the equivalent definition of subshifts to see that $X_{=1}$ is not one:

\begin{deff}[Alternate definition of subshift]
A set $X$ of configurations of $\ZZ^d$ with values in $\mathcal{A}$ is a subshift when it is closed and stable by translation.
\end{deff}

We can see that the word containing only 'b's is in the closure of $X_{=1}$ but does not belong to $X_{=1}$, which shows that $X_{=1}$ is not closed, hence neither is it a subshift.

\subsection{The domination subshifts}
\label{subshifts-examples}
For our domination subshifts, the alphabet is 
$\mathcal{A}_0 =\left\{\begin{tikzpicture}[scale=0.3]
\draw (0,0) rectangle (1,1);
\end{tikzpicture},\begin{tikzpicture}[scale=0.3]
\fill[gray!90] (0,0) rectangle (1,1);
\draw (0,0) rectangle (1,1);
\end{tikzpicture}\right\}$, and $d=2$. We work in dimension two since we study grids.

\begin{deff}
The \textbf{domination} (resp. \textbf{minimal domination}, \textbf{total domination} and \textbf{minimal total domination}) subshift denoted by $X^\mathrm{D}$ (resp. $X^\mathrm{M}$, $X^\mathrm{T}$ and $X^\mathrm{MT}$), is the set of elements $x \in \mathcal{A}_0^{\mathbb{Z}^2}$ such that 
$\{\textbf{u} \in \mathbb{Z}^2: x_{\textbf{u}} = \begin{tikzpicture}[scale=0.3]
\fill[gray!90] (0,0) rectangle (1,1);
\draw (0,0) rectangle (1,1);
\end{tikzpicture}\}$
is a dominating (resp. minimal dominating, total dominating and minimal total dominating) set of the infinite square grid $\mathbb{Z}^2$.
In all these cases, a configuration $x$ of the subshift is called a \textbf{dominated  configuration}. We also say that $\textbf{u}$ is a \textbf{dominant position} of the configuration $x$ when $x_\textbf{u}$ is grey. Likewise, a \textbf{private neighbour} is still a position which is dominated by exactly one dominant position.
\end{deff}

The local characterisations for each type of dominating sets we gave earlier can straightforwardly be translated into finite sets of forbidden patterns. We then obtain the following result:

\begin{prop}
The sets $X^\mathrm{D}$, $X^\mathrm{M}$, $X^\mathrm{T}$ and $X^\mathrm{MT}$ are subshifts of finite type.
\end{prop}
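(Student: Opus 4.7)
The plan is to turn each local characterisation established in \Cref{local-charac-section} into an explicit finite set of forbidden patterns on a small support around each vertex, so that membership in the subshift becomes equivalent to avoiding those patterns at every translate. Let me denote by $\mathbb{B}_k = \{\mathbf{u} \in \mathbb{Z}^2 : \|\mathbf{u}\|_1 \le k\}$ the closed $\ell^1$-ball of radius $k$ centred at the origin, which is finite for every $k$; the support of the forbidden patterns will be $\mathbb{B}_1$ for the first two subshifts and $\mathbb{B}_2$ for the minimal ones.

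For $X^\mathrm{D}$, I would take the support $\mathbb{U} = \mathbb{B}_1$ (the plus-shape) and forbid the single pattern in which the centre and all four neighbours are white; avoiding this pattern at every translate is exactly the statement that every non-dominant vertex has a dominant neighbour. For $X^\mathrm{T}$, I keep the same support and forbid the two patterns on $\mathbb{B}_1$ in which all four neighbours of the centre are white (the centre being either grey or white), which encodes that \emph{every} vertex has a dominant neighbour. In both cases the forbidden family is finite.

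For $X^\mathrm{M}$, I would take $\mathbb{U} = \mathbb{B}_2$. Using the proposition that $S$ is minimal dominating iff it dominates and every dominant position is either isolated or admits a private neighbour not in $S$: the "dominating" part is enforced by the single plus-shape pattern described above (trivially extendable to $\mathbb{B}_2$); the "minimality" part at a dominant vertex $v$ depends only on $v$, its four neighbours $w$ (to detect isolation and to identify a candidate private neighbour), and the neighbours of each $w$ (to verify privacy), so it is entirely determined by $x_{|\mathbf{u}+\mathbb{B}_2}$. I would then list as forbidden every pattern $p \in \mathcal{A}_0^{\mathbb{B}_2}$ in which the centre is grey, some neighbour of the centre is grey (non-isolated), and every white neighbour $w$ of the centre has at least one grey neighbour other than the centre. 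Finiteness of $\mathcal{A}_0^{\mathbb{B}_2}$ yields finiteness of this list. For $X^\mathrm{MT}$ I would do the analogous thing, combining the two total-domination forbidden patterns on $\mathbb{B}_1$ with the patterns on $\mathbb{B}_2$ whose centre is grey and every white neighbour of the centre has a second grey neighbour (no "isolated" escape clause this time).

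The verification step is routine: avoidance of the listed patterns at every translate $\mathbf{u} \in \mathbb{Z}^2$ is, by construction, the pointwise local characterisation, which in turn is equivalent by the propositions of \Cref{local-charac-section} to $\{\mathbf{u} : x_\mathbf{u} \text{ grey}\}$ being the corresponding kind of dominating set of $\mathbb{Z}^2$. The only mildly delicate part is \emph{checking} that the minimal-domination property is genuinely captured by the radius-$2$ window — i.e.\ that no non-local obstruction has been overlooked — but this follows immediately from \Cref{fact.dominating.sets}, since whether removing $v$ breaks domination depends only on $v$ and the second neighbourhood of $v$. Since each of the four sets of forbidden patterns is finite, the four subshifts are SFTs.
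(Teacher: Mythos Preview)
Your approach is precisely the one the paper intends: the authors do not give a proof beyond the sentence ``the local characterisations \ldots\ can straightforwardly be translated into finite sets of forbidden patterns'', and you have carried out that translation explicitly. The cases $X^\mathrm{D}$, $X^\mathrm{T}$ and $X^\mathrm{M}$ are correct.

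There is, however, a genuine slip in your description for $X^\mathrm{MT}$. The proposition you invoke says a total dominating set is minimal iff every dominant $v$ has a private neighbour --- with \emph{no} requirement that this neighbour lie outside $S$ (contrast the $X^\mathrm{M}$ proposition, which does impose ``not in $S$''). Your forbidden list for $X^\mathrm{MT}$ only tests the \emph{white} neighbours of the grey centre, so it over-forbids: it rules out configurations where the unique private neighbour of $v$ is itself grey. A concrete witness in $\mathbb{Z}^2$ is the brick pattern $S=\{(x,y): x\equiv 2y \text{ or } 2y{+}1 \pmod 4\}$, which is minimal total dominating; at $v=(0,0)$ every white neighbour has two further grey neighbours, yet $v$'s only private neighbour is the grey cell $(1,0)$. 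Your pattern family would forbid the $\mathbb{B}_2$-window around $(0,0)$ and hence exclude this legitimate element of $X^\mathrm{MT}$. The fix is immediate: in the $X^\mathrm{MT}$ minimality clause, forbid those $\mathbb{B}_2$-patterns with grey centre such that \emph{every} neighbour (grey or white) of the centre has a grey neighbour other than the centre.
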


As we just mentioned, the key point is that these domination problems have a local characterisation, that is a set of forbidden patterns of bounded sizes. All the problems which enjoy this property can also be associated to SFTs. For instance, the stable set problem for graphs consists in selecting vertices such that no two vertices are connected. In the world of SFTs, it is called the \textit{hard-square} problem, and is associated to the Fibonacci subshift of dimension two. This SFT has been studied a lot, as well as its growth rate. We have good approximations of the growth rate (see for instance~\cite{stable-set-sft}), which corresponds to the \textit{entropy} of the subshift. However we do not know its exact value, nor do we have closed formulas accounting for the number of stable sets of the finite grid $G_{n,m}$. These problems of counting turn out to be very hard to solve exactly, even if we only want to find the exact entropy. This chapter is a first step in studying the number of dominating sets for a few variants of domination, by proving that they have some interesting properties and approximating their entropies.

\section{Comparing the growth of SFTs with the growth of dominating sets}
\label{section-SFT-dom-comparison}
\begin{notation}
For a subshift of finite type $X$, we denote by $N_n (X)$ the number of globally-admissible patterns of size $\llbracket 1,n\rrbracket ^d$. When $d=2$, we extend the notation and denote by resp. $N_{n,m}(X)$ the number of globally-admissible patterns of size $\llbracket 1,n \rrbracket \times \llbracket 1,m \rrbracket$.
\end{notation}

\begin{deff}
The \textbf{topological entropy} of a subshift of finite type is the number
\[h(X) = \inf_{n \rightarrow +\infty} \frac{\log_2 (N_n (x))}{n^d}.\]
    \label{def-entropy}
\end{deff}

We will simply refer to this notion as the entropy in the rest of the manuscript. The following two lemmas are well known (see for instance~\cite{Lind-Marcus}).

\begin{lemma}
\label{lemma-entropy-limit}
The infimum in the definition of $h(X)$ is in fact a limit: 
\[h(X) = \lim_n \frac{\log_2 (N_n (x))}{n^d}.\]
\end{lemma}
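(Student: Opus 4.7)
The plan is to prove this via a two-step comparison argument that is standard for subshifts. The key monotonicity ingredient is that patterns at larger scales restrict to patterns at smaller scales: for any positive integers $n, k$, the cube $\llbracket 1, kn\rrbracket^d$ decomposes as a disjoint union of $k^d$ translates of $\llbracket 1,n\rrbracket^d$, and an admissible pattern on the big cube restricts to an admissible pattern on each piece. Hence
\[
N_{kn}(X) \leq N_n(X)^{k^d}, \qquad \text{so} \qquad \frac{\log_2 N_{kn}(X)}{(kn)^d} \leq \frac{\log_2 N_n(X)}{n^d}.
\]

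To pass from multiples of $n$ to all large integers, I would fix $m \geq 1$ and, given any $n \geq m$, perform Euclidean division $n = km + r$ with $0 \leq r < m$. Any globally-admissible pattern on $\llbracket 1,n\rrbracket^d$ restricts to a globally-admissible pattern on the sub-cube $\llbracket 1, km\rrbracket^d$, and it has at most $|\mathcal{A}|^{n^d - (km)^d}$ possible extensions to the remaining "shell" $\llbracket 1,n\rrbracket^d \setminus \llbracket 1,km\rrbracket^d$. Combined with the previous inequality, this yields
\[
\frac{\log_2 N_n(X)}{n^d} \;\leq\; \left(\frac{km}{n}\right)^{\!d}\frac{\log_2 N_m(X)}{m^d} + \left(1 - \left(\frac{km}{n}\right)^{\!d}\right)\log_2|\mathcal{A}|.
\]

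Letting $n \to \infty$ with $m$ fixed, one has $km/n \to 1$ since $r < m$, so the second term vanishes and the first coefficient tends to $1$. This gives $\limsup_n \log_2 N_n(X)/n^d \leq \log_2 N_m(X)/m^d$ for every $m$, hence after taking the infimum over $m$,
\[
\limsup_{n \to \infty} \frac{\log_2 N_n(X)}{n^d} \;\leq\; \inf_{m \geq 1} \frac{\log_2 N_m(X)}{m^d} \;=\; h(X).
\]
Since $\liminf_n \log_2 N_n(X)/n^d \geq h(X)$ is immediate from the definition of the infimum, the limit exists and equals $h(X)$.

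The main (minor) obstacle is controlling the "shell" contribution rigorously: one must check that $(km/n)^d \to 1$ uniformly enough to absorb the $\log_2|\mathcal{A}|$ factor, which follows from $r < m$ being bounded while $n \to \infty$. Everything else is elementary once the restriction inequality $N_{kn}(X) \leq N_n(X)^{k^d}$ is in hand.
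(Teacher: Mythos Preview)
Your argument is correct and is precisely the standard multidimensional Fekete-type argument for subshifts. The paper itself does not give a proof of this lemma but simply cites it as well known (referring to Lind--Marcus), and what you have written is essentially the argument one finds there.
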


\begin{notation}
Let us denote by $\sigma$ the $\mathbb{Z}^d$-\textbf{shift} action on $\mathcal{A}^{\mathbb{Z}^d}$ defined such that for all $\textbf{u},\textbf{v} \in \mathbb{Z}^d$, 
\[(\sigma^{\textbf{u}} x)_{\textbf{v}} = x_{\textbf{v}+\textbf{u}}\texttt{.}\]
Informally, $\sigma$ acts on a configuration by translating it by the vector $\textbf{u}$.
\end{notation}

\begin{deff}
A \textbf{conjugation} between two $d$-dimensional subshifts of finite type $X$ and $Z$ is an invertible 
map $\varphi: X \rightarrow Z$ such that,
for all $\textbf{u} \in \mathbb{Z}^d$ and $x \in X$,
$\varphi(\sigma^{\textbf{u}}.x) = \sigma^{\textbf{u}}.\varphi (x)$.
In this case, $X$ and $Z$ are said to be \textbf{conjugated}.
\end{deff}

\begin{lemma}
If two subshifts of finite type $X$ and $Z$ 
are conjugated, then $h(X)=h(Z)$.
\label{lemma-conjugation}
\end{lemma}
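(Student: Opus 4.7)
The plan is to use the Curtis--Hedlund--Lyndon theorem, which states that any shift-commuting continuous map between subshifts is a \emph{sliding block code}: there exist a finite window radius $r \geq 0$ and a local rule $\Phi : \mathcal{A}^{\llbracket -r,r \rrbracket^d} \to \mathcal{A}$ such that $(\varphi(x))_{\mathbf{u}} = \Phi(x_{|\mathbf{u}+\llbracket -r,r\rrbracket^d})$ for every $x \in X$ and every $\mathbf{u} \in \mathbb{Z}^d$. First I would invoke this theorem to get a radius $r_1$ for $\varphi$ and a radius $r_2$ for $\varphi^{-1}$, and set $r = \max(r_1, r_2)$. Continuity of $\varphi$ and $\varphi^{-1}$ comes from the shift-invariance together with the fact that conjugations between SFTs (with the topology induced by the distance of Definition~\ref{def-distance-colouring}) are homeomorphisms; both subshifts are compact, so a continuous shift-commuting bijection automatically has a continuous inverse.

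The key step is then a counting comparison. Any globally-admissible pattern $q \in \mathcal{A}^{\llbracket 1,n\rrbracket^d}$ of $Z$ is, by definition, the restriction $z_{|\llbracket 1,n\rrbracket^d}$ of some $z \in Z$. Writing $z = \varphi(x)$, the local rule for $\varphi$ shows that $q$ is entirely determined by $x_{|\llbracket 1-r,n+r\rrbracket^d}$, which is a globally-admissible pattern of $X$ of side length $n+2r$. Thus the assignment $x_{|\llbracket 1-r,n+r\rrbracket^d} \mapsto (\varphi(x))_{|\llbracket 1,n\rrbracket^d}$ is a well-defined surjection from (a subset of) globally-admissible patterns of $X$ of side $n+2r$ onto the globally-admissible patterns of $Z$ of side $n$, giving
\[ N_n(Z) \leq N_{n+2r}(X). \]
Symmetrically, applying the same argument to $\varphi^{-1}$ yields $N_n(X) \leq N_{n+2r}(Z)$.

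Finally I would pass to the limits guaranteed by Lemma~\ref{lemma-entropy-limit}. Taking logarithms, dividing by $n^d$, and using that $(n+2r)^d / n^d \to 1$ when $n \to +\infty$:
\[ h(Z) = \lim_{n \to +\infty} \frac{\log_2 N_n(Z)}{n^d} \leq \lim_{n \to +\infty} \frac{\log_2 N_{n+2r}(X)}{(n+2r)^d} \cdot \frac{(n+2r)^d}{n^d} = h(X). \]
The symmetric inequality $h(X) \leq h(Z)$ follows in the same way, concluding that $h(X) = h(Z)$. The main (but very mild) obstacle is to justify cleanly the appeal to Curtis--Hedlund--Lyndon in the multidimensional setting; once this is in hand, the rest is a direct comparison of pattern counts.
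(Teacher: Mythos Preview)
Your argument is correct and is precisely the standard proof via the Curtis--Hedlund--Lyndon theorem and sliding block codes. The paper does not actually prove this lemma: it states it as well known and refers the reader to~\cite{Lind-Marcus}, whose proof is essentially the one you wrote.
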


The entropy is one parameter which is preserved by conjugation. It may be a way to show that two subshifts are not conjugated, by showing that their entropies are different. We will use this lemma and the concept of conjugation only in \Cref{section-nearest}.

\begin{lemma}
Let $X$ be a bidimensional subshift of finite type. Then:
\[h(X) = \lim_{n,m} \frac{\log_2 (N_{n,m} (X))}{nm}.\]
\end{lemma}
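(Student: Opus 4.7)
The plan is to establish the result by proving that $f(n,m) := \log_2 N_{n,m}(X)$ is subadditive in each variable separately, then applying a two-dimensional Fekete-type argument, and finally comparing the infimum over rectangles with the infimum over squares.

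First I would verify the subadditivity. If $p$ is a globally-admissible pattern on $\llbracket 1,n_1+n_2\rrbracket\times\llbracket 1,m\rrbracket$, witnessed by a configuration $x\in X$, then its two halves $x_{|\llbracket 1,n_1\rrbracket\times\llbracket 1,m\rrbracket}$ and $\sigma^{(n_1,0)}(x)_{|\llbracket 1,n_2\rrbracket\times\llbracket 1,m\rrbracket}$ are both globally admissible, and $p$ is uniquely recovered from this pair. This yields $N_{n_1+n_2,m}(X) \le N_{n_1,m}(X)\cdot N_{n_2,m}(X)$, hence $f(n_1+n_2,m)\le f(n_1,m)+f(n_2,m)$; the same argument handles the other variable. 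In addition, since any globally-admissible pattern on a sub-rectangle extends (by being a restriction of an entire configuration) to a globally-admissible pattern on a larger rectangle, the map ``restrict'' is surjective, so $f$ is non-decreasing in each variable.

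Next, given arbitrary $(n,m)$ and any $(N,M)$, I write $N=q_1 n+r_1$ and $M=q_2 m+r_2$ with $0\le r_1<n$, $0\le r_2<m$. Iterating subadditivity in both directions and using monotonicity to bound the ``boundary'' contributions by $f(n,m)$, I obtain
\[ f(N,M) \le (q_1+1)(q_2+1)\,f(n,m). \]
Dividing by $NM$ and letting $N,M\to\infty$ gives $(q_1+1)(q_2+1)nm/(NM)\to 1$, so
\[ \limsup_{N,M\to\infty}\frac{f(N,M)}{NM}\le \frac{f(n,m)}{nm}. \]
Taking the infimum over $(n,m)$ on the right and noting the trivial opposite inequality $\liminf_{N,M}f(N,M)/(NM)\ge\inf_{n,m}f(n,m)/(nm)$ shows that the two-variable limit exists and equals $\inf_{n,m}f(n,m)/(nm)$.

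Finally, I would identify this quantity with $h(X)$. The inclusion of the diagonal gives $\inf_{n,m}f(n,m)/(nm)\le \inf_n f(n,n)/n^2 = h(X)$. Conversely, specialising the displayed inequality to $N=M$ yields, for every fixed $(n,m)$,
\[ h(X)=\lim_{N\to\infty}\frac{f(N,N)}{N^2}\le\frac{f(n,m)}{nm}, \]
so $h(X)\le\inf_{n,m}f(n,m)/(nm)$, and the two agree. I expect the main technical nuisance to be the careful bookkeeping of the remainders $r_1,r_2$ in the Fekete step, but once monotonicity of $f$ in each coordinate is secured (which follows from the ``globally admissible = restriction of a configuration'' characterisation), these boundary terms are absorbed harmlessly into the $(q_i+1)$ factors.
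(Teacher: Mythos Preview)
Your proof is correct and is essentially a full, careful expansion of the paper's one-sentence sketch (``decomposing a square into rectangles and doing the other way around''), which is just the two-variable subadditivity/Fekete argument you carry out. The paper does not give more detail than that, so your approach is the same as the intended one, only made rigorous.
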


This limit is to be understood as letting both $n$ and $m$ tend towards infinity at the same time (but possibly at different speeds). We can see that this limit exists and corresponds to the entropy by decomposing a square into rectangles and doing the other way around. This allows us to bound by below and by above this limit by numbers which tend towards the entropy $h(X)$.

We now need to show, for each variant of the domination, that the SFT and its associated dominating set problem grow at the same speed, that is the number of $n \times m$ patterns appearing in the SFT grows at the same pace as the number of dominating sets of the $n \times m$ grid. This is not trivial, since the dominating sets of a finite grid $G_{n,m}$ do not correspond exactly to the globally-admissible patterns on the same grid of the corresponding SFT type presented in \Cref{subshifts-examples}. 
Indeed, in such a pattern, the positions of the border may for instance be dominated by a position outside the pattern in a configuration 
in which the pattern appears. Nonetheless, we will see that we can compare the number of globally-admissible patterns of size $n \times m$ for $X^\mathrm{D}$ (resp. $X^\mathrm{M}$, $X^\mathrm{T}$ and $X^\mathrm{MT}$) to the number of dominating sets (resp. minimal dominating sets, total dominating set and minimal total dominating sets) of $G_{n,m}$. We will later use this to prove the existence of an asymptotic growth rate for the grid, turning out to be equal to the entropy of the corresponding SFT.

For concision, we will assimilate the set of vertices of $G_{n,m}$ to any translate of \linebreak $\llbracket 1,n \rrbracket \times \llbracket 1,m \rrbracket$. We will assimilate any dominating (for any domination problem) set $S$ of vertices of a finite grid $G_{n,m}$ with the pattern $p$ of $\mathcal{A}^{\mathbb{Z}^2}$ on $\llbracket 1,n \rrbracket \times \llbracket 1,m \rrbracket$ defined by $p_{\textbf{u}}$ being grey if and only  if $\textbf{u} \in S$.

\begin{deff}
If $\mathbb{U}$ is a subset of $\mathbb{Z}^2$, 
we define the (extended) \textbf{neighbourhood} of $\mathbb{U}$ as
\[\mathcal{N} (\mathbb{U}) = \bigcup_{\textbf{u} \in \mathbb{U}} \left( \textbf{u} + \llbracket -1,1\rrbracket ^2\right).\]
Using iterates of the function $\mathcal{N}$ we also define, for all $n,m \ge 1$ and $k \geq 1$, the \textbf{border}
\[\mathbb{B}_{n,m,k} = \mathcal{N}^{k} (\llbracket 1,n \rrbracket \times \llbracket 1,m \rrbracket) \setminus \mathcal{N}^{k-1} (\llbracket 1,n \rrbracket \times \llbracket 1,m \rrbracket).\] 

For convenience, we extend the notation to $\mathbb{B}_{n,m,0}
= \llbracket 1,n \rrbracket \times \llbracket 1,m \rrbracket \setminus \llbracket 2,n-1 \rrbracket \times \llbracket 2,m-1 \rrbracket$.

\[\begin{tikzpicture}[scale=0.4]

\fill[gray!60] (-3,-3) rectangle (12,7);
\fill[pattern=north west lines] (-2,-2) rectangle (11,6);
\draw (-2,-2) rectangle (11,6);
\draw (-3,-3) rectangle (12,7);
\node at (16.5,1.8) {$\mathbb{B}_{n,m,k+1}$};
\draw[-latex] (14,2) -- (11.5,2);

\draw[fill=gray!40] (0,0) rectangle (9,4);
\draw[fill=black] (0,0) circle (4pt) ;

\draw[-latex] (-4.5,0) -- (-0.4,0);
\node[scale=0.7] at (-6.2,0) {$(1,1)$};

\draw[-latex] (8,-4.5) -- (4.5,-4.5) -- (4.5,-1);
\node at (13,-4.5) {$\mathcal{N}^k (\llbracket 1,n \rrbracket \times 
\llbracket 1,m \rrbracket)$};

\node[scale=0.9] at (4.5,2) {$\llbracket 1,n \rrbracket \times \llbracket 1,m \rrbracket$};

\end{tikzpicture}\]
\label{def-border}
\end{deff}

\begin{lemma}
For all $n,m \ge 3$, the following inequalities hold: \[N_{n-2,m-2} (X^\mathrm{D}) \le D_{n,m} \le N_{n,m} (X^\mathrm{D}).\]
\end{lemma}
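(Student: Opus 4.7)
The plan is to prove each inequality by exhibiting an explicit injection: in both directions the key trick is that colouring an entire region grey (dominant) makes every position there automatically belong to the dominating set, and hence makes all surrounding positions dominated ``for free''. This sidesteps the usual difficulty that border positions of a finite pattern may rely on dominant cells outside the pattern.

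For the upper bound $D_{n,m}\le N_{n,m}(X^\mathrm{D})$, I would take a dominating set $S$ of $G_{n,m}$ (identified with a pattern $p$ on $\llbracket 1,n\rrbracket\times\llbracket 1,m\rrbracket$) and extend it to a configuration $x\in\mathcal{A}_0^{\mathbb{Z}^2}$ by setting $x_{\mathbf{u}}=p_{\mathbf{u}}$ inside the window and $x_{\mathbf{u}}=\begin{tikzpicture}[scale=0.25]\fill[gray!90](0,0)rectangle(1,1);\draw(0,0)rectangle(1,1);\end{tikzpicture}$ everywhere else. I claim $x\in X^\mathrm{D}$: every outside position is dominant hence trivially in the dominating set, and every inside position was already dominated by $S$, using only neighbours lying in $\llbracket 1,n\rrbracket\times\llbracket 1,m\rrbracket$. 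Thus $p$ is globally-admissible. The map $S\mapsto p$ is obviously injective (restricting back to the window returns $p$), yielding the inequality.

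For the lower bound $N_{n-2,m-2}(X^\mathrm{D})\le D_{n,m}$, I would go the other way: given a globally-admissible pattern $q$ on $\llbracket 1,n-2\rrbracket\times\llbracket 1,m-2\rrbracket$, place it inside the inner window $\llbracket 2,n-1\rrbracket\times\llbracket 2,m-1\rrbracket$ of $G_{n,m}$ and fill the whole border $\mathbb{B}_{n,m,0}$ with grey, obtaining a pattern $p'$ on $\llbracket 1,n\rrbracket\times\llbracket 1,m\rrbracket$. To verify that the corresponding set $S'$ dominates $G_{n,m}$, I would distinguish two cases. Border positions of $G_{n,m}$ are grey by construction, hence dominant. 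For a position $\mathbf{u}$ in the interior with $p'_{\mathbf{u}}$ white, note that all its neighbours in $\mathbb{Z}^2$ lie in $\llbracket 1,n\rrbracket\times\llbracket 1,m\rrbracket$; in any configuration $x\in X^\mathrm{D}$ realising $q$, the position $\mathbf{u}$ had at least one grey neighbour, which either lies in the inner window (where $p'$ still agrees with $q$) or in the border (where $p'$ is grey everywhere). In both cases $\mathbf{u}$ remains dominated in $p'$, so $S'$ is dominating.

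Finally, the map $q\mapsto p'$ is injective because the border of $p'$ is the fixed all-grey pattern, so $q$ is recovered as the restriction of $p'$ to the inner $(n-2)\times(m-2)$ window. Combining both injections gives the stated double inequality. No step here is technically hard; the only thing I would be careful about is making sure that when $\mathbf{u}$ lies on the boundary of the inner window the particular dominating neighbour of $\mathbf{u}$ in $x$ that ends up in $\mathbb{B}_{n,m,0}$ is still grey in $p'$, which is guaranteed by the choice to colour the whole border grey.
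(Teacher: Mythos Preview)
Your proof is correct and follows essentially the same approach as the paper: extend a dominating set to a configuration of $X^\mathrm{D}$ by colouring everything outside grey (upper bound), and surround a globally-admissible $(n-2)\times(m-2)$ pattern with a grey frame to obtain a dominating set of $G_{n,m}$ (lower bound). The paper is terser and omits the explicit check that interior positions remain dominated, but the argument is identical.
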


\begin{proof} \leavevmode
\begin{enumerate}
\item For all $n,m \ge 1$, any dominating set of $G_{n,m}$ 
can be extended into a configuration of $X^\mathrm{D}$ 
by defining the symbol of 
any position outside $\llbracket 1,n \rrbracket \times \llbracket 1,m \rrbracket$ to be grey. As a consequence, 
any dominating set of $G_{n,m}$ is globally 
admissible in $X^\mathrm{D}$ and thus $D_{n,m} \le N_{n,m} (X^\mathrm{D})$.
\item Any pattern of $X^\mathrm{D}$ on $\llbracket 1,n\rrbracket \times \llbracket 1,m \rrbracket$ can be turned into 
a dominating set of\linebreak $\llbracket 0,n+1\rrbracket
\times \llbracket 0,m+1 \rrbracket$
by extending it with grey symbols. Hence we obtain
the inequality $N_{n,m} (X^\mathrm{D}) \le D_{n+2,m+2}$ 
for all $n,m \ge 1$. 
\end{enumerate}
\end{proof}

Using very similar arguments, we obtain the same inequality for the total domination.
\begin{lemma}
For all $n,m \ge 2$, the following inequalities hold: \[N_{n-2,m-2} (X^\mathrm{T}) \le T_{n,m} \le N_{n,m} (X^\mathrm{T}).\]
\end{lemma}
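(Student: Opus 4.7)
The plan is to imitate the proof of the previous lemma for $X^\mathrm{D}$, with extra care devoted to verifying the total-domination condition, namely that every vertex (including those in the dominating set) must have a grey neighbour. Both inequalities arise from padding by grey cells: one direction pads a total dominating set of $G_{n,m}$ into a configuration of $X^\mathrm{T}$, while the other direction pads an admissible pattern of $X^\mathrm{T}$ into a total dominating set of a slightly larger grid.

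For the upper bound $T_{n,m} \le N_{n,m}(X^\mathrm{T})$, I would take any total dominating set $S$ of $G_{n,m}$, identify it with the pattern on $\llbracket 1,n\rrbracket \times \llbracket 1,m\rrbracket$ that is grey exactly on $S$, and extend it to a configuration of $\mathcal{A}_0^{\mathbb{Z}^2}$ by setting every position outside $\llbracket 1,n\rrbracket \times \llbracket 1,m\rrbracket$ to grey. To check that this configuration lies in $X^\mathrm{T}$, I would verify that every vertex of $\mathbb{Z}^2$ has a grey neighbour: positions in the interior of the finite grid are total dominated by $S$; positions on the border of the finite grid have at least one neighbour outside it, which is grey by construction; and positions outside the finite grid always have at least one neighbour that is itself outside, hence grey. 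The pattern on $\llbracket 1,n\rrbracket \times \llbracket 1,m\rrbracket$ is therefore globally admissible in $X^\mathrm{T}$, and since the map $S \mapsto p$ is injective we obtain the claimed inequality.

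For the lower bound, I would equivalently establish $N_{n,m}(X^\mathrm{T}) \le T_{n+2,m+2}$. Starting from a globally-admissible pattern $p$ of $X^\mathrm{T}$ on $\llbracket 1,n\rrbracket \times \llbracket 1,m\rrbracket$, I would extend it to a pattern on $\llbracket 0,n+1\rrbracket \times \llbracket 0,m+1\rrbracket$ by filling the added outer frame with grey, and take the corresponding set of grey vertices as the candidate total dominating set of $G_{n+2,m+2}$. Checking the total-dominating condition: each cell in the copy of $p$ has a grey neighbour, either one already present in the ambient configuration of $X^\mathrm{T}$ witnessing its global admissibility, or (for cells on the boundary of $p$ whose witnessing grey neighbour lay outside $\llbracket 1,n\rrbracket \times \llbracket 1,m\rrbracket$) a neighbour now in the added grey frame; each cell on the added frame is itself grey, and has a neighbour on the frame, also grey, since for $n, m \ge 2$ the frame is a connected loop of width one around the interior. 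Injectivity of this map is immediate, as the frame is always the same.

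The main subtlety compared with the $X^\mathrm{D}$ case is that dominant cells must now themselves be dominated. For cells inherited from $S$ or from $p$, this is handled by the total-dominating property already available in $G_{n,m}$ or in the ambient configuration of $X^\mathrm{T}$, possibly combined with the fact that the grey padding can only add grey neighbours. For cells on the added frame, it reduces to the elementary observation that each frame cell has at least one neighbour lying on the frame itself; this is where the hypothesis $n, m \ge 2$ is used to ensure the frame is a non-degenerate loop.
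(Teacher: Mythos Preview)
Your proposal is correct and follows essentially the same approach as the paper, which simply remarks that the argument is ``very similar'' to the one for $X^\mathrm{D}$: pad outward with grey cells in both directions and observe that the resulting object satisfies the relevant rules. Your additional checks that grey cells on the added frame are themselves dominated (via a frame neighbour) are exactly the extra verification needed for the total-domination variant; one small notational slip is that in your shifted statement $N_{n,m}(X^\mathrm{T}) \le T_{n+2,m+2}$ the hypothesis you invoke at the end should read $n,m \ge 0$ rather than $n,m \ge 2$, but the argument in fact goes through for all these values.
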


We then address the minimal and minimal total domination. As we will see, the proofs of the following inequalities are more complex.

\begin{lemma}
\label{lemma.comparison.minimal}
For all $n,m \ge 1$, the following inequalities hold: 
\[\frac{1}{2^{6(n+m)}} N_{n,m} (X^\mathrm{M}) \le M_{n,m} \le N_{n,m} (X^\mathrm{M}).\]
\end{lemma}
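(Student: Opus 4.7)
The inequality splits into the two bounds $M_{n,m} \le N_{n,m}(X^\mathrm{M})$ and $N_{n,m}(X^\mathrm{M}) \le 2^{6(n+m)} M_{n,m}$, handled separately by two injection-type arguments.

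For the upper bound, I would show that every minimal dominating set $S$ of $G_{n,m}$ yields a globally admissible $X^\mathrm{M}$-pattern by extending $S$ to a configuration $x \in X^\mathrm{M}$ on $\mathbb{Z}^2$. The extension places $S$ inside $G_{n,m}$ and surrounds $G_{n,m}$ with a layered structure: a first layer of white cells directly adjacent to $G_{n,m}$, chosen so that every border grey cell of $S$ keeps the isolation or private neighbour inherited from $S$; a second layer containing just enough isolated grey cells to dominate any remaining undominated white cells of the first layer; and further out, a fixed periodic minimal dominating configuration of $\mathbb{Z}^2$ whose grey cells are all isolated, for instance the diagonal set $\{(i,j) : 2i+j \equiv 0 \pmod 5\}$. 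Since the map $S \mapsto x_{|\llbracket 1,n \rrbracket \times \llbracket 1,m \rrbracket}$ is the identity on $\llbracket 1,n \rrbracket \times \llbracket 1,m \rrbracket$, distinct $S$ give distinct admissible patterns and $M_{n,m} \le N_{n,m}(X^\mathrm{M})$.

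For the lower bound, I would construct a deterministic map $f$ from globally admissible $X^\mathrm{M}$-patterns of size $n \times m$ to minimal dominating sets of $G_{n,m}$ that modifies the input $p$ only inside a border region $\mathbb{B}$ of $G_{n,m}$ of thickness $3$. A direct count (distinguishing the case $n,m \ge 7$ from smaller ones) shows $|\mathbb{B}| \le 6(n+m)$ for every $n, m \ge 1$, so each minimal dominating set $S$ has at most $2^{|\mathbb{B}|} \le 2^{6(n+m)}$ preimages (patterns agreeing with $S$ outside $\mathbb{B}$); summing over $S$ yields the claim. The definition of $f$ is a localised repair procedure: cells of $G_{n,m}$ near the border that are undominated in $G_{n,m}$ (because the extension of $p$ dominated them via some cell outside $G_{n,m}$) are given a grey neighbour placed within $\mathbb{B}$; and grey cells whose isolation or private neighbour condition holds in the extension thanks to some outside cell but fails in $G_{n,m}$ are either removed (when their dominating role can be taken up by another cell of $\mathbb{B}$) or assigned a local private neighbour within $\mathbb{B}$.

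The main obstacle is showing that both constructions are well-defined and have the required properties. In the upper bound, the delicate step is verifying cell-by-cell that grey cells across the three outer layers satisfy the minimality condition of $X^\mathrm{M}$ (isolation or a private neighbour), while white cells of each layer are dominated; in particular, one must control the interface between the second layer and the outer periodic pattern so that no grey cell of the periodic region loses its isolation. In the lower bound, the main challenge is to show that a fixed width of three layers suffices to absorb every possible discrepancy between the $\mathbb{Z}^2$-conditions satisfied by the extension $x$ of $p$ and the $G_{n,m}$-conditions required of $f(p)$, so that the repair never cascades beyond $\mathbb{B}$, independently of the unknown structure of $x$ outside $G_{n,m}$.
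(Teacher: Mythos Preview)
Your overall strategy matches the paper's exactly: an injection $S\mapsto$ (globally admissible pattern) for the upper bound, and a bounded-to-one repair map $\phi_{n,m}$ confined to three layers near the border for the lower bound, with the same count $3(2n+2m)=6(n+m)$ of modifiable cells. The lower-bound sketch you give (delete superfluous border greys, then add greys to dominate stranded whites, then fix any privacy violations this creates) is precisely the three-step algorithm the paper uses, and the paper likewise only argues that modifications stay within distance~$2$ of $\mathbb{B}_{n,m,0}$.

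Where you diverge is in the upper-bound extension. You propose a three-zone construction (white buffer, then a sprinkling of isolated greys, then a fixed periodic minimal dominating pattern such as $\{(i,j):2i+j\equiv 0\pmod 5\}$), and you correctly flag the interface between the second layer and the periodic zone as the delicate point. The paper sidesteps this entirely: instead of gluing to a periodic background, it extends $S$ outward one crown $\mathbb{B}_{n,m,k}$ at a time, for all $k\ge 1$, by a single uniform rule --- a cell of $\mathbb{B}_{n,m,k+1}$ is grey if and only if its inward neighbour in $\mathbb{B}_{n,m,k}$ is neither dominant nor already dominated (corners and their vertical neighbours are forced white). Because the same rule is applied at every layer, there is no interface to manage, and the verification that every new grey cell is isolated or has a private neighbour is a short local case analysis repeated identically at each $k$. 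Your periodic-gluing idea can be made to work (indeed the block-gluing property proved later in the paper guarantees it), but the iterative crown construction is both simpler to write down and simpler to verify.
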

\newpage
\begin{proof} \leavevmode
\begin{enumerate}
\item \textbf{Second inequality.}
\begin{enumerate}
\item \textbf{A completion algorithm 
of a minimal dominating set into a configuration of 
$X^\mathrm{M}$.} 

Let $S$ be a minimal dominating set of $\llbracket 1,n\rrbracket \times \llbracket 1,m \rrbracket$. Let us extend it into a configuration $x$ of $X^\mathrm{M}$ using the following algorithm: successively for every 
$k \ge 0$, we extend the current pattern 
into a pattern on $\mathcal{N}^{k+1} (\llbracket 1,n \rrbracket \times \llbracket 1,m \rrbracket)$ using the following operations, for all $\textbf{u} \in \mathbb{B}_{n,m,k+1}$:

\begin{enumerate}
	\item if $\textbf{u}$ is a corner then $x_{\textbf{u}}$ is white;
	\item if $\textbf{u}$ is a neighbour of a corner in one of the vertical sides of $\mathbb{B}_{n,m,k+1}$ then $x_{\textbf{u}}$ is white;
	\item for every other $\textbf{u}$, $x_{\textbf{u}}$ is grey if and only if its neighbour in $\mathcal{N}^k (\llbracket 1,n \rrbracket \times \llbracket 1,m \rrbracket) $ is neither dominated by an element in this set, nor a dominant element.
\end{enumerate}

This algorithm is illustrated in 
\Cref{figure.completion.crowns.minimal}.

\begin{figure}[h!]
\[\begin{tikzpicture}[scale=0.25]
\fill[gray!10] (0,0) rectangle (5,5);
\draw (0,0) rectangle (5,5);

\begin{scope}[xshift=12cm]
\fill[gray!10] (0,0) rectangle (5,5);
\draw (0,0) rectangle (5,5);
\draw (5,5) rectangle (6,6);
\draw (0,0) rectangle (-1,-1);
\draw (0,5) rectangle (-1,6);
\draw (5,0) rectangle (6,-1);

\draw (5,5) rectangle (6,4);
\draw (5,0) rectangle (6,1);
\draw (0,0) rectangle (-1,1);
\draw (0,5) rectangle (-1,4);

\end{scope}

\begin{scope}[xshift=24cm]
\fill[gray!10] (0,0) rectangle (5,5);
\fill[white] (1,4) rectangle (4,5);
\fill[white] (2,3) rectangle (3,4);
\draw (1,4) grid (4,5);
\draw (2,3) rectangle (3,4);
\draw[fill=gray!90] (2,5) rectangle (3,6);

\fill[white] (1,0) rectangle (3,1);
\fill[gray!90] (3,0) rectangle (4,1);
\fill[white] (2,1) rectangle (3,2);
\draw (1,0) grid (4,1);
\draw (2,1) rectangle (3,2);
\draw (2,-1) rectangle (3,0);

\draw (0,0) rectangle (5,5);
\draw (5,5) rectangle (6,6);
\draw (0,0) rectangle (-1,-1);
\draw (0,5) rectangle (-1,6);
\draw (5,0) rectangle (6,-1);

\draw (5,5) rectangle (6,4);
\draw (5,0) rectangle (6,1);
\draw (0,0) rectangle (-1,1);
\draw (0,5) rectangle (-1,4);
\end{scope}
\end{tikzpicture}\]
\caption{Illustration of the completion algorithm in $X^\mathrm{M}$: steps of the algorithm are applied successively from left to right.}
\label{figure.completion.crowns.minimal}
\end{figure}
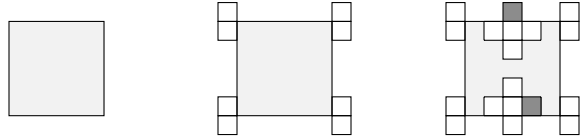
\item \textbf{The output obtained by repeating the algorithm 
is a configuration of $X^\mathrm{M}$.}

\begin{itemize}
\item \textbf{Every position is dominated.}

This is verified for the positions in
$\llbracket 2,n-1\rrbracket \times \llbracket 2,m-1\rrbracket$ because we start from a minimal dominating set. Outside this set, if a position in some $\mathcal{N}^k (\llbracket 1,n \rrbracket \times \llbracket 1,m \rrbracket)$ (for $k \ge 0$) is not dominated 
before extending the configuration 
on $\mathcal{N}^{k+1} (\llbracket 1,n \rrbracket \times \llbracket 1,m \rrbracket)$, then it gets dominated at 
this step by Rule iii and stays that way afterwards.

\item \textbf{Every dominant position 
is isolated or has a private neighbour.}
 \linebreak
Let us consider a dominant position $\textbf{u}$ which 
is not isolated. If it lies in $\llbracket 2,n-1\rrbracket \times \llbracket 2,m-1\rrbracket$, then it has a private neighbour since the pattern 
on $\llbracket 1,n\rrbracket \times \llbracket 1,m\rrbracket$ is a minimal dominating set of $G_{n,m}$. Otherwise, it lies in some $\mathbb{B}_{n,m,k}$ for some 
$k \ge 0$ and there are two cases:

\begin{itemize}
\item \textbf{$\textbf{u}$ is not a corner.}

Its neighbour $\textbf{v} \in \mathbb{B}_{n,m,k+1}$ 
is white by the application 
of the algorithm. Also, since its neighbours
in $\mathbb{B}_{n,m,k}$ are thus dominant or 
dominated, their neighbours in $\mathbb{B}_{n,m,k+1}$ are white. In addition, the neighbour of $\textbf{v}$ in $\mathbb{B}_{n,m,k+2}$ is thus white. 
This is illustrated in \Cref{figure.proof.private.neighbour.minimal}.
As a consequence, $\textbf{v}$ is a private neighbour for $\textbf{u}$.

\item \textbf{$\textbf{u}$ is a corner.}

This case can only happen for the corners of $\llbracket 1,n \rrbracket \times \llbracket 1,m \rrbracket$: the other corners are white by Rule i. We apply a similar reasoning: $u$ has two neighbours, so we may use the previous proof by considering for instance the one at its left or at its right, which is left white by Rule ii.

\end{itemize}

\begin{figure}[h!]
\[\begin{tikzpicture}[scale=0.2]
\fill[color=gray!10] (0,0) rectangle (7,7);
\fill[white] (2,6) rectangle (5,7);
\fill[gray!90] (3,6) rectangle (4,7);
\draw (2,6) grid (5,7);
\draw (0,0) rectangle (7,7);

\begin{scope}[xshift=13cm]
\fill[color=gray!10] (0,0) rectangle (7,7);
\fill[white] (2,6) rectangle (5,7);
\fill[gray!90] (3,6) rectangle (4,7);
\draw (2,6) grid (5,7);
\draw (0,0) rectangle (7,7);
\draw (2,7) grid (5,8);
\end{scope}

\begin{scope}[xshift=26cm]
\fill[color=gray!10] (0,0) rectangle (7,7);
\fill[white] (2,6) rectangle (5,7);
\fill[gray!90] (3,6) rectangle (4,7);
\draw (2,6) grid (5,7);
\draw (0,0) rectangle (7,7);
\draw (2,7) grid (5,8);
\draw (3,8) rectangle (4,9);
\end{scope}
\end{tikzpicture}\]
\caption{Illustration of the proof of a private neighbour for a non-isolated position. Steps of the completion algorithm for $X^\mathrm{M}$ applied from left to right.}
\label{figure.proof.private.neighbour.minimal}
\end{figure}
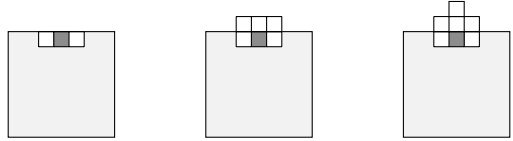

\end{itemize}
\end{enumerate}

\item \textbf{First inequality.} 

\begin{enumerate}
\item \textbf{Transforming patterns of $X^\mathrm{M}$ into minimal dominating sets.}

Let us define an application $\phi_{n,m}$ 
which, to each pattern of $X^\mathrm{M}$ on $\llbracket 1,n\rrbracket \times \llbracket 1, m\rrbracket$, associates a minimal 
dominating set of $G_{n,m}$ defined by: 

\begin{enumerate}
\item suppressing 
any dominant position in $\mathbb{B}_{n,m,0}$ (see \Cref{def-border} for the definition of the borders) which has no private neighbours in $G_{n,m}$ and which is dominated by an element of $G_{n,m}$;

\item changing successively any non-dominant position of $\mathbb{B}_{n,m,0}$ which is still not 
dominated into
a dominant one;
\item successively, for every dominant position $\textbf{u} \in \mathbb{B}_{n,m,0}$: if one of $\textbf{u}$'s neighbours $\textbf{v}$ is the only private neighbour of a position $\textbf{w}$ which is not isolated in $G_{n,m}$ then change $\textbf{w}$ into a non-dominant position.

This step is illustrated in \Cref{figure.third.step.minimal}.

\begin{figure}[h!]
\[\begin{tikzpicture}[scale=0.3]
\fill[gray!10] (0,0) rectangle (7,7);
\fill[white] (2,4) rectangle (5,7);
\fill[gray!90] (2,4) rectangle (4,5);
\fill[white] (3,3) rectangle (4,4);
\fill[gray!90] (4,3) rectangle (5,4);
\fill[gray!90] (5,4) rectangle (6,5);
\draw (0,0) rectangle (7,7);
\draw (2,4) grid (5,7) ;
\draw (3,3) grid (5,4);
\draw (5,4) rectangle (6,5);
\node at (3.5,-1.5) {After Step i.};

\begin{scope}[xshift=12cm]
\fill[gray!10] (0,0) rectangle (7,7);
\fill[white] (2,4) rectangle (5,7);
\fill[gray!90] (2,4) rectangle (4,5);
\fill[white] (3,3) rectangle (4,4);
\fill[gray!90] (3,6) rectangle (4,7);
\fill[gray!90] (4,3) rectangle (5,4);
\fill[gray!90] (5,4) rectangle (6,5);
\draw (0,0) rectangle (7,7);
\draw (2,4) grid (5,7) ;
\draw (3,3) grid (5,4);
\draw (5,4) rectangle (6,5);
\node at (3.5,-1.5) {After Step ii.};
\node at (3.5,5.5) {\textbf{v}};
\node at (3.5,4.5) {\textbf{w}};
\node at (3.5,6.5) {\textbf{u}};
\end{scope}

\begin{scope}[xshift=24cm]
\fill[gray!10] (0,0) rectangle (7,7);
\fill[white] (2,4) rectangle (5,7);
\fill[gray!90] (2,4) rectangle (3,5);
\fill[white] (3,3) rectangle (4,4);
\fill[gray!90] (3,6) rectangle (4,7);
\fill[gray!90] (4,3) rectangle (5,4);
\fill[gray!90] (5,4) rectangle (6,5);
\draw (0,0) rectangle (7,7);
\draw (2,4) grid (5,7) ;
\draw (3,3) grid (5,4);
\draw (5,4) rectangle (6,5);
\node at (3.5,-1.5) {After Step iii.};
\end{scope}
\end{tikzpicture}\]
\caption{Illustration of the second and then third steps of the algorithm defining $\phi_{n,m}$ for $X^\mathrm{M}$, from left to right. $\textbf{u},\textbf{v}$ and $\textbf{w}$ are instances of the positions described in Rule iii.}
\label{figure.third.step.minimal}
\end{figure}
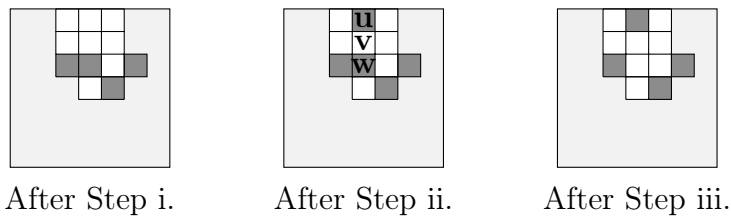
\end{enumerate}

\item \textbf{Verifying that images of $\phi_{n,m}$ 
are minimal-dominating sets.} 

Let us consider a globally-admissible pattern $p$ of $X^\mathrm{M}$ on $\llbracket 1, n\rrbracket \times \llbracket 1,m\rrbracket$. Let us show that the set $\phi_{n,m} (p)$ is 
a minimal dominating set of $G_{n,m}$:

\begin{itemize}
\item \textbf{Any vertex of $G_{n,m}$ is dominated
or dominant in $\phi_{n,m} (p)$.}

Before Step ii, if a position is not dominant 
and not dominated, it becomes dominant during 
this step. Every position in $\llbracket 2, n-1 \rrbracket \times \llbracket 2, m-1 \rrbracket$ was already dominated since the pattern was in $X^\mathrm{M}$.
Moreover, during Step iii, any position 
which is modified to no longer be dominant is necessarily a dominated position ("$\textbf{w}$" is chosen among non-isolated positions).

\item \textbf{Any non-isolated dominant position 
has a private neighbour.}

Before applying $\phi_{n,m}$, only the positions on 
the border $\mathbb{B}_{n,m,0}$ might not have any private neighbour. After Step i, every dominant position on $\mathbb{B}_{n,m,0}$ is isolated, or has a private neighbour. After Step ii, some positions may be dominant, non-isolated, and have no private neighbours. Such positions obtain a private neighbour in Step iii.

\end{itemize}

\item For all $n,m$, the number of preimages 
of $\phi_{n,m}$ for any minimal dominating set 
of $G_{n,m}$ is bounded by $2^{3(2n+2m)}$:
any symbol modified by the application 
is at distance at most two from $\mathbb{B}_{n,m,0}$, and there are $3(2n+2m)$ such symbols. Therefore $N_{n,m} (X^\mathrm{M}) \le 2^{6(n+m)} M_{n,m}$.

\end{enumerate}
\end{enumerate}
\end{proof}

\begin{lemma}
\label{lemma.comparison.minimal.total}
For all $n$, the following bounds hold: 
\[\frac{1}{2^{8(m+n)}} N_{n,m} (X^\mathrm{MT}) \le \mathit{MT}_{n,m} \le N_{n,m} (X^\mathrm{MT}).\]
\end{lemma}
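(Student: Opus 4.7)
The strategy is to mirror the proof of \Cref{lemma.comparison.minimal}, adjusting the completion and the restriction procedures to the extra constraint that dominant positions must themselves be dominated (so no isolated dominant positions are allowed in $X^\mathrm{MT}$). The larger constant $8$ in the exponent, compared to $6$ in \Cref{lemma.comparison.minimal}, reflects the fact that these adjustments will have to be performed in a band of width~$4$ around $\mathbb{B}_{n,m,0}$ rather than width~$3$.

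For the upper bound $\mathit{MT}_{n,m} \le N_{n,m} (X^\mathrm{MT})$, I would design a crown-completion algorithm in the spirit of \Cref{figure.completion.crowns.minimal}. Starting from a minimal total dominating set $S$ of $G_{n,m}$, I extend it layer by layer to successive borders $\mathbb{B}_{n,m,k+1}$ using rules similar to (i)--(iii) in the minimal case, but with one supplementary precaution: corners and their side-neighbours in each new layer are forced to be white, while every other cell is grey iff its unique neighbour in the previous layer is not yet dominated \emph{and} not already a dominant element with a private neighbour. This guarantees (a) every position gets dominated, (b) every newly placed grey cell receives a private neighbour in the next layer (as in \Cref{figure.proof.private.neighbour.minimal}), and crucially (c) no grey cell placed in the extension remains isolated, because each such grey cell always has a grey neighbour in the adjacent layer (its private neighbour's witness). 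The original cells of $S$ inside $\llbracket 2,n-1 \rrbracket \times \llbracket 2,m-1 \rrbracket$ are untouched and retain their private neighbours and non-isolated status since $S$ was minimal total dominating.

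For the lower bound, I define an application $\phi_{n,m}$ from globally-admissible patterns of $X^\mathrm{MT}$ on $\llbracket 1, n \rrbracket \times \llbracket 1, m \rrbracket$ to minimal total dominating sets of $G_{n,m}$, through four successive corrections localised in a width-$4$ band around the border:
\begin{enumerate}[noitemsep]
    \item remove every dominant position on $\mathbb{B}_{n,m,0}$ whose only private neighbour lay outside $\llbracket 1,n \rrbracket \times \llbracket 1, m \rrbracket$, provided it is still dominated by a position remaining inside;
    \item turn into dominant any position of $\mathbb{B}_{n,m,0}$ that is neither dominant nor dominated after Step~1;
    \item for every dominant position on $\mathbb{B}_{n,m,0}$ that is now isolated (its only grey neighbour was outside the grid), add a grey neighbour on $\mathbb{B}_{n,m,0}$ or its adjacent inner layer;
    \item sweep the border as in Step~iii of \Cref{figure.third.step.minimal} and restore minimality by erasing any dominant position $\mathbf{w}$ whose private-neighbour role can be absorbed by a newly created grey cell.
\end{enumerate}
A routine case analysis shows that the output is indeed a minimal total dominating set: Step~2 secures domination of every vertex, Step~3 removes isolated dominant cells so the total-domination condition holds, and Step~4 reinstates the private-neighbour property that characterises minimality. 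The main obstacle is Step~3: adding new grey cells to rescue isolated dominants may create new dominants without private neighbours, so one must verify that Step~4 can always clean these up without destroying domination—this is the point where the width-$4$ band becomes necessary (Steps 1 and 2 touch $\mathbb{B}_{n,m,0}$, Step 3 touches the adjacent inner layer, and Step 4 may propagate one cell further).

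Finally, all symbols modified by $\phi_{n,m}$ lie at $\ell_\infty$-distance at most $3$ from $\mathbb{B}_{n,m,0}$, a region of at most $4 \cdot 2(n+m) = 8(n+m)$ cells. Consequently, each minimal total dominating set has at most $2^{8(n+m)}$ preimages under $\phi_{n,m}$, which yields the announced inequality $N_{n,m}(X^\mathrm{MT}) \le 2^{8(n+m)} \mathit{MT}_{n,m}$.
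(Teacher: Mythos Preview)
Your overall strategy matches the paper's exactly: a crown-completion algorithm for the upper bound and a bounded-preimage repair map $\phi_{n,m}$ for the lower bound, both localised near the border. Two points of divergence are worth flagging.

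For the extension algorithm, your rule (iii) adds the clause ``and not already a dominant element with a private neighbour''. This clause is a bug: a grey cell placed in layer $k$ to dominate a white undominated inner neighbour will itself be dominant with a private neighbour yet possibly undominated; under your rule its outer neighbour in layer $k{+}1$ stays white, so the grey cell remains isolated forever. The paper's rule for $X^{\mathrm{MT}}$ is simply ``grey iff the inner neighbour is not dominated'' (dropping, relative to the $X^{\mathrm M}$ case, the test on whether that neighbour is dominant). This alone guarantees every position eventually gets a grey neighbour, so no isolated dominants survive, and your claim~(c) then holds without the extra clause.

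For $\phi_{n,m}$, the paper uses a cleaner two-step procedure whose key idea you are missing: instead of making the undominated border position $\mathbf{u}$ itself dominant (your Step~2) and then patching isolation afterwards (your Step~3), the paper makes the \emph{inner} neighbour $\mathbf{v}\in\llbracket 2,n-1\rrbracket\times\llbracket 2,m-1\rrbracket$ dominant. This single move gives $\mathbf{v}$ a private neighbour (namely~$\mathbf{u}$) and leaves $\mathbf{v}$ non-isolated automatically, so your Steps~3 and most of Step~4 become unnecessary; only a preliminary cleanup of positions $\mathbf{w'}$ whose sole private neighbour would be spoiled by the new $\mathbf{v}$ is needed. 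The modifications then reach distance at most~$3$ into the interior (four layers total), which is where the $2^{8(m+n)}$ comes from.
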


For readability, we reproduce the structure of the proof of \Cref{lemma.comparison.minimal}, but simplify the arguments and refer to this proof.

\begin{proof}
\leavevmode
\begin{enumerate}
\item \textbf{Second inequality.} 

\begin{enumerate}
\item \textbf{A completion algorithm of a minimal total dominating set into a configuration of $X^\mathrm{MT}$.} 

Let us consider a minimal total dominating set of $G_{n,m}$. Any element in \linebreak$\llbracket 2, n-1 \rrbracket \times \llbracket 2, m-1 \rrbracket$ 
is dominated 
by an element of $\llbracket 1, n \rrbracket \times \llbracket 1, m \rrbracket$, and any dominant 
element in $\llbracket 2, n-1 \rrbracket \times \llbracket 2, m-1 \rrbracket$ is not isolated and has a private 
neighbour in $\llbracket 1, n \rrbracket \times \llbracket 1, m \rrbracket$ (which may or may not 
be a dominant position). Let us extend this set
into a configuration $x$ of $X^\mathrm{MT}$ using 
an algorithm very similar to the one in the corresponding 
point in the proof of \Cref{lemma.comparison.minimal}. The condition in the third point is different:
\begin{enumerate}
	\item if $\textbf{u}$ is a corner then $x_{\textbf{u}}$ is white;
	\item if $\textbf{u}$ is a neighbour of a corner in one of the vertical sides of $\mathbb{B}_{n,m,k+1}$ then $x_{\textbf{u}}$ is white;
	\item for every other $\textbf{u}$, $x_{\textbf{u}}$ is grey if and only if its neighbour in $\mathcal{N}^k (\llbracket 1,n \rrbracket \times \llbracket 1,m \rrbracket) $ is not dominated by an element in this set.
\end{enumerate}

\newpage
\item \textbf{The result of the algorithm 
is a configuration of $X^\mathrm{MT}$.}

\begin{itemize}
\item \textbf{Every position is dominated.} 

Similar to the corresponding point in 
the proof of \Cref{lemma.comparison.minimal}. This implies that no dominant positions are isolated.
\item \textbf{Every dominant position 
has a private neighbour.}

Let us consider a dominant position $\textbf{u}$. 
If it is in $\llbracket 3,n-2\rrbracket \times \llbracket 3,m-2\rrbracket $, 
since the pattern on $\llbracket 1,n\rrbracket \times \llbracket 1,m\rrbracket$ is a minimal total dominating set of $G_{n,m}$, we know that it has a private neighbour. Otherwise, it lies
in some $\mathbb{B}_{n,m,k}$ for $k \ge 0$, 
or in $\llbracket 2,n-1\rrbracket \times \llbracket 2,m-1\rrbracket$. Then there are two cases: 

\begin{itemize}
\item \textbf{$\textbf{u}$ is not a corner.} If it has no dominant neighbours in $\mathcal{N}^k (\llbracket 1,n \rrbracket \times \llbracket 1,m \rrbracket)$, let us call $\textbf{v}$ its neighbour in $\mathbb{B}_{n,m,k+1}$. Note that, depending on whether or not $\textbf{u}$ is dominated inside $\mathcal{N}^k (\mathbb{B}_{n,m,0})$, \textbf{v} may be white or grey. Since the neighbours of $\textbf{u}$ in $\mathbb{B}_{n,m,k}$ are dominated, $\textbf{v}$'s neighbours in $\mathbb{B}_{n,m,k+1}$ are white. Finally, since $\textbf{v}$ is dominated by $\textbf{u}$, its neighbour in $\mathbb{B}_{n,m,k+2}$ is white, hence $\textbf{v}$ is a private neighbour for $\textbf{u}$.

\item \textbf{ $\textbf{u}$ is a corner.} We apply a similar reasoning.

\end{itemize}

\end{itemize}
\end{enumerate}

\item \textbf{First inequality:} 

\begin{enumerate}
\item \textbf{A transformation of patterns of $X^\mathrm{MT}$ into minimal-total-dominating sets.} 

Let us define once again an application $\phi_{n,m}$ 
which, to each pattern of $X^\mathrm{MT}$ on $\llbracket 1, n\rrbracket \times \llbracket 1, m\rrbracket$, associates a minimal total 
dominating set of $G_{n,m}$. It is defined in a similar
way as in the corresponding point 
in the proof of \Cref{lemma.comparison.minimal}, but the proof is more complex.

\begin{enumerate}
\item Suppress any dominant position on the border 
$\mathbb{B}_{n,m,0}$ which has no private neighbours in $G_{n,m}$.
\item Successively, for every non-corner undominated position $\textbf{u}$ on 
the border $\mathbb{B}_{n,m,0}$, do the following:

\begin{itemize}
\item Consider the position $\textbf{v}$, neighbour of $\textbf{u}$ in $\llbracket 2,n-1\rrbracket \times \llbracket 2,m-1 \rrbracket$. For each dominant 
position $\textbf{w}$ in the neighbourhood of $\textbf{v}$, 
and for each dominant position $\textbf{w'}$ in the 
neighbourhood of $\textbf{w}$, if $\textbf{w}$ 
is the only private neighbour of $\textbf{w'}$, 
then change $\textbf{w'}$ into a non-dominant 
position.
\item Change $\textbf{v}$ into a dominant 
position.
\end{itemize}

Then do the same operations for the corners 
of $\mathbb{B}_{n,m,0}$, except that 
$\textbf{v}$ is replaced by any neighbour of the corner.

\end{enumerate}

This Step is illustrated 
on \Cref{figure.third.step.minimal.total}.

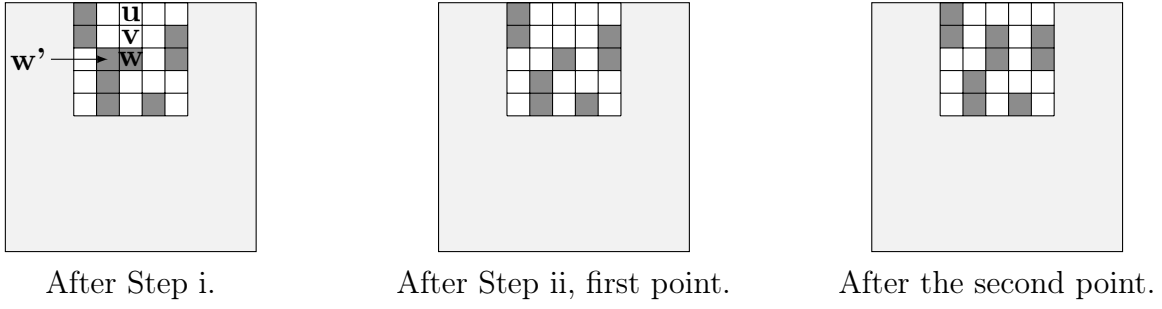
\begin{figure}[H]
\[\begin{tikzpicture}[scale=0.3]
\fill[gray!10] (-2,-2) rectangle (9,9);
\fill[white] (1,4) rectangle (6,9);
\fill[gray!90] (2,6) rectangle (4,7);
\fill[gray!90] (5,6) rectangle (6,7);
\fill[gray!90] (1,7) rectangle (2,9);
\fill[gray!90] (2,4) rectangle (3,6);
\fill[gray!90] (4,4) rectangle (5,5);
\fill[gray!90] (5,6) rectangle (6,8);
\draw (-2,-2) rectangle (9,9);
\draw (1,4) grid (6,9) ;
\node at (3.5,6.5) {$\textbf{w}$};
\node at (-1,6.5) {$\textbf{w'}$};
\draw[-latex] (0,6.5) -- (2.5,6.5);
\node at (3.5,7.5) {$\textbf{v}$};
\node at (3.5,8.5) {$\textbf{u}$};
\node at (3.5,-3.5) {After Step i.};

\begin{scope}[xshift=19cm]

\fill[gray!10] (-2,-2) rectangle (9,9);
\fill[white] (1,4) rectangle (6,9);
\fill[gray!90] (3,6) rectangle (4,7);
\fill[gray!90] (5,6) rectangle (6,7);

\fill[gray!90] (1,7) rectangle (2,9);
\fill[gray!90] (2,4) rectangle (3,6);
\fill[gray!90] (4,4) rectangle (5,5);
\fill[gray!90] (5,6) rectangle (6,8);
\draw (-2,-2) rectangle (9,9);
\draw (1,4) grid (6,9) ;
\node at (3.5,-3.5) {After Step ii, first point.};
\end{scope}

\begin{scope}[xshift=38cm]
\fill[gray!10] (-2,-2) rectangle (9,9);
\fill[white] (1,4) rectangle (6,9);
\fill[gray!90] (3,6) rectangle (4,7);
\fill[gray!90] (5,6) rectangle (6,7);
\fill[gray!90] (1,7) rectangle (2,9);
\fill[gray!90] (2,4) rectangle (3,6);
\fill[gray!90] (4,4) rectangle (5,5);
\fill[gray!90] (5,6) rectangle (6,8);

\fill[gray!90] (3,7) rectangle (4,8);
\draw (-2,-2) rectangle (9,9);
\draw (1,4) grid (6,9) ;
\node at (3.5,-3.5) {After the second point.};
\end{scope}
\end{tikzpicture}\]

\caption{Illustration of the second and then third steps 
of the algorithm defining $\phi_{n,m}$ for $X^\mathrm{MT}$, from 
left to right. $\textbf{u},\textbf{v},\textbf{w}$ and $\textbf{w'}$ are instances 
of the positions described in Rule ii.}
\label{figure.third.step.minimal.total}
\end{figure}

\item \textbf{Verification that images of $\phi_{n,m}$ 
are minimal total dominating sets.} 

Consider a pattern $p$ of $X^\mathrm{MT}$ on $\llbracket 1,n\rrbracket \times \llbracket 1, m\rrbracket$. The set $\phi_{n,m} (p)$ is 
a minimal total dominating set of $G_{n,m}$:

\begin{itemize}
\item \textbf{Any vertex of $G_{n,m}$ is dominated
in $\phi_{n,m} (p)$.}

Any (dominant or not) position which was dominated before applying Rule i is still dominated afterwards: if some position $\textbf{u}$ lies in the neighbourhood of a dominant position $\textbf{v}$ suppressed by 
Rule i, then since $\textbf{v}$ had no private neighbours in $G_{n,m}$, $\textbf{u}$ is dominated by another position.
For similar reasons, no positions become undominated after the application of Rule ii: only the neighbours of some $\textbf{w'}$ could be affected and if \textbf{w'} becomes non-dominant it means that they were dominated by other positions, so that they stay dominated.
Since all the positions inside $\llbracket 2,n-1\rrbracket \times \llbracket 2,m-1 \rrbracket$ were dominated before applying the rules, it only remains to show that the positions inside $\mathbb{B}_{n,m,0}$ are dominated after applying Rule ii. This is true thanks to this rule: any undominated position $\textbf{u}$ inside the border sees its neighbour $\textbf{v}$ inside  $\llbracket 2,n-1\rrbracket \times \llbracket 2,m-1 \rrbracket$ become dominant. The same applies to the corners, except that the neighbour comes from the border.

\item \textbf{Any dominant position 
has a private neighbour.}

At the end of Step i, any dominant position has a private neighbour. Only the creation of 
        a dominant position $\textbf{v}$ during the execution of Rule ii on position $\textbf{u}$ could affect this property, by disabling the private neighbour of a position $\textbf{w}$ in its neighbourhood, or by not having any private neighbour itself. The first case cannot happen since any dominant position $\textbf{w'}$ having $\mathbf{w}$ as its unique private neighbour is suppressed. The second one also never happens since the position $\textbf{u}$ is a private neighbour for $\textbf{v}$. 
\end{itemize}

\item For all $n$ and $m$, the number of preimages 
of $\phi_{n,m}$ for any minimal dominating set 
of $G_{n,m}$ is bounded by $2^{4(2m+2n)}$, 
since any symbol modified by the application 
is at distance at most $4$ of the border of $\llbracket 1,n\rrbracket \times \llbracket 1,m \rrbracket$. 
As a consequence, $N_{n,m} (X^\mathrm{MT}) \le 2^{8(m+n)} \mathit{MT}_{n,m}$.

\end{enumerate}
\end{enumerate}
\end{proof}

\begin{thm}[Asymptotic behaviour]
There exists some $\nu_\mathrm{D} \geq 0$ (resp. 
$\nu_\mathrm{M}$, $\nu_\mathrm{T}$ and $\nu_\mathrm{MT}$) 
such that

\[ D_{n,m} = \nu_\mathrm{D}^{ nm + o(nm)}\] 
(resp. $M_{n,m} = \nu_\mathrm{M}^{ nm + o(nm)}$, $T_{n,m} = \nu_\mathrm{T}^{nm + o(nm)}$ and $\mathit{MT}_{n,m} = \nu_\mathrm{MT}^{ nm + o(nm)}$).
\end{thm}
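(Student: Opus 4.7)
The plan is to show that for each of the four domination variants, the associated growth rate $\nu$ is simply $2^{h(X)}$, where $X$ is the corresponding SFT ($X^\mathrm{D}$, $X^\mathrm{M}$, $X^\mathrm{T}$, or $X^\mathrm{MT}$). The whole machinery has been put in place in the previous section to make this identification almost immediate: the four comparison lemmas (Lemmas 3.8, 3.9, 3.10 and 3.11) sandwich the relevant counting numbers $D_{n,m}, M_{n,m}, T_{n,m}, \mathit{MT}_{n,m}$ between $N_{n',m'}(X)$ for values of $n',m'$ that differ from $n,m$ only by a constant, possibly up to a multiplicative factor subexponential in $nm$.

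First, I would handle the domination case. Taking $\log_2$ of the inequality $N_{n-2,m-2}(X^\mathrm{D}) \le D_{n,m} \le N_{n,m}(X^\mathrm{D})$ and dividing by $nm$ gives
\[\frac{(n-2)(m-2)}{nm} \cdot \frac{\log_2 N_{n-2,m-2}(X^\mathrm{D})}{(n-2)(m-2)} \;\le\; \frac{\log_2 D_{n,m}}{nm} \;\le\; \frac{\log_2 N_{n,m}(X^\mathrm{D})}{nm}.\]
By the bidimensional entropy formula (the 2D version of Lemma 3.4 stated just after it), both the extreme expressions tend to $h(X^\mathrm{D})$ as $n,m \to \infty$, since $(n-2)(m-2)/nm \to 1$. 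Hence $\log_2 D_{n,m}/(nm) \to h(X^\mathrm{D})$, which is exactly the statement $D_{n,m} = \nu_\mathrm{D}^{nm + o(nm)}$ with $\nu_\mathrm{D} = 2^{h(X^\mathrm{D})}$. The total-domination case is completely analogous using Lemma 3.9.

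For the minimal and minimal-total cases, the extra multiplicative factors $2^{6(n+m)}$ and $2^{8(n+m)}$ become additive in the logarithm, and divided by $nm$ they contribute $O((n+m)/nm) = o(1)$. More precisely, from $\frac{1}{2^{6(n+m)}} N_{n,m}(X^\mathrm{M}) \le M_{n,m} \le N_{n,m}(X^\mathrm{M})$ we get
\[\frac{\log_2 N_{n,m}(X^\mathrm{M})}{nm} - \frac{6(n+m)}{nm} \;\le\; \frac{\log_2 M_{n,m}}{nm} \;\le\; \frac{\log_2 N_{n,m}(X^\mathrm{M})}{nm},\]
and both sides tend to $h(X^\mathrm{M})$, yielding $\nu_\mathrm{M} = 2^{h(X^\mathrm{M})}$. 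The minimal-total case is identical with $8$ in place of $6$, giving $\nu_\mathrm{MT} = 2^{h(X^\mathrm{MT})}$.

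No step looks delicate here, all the real work having been done in establishing the comparison lemmas and the existence of the 2D limit defining the entropy. The only point requiring a moment of care is making sure that the notation $\nu^{nm + o(nm)}$ is interpreted correctly when $\nu$ could a priori be $0$ or $1$: if $h(X) = 0$ one should read the conclusion as $\log_2 D_{n,m} = o(nm)$, which is exactly what the inequalities give.
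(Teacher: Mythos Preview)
Your proof is correct and follows essentially the same approach as the paper: take logarithms of the comparison inequalities, divide by $nm$, and squeeze using the two-dimensional entropy limit to obtain $\nu = 2^{h(X)}$ for each of the four subshifts. The paper in fact writes out only the case of $(M_{n,m})$ and declares the others similar, so your treatment is slightly more detailed but not different in substance.
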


\begin{proof}
Let us prove this for the sequence $(M_{n,m})$ (the proof is similar for the other sequences).

As a consequence of 
\Cref{lemma.comparison.minimal}, for all $n,m$: 
\[- \frac{6(m+n)}{nm} + \frac{\log_2 (N_{n,m} (X^\mathrm{M}))}{n m}
\le \frac{\log_2(M_{n,m})}{nm} \le \frac{\log_2 (N_{n,m} (X^\mathrm{M}))}{nm}.\]

As a consequence, 
\[\frac{\log_2(M_{n,m})}{nm} \rightarrow h(X^\mathrm{M}).\]
This means that $M_{n,m} = 2^{h(X^\mathrm{M}) \cdot nm+ o(nm)} = \nu_\mathrm{M}^{nm+o(nm)}$, 
where $\nu_\mathrm{M} = 2^{h(X^\mathrm{M})}$.
\end{proof}

\section{Computability of the entropy: the block-gluing property}
\label{section-entropy}
In this section, we prove that the growth rates $\nu_\mathrm{D}$ (resp. $\nu_\mathrm{M}$, $\nu_\mathrm{T}$ and $\nu_\mathrm{MT}$) are computable numbers, meaning that there 
exists an algorithm which computes 
approximations of these numbers with 
arbitrary given precision. For this purpose, we rely on the \emph{block-gluing} property. If a subshift of finite type has this property then it allows us to compute it with the known \Cref{algo-computing-entropy} on page 73. We will finally prove that $X^\mathrm{D}$ (resp. $X^\mathrm{M}$, $X^\mathrm{T}$ and $X^\mathrm{MT}$) are block gluing.

\subsection{Definition and properties}

For two finite subsets $\mathbb{U},\mathbb{V}$ 
of $\mathbb{Z}^2$, we write
\[\delta(\mathbb{U},\mathbb{V})=\min_{\textbf{u} \in \mathbb{U}}( \min_{\textbf{v} \in \mathbb{V}} ||\textbf{v}-\textbf{u}||_{\infty}).\]
This corresponds to the shortest distance between a point in $\mathbb{U}$ and one in $\mathbb{V}$. Note that a square of length one contains four points in $\ZZ^2$ if placed on integer coordinates. This means that if there is one column between $\mathbb{U}$ and $\mathbb{V}$ then the distance between them is one (and not zero). The usual definition of the block-gluing property is the following one.

\begin{deff}
For a fixed integer $c \geq 0$, we say that a bidimensional subshift of finite type $X$ on alphabet $\mathcal{A}$ is $c$-block-gluing when, for every $n \geq 0$ and any two globally-admissible patterns $p$ and $q$ of $X$ on support 
$\llbracket 1,n \rrbracket ^2$, for all $\textbf{u},\textbf{v} \in \mathbb{Z}^2$ such that\linebreak $\delta(\textbf{u}+\llbracket 1,n \rrbracket ^2, \textbf{v}+\llbracket 1,n \rrbracket ^2) \ge c$, 
there exists a configuration $x \in X$ 
such that $x_{|\textbf{u}+\llbracket 1,n \rrbracket ^2} = p$ 
and $x_{|\textbf{v}+\llbracket 1,n \rrbracket ^2} = q$.\\
We say that $X$ is block gluing if it is $c$-block-gluing for some integer $c$.
\end{deff}

Informally, this means that any pair of rectangular patterns placed at whatever positions can be completed 
into a configuration of $X$ provided that there are at least $c$ lines or columns separating the two patterns.

\begin{notation}
For any subshift of finite type $X$, 
we denote by $c(X)$ the smallest $c$ such that 
$X$ is $c$-block-gluing. If $X$ is not block gluing for any integer $c$, we write $c(X)=+\infty$.
\end{notation}

In the following, we will use the notations $\mathbb{Z}_{-} = \,\rrbracket - \infty,0 \rrbracket$ and $\mathbb{Z}_{+} = \,\rrbracket 0,+ \infty \llbracket$.
We give in \Cref{proposition.semi.plane} an equivalent characterisation of the block-gluing property:

\begin{rk}
    In \Cref{proposition.semi.plane} we extend the notion of \emph{patterns} to \textbf{infinite supports}. This will be the case whenever their supports are infinite.
\end{rk}

\begin{prop}
\label{proposition.semi.plane}
Let $c \ge 0$ be an integer.
A bidimensional subshift $X$ is $c$-block-gluing
if and only if for all $k \ge c$ and 
$p$ and $q$ globally-admissible patterns on 
supports $\mathbb{Z}_{-} \times \mathbb{Z}$ (resp. $\mathbb{Z} \times \mathbb{Z}_{-}$) and 
$\mathbb{Z}_{+} \times \mathbb{Z}$ (resp. $\mathbb{Z} \times \mathbb{Z}_{+}$), there 
exists a configuration $x \in X$ such that 
$x_{|\mathbb{Z}_{-} \times \mathbb{Z}} = p$ 
and $x_{|(k,0)+\mathbb{Z}_{+} \times \mathbb{Z}}=q$ (resp. $x_{|\mathbb{Z} \times \mathbb{Z}_{-}} = p$ 
and $x_{|(0,k)+\mathbb{Z} \times \mathbb{Z}_{+}}=q$).
\end{prop}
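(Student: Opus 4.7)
The plan is to prove the two implications separately, using that $X$ is shift-invariant and closed (hence $\mathcal{A}^{\mathbb{Z}^2}$ is compact and $X$ is sequentially compact). I will treat the horizontal splitting; the vertical case is symmetric.

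For the implication from half-plane gluing to block-gluing, fix $n\ge 0$, two globally-admissible $n\times n$ patterns $p,q$ placed at $\textbf{u}+\llbracket 1,n\rrbracket^2$ and $\textbf{v}+\llbracket 1,n\rrbracket^2$ with $\delta\ge c$. By shift-invariance of $X$ we may translate so that the condition $\delta\ge c$ is realised horizontally, i.e.\ the first square sits inside $\mathbb{Z}_{-}\times\mathbb{Z}$ and the second one inside $(k,0)+\mathbb{Z}_{+}\times\mathbb{Z}$ for some integer $k\ge c$ (otherwise, if the $\delta$ is realised vertically, I would use the other variant of the hypothesis instead). Since $p$ is globally admissible, there exists $y\in X$ with $y$ extending $p$; then $P=y_{|\mathbb{Z}_{-}\times\mathbb{Z}}$ is a globally-admissible half-plane pattern extending $p$. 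I construct similarly a globally-admissible extension $Q$ of $q$ on $\mathbb{Z}_{+}\times\mathbb{Z}$ (after an appropriate translation by $-(k,0)$). Applying the half-plane gluing hypothesis to $P$ and $Q$ with the chosen $k$ yields a configuration $x\in X$ whose restrictions coincide with $P$ on $\mathbb{Z}_{-}\times\mathbb{Z}$ and with $Q$ on $(k,0)+\mathbb{Z}_{+}\times\mathbb{Z}$; in particular $x$ contains $p$ and $q$ at their original positions.

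For the reverse implication, assume $X$ is $c$-block-gluing and let $k\ge c$ together with globally-admissible patterns $p$ on $\mathbb{Z}_{-}\times\mathbb{Z}$ and $q$ on $\mathbb{Z}_{+}\times\mathbb{Z}$. I approximate each half-plane pattern by square subpatterns: choose for each $N\ge 1$ squares $S_N^{-}\subset\mathbb{Z}_{-}\times\mathbb{Z}$ and $S_N^{+}\subset (k,0)+\mathbb{Z}_{+}\times\mathbb{Z}$ of size $2N\times 2N$ (e.g.\ $S_N^{-}=\llbracket -2N+1,0\rrbracket\times\llbracket -N+1,N\rrbracket$ and $S_N^{+}=\llbracket k+1,k+2N\rrbracket\times\llbracket -N+1,N\rrbracket$) such that $\bigcup_N S_N^{-}=\mathbb{Z}_{-}\times\mathbb{Z}$ and $\bigcup_N S_N^{+}=(k,0)+\mathbb{Z}_{+}\times\mathbb{Z}$. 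Then $\delta(S_N^{-},S_N^{+})=k+1\ge c$, so by $c$-block-gluing there exists $x_N\in X$ with $x_{N|S_N^{-}}=p_{|S_N^{-}}$ and $x_{N|S_N^{+}}=q_{|S_N^{+}}$. By compactness of $\mathcal{A}^{\mathbb{Z}^2}$ for the distance of \Cref{def-distance-colouring}, I extract a subsequence $x_{N_j}\to x$; since $X$ is closed, $x\in X$. Finally, any position $\textbf{w}\in\mathbb{Z}_{-}\times\mathbb{Z}$ belongs to $S_{N_j}^{-}$ for $j$ large enough, so $x(\textbf{w})=\lim_j x_{N_j}(\textbf{w})=p(\textbf{w})$, which gives $x_{|\mathbb{Z}_{-}\times\mathbb{Z}}=p$; the same argument yields $x_{|(k,0)+\mathbb{Z}_{+}\times\mathbb{Z}}=q$.

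The only delicate point is bookkeeping: in the $(\Leftarrow)$ direction, reducing the arbitrary placement of two $n\times n$ squares with $\delta\ge c$ to the canonical situation in which one square lies in $\mathbb{Z}_{-}\times\mathbb{Z}$ and the other in a horizontal translate of $\mathbb{Z}_{+}\times\mathbb{Z}$ — this uses shift-invariance and a case split on whether the minimising direction in $\delta$ is horizontal or vertical. Once this is done, both halves are routine compactness/extension arguments.
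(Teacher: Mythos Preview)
Your proof is correct and follows essentially the same approach as the paper: for $(\Leftarrow)$ you extend the two finite square patterns to globally-admissible half-plane patterns and invoke the hypothesis, and for $(\Rightarrow)$ you restrict the half-plane patterns to growing squares, glue them using $c$-block-gluing, and extract a limit by compactness of $X$. The paper does exactly this, with slightly less detail on the horizontal/vertical case split and on verifying that the limit configuration agrees with $p$ and $q$ on the full half-planes; your explicit bookkeeping is a welcome clarification rather than a different argument.
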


Informally, this means that in order to check the block-gluing property, it is sufficient to prove that any two patterns on half-planes glued at arbitrary distance, provided it is greater than $c$, can be completed into a configuration of $X$.

\begin{proof} \leavevmode

\begin{itemize}

\item $(\Leftarrow)$: Let us assume that $X$ satisfies the second hypothesis. Let us consider some integer $n$, and two globally-admissible patterns $\overline{p},\overline{q}$ of $X$ on support $\llbracket 1,n \rrbracket^2$. 
We prove the result in the case of columns separating the two patterns: the proof when the patterns are separated by lines is completely symmetrical. We place $\overline{p}$ and $\overline{q}$ such that $k \geq c$ columns separate them.
Since $\overline{p}$ and $\overline{q}$ are globally 
admissible, there exist $p$ and $q$ globally-admissible infinite patterns of $X$ on respective supports 
$\mathbb{Z}_{-} \times \mathbb{Z}$ 
and $(k,0)+ \mathbb{Z}_{+} \times \mathbb{Z}$
which extend respectively $\overline{p}$ 
and $\overline{q}$. 
By hypothesis, there exists some configuration 
$x \in X$ which extends $p$ and $q$ separated by $k$ columns, hence $X$ is $c$-block-gluing.

\item $(\Rightarrow)$: Let us assume that the first hypothesis on $X$ is true, and let $p$ and $q$ be two patterns on supports $\mathbb{Z}_{-} \times \mathbb{Z}$ and $(k,0)+\mathbb{Z}_{+} \times \mathbb{Z}$ 
for some $k \ge c$ (the case with $\ZZ \times \ZZ_{-}$ and $\ZZ \times \ZZ_{+}$ is proved in the same way). From the block-gluing 
property, for all $n > 0$
one can extend the restriction of $p$ on 
$\llbracket 1, n \rrbracket^2 - (n,0)$ and the restriction 
of $q$ on $\llbracket 1, n \rrbracket^2 - (n,0)+(k+1,0)$
into a configuration $x_n \in X$.
By compactness of the subshift $X$ for
the product of the discrete topology, this sequence admits 
a subsequence which converges to some $x \in X$. This $x$ satisfies the equalities 
$x_{|\mathbb{Z}_{-} \times \mathbb{Z}}=p$ 
and $x_{|(k,0)+\mathbb{Z}_{+} \times \mathbb{Z}}=q$.
\end{itemize}
\end{proof}

\subsection{\label{subsection.computability} Algorithmic computability of the entropy}

\begin{deff}
    A function $f: \NN \rightarrow \NN$ is \textbf{computable} when there exists a Turing machine which, on any integer $n$ written in binary\footnote{Note that all binary digits are either 0 or 1. Sex, however, is not binary: there are intersex people who, on \textbf{biological sex} criterion, are neither male nor female. Gender also is not binary: some people categorise themselves as men, some as women, some as agender, gender-fluid, transgender, and a lot of other genders.} on its input tape, terminates with $f(n)$, also written in binary, on its output tape.
\end{deff}

\begin{deff}
Let $f: \mathbb{N} \rightarrow \mathbb{N}$ be a computable function.
A real number $x$ is said to be \textbf{computable} with rate $f$ when there exists an algorithm which, given an integer $n$ as input, outputs in at most $f(n)$ steps a rational number $r_n$ 
such that $|x-r_n|\le \frac{1}{n}$.
\end{deff}

This definition corresponds to Definition 1.3 in~\cite{Pavlov-Schraudner}. The following theorem is Theorem~1.4 in the same reference. Its proof provides an algorithm to compute $h(X)$.

\begin{thm}[\cite{Pavlov-Schraudner}]
\label{theorem.computability.entropy}
Let $X$ be a block-gluing bidimensional subshift of finite type. 
Then $h(X)$ is computable with rate $n \mapsto 2^{O(n^2)}$.
\end{thm}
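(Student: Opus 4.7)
The plan is to describe an algorithm that, given $n$, outputs a rational within $1/n$ of $h(X)$ in time $2^{O(n^2)}$. The strategy is to sandwich $h(X)$ between two quantities of the form $\log_2 N_k(X)/k^2$ that differ by $O(1/k)$, both computable from the number $N_k(X)$ of globally-admissible $k \times k$ patterns of $X$.

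First, I would establish the upper bound $h(X) \leq \log_2 N_k(X)/k^2$ for every $k$, which is immediate from \Cref{lemma-entropy-limit} together with \Cref{def-entropy}, since $h(X)$ is an infimum. For the matching lower bound I would exploit $c$-block-gluing with $c = c(X)$. Given $k^2$ globally-admissible $n\times n$ patterns $(p_{i,j})_{1 \leq i,j \leq k}$, I would iterate the gluing property---for instance row by row using the semi-plane reformulation of \Cref{proposition.semi.plane}---to produce a globally-admissible pattern of size $(kn+(k-1)c) \times (kn+(k-1)c)$ in which $p_{i,j}$ appears at position $((i-1)(n+c),(j-1)(n+c))$. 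Since different tuples yield different patterns (they disagree on at least one of the $n\times n$ blocks),
\[
N_{kn+(k-1)c}(X) \geq N_n(X)^{k^2}.
\]
Taking logarithms, dividing by $(kn+(k-1)c)^2$ and letting $k \to \infty$ while using \Cref{lemma-entropy-limit} gives $h(X) \geq \log_2 N_n(X)/(n+c)^2$.

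Combining the two bounds yields $\log_2 N_n(X)/(n+c)^2 \leq h(X) \leq \log_2 N_n(X)/n^2$. Using the crude estimate $\log_2 N_n(X) \leq n^2 \log_2|\mathcal{A}|$, the gap between these bounds is at most $\log_2|\mathcal{A}|\cdot ((n+c)^2-n^2)/(n+c)^2 = O(c/n)$, so taking $n = \Theta(cN)$ gives precision $1/N$.

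The main obstacle is computing $N_n(X)$ itself: for a general bidimensional SFT, deciding whether a locally admissible pattern is globally admissible is undecidable, so it is not obvious that $N_n(X)$ can be computed at all. Block-gluing resolves this: if $w$ bounds the diameter of the forbidden patterns of $X$, then a locally admissible pattern on $\llbracket 1,n\rrbracket^2$ is globally admissible iff it extends to a locally admissible pattern on $\llbracket 1-(c+w), n+(c+w)\rrbracket^2$, because such an extension can then be combined, via block-gluing, with any configuration living outside this enlarged square. This makes global admissibility a finite check, so $N_n(X)$ is obtained by enumerating at most $|\mathcal{A}|^{O(n^2)}$ candidate patterns and filtering them, yielding the claimed rate $2^{O(n^2)}$.
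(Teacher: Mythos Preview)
Your sandwich $\log_2 N_n(X)/(n+c)^2 \le h(X) \le \log_2 N_n(X)/n^2$ and its derivation via iterated gluing are correct, and this is exactly the estimate that drives \Cref{algo-computing-entropy} in the paper.

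The gap is in how you propose to compute $N_n(X)$. You claim that a locally admissible pattern on $\llbracket 1,n\rrbracket^2$ is globally admissible as soon as it extends to a locally admissible pattern on the $(c+w)$-enlarged square, ``because such an extension can then be combined, via block-gluing, with any configuration living outside this enlarged square''. This step does not go through. The block-gluing property, even in the half-plane form of \Cref{proposition.semi.plane}, glues two \emph{globally} admissible patterns; it says nothing about a merely \emph{locally} admissible block, and the complement of a square is neither a square nor a half-plane, so there is no gluing statement you can invoke. You therefore have no way to promote your locally admissible enlarged block to an element of $X$, and without that you have no algorithm for $N_n(X)$.

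The paper (following \cite{Pavlov-Schraudner}) closes this via \Cref{lemma.counting.block.gluing}: a $k\times k$ pattern is globally admissible iff it occurs inside some locally admissible rectangle of size $\bigl(|\mathcal A|^{2c+1}(c+k)+1\bigr)\times(2c+k+2)$ whose two extremal vertical edges coincide and whose two extremal horizontal edges coincide. The matching boundaries let one tile $\mathbb Z^2$ periodically with this rectangle, and a periodic locally admissible colouring is automatically a configuration of $X$, which certifies global admissibility; conversely, block-gluing together with a pigeonhole argument on strips shows that every globally admissible $k\times k$ block appears in such a periodic rectangle. Since for fixed $c$ and $|\mathcal A|$ this rectangle has area $O(k^2)$, enumerating its fillings costs $2^{O(k^2)}$, and plugging $k=\Theta(n)$ into the sandwich you derived yields the stated rate $2^{O(n^2)}$.
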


\begin{rk}
Let us note that in general the entropy of a bidimensional subshift of finite type is not computable at all (see Theorem 1.1 in~\cite{Hochman-Meyerovitch} and the existence of non-computable right-recursively-enumerable numbers). This motivates our detour by the block-gluing property, which is needed for \Cref{algo-computing-entropy}.
\end{rk}

\begin{deff}
Given an SFT, every pattern which does not contain any forbidden pattern is called \textbf{locally admissible}.
\end{deff}

Any globally-admissible pattern is locally admissible: it can be extended to a configuration of the subshift in which no forbidden patterns appear. However the contrary is not true: if for instance we define a one-dimensional SFT on alphabet $\{a,b\}$ by forbidding patterns $aa$, $bb$ and $aba$ then $a$ is locally admissible but not globally admissible since any word of $\ZZ^{\{a,b\}}$ extending it will necessarily contain a forbidden pattern.

The following lemma gives us an effective way to compute the entropy of a SFT. It relates the entropy to the locally-admissible patterns instead of using globally-admissible patterns the way it is defined (\Cref{def-entropy}). Indeed, it is possible to check that a pattern is locally admissible by verifying that no forbidden patterns appear. On the contrary, there is no automatic way to check that a given pattern is globally admissible: one way would be to find an extension of it to $\ZZ^2$, but this might not end in finite time.

\begin{lemma}[\cite{Pavlov-Schraudner}]
\label{lemma.counting.block.gluing}
Let $X$ be a $c$-block-gluing 
bidimensional subshift of finite type 
on alphabet $\mathcal{A}$. For all $k \ge 1$, 
the number $N_k (X)$ is equal to the number of $k \times k$
 patterns which appear in a $\left(|\mathcal{A}|^{2c+1} \cdot (c+k)+1 \right) \times (2c+k+2)$ locally-admissible rectangular pattern whose restrictions on the two extremal vertical (resp. 
 horizontal) edges are equal.
\end{lemma}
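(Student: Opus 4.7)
The proof proceeds by double inclusion between the two quantities. Write $W = |\mathcal{A}|^{2c+1}(c+k)+1$ and $H = 2c+k+2$, and call a $W \times H$ rectangle satisfying the conditions of the lemma \emph{admissible}.

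For the inclusion "$\geq$" (every $k \times k$ pattern appearing in an admissible rectangle is globally admissible), given such a locally-admissible $R$ with matching extremal vertical edges, I would glue infinitely many translates of $R$ along these edges to produce a horizontally-periodic, locally-admissible strip $S$ of height $H$ containing the given $k \times k$ pattern $p$. The key height choice $H = 2c+k+2$ ensures that above and below $p$ there are at least $c+1$ rows of the strip $S$. Viewing $S$ as a bi-infinite periodic sequence of columns under the 1D nearest-neighbour SFT induced by the forbidden patterns of $X$, local admissibility plus periodicity forces global admissibility of $S$. Then, using \Cref{proposition.semi.plane}, I would extend $S$ to a full configuration of $X$ by gluing the bottom $c+1$ rows of $S$ to any globally-admissible pattern on $\mathbb{Z} \times \mathbb{Z}_-$ separated from it by at least $c$, and similarly for the top: the fact that $p$ lies in the middle $k$ rows with buffers of size $c+1$ on each side is precisely what is needed for the block-gluing property to apply without disturbing $p$. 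The resulting configuration lies in $X$ and contains $p$, proving that $p$ is globally admissible.

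For the inclusion "$\leq$" (every globally-admissible $k \times k$ pattern $p$ appears in some admissible rectangle), let $x \in X$ be a configuration containing $p$. Extract from $x$ a long strip of height $H$ placed so that $p$ sits in the middle $k$ rows with $c+1$ rows of buffer above and below. I would then apply a pigeonhole argument to the columns of this strip: to each column I would assign a signature belonging to a set of size $|\mathcal{A}|^{2c+1}(c+k)$, obtained by recording a height-$(2c+1)$ window at one of $c+k$ carefully chosen vertical positions — chosen so that when two signatures agree, block-gluing can be used to identify the two columns without creating forbidden patterns in the rows containing $p$. Because there are $W$ consecutive columns in the strip and $W > |\mathcal{A}|^{2c+1}(c+k)$, two columns at coordinates $j_1 < j_2 \leq j_1 + W-1$ share the same signature; block-gluing then yields a configuration of $X$ in which the sub-rectangle between these two positions has identical extremal vertical edges, is locally admissible, and contains $p$. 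This sub-rectangle is precisely an admissible one in the sense above.

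The main obstacle is making the pigeonhole argument in the second direction yield exactly the bound $|\mathcal{A}|^{2c+1}(c+k)+1$: the quantity $2c+1$ is the natural gluing thickness associated with the block-gluing constant $c$, while the multiplicative factor $c+k$ reflects the set of vertical positions at which such a gluing window can be profitably placed relative to $p$. Encoding this combinatorial data into a column signature, and verifying that signature equality implies the existence of a modification staying inside $X$ and fixing the rows containing $p$, will be the main technical step. All other ingredients — local admissibility of the concatenated strip, identification of a locally-admissible periodic strip as part of a configuration of $X$, and vertical extension via \Cref{proposition.semi.plane} — follow once block-gluing is in hand.
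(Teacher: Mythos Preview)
The paper does not contain a proof of this lemma: it is stated as a cited result from~\cite{Pavlov-Schraudner} (their Theorem~1.4 and surrounding material), and is used only as a black box to justify \Cref{algo-computing-entropy}. There is therefore no proof in the paper to compare your attempt against.

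That said, a few remarks on your sketch. For the ``$\geq$'' direction you are working harder than necessary: the lemma imposes matching conditions on \emph{both} pairs of opposite edges, so a locally-admissible rectangle with these boundary identifications tiles $\mathbb{Z}^2$ periodically in both directions, and the resulting configuration, being locally admissible everywhere, lies in $X$ by definition of an SFT. No appeal to block gluing or to \Cref{proposition.semi.plane} is needed here; the only thing to check is that the forbidden patterns are small enough relative to the edge thickness that no new forbidden pattern appears across a seam. For the ``$\leq$'' direction your outline is in the right spirit (pigeonhole on column data, then glue), but as you yourself note, the precise definition of the ``signature'' and the verification that equal signatures allow a modification inside $X$ preserving $p$ is exactly the content of the Pavlov--Schraudner argument, and you have not carried it out. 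In particular, the factor $c+k$ does not arise from ``vertical positions of a window'' but from the number of columns that must be controlled to ensure the glued configuration remains in $X$ while still containing $p$; getting the exact constant requires tracking this carefully.
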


The algorithm is as follows:
\begin{center} \begin{algorithm}[H]
\SetAlgoLined
\SetKwData{Left}{left}
\SetKwData{This}{this}
\SetKwData{Up}{up}
\SetKwFunction{Union}{Union}
\SetKwFunction{FindCompress}{FindCompress}

\SetKwInput{Input}{Input}
\SetKwInOut{Output}{Output}
\Input{An integer $n$, an alphabet $\mathcal{A}$ and a set of patterns $\mathcal{F}$ of $\mathcal{A}^{\mathbb{U}}$ for 
some finite $\mathbb{U} \subset \mathbb{Z}^2$}
\Output{A rational approximation of $h(X)$ up to $1/n$, where $X$ is the SFT on alphabet $\mathcal{A}$ defined by the set of forbidden patterns $\mathcal{F}$}
 $k \leftarrow 0$\\
 $r \leftarrow +\infty$\\
 \While {$r \geq 1/2n$}
 {
 $k \leftarrow k + 1$\\
 $m \leftarrow N_k (X)$ (this is a sub-procedure 
 using \Cref{lemma.counting.block.gluing}).
 
 $r \leftarrow $ some rational approximation up to $1/2k$ of $\frac{\log_2 (N_k(X))}{k^2} - \frac{\log_2 (N_k(X))}{(k+c)^2}$
 } 
 
 Return a rational approximation up to $1/2n$ of $\log_2 (N_k (X))/k^2$ 
 \caption{Computing the entropy of a $c$-block-gluing bidimensional SFT.}
 \label{algo-computing-entropy}
\end{algorithm}
\end{center}

\subsection{Some dominating subshifts are block-gluing}

It is straightforward to check that 
the domination subshift $X^\mathrm{D}$ and the total domination subshift $X^\mathrm{T}$
satisfy the block-gluing property, 
with $c(X^\mathrm{D})=1$ (just fill every cell with grey). In 
this section, we prove that 
$X^\mathrm{M}$ and $X^\mathrm{MT}$ also satisfy this property. However, to show that this problem is not trivial, we first show that not all domination problems have the block-gluing property.

\begin{deff}
Let $\sigma$ and $\rho$ be two sets of integers (subsets of $\llbracket 0, 4 \rrbracket$ in the case of grids). $S$ is said to be \textbf{$\bm{(\sigma, \rho)}$-dominating} when \begin{enumerate}
\item the number of neighbours in $S$ of each vertex in $S$ belongs to $\sigma$;
\item the number of neighbours in $S$ of each vertex outside $S$ belongs to $\rho$.
\end{enumerate}
\end{deff}

For instance, the domination is the $(\{0,1,2,3,4\}, \{1,2,3,4\})$-domination and the total domination is the $(\{1,2,3,4\}, \{1,2,3,4\})$-domination.

\begin{prop}
The $(\{3\}, \{1\})$-domination subshift is not block-gluing.
\end{prop}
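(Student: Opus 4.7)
The plan is to exploit the extreme rigidity of the $(\{3\},\{1\})$-constraints: in any valid configuration, each vertex has exactly one neighbour of the opposite colour. I call this unique neighbour the \emph{outlet} of the vertex. I will first classify all valid configurations of the subshift, then use this classification to exhibit two globally-admissible patterns which cannot coexist in the same configuration, no matter how far apart they are placed.

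The classification step shows that every valid configuration is either a \emph{row-pair} configuration (adjacent pairs of rows of identical colour, consecutive pairs alternating between all grey and all white) or the symmetric \emph{column-pair} configuration. To see this, take a grey cell $v$ whose outlet is the cell directly above $v$: then the three other neighbours of $v$ are grey, and applying the $\{1\}$-rule to the white cell above $v$ forces its three other neighbours to be white. For any grey neighbour $w$ of $v$ in the same row, its two horizontal neighbours are already grey, so its outlet must be vertical; a short case analysis forbids the outlet of $w$ from being downward, since this would force the cell below $w$ to be white while still giving it a grey neighbour other than $w$ (namely the cell below $v$, which is grey as a non-outlet neighbour of $v$), violating the $\{1\}$-rule. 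Iterating, the entire row of $v$ is grey with upward outlet, the row above is white, and the argument propagates vertically to yield the row-pair structure. By symmetry, an outlet direction that is horizontal at some cell forces a column-pair configuration globally.

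From this classification I exhibit the two obstruction patterns. Let $p$ be three horizontally consecutive grey cells and $q$ be three vertically consecutive grey cells. Both are globally admissible, since $p$ appears in every row-pair configuration and $q$ appears in every column-pair configuration. Conversely, in any column-pair configuration a row contains at most two consecutive grey cells, so $p$ cannot be extended to a column-pair configuration; dually, $q$ cannot be extended to a row-pair one. Hence any valid extension of $p$ is row-pair, and any valid extension of $q$ is column-pair.

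Finally, for any $c \ge 0$, place $p$ and $q$ in $\mathbb{Z}^2$ at mutual distance at least $c$. A configuration of the $(\{3\},\{1\})$-domination subshift extending both patterns would have to be simultaneously row-pair and column-pair, which is impossible by the classification. Thus the subshift is not $c$-block-gluing for any $c$, which is exactly the desired conclusion. The main obstacle lies in the classification step: several careful local case analyses are required to ensure that outlet directions propagate consistently and that no mixed row/column configurations are possible.
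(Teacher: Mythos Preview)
Your approach is correct and genuinely different from the paper's. The paper does not attempt a classification; it simply exhibits a half-plane pattern $p$ (the column-pair configuration restricted to $\mathbb{Z}_{-}\times\mathbb{Z}$) together with its mirror image $q$, and argues column by column that $p$ forces every intermediate column, so that $p$ and $q$ cannot be glued across any gap of size $4k$. This is a direct obstruction via the half-plane criterion for block gluing. Your route instead proves that the subshift consists \emph{only} of row-pair and column-pair configurations, and then uses the tiny patterns $p$ and $q$ to separate the two families. This buys you more than the statement asks: the subshift has finitely many translation orbits, zero entropy, and is not even topologically transitive. The cost is the classification lemma, whose case analysis is the real work.

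One step needs repair. When you pass from $v$ to its horizontal grey neighbour $w$, you assert that ``its two horizontal neighbours are already grey''; at that moment only one of them, namely $v$, is known to be grey. The correct observation is that the cell \emph{above} $w$ is white: it is a neighbour of the white outlet of $v$, and that outlet's unique grey neighbour is $v$ itself. Now $w$ is grey, has $v$ grey on one side and a white cell above; if the outlet of $w$ were to the other side or downward, all remaining neighbours including the upper one would have to be grey, a contradiction. Hence the outlet of $w$ is upward, and in particular both horizontal neighbours of $w$ are grey, so the induction along the row proceeds. This actually shortens your argument: the separate case analysis you give to exclude a downward outlet becomes unnecessary. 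With this fix, the classification and the rest of your proof go through.
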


\begin{notation}
In the following, for all $j \in \mathbb{Z}$, 
we denote by $C_j$ the column $\{j\} \times \mathbb{Z}$ of $\mathbb{Z}^2$.
\end{notation}

\begin{proof}
We prove this result by providing a half-plane pattern which can be glued with itself only if the number of columns between them is of the form $4k+2$.

We use \Cref{proposition.semi.plane} so that instead of finite pattern we may use a half-plane pattern. We take the half-plane pattern $p$ whose rightmost two columns (i.e. $C_0$ and $C_{-1}$) are filled with grey (i.e. dominant elements), the two at their left are filled with white, the two at their left with grey, and so on: $C_j$ for $j \leq 0$ is grey if $-j \mod 4 \in \{0,1\}$ and is white otherwise (see \Cref{figure-not-block-gluing-3-1}). We take $q$ as the vertical symmetric of $p$, as in \Cref{figure-not-block-gluing-3-1}. Since in the column called $C_0$ in the figure, each vertex has exactly 3 neighbours in the dominating set, its neighbour in $C_1$ must be white. Every element in $C_1$ having a neighbour which dominates them, each neighbours of an element of $C_1$ must be white, hence $C_2$ is also white. Now the elements of $C_2$ are not dominated, so $C_3$ must be grey. Every element of $C_2$ now has only 2 grey neighbours, so that $C_4$ must be grey, and so on. $q$ forces every column $C_{4k-1}$ and $C_{4k}$ to be grey, and every $C_{4k+1}$ and $C_{4k+2}$ to be white. However, $q$ forces the exact opposite, which means that for any $k \geq 0$, $p$ and $q$ cannot be glued with a gap of size $4k$ for instance, hence the subshift is not block gluing.

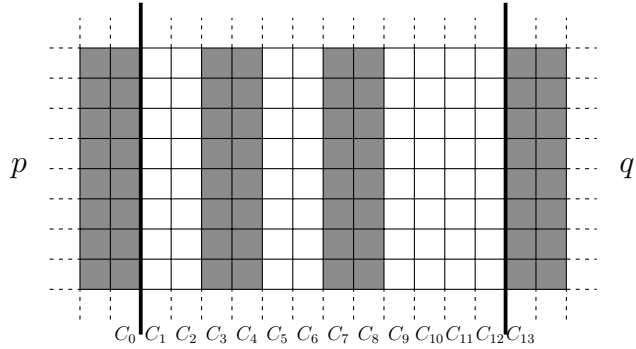
\begin{figure}[h!]
\centering
\begin{tikzpicture}[scale=0.4]
\fill[gray!90] (1,0) rectangle (3,8);
\fill[gray!90] (5,0) rectangle (7,8);
\fill[gray!90] (9,0) rectangle (11,8);
\fill[gray!90] (15,0) rectangle (17,8);

\foreach \x in {1,...,17} {\draw[mydashed] (\x, -1) -- (\x, 0);}
\foreach \x in {1,...,17} {\draw[mydashed] (\x, 8) -- (\x, 9);}

\foreach \x in {0,...,8} {\draw[mydashed] (0,\x) -- (1,\x);}
\foreach \x in {0,...,8} {\draw[mydashed] (17,\x) -- (18,\x);}
\draw (1,0) grid (17,8); 

\node at (-1,4) {$p$};
\node at (19,4) {$q$};

\foreach \x in {0,...,13} {\node[scale=0.625] at (2.5+\x,-1.5) {$C_{\x}$};}

\draw[line width =0.5mm] (3,-1.5) -- (3,9.5);
\draw[line width =0.5mm] (15,-1.5) -- (15,9.5);
\end{tikzpicture}
	
    \caption{The $\{\{3\},\{1\}\}$ domination is not block gluing. The half-plane pattern $p$ forces every $C_{4k-1}$ and $C_{4k}$ to be grey and the rest to be white, whereas $q$ forces the opposite. Hence they cannot be glued whenever $4k$ columns separate them.}
\label{figure-not-block-gluing-3-1}
\end{figure}
\end{proof}

\begin{rk}
\Cref{proposition.semi.plane} is very useful to avoid some technicalities, which we briefly mention here for the sake of insight. If we only considered finite patterns $p$ and $q$ as they are in \Cref{figure-not-block-gluing-3-1}, the proof would have been longer. Let us consider in this remark that $p$ and $q$ are patterns with 8 lines an 2 columns as in the figure. Then the top visible cell of $C_2$ could have been dominated by either its upper neighbour or the one in $C_3$. So we would have had to argue that this is not possible, or that it is not a problem: the sagging of the grey cells is linear in the size of the gap, so taking arbitrarily large (but finite) patterns will show that any gap of constant size would not work.
\label{rk-sagging}
\end{rk}

\begin{thm}
\label{minimal-gluing-th}
The minimal domination subshift is block gluing and $c(X^\mathrm{M})=5$.
\end{thm}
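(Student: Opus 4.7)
My plan is to prove the two inequalities $c(X^\mathrm{M}) \leq 5$ and $c(X^\mathrm{M}) \geq 5$ separately, and in each direction I would first apply \Cref{proposition.semi.plane} to reduce the block-gluing condition to the gluing of two globally-admissible half-plane patterns $p$ on $\mathbb{Z}_{-} \times \mathbb{Z}$ and $q$ on $(k,0) + \mathbb{Z}_{+} \times \mathbb{Z}$, separated by columns $C_1, \ldots, C_k$.

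For the upper bound $c(X^\mathrm{M}) \leq 5$, my plan is to adapt to the half-plane setting the layer-by-layer completion algorithm from the first part of the proof of \Cref{lemma.comparison.minimal}. Specifically, I would set a position of $C_1$ to grey exactly when its left neighbour in $C_0$ is neither dominant nor dominated within $p$, so that every cell of $C_0$ becomes dominated. The column $C_2$ would then be chosen to provide a private neighbour for every grey cell of $C_1$ and to supply a missing private neighbour to any grey cell of $C_0$ that requires one. Symmetric rules would define $C_{k-1}$ and $C_k$ from $q$. The remaining middle columns $C_3, \ldots, C_{k-2}$, which are non-empty as soon as $k \geq 5$, would then be filled with an explicit periodic minimal dominating configuration of $\mathbb{Z}^2$ (for instance the diagonal stripe pattern where $(i,j)$ is grey iff $i+j \equiv 0 \pmod{3}$), locally adjusted at its two interfaces to preserve the private neighbour property on both sides.

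For the lower bound $c(X^\mathrm{M}) \geq 5$, I would exhibit two half-plane patterns whose constraints propagate two columns into the gap from each side and cannot be reconciled when only four columns separate them. The plan is to choose $p$ so that its rightmost column $C_0$ simultaneously contains (i) a non-isolated dominant cell with no private neighbour within $p$, which forces its right neighbour in $C_1$ to be a white private neighbour and hence also forbids grey cells among its vertical neighbours in $C_1$ and its right neighbour in $C_2$, thus producing a \emph{white corridor} of width two, and (ii) a non-dominated cell whose right neighbour in $C_1$ must therefore be grey to dominate it. A symmetric design of $q$ would produce a mirror-image white corridor in $C_{k-1}$--$C_k$. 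When $k = 4$ these two corridors together cover the entire gap at the vertical level of the non-dominated cell of $C_0$, so no grey cell is available to dominate it, yielding the desired contradiction, while the construction is reconcilable for $k = 5$ by the upper bound strategy above.

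The main obstacles I anticipate are twofold. On the upper bound side, checking that the middle buffer can be seamlessly connected with the two edge completions in every possible combination of forced values at the interfaces is a finite but delicate case analysis, especially because the private neighbour requirement propagates constraints further than pure domination does. On the lower bound side, the intricate combinatorial step is to design $p$ and $q$ so that the cascade of forced values really requires exactly five columns to be absorbed: the two white corridors must collide at $k = 4$ in a genuinely unresolvable way, yet the construction must leave room for a compatible filling at $k = 5$, which is precisely what the upper bound provides.
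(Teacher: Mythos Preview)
Your overall two-inequality structure via \Cref{proposition.semi.plane} is the same as the paper's, but both halves diverge from the paper in ways worth noting.

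\textbf{Upper bound.} The paper does \emph{not} handle only two columns per side and then insert an external periodic pattern. Instead it applies a single greedy rule column by column, from $C_1$ up to $C_{k-3}$ on the left and from $C_k$ down to $C_{k-1}$ on the right: a cell $\textbf{u}\in C_j$ is made grey precisely when its left neighbour $\textbf{u}-(1,0)$ and the three other neighbours of that left neighbour (namely $\textbf{u}-(2,0)$, $\textbf{u}-(1,\pm 1)$) are all white. The point is that this rule is self-propagating: every grey cell it creates automatically has a private neighbour, namely the white undominated cell to its left that triggered it, so no separate ``private neighbour column'' like your $C_2$ is needed. Only one central column $C_{k-2}$ requires special treatment (a two-pass fix-up), and that is where the case analysis concentrates. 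Your plan to stop the forced construction after two columns and splice in the diagonal $\bmod\ 3$ pattern is where the gap lies: the state of your $C_2$ can be essentially arbitrary (e.g.\ when $C_0$ is entirely grey, $C_1$ and $C_2$ are forced entirely white and every cell of $C_2$ is undominated), and there is no reason the mod-$3$ stripe, even locally adjusted, should meet these constraints while still being minimal dominating at its own boundary. You correctly flag this interface as the obstacle, but the paper's cascading rule simply avoids it.

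\textbf{Lower bound.} Your plan is substantially more intricate than necessary. The paper takes $p$ with $C_0$ and $C_{-1}$ \emph{entirely} grey (period-4 vertical stripes), and $q$ its mirror. Then every cell of $C_0$ is a non-isolated dominant vertex whose only possible private neighbour lies in $C_1$; this forces $C_1$ all white, and to keep those private neighbours private forces $C_2$ all white as well. Symmetrically $C_3,C_4$ are all white, so with $k=4$ columns the central columns $C_2,C_3$ are undominated. No mixture of cell types in $C_0$ is needed; the uniform pattern gives the two-column white corridor on each side directly.
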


\noindent \textbf{Idea of the proof:}
\textit{In order to simplify the proof of the block-gluing property, we rely again on \Cref{proposition.semi.plane}. The proof of the block-gluing property for two half-plane patterns consists in determining successively how to fill the $k$ intermediate columns from the patterns towards the "centre" (chosen, for concision, to be column $C_{k-2}$). The completion follows an algorithm which enforces that, when the number of intermediate columns is large enough, any added dominant element is isolated or has a private neighbour in an already-constructed column. This ensures that the rules of the subshift are not broken. We now give the full proof.}

\begin{proof} \leavevmode
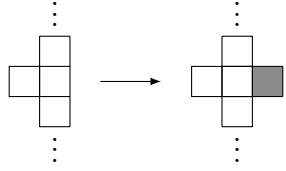
\begin{figure}[h!]
\centering
\begin{tikzpicture}[scale=0.4]
\begin{scope}
\draw (0,0) grid (1,3);
\draw (-1,1) rectangle (0,2);
\node at (0.5,-0.5) {$\vdots$};
\node at (0.5,4) {$\vdots$};

\draw[-latex] (2,1.5) -- (4,1.5);

\begin{scope}[xshift=6cm]
\node at (0.5,-0.5) {$\vdots$};
\node at (0.5,4) {$\vdots$};
\fill[gray!90] (1,1) rectangle (2,2);
\draw (0,0) grid (1,3);
\draw (-1,1) grid (2,2);
\end{scope}
\end{scope}

\end{tikzpicture}
\caption{Illustration of the rule for filling the non-central intermediate columns for $X^\mathrm{M}$. It is also applied symmetrically if the neighbour in $C_{k-1}$ is not dominated.} 
\label{figure.intermediate.columns}
\end{figure}

\begin{figure}[h!]
\centering
\begin{tikzpicture}[scale=0.3]

\fill[gray!90] (0,3) rectangle (1,8);
\fill[gray!90] (0,0) rectangle (1,1);
\fill[gray!90] (1,2) rectangle (2,3);

\fill[gray!90] (10,3) rectangle (11,8);
\fill[gray!90] (8,1) rectangle (10,2);

\foreach \x in {-1,...,2} {\draw[mydashed] (\x, -1) -- (\x, 0);}
\foreach \x in {-1,...,2} {\draw[mydashed] (\x, 8) -- (\x, 9);}
\foreach \x in {8,...,11} {\draw[mydashed] (\x, -1) -- (\x, 0);}
\foreach \x in {8,...,11} {\draw[mydashed] (\x, 8) -- (\x, 9);}

\foreach \y in {0,...,8} {\draw[mydashed] (-1,\y) -- (-2,\y);}
\foreach \y in {0,...,8} {\draw[mydashed] (11,\y) -- (12,\y);}

\draw (-1,0) grid (2,8);
\draw (8,0) grid (11,8);
\node at (-3,4) {$p$};
\node at (13,4) {$q$};
\node[scale=0.625] at (2.6,-2) {$C_1$};
\node[scale=0.625] at (1.6,-2) {$C_0$};
\node[scale=0.625] at (8.95,-2) {$C_{k+1}$};
\node[scale=0.625] at (7.7,-2) {$C_k$};
\node[scale=0.7] at (5,-2) {$\hdots$};

\draw[line width =0.5mm] (2,-1.5) -- (2,9.5);
\draw[line width =0.5mm] (8,-1.5) -- (8,9.5);

\node[scale=1.25] at (5,-3.5) {$(0)$};

\begin{scope}[xshift=20cm]

\fill[gray!90] (0,3) rectangle (1,8);
\fill[gray!90] (0,0) rectangle (1,1);
\fill[gray!90] (1,2) rectangle (2,3);
\fill[gray!90] (10,3) rectangle (11,8);
\fill[gray!90] (8,1) rectangle (10,2);

\fill[gray!90] (3,3) rectangle (4,7);
\fill[gray!90] (3,1) rectangle (4,2);
\fill[gray!90] (7,3) rectangle (8,7);

\foreach \x in {-1,...,2} {\draw[mydashed] (\x, -1) -- (\x, 0);}
\foreach \x in {-1,...,2} {\draw[mydashed] (\x, 8) -- (\x, 9);}
\foreach \x in {8,...,11} {\draw[mydashed] (\x, -1) -- (\x, 0);}
\foreach \x in {8,...,11} {\draw[mydashed] (\x, 8) -- (\x, 9);}

\foreach \x in {3,4,5,6,7} {\draw[mydashed] (\x, -1) -- (\x, 0);}
\foreach \x in {3,4,5,6,7} {\draw[mydashed] (\x, 8) -- (\x, 9);}

\foreach \y in {0,...,8} {\draw[mydashed] (-1,\y) -- (-2,\y);}
\foreach \y in {0,...,8} {\draw[mydashed] (11,\y) -- (12,\y);}
\draw (-1,0) grid (2,8); 
\draw (8,0) grid (11,8);
\draw (2,0) grid (5,8);
\draw (6,0) grid (8,8);

\draw[white, line width=0.5mm] (3.03,0) -- (3.96,0); 
\draw[white, line width=0.5mm] (3.03,8) -- (3.96,8); 
\draw[white, line width=0.5mm] (5.03,8) -- (5.96,8); 
\draw[white, line width=0.5mm] (5.03,0) -- (6.96,0); 
\draw[white, line width=0.5mm] (6,-1) -- (6,0.95);

\draw[mydashed] (6, -1) -- (6, 1);

\draw[line width =0.5mm] (2,-1.5) -- (2,9.5);
\draw[line width =0.5mm] (8,-1.5) -- (8,9.5);

\node[scale=0.7] at (6,-2) {$C_{k-2}$};
\node[scale=1.25] at (5,-3.5) {$(1)$};

\end{scope}

\begin{scope}[yshift=-15cm]

\fill[gray!90] (0,3) rectangle (1,8);
\fill[gray!90] (0,0) rectangle (1,1);
\fill[gray!90] (1,2) rectangle (2,3);
\fill[gray!90] (10,3) rectangle (11,8);
\fill[gray!90] (8,1) rectangle (10,2);

\fill[gray!90] (3,3) rectangle (4,7);
\fill[gray!90] (3,1) rectangle (4,2);
\fill[gray!90] (7,3) rectangle (8,7);

\fill[gray!90] (5,1) rectangle (6,3);

\foreach \x in {-1,...,2} {\draw[mydashed] (\x, -1) -- (\x, 0);}
\foreach \x in {-1,...,2} {\draw[mydashed] (\x, 8) -- (\x, 9);}
\foreach \x in {8,...,11} {\draw[mydashed] (\x, -1) -- (\x, 0);}
\foreach \x in {8,...,11} {\draw[mydashed] (\x, 8) -- (\x, 9);}

\foreach \x in {3,4,5,6,7} {\draw[mydashed] (\x, -1) -- (\x, 0);}
\foreach \x in {3,4,5,6,7} {\draw[mydashed] (\x, 8) -- (\x, 9);}

\foreach \y in {0,...,8} {\draw[mydashed] (-1,\y) -- (-2,\y);}
\foreach \y in {0,...,8} {\draw[mydashed] (11,\y) -- (12,\y);}
\draw (-1,0) grid (2,8); 
\draw (8,0) grid (11,8);
\draw (2,0) grid (8,8);

\draw[white, line width=0.5mm] (3.03,0) -- (3.96,0); 
\draw[white, line width=0.5mm] (3.03,8) -- (3.96,8); 
\draw[white, line width=0.5mm] (5.03,8) -- (5.96,8); 
\draw[white, line width=0.5mm] (5.03,0) -- (6.96,0); 
\draw[white, line width=0.5mm] (6,-1) -- (6,0.95);

\draw[mydashed] (6,-1) -- (6,1);
\draw[line width =0.5mm] (2,-1.5) -- (2,9.5);
\draw[line width =0.5mm] (8,-1.5) -- (8,9.5);

\node[scale=1.25] at (5,-3) {$(2)$};
\end{scope}

\begin{scope}[yshift=-15cm,xshift=20cm]

\fill[gray!90] (0,3) rectangle (1,8);
\fill[gray!90] (0,0) rectangle (1,1);
\fill[gray!90] (1,2) rectangle (2,3);
\fill[gray!90] (10,3) rectangle (11,8);
\fill[gray!90] (8,1) rectangle (10,2);

\fill[gray!90] (3,3) rectangle (4,7);
\fill[gray!90] (3,1) rectangle (4,2);

\fill[gray!90] (7,3) rectangle (8,7);

\fill[gray!90] (5,1) rectangle (6,3);
\fill[gray!90] (5,4) rectangle (6,5);
\fill[gray!90] (5,6) rectangle (6,7);

\foreach \x in {-1,...,2} {\draw[mydashed] (\x, -1) -- (\x, 0);}
\foreach \x in {-1,...,2} {\draw[mydashed] (\x, 8) -- (\x, 9);}
\foreach \x in {8,...,11} {\draw[mydashed] (\x, -1) -- (\x, 0);}
\foreach \x in {8,...,11} {\draw[mydashed] (\x, 8) -- (\x, 9);}

\foreach \x in {3,4,5,6,7} {\draw[mydashed] (\x, -1) -- (\x, 0);}
\foreach \x in {3,4,5,6,7} {\draw[mydashed] (\x, 8) -- (\x, 9);}

\foreach \y in {0,...,8} {\draw[mydashed] (-1,\y) -- (-2,\y);}
\foreach \y in {0,...,8} {\draw[mydashed] (11,\y) -- (12,\y);}
\draw (-1,0) grid (2,8); 
\draw (8,0) grid (11,8);
\draw (2,0) grid (8,8);

\draw[white, line width=0.5mm] (3.03,0) -- (3.96,0); 
\draw[white, line width=0.5mm] (3.03,8) -- (3.96,8); 
\draw[white, line width=0.5mm] (5.03,8) -- (5.96,8); 
\draw[white, line width=0.5mm] (5.03,0) -- (6.96,0); 
\draw[white, line width=0.5mm] (6,-1) -- (6,0.95);

\draw[mydashed] (6,-1) -- (6,1);

\draw[line width =0.5mm] (2,-1.5) -- (2,9.5);
\draw[line width =0.5mm] (8,-1.5) -- (8,9.5);

\node[scale=1.25] at (5,-3) {$(3)$};
\end{scope}
\end{tikzpicture}

\caption{Illustration of the algorithm
filling the intermediate columns between two half-plane patterns $p$ and $q$ for the minimal domination.\\
(0) Initial setting of the two patterns.\\
(i) After Step i of the algorithm\\
Since we show only finite sub-parts of $p$ and $q$, the values of some cells in between remain unknown. We left them 
non filled and remove one of their boundary. We chose $k=6$ for the illustration, still the proof works from $k=5$ upwards.}
\label{figure.completing.algorithm}
\end{figure}

\begin{itemize}
\item \textbf{Filling the intermediate columns 
between two half-plane patterns.}

Let $p$ and $q$ be two patterns respectively on 
$\mathbb{Z}_{-} \times \mathbb{Z}$ and 
$\mathbb{Z}_{+} \times \mathbb{Z}$ (the 
proof for the vertical case is similar). Let 
us determine a configuration of $\mathcal{A}_0^{\mathbb{Z}^2}$ such that $x_{|\mathbb{Z}_{-} \times \mathbb{Z}}=p$ and $x_{|(k,0)+\mathbb{Z}_{+} \times \mathbb{Z}}=q$. 
The intermediate 
columns $C_1,...,C_k$ are determined by the following algorithm. It intuitively puts grey cells only when it is really necessary to dominate the neighbour in the previous column:

\begin{enumerate}
\vspace{-0,1cm}
\setlength{\parskip}{0pt}
\setlength{\itemsep}{3pt}
\item \textbf{Filling the intermediate columns, from $C_1$ to $C_{k-3}$, then $C_k,C_{k-1}$.} 

Successively, for all $j$ 
from $1$ to $k-3$, we determine the column 
$C_j$ according to the following rule: 
for all $\textbf{u} \in C_j$, $x_{\textbf{u}}$ 
is $\begin{tikzpicture}[scale=0.3,baseline=0.4mm]
\fill[gray!90] (0,0) rectangle (1,1);
\draw (0,0) rectangle (1,1);
\end{tikzpicture}$ when $x_{\textbf{u}-(1,0)}$, 
$x_{\textbf{u}-(1,1)}$, $x_{\textbf{u}-(1,-1)}$ 
and $x_{\textbf{u}-(2,0)}$ are $\begin{tikzpicture}
[scale=0.3,baseline=0.4mm]
\draw (0,0) rectangle (1,1);
\end{tikzpicture}$ (see \Cref{figure.intermediate.columns}). Else, $x_{\textbf{u}}$ is set to $\begin{tikzpicture}[scale=0.3,baseline=0.4mm]
\draw (0,0) rectangle (1,1);
\end{tikzpicture}$.
Similarly, for $j=k$ and then $j=k-1$, 
we determine $x$ on any position $x_{\textbf{u}}$ 
for $\textbf{u} \in C_j$ by applying 
a symmetrical rule: $x_{\textbf{u}}$ 
is $\begin{tikzpicture}[scale=0.3,baseline=0.4mm]
\fill[gray!90] (0,0) rectangle (1,1);
\draw (0,0) rectangle (1,1);
\end{tikzpicture}$ when $x_{\textbf{u}+(1,0)}$, 
$x_{\textbf{u}+(1,1)}$, $x_{\textbf{u}+(1,-1)}$ 
and $x_{\textbf{u}+(2,0)}$ are
 $\begin{tikzpicture}[scale=0.3,baseline=0.4mm]
\draw (0,0) rectangle (1,1);
\end{tikzpicture}$. Else, $x_{\textbf{u}}$ is set to
$\begin{tikzpicture}[scale=0.3,baseline=0.4mm]
\draw (0,0) rectangle (1,1);
\end{tikzpicture}$.

\item \textbf{The central column $C_{k-2}$.}

We now 
determine $x$ on the central column $C_{k-2}$. For all $\textbf{u} \in C_{k-2}$, $x_{\textbf{u}}$ 
is $\begin{tikzpicture}[scale=0.3,baseline=0.4mm]
\fill[gray!90] (0,0) rectangle (1,1);
\draw (0,0) rectangle (1,1);

\end{tikzpicture}$ when $x_{\textbf{u}+(1,0)}$, 
$x_{\textbf{u}+(1,1)}$, $x_{\textbf{u}+(1,-1)}$ 
and $x_{\textbf{u}+(2,0)}$ are equal to 
$\begin{tikzpicture}[scale=0.3,baseline=0.4mm]
\draw (0,0) rectangle (1,1);
\end{tikzpicture}$,
or when $x_{\textbf{u}-(1,0)}$, 
$x_{\textbf{u}-(1,1)}$, $x_{\textbf{u}-(1,-1)}$ 
and $x_{\textbf{u}-(2,0)}$ are equal to 
$\begin{tikzpicture}[scale=0.3,baseline=0.4mm]
\draw (0,0) rectangle (1,1);
\end{tikzpicture}$. Else, it is 
$\begin{tikzpicture}[scale=0.3,baseline=0.4mm]
\draw (0,0) rectangle (1,1);
\end{tikzpicture}$.

\item \textbf{Eliminating non-domination errors 
in the central column.}
Choose any position $\bm{u_0} \in C_{k-2}$ and check if 
this position has a symbol $\begin{tikzpicture}[scale=0.3,baseline=0.4mm]
\fill[gray!90] (0,0) rectangle (1,1);
\draw (0,0) rectangle (1,1);
\end{tikzpicture}$ in its neighbourhood. 
If not, then set the symbol $\begin{tikzpicture}[scale=0.3,baseline=0.4mm]
\fill[gray!90] (0,0) rectangle (1,1);
\draw (0,0) rectangle (1,1);
\end{tikzpicture}$ on this position. Repeat this from $\bm{u_0}+(0,1)$ upwards, and in parallel from $\bm{u_0}-(0,1)$ downwards.

\end{enumerate}

See an illustration of this algorithm on 
\Cref{figure.completing.algorithm}.

\item \textbf{The obtained configuration 
is in $X^\mathrm{M}$.}

We have to check that the configuration $x$ we constructed satisfies the local 
rules of the minimal domination subshift. We divide this part of the proof according to whether or not the cells belong to $p$ or $q$, or lie outside. For concision, this division is approximate: columns $C_{-1}$ and $C_{k+2}$ are checked in the second and third point instead of the first one.

\begin{enumerate}

\item \textbf{The local rules are satisfied in the "interior" of the half-planes.}

By hypothesis, the patterns $p$ and $q$ 
are globally admissible in $X^\mathrm{M}$. As a consequence, for all $\textbf{u}$ 
in $\rrbracket - \infty,-2\rrbracket \times \mathbb{Z}$ or $\llbracket k+3,+\infty\llbracket \times \mathbb{Z}$, the pattern $x_{|\textbf{u}+\llbracket - 2,2 \rrbracket^2}$ is not forbidden.
It remains to check that no forbidden patterns are created through the execution of the algorithm described in the first point of the proof, for columns $C_{-1}$ to $C_{k+2}$.

\item \textbf{Every position in $C_{-1}, \cdots, C_{k+2}$ and not in $S$ is dominated.}

In the columns $C_{-1}$ and $C_{k+2}$, this 
comes from the fact that the patterns $p$ and $q$ 
are globally admissible. For $j$ between $0$ and $k-3$, 
and $\textbf{u} \in C_j$, if 
$\textbf{u}$ is not dominated by a position 
in $C_j$ or $C_{j-1}$
the position $\textbf{u}+(1,0)$ contains the symbol $\begin{tikzpicture}[scale=0.3,baseline=0.4mm]
\fill[gray!90] (0,0) rectangle (1,1);
\draw (0,0) rectangle (1,1);
\end{tikzpicture}$ (by the first and second steps of the algorithm), and thus $\textbf{u}$ 
is dominated. A symmetrical reasoning 
works for the positions in the columns $C_{k+1},C_{k},C_{k-1}$. For a position 
in the central column $C_{k-2}$, this is guaranteed by Step 3.

\item \textbf{Every dominant position in $C_{-1}, \cdots, C_{k+2}$ is isolated or has a private neighbour not in $S$.}

Let us consider a non-isolated dominant position $\textbf{u}$.
\begin{enumerate}
\item If it lies in $C_{-1}$ (resp. $C_{k+2}$), 
$\textbf{u}$ has a private neighbour
in a configuration $x \in X$ 
that extends $p$ (resp. $q$). 
If this private neighbour is in fact in $p$ (resp. $q$), in column $C_{-2}$ 
or $C_{-1}$ (resp. $C_{k+2}$ or $C_{k+3}$), then 
it stays a private neighbour of $\textbf{u}$ in $x$ since the intermediate columns cannot make it not private. If it is $\textbf{u} + (1,0)$ (resp. $\textbf{u} - (1,0)$), then it stays 
a private neighbour in $x$: since this position 
is dominated by $\textbf{u}$ according to the first step  (resp. second step)
of the algorithm, it 
is not dominated in $x$ by a position in $C_0$ 
(resp. $C_{k+1}$). The same reasoning is applied
to positions in columns $C_0$ and $C_{k+1}$.
\item In the other columns $C_j$ for $j < k-2$,
the first step guarantees, for any position $\textbf{u}$ 
in $C_j$ that is dominant, that the position 
$\textbf{u}-(1,0)$ is a private neighbour. 
A symmetric reasoning applies to column $C_{k}$ 
and $C_{k-1}: \textbf{u}+(1,0)$ is a private neighbour in that case.
\item If $\textbf{u}$ is in 
the column $C_{k-2}$, it means that it 
was introduced in either the second or the third step, 
meaning that it has a private neighbour in 
        column $C_{k-3}$ or $C_{k-1}$ (if introduced in the second step), or in $C_{k-2}$ (if introduced in the third step).
\end{enumerate}
\end{enumerate}

\item \textbf{The subshift $X^\mathrm{M}$ is not $4$-block-gluing.}

We consider the two half-plane patterns $p$ and $q$ on respective supports 
$\mathbb{Z}_{-} \times \mathbb{Z}$ and 
$\mathbb{Z}_{+} \times \mathbb{Z}$ such that
for all $j \le 0$, if $-j \equiv 0,1 \mod 4$,
then for all $\textbf{u} \in C_j$, $p_{\textbf{u}}$ is $\begin{tikzpicture}[scale=0.3,baseline=0.4mm]
\fill[gray!90] (0,0) rectangle (1,1);
\draw (0,0) rectangle (1,1);
\end{tikzpicture}$, else for all $\textbf{u} \in C_j$, it is $\begin{tikzpicture}[scale=0.3,baseline=0.4mm]
\draw (0,0) rectangle (1,1);
\end{tikzpicture}$, and $q$ is obtained from $p$ 
by vertical symmetry. It is easy to see that these patterns are globally admissible. We leave 4 columns between $p$ and $q$ (see \Cref{minimal-4-gluing}).
To ensure that the dominant positions in columns 0 and 5, which are not isolated, have private neighbours, every cell of the middle four columns needs to be $\begin{tikzpicture}[scale=0.3,baseline=0.4mm]
\draw (0,0) rectangle (1,1);
\end{tikzpicture}$, as in \Cref{minimal-4-gluing}. This filling implies that the cells in column 2 and 3, which are not dominant, are also not dominated. This shows that the subshift is not 4-block-gluing.

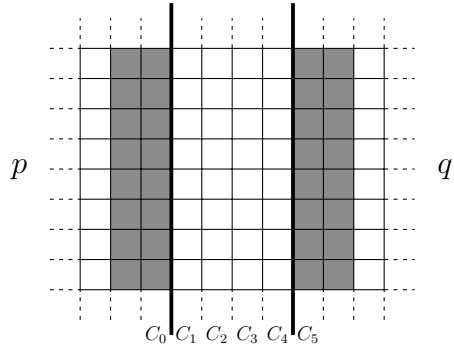
\begin{figure}[h!]
\centering
\begin{tikzpicture}[scale=0.4]
\fill[gray!90] (1,0) rectangle (3,8);
\fill[gray!90] (7,0) rectangle (9,8);

\foreach \x in {0,...,2} {\draw[mydashed] (\x, -1) -- (\x, 0);}
\foreach \x in {0,...,2} {\draw[mydashed] (\x, 8) -- (\x, 9);}
\foreach \x in {8,...,10} {\draw[mydashed] (\x, -1) -- (\x, 0);}
\foreach \x in {8,...,10} {\draw[mydashed] (\x, 8) -- (\x, 9);}

\foreach \x in {0,...,8} {\draw[mydashed] (-1,\x) -- (0,\x);}
\foreach \x in {0,...,8} {\draw[mydashed] (10,\x) -- (11,\x);}
\draw (0,0) grid (10,8); 

\node at (-2,4) {$p$};
\node at (12,4) {$q$};
\foreach \x in {4,...,6} {\draw[mydashed] (\x, -1) -- (\x, 0);}
\foreach \x in {4,...,6} {\draw[mydashed] (\x, 8) -- (\x, 9);}

\node[scale=0.625] at (2.5,-1.5) {$C_0$};
\node[scale=0.625] at (3.5,-1.5) {$C_1$};
\node[scale=0.625] at (4.5,-1.5) {$C_2$};
\node[scale=0.625] at (5.5,-1.5) {$C_3$};
\node[scale=0.625] at (6.5,-1.5) {$C_4$};
\node[scale=0.625] at (7.5,-1.5) {$C_5$};

\draw[line width =0.5mm] (3,-1.5) -- (3,9.5);
\draw[line width =0.5mm] (7,-1.5) -- (7,9.5);
\end{tikzpicture}
	
\caption{Illustration of the fact that $X^\mathrm{M}$
is not $4$-block-gluing: when attempting to glue $p$ and $q$, ensuring the existence of private neighbours according to the rules of $X^\mathrm{M}$ 
forces the presence of undominated positions 
in columns $C_2$ and $C_3$. It is also a counter-example for $X^\mathrm{MT}$ being 4-block-gluing.}
\label{minimal-4-gluing}
\end{figure}
As a consequence, $c(X^\mathrm{M}) > 4$. Since it is $5$-block-gluing, $c(X^\mathrm{M}) = 5$.
\end{itemize}
\end{proof}

\begin{thm}
The minimal total domination subshift is block gluing and $c(X^\mathrm{MT}) = 5$.
\label{total-minimum-gluing-th}
\end{thm}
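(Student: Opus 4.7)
The plan is to mimic the structure of the proof of \Cref{minimal-gluing-th}, adapting both the completion algorithm and the lower-bound example to the stricter constraints of $X^{\mathrm{MT}}$ (no isolated dominant positions are allowed, and every dominant position must have a private neighbour). By \Cref{proposition.semi.plane}, it is enough to exhibit, for each pair of globally-admissible half-plane patterns $p$ on $\mathbb{Z}_{-}\times \mathbb{Z}$ and $q$ on $(k,0)+\mathbb{Z}_{+}\times\mathbb{Z}$ with $k\geq 5$, a configuration of $X^{\mathrm{MT}}$ that extends both. The symmetric case for horizontally separated half-planes will follow verbatim.

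First I would design a filling algorithm on the intermediate columns $C_1,\ldots,C_k$ that is almost identical to the one used for $X^{\mathrm{M}}$: for $j=1,\ldots,k-3$ set $x_{\textbf{u}}$ to grey on $C_j$ exactly when the four positions $\textbf{u}-(1,0)$, $\textbf{u}-(1,\pm 1)$, $\textbf{u}-(2,0)$ are all white, then proceed symmetrically for $j=k$ and $j=k-1$, and finally fill the central column $C_{k-2}$ from both sides. The point of this rule is the same as before: every grey cell placed is forced by the need to dominate an (otherwise undominated) cell in a previously-constructed column, so it automatically acquires a private neighbour there. Then I would run a corrective pass on $C_{k-2}$ to turn any still-undominated position grey; the geometric constraint $k\geq 5$ guarantees that such an added grey cell is far enough from any previously placed grey cell to avoid destroying its private neighbour.

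The verification then splits into three points analogous to the $X^{\mathrm{M}}$ proof. Inside $p$ and $q$ the local rules are unaffected. For domination, the argument is literally the same: every non-grey position in the intermediate columns is either dominated by the column to its right/left by construction, or is fixed by Step~3 on $C_{k-2}$. The main obstacle, and the place where total-minimality bites, is the private-neighbour condition for every grey cell of the constructed strip; in particular, no grey cell may be isolated. For grey cells belonging to $p$, $q$, or their immediate borders $C_{-1}, C_0, C_{k+1}, C_{k+2}$, one checks exactly as in \Cref{minimal-gluing-th} that the private neighbour inherited from a global extension of $p$ (or $q$) is preserved, using the fact that our filling rule only places grey in $C_j$ when its left (or right) neighbour in the adjacent column is undominated. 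For a grey cell placed in some $C_j$ with $1\leq j\leq k-3$, the cell $\textbf{u}-(1,0)$ is then a private neighbour and, crucially, it is also the neighbour which provides the required non-isolation. A symmetric statement holds for $C_{k-1}, C_k$. Finally, a grey cell introduced on the central column $C_{k-2}$ by Step~2 has a private neighbour on the side from which it was forced, and one added by Step~3 takes the undominated cell that triggered its introduction as a private neighbour; in both cases the existence of this neighbour ensures non-isolation.

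For the lower bound $c(X^{\mathrm{MT}})>4$, I would reuse the example from \Cref{minimal-4-gluing}: the two half-plane patterns consisting of vertical stripes of width $2$ alternating grey and white, shifted by two columns between $p$ and $q$. With only four intermediate columns, each grey cell of $C_0$ and $C_5$ needs a private neighbour, which forces $C_1$ and $C_4$ to be entirely white, hence $C_2$ and $C_3$ to be entirely white as well; but then the cells of $C_2$ and $C_3$ are not dominated, contradicting even the domination axiom of $X^{\mathrm{MT}}$. Combined with the $5$-block-gluing established above, this yields $c(X^{\mathrm{MT}})=5$.
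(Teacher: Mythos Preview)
There is a genuine gap. Your filling rule is the one from the $X^{\mathrm{M}}$ proof, which places a grey cell at $\textbf{u}\in C_j$ only when all four of $\textbf{u}-(1,0)$, $\textbf{u}-(1,\pm 1)$, $\textbf{u}-(2,0)$ are white. In particular the left neighbour $\textbf{u}-(1,0)$ is \emph{white}. You then write that this cell ``is also the neighbour which provides the required non-isolation'', but that is false: non-isolation in $X^{\mathrm{MT}}$ means that $\textbf{u}$ has a \emph{grey} neighbour, and a white private neighbour does not supply that. Concretely, take $p$ with $(0,i\pm 2)$ grey, $(0,i\pm 1)$, $(0,i)$, $(-1,i)$ white (this is globally admissible in $X^{\mathrm{MT}}$). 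Your rule makes $(1,i)$ grey and $(1,i\pm 1)$ white; then $(2,i)$ is forced white because $(1,i)$ is grey. The cell $(1,i)$ is now an isolated dominant position, violating total domination, and nothing in your later steps repairs it since Step~3 only touches the central column.

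The paper avoids this precisely by \emph{changing} the rule: it drops the requirement on $\textbf{u}-(1,0)$, so $\textbf{u}$ becomes grey whenever $\textbf{u}-(1,\pm 1)$ and $\textbf{u}-(2,0)$ are white, regardless of $\textbf{u}-(1,0)$. In the example above this makes $(2,i)$ grey as well, so $(1,i)$ and $(2,i)$ dominate each other and each has a private neighbour in the previous column. The same issue reappears in your Step~3: turning an undominated cell of $C_{k-2}$ grey can leave \emph{that} new grey cell isolated. The paper instead turns a \emph{neighbour} of the undominated cell grey (specifically $\textbf{u}\pm(0,1)$), so the new grey cell is automatically adjacent to the undominated one and the latter becomes its private neighbour. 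Your lower-bound argument with \Cref{minimal-4-gluing} is fine and matches the paper.
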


\noindent \textbf{Idea of the proof:}
\textit{We follow the same 
scheme as in the proof of \Cref{minimal-gluing-th}, 
except that we have to take into account 
the variations in the definition of the 
subshift $X^\mathrm{MT}$. For the sake 
of readability, we reproduce the structure 
of the proof. We now give the proof.}

\begin{proof}
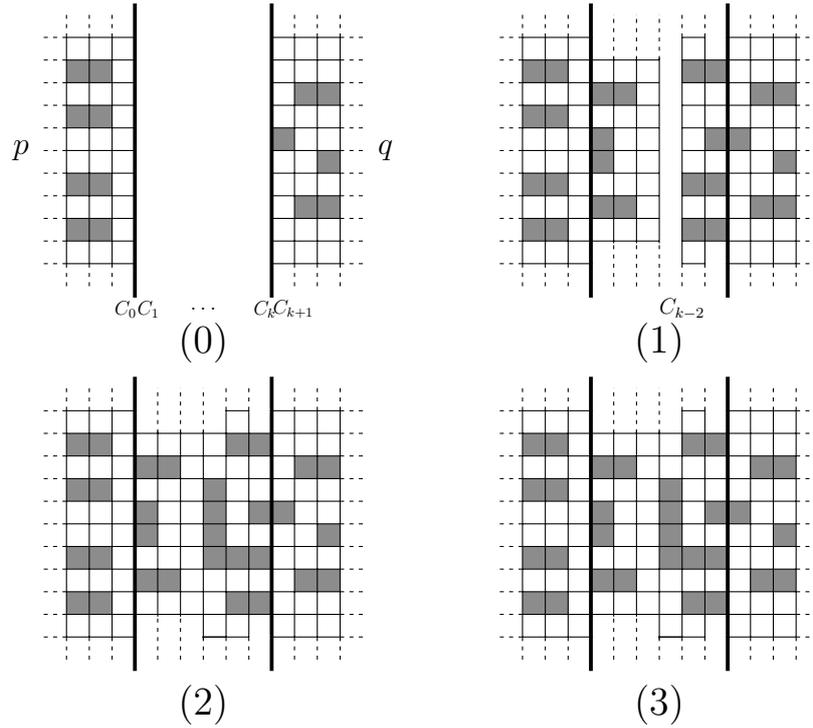
\begin{figure}[h!]

\centering
\begin{tikzpicture}[scale=0.3]

\fill[gray!90] (-1,2) rectangle (1,3);
\fill[gray!90] (-1,0) rectangle (1,1);
\fill[gray!90] (-1,5) rectangle (1,6);
\fill[gray!90] (-1,7) rectangle (1,8);
\fill[gray!90] (8,4) rectangle (9,5);
\fill[gray!90] (9,1) rectangle (11,2);
\fill[gray!90] (10,3) rectangle (11,4);
\fill[gray!90] (9,6) rectangle (11,7);

\foreach \x in {-1,...,2} {\draw[mydashed] (\x, -1) -- (\x, -2);}
\foreach \x in {-1,...,2} {\draw[mydashed] (\x, 9) -- (\x, 10);}
\foreach \x in {8,...,11} {\draw[mydashed] (\x, -1) -- (\x, -2);}
\foreach \x in {8,...,11} {\draw[mydashed] (\x, 9) -- (\x, 10);}

\foreach \y in {-1,...,9} {\draw[mydashed] (-1,\y) -- (-2,\y);}
\foreach \y in {-1,...,9} {\draw[mydashed] (11,\y) -- (12,\y);}

\draw (-1,-1) grid (2,9);
\draw (8,-1) grid (11,9);
\node at (-3,4) {$p$};
\node at (13,4) {$q$};

\node[scale=0.625] at (2.6,-3) {$C_1$};
\node[scale=0.625] at (1.6,-3) {$C_0$};
\node[scale=0.625] at (8.95,-3) {$C_{k+1}$};
\node[scale=0.625] at (7.7,-3) {$C_k$};
\node[scale=0.7] at (5,-3) {$\hdots$};

\draw[line width =0.5mm] (2,-2.5) -- (2,10.5);
\draw[line width =0.5mm] (8,-2.5) -- (8,10.5);

\node[scale=1.25] at (5,-4.5) {$(0)$};

\begin{scope}[xshift=20cm]

\fill[gray!90] (-1,2) rectangle (1,3);
\fill[gray!90] (-1,0) rectangle (1,1);
\fill[gray!90] (-1,5) rectangle (1,6);
\fill[gray!90] (-1,7) rectangle (1,8);
\fill[gray!90] (8,4) rectangle (9,5);
\fill[gray!90] (9,1) rectangle (11,2);
\fill[gray!90] (10,3) rectangle (11,4);
\fill[gray!90] (9,6) rectangle (11,7);

\fill[gray!90] (2,1) rectangle (4,2);
\fill[gray!90] (2,3) rectangle (3,5);
\fill[gray!90] (2,6) rectangle (4,7);

\fill[gray!90] (7,4) rectangle (8,5);
\fill[gray!90] (6,0) rectangle (8,1);
\fill[gray!90] (6,2) rectangle (8,3);
\fill[gray!90] (6,7) rectangle (8,8);

\foreach \x in {-1,...,2} {\draw[mydashed] (\x, -1) -- (\x,-2);}
\foreach \x in {-1,...,2} {\draw[mydashed] (\x, 9) -- (\x, 10);}
\foreach \x in {8,...,11} {\draw[mydashed] (\x, -1) -- (\x, -2);}
\foreach \x in {8,...,11} {\draw[mydashed] (\x, 9) -- (\x, 10);}

\foreach \x in {3,4,5,6,7} {\draw[mydashed] (\x, -1) -- (\x, -2);}
\foreach \x in {3,4,5,6,7} {\draw[mydashed] (\x, 9) -- (\x, 10);}

\foreach \y in {-1,...,9} {\draw[mydashed] (-1,\y) -- (-2,\y);}
\foreach \y in {-1,...,9} {\draw[mydashed] (11,\y) -- (12,\y);}
\draw (-1,-1) grid (5,9);
\draw (6,-1) grid (11,9);

\draw[line width =0.5mm] (2,-2.5) -- (2,10.5);
\draw[line width =0.5mm] (8,-2.5) -- (8,10.5);

\node[scale=0.7] at (6,-3) {$C_{k-2}$};
\node[scale=1.25] at (5,-4.5) {$(1)$};

\draw[white, line width=0.5mm] (7.03,-1) -- (7.91,-1);
\draw[white, line width=0.5mm] (2.09,-1) -- (5,-1);
\draw[white, line width=0.5mm] (3,-2) -- (3,-0.1); 
\draw[white, line width=0.5mm] (4,-2) -- (4,-0.1);
\draw[white, line width=0.5mm] (5,-2) -- (5,-0.1);

\draw[white, line width=0.5mm] (7.03,9) -- (7.91,9);
\draw[white, line width=0.5mm] (5.03,9) -- (5.96,9);
\draw[white, line width=0.5mm] (2.09,9) -- (5,9); 
\draw[white, line width=0.5mm] (3,8.1) -- (3,10); 
\draw[white, line width=0.5mm] (4,8.1) -- (4,10);
\draw[white, line width=0.5mm] (5,8.1) -- (5,10);

\draw[mydashed] (3, -2) -- (3,0);
\draw[mydashed] (4,-2) -- (4,0);
\draw[mydashed] (5,-2) -- (5,0);

\draw[mydashed] (3,8) -- (3,10);
\draw[mydashed] (4,8) -- (4,10);
\draw[mydashed] (5,8) -- (5,10);

\end{scope}

\begin{scope}[yshift=-16.5cm]

\fill[gray!90] (-1,2) rectangle (1,3);
\fill[gray!90] (-1,0) rectangle (1,1);
\fill[gray!90] (-1,5) rectangle (1,6);
\fill[gray!90] (-1,7) rectangle (1,8);
\fill[gray!90] (8,4) rectangle (9,5);
\fill[gray!90] (9,1) rectangle (11,2);
\fill[gray!90] (10,3) rectangle (11,4);
\fill[gray!90] (9,6) rectangle (11,7);

\fill[gray!90] (5,2) rectangle (6,6);
\fill[gray!90] (2,1) rectangle (4,2);
\fill[gray!90] (2,3) rectangle (3,5);
\fill[gray!90] (2,6) rectangle (4,7);

\fill[gray!90] (7,4) rectangle (8,5);
\fill[gray!90] (6,0) rectangle (8,1);
\fill[gray!90] (6,2) rectangle (8,3);
\fill[gray!90] (6,7) rectangle (8,8);

\foreach \x in {-1,...,11} {\draw[mydashed] (\x, -2) -- (\x, -1);}
\foreach \x in {-1,...,11} {\draw[mydashed] (\x, 9) -- (\x, 10);}

\foreach \y in {-1,...,9} {\draw[mydashed] (-1,\y) -- (-2,\y);}
\foreach \y in {-1,...,9} {\draw[mydashed] (11,\y) -- (12,\y);}
\draw (-1,-1) grid (11,9);

\draw[line width =0.5mm] (2,-2.5) -- (2,10.5);
\draw[line width =0.5mm] (8,-2.5) -- (8,10.5);

\node[scale=1.25] at (5,-4) {$(2)$};

\draw[white, line width=0.5mm] (7.03,-1) -- (7.91,-1);
\draw[white, line width=0.5mm] (2.09,-1) -- (4.96,-1); 
\draw[white, line width=0.5mm] (3,-2) -- (3,-0.1); 
\draw[white, line width=0.5mm] (4,-2) -- (4,-0.1);
\draw[white, line width=0.5mm] (5,-2) -- (5,-1.1);

\draw[white, line width=0.5mm] (7.03,9) -- (7.91,9);
\draw[white, line width=0.5mm] (5.03,9) -- (5.96,9);
\draw[white, line width=0.5mm] (2.09,9) -- (5.96,9); 
\draw[white, line width=0.5mm] (3,8.1) -- (3,10); 
\draw[white, line width=0.5mm] (4,8.1) -- (4,10);
\draw[white, line width=0.5mm] (5,8.1) -- (5,10);

\draw[mydashed] (3, -2) -- (3,0);
\draw[mydashed] (4,-2) -- (4,0);
\draw[mydashed] (5,-2) -- (5,-1);

\draw[mydashed] (3,8) -- (3,10);
\draw[mydashed] (4,8) -- (4,10);
\draw[mydashed] (5,8) -- (5,10);

\draw (4.96,-1) -- (6,-1);
\draw (5.96,9) -- (6,9);

\end{scope}

\begin{scope}[yshift=-16.5cm,xshift=20cm]

\fill[gray!90] (-1,2) rectangle (1,3);
\fill[gray!90] (-1,0) rectangle (1,1);
\fill[gray!90] (-1,5) rectangle (1,6);
\fill[gray!90] (-1,7) rectangle (1,8);
\fill[gray!90] (8,4) rectangle (9,5);
\fill[gray!90] (9,1) rectangle (11,2);
\fill[gray!90] (10,3) rectangle (11,4);
\fill[gray!90] (9,6) rectangle (11,7);

\fill[gray!90] (5,2) rectangle (6,6);
\fill[gray!90] (2,1) rectangle (4,2);
\fill[gray!90] (2,3) rectangle (3,5);
\fill[gray!90] (2,6) rectangle (4,7);

\fill[gray!90] (7,4) rectangle (8,5);
\fill[gray!90] (6,0) rectangle (8,1);
\fill[gray!90] (6,2) rectangle (8,3);
\fill[gray!90] (6,7) rectangle (8,8);

\foreach \x in {-1,...,11} {\draw[mydashed] (\x, -2) -- (\x, -1);}
\foreach \x in {-1,...,11} {\draw[mydashed] (\x, 9) -- (\x, 10);}

\foreach \y in {-1,...,9} {\draw[mydashed] (-1,\y) -- (-2,\y);}
\foreach \y in {-1,...,9} {\draw[mydashed] (11,\y) -- (12,\y);}
\draw (-1,-1) grid (11,9);

\draw[line width =0.5mm] (2,-2.5) -- (2,10.5);
\draw[line width =0.5mm] (8,-2.5) -- (8,10.5);

\node[scale=1.25] at (5,-4) {$(3)$};

\draw[white, line width=0.5mm] (7.03,-1) -- (7.91,-1);
\draw[white, line width=0.5mm] (2.09,-1) -- (4.96,-1); 
\draw[white, line width=0.5mm] (3,-2) -- (3,-0.1); 
\draw[white, line width=0.5mm] (4,-2) -- (4,-0.1);
\draw[white, line width=0.5mm] (5,-2) -- (5,-1.1);

\draw[white, line width=0.5mm] (7.03,9) -- (7.91,9);
\draw[white, line width=0.5mm] (5.03,9) -- (5.96,9);
\draw[white, line width=0.5mm] (2.09,9) -- (5.96,9); 
\draw[white, line width=0.5mm] (3,8.1) -- (3,10); 
\draw[white, line width=0.5mm] (4,8.1) -- (4,10);
\draw[white, line width=0.5mm] (5,8.1) -- (5,10);

\draw[mydashed] (3, -2) -- (3,0);
\draw[mydashed] (4,-2) -- (4,0);
\draw[mydashed] (5,-2) -- (5,-1);

\draw[mydashed] (3,8) -- (3,10);
\draw[mydashed] (4,8) -- (4,10);
\draw[mydashed] (5,8) -- (5,10);

\draw (4.96,-1) -- (6,-1);
\draw (5.96,9) -- (6,9);

\end{scope}
\end{tikzpicture}
\caption{Illustration of the algorithm filling the intermediate columns between two half-plane patterns $p$ and $q$ for the minimal total domination subshift. In the last step, the position $\bm{u_0}$ is the bottommost represented position of the central column, and the central column is coloured with a possible colouring. We chose $k=6$ for the illustration, still the proof works from $k=5$ upwards.}
\label{figure.completing.algorithm.MT}
\end{figure}

\leavevmode

\begin{itemize} 
\item \textbf{Filling the intermediate columns 
between two half-plane patterns.} 

We provide here an algorithm to fill
these columns between two patterns $p$ and $q$ respectively on 
$\mathbb{Z}_{-} \times \mathbb{Z}$ and 
$\mathbb{Z}_{+} \times \mathbb{Z}$ into 
a configuration $x \in X^\mathrm{MT}$. As in the proof of \Cref{minimal-gluing-th}, intuitively, it puts grey cells only when it is really necessary to dominate the neighbour in the previous column: 

\begin{enumerate}
\vspace{-0,1cm}
\setlength{\parskip}{0pt}
\setlength{\itemsep}{3pt}
\item \textbf{Filling the intermediate columns, from $C_1$ to $C_{k-3}$, then $C_k,C_{k-1}$.} 
Successively, for all $j$ 
from $1$ to $k-3$, we determine the column 
$C_j$ according to the following rule: 
for all $\textbf{u} \in C_j$, $x_{\textbf{u}}$ 
is $\begin{tikzpicture}[scale=0.3,baseline=0.4mm]
\fill[gray!90] (0,0) rectangle (1,1);
\draw (0,0) rectangle (1,1);
\end{tikzpicture}$ when
$x_{\textbf{u}-(1,1)}$, $x_{\textbf{u}-(1,-1)}$ 
and $x_{\textbf{u}-(2,0)}$ are $\begin{tikzpicture}[scale=0.3,baseline=0.4mm]
\draw (0,0) rectangle (1,1);
\end{tikzpicture}$ (the difference with the 
proof of \Cref{minimal-gluing-th} is 
that the symbol $x_{\textbf{u}-(1,0)}$ is 
not imposed). Else, $x_{\textbf{u}}$ is set to $\begin{tikzpicture}[scale=0.3,baseline=0.4mm]
\draw (0,0) rectangle (1,1);
\end{tikzpicture}$. This rule 
is illustrated in \Cref{figure.local.rules.completion}.

\begin{figure}[h!]
\[\begin{tikzpicture}[scale=0.4]
\draw (0,0) grid (1,3);
\draw (-1,1) rectangle (0,2);
\node at (0.5,-0.5) {$\vdots$};
\node at (0.5,4) {$\vdots$};

\draw[-latex] (2,1.5) -- (4,1.5);

\begin{scope}[xshift=6cm]
\node at (0.5,-0.5) {$\vdots$};
\node at (0.5,4) {$\vdots$};
\fill[gray!90] (1,1) rectangle (2,2);
\draw (0,0) grid (1,3);
\draw (-1,1) grid (2,2);
\end{scope}

\begin{scope}[xshift=15cm]
\fill[gray!90] (0,1) rectangle (1,2);
\draw (0,0) grid (1,3);
\draw (-1,1) rectangle (0,2);
\node at (0.5,-0.5) {$\vdots$};
\node at (0.5,4) {$\vdots$};

\draw[-latex] (2,1.5) -- (4,1.5);

\begin{scope}[xshift=6cm]
\node at (0.5,-0.5) {$\vdots$};
\node at (0.5,4) {$\vdots$};
\fill[gray!90] (0,1) rectangle (1,2);
\fill[gray!90] (1,1) rectangle (2,2);
\draw (0,0) grid (1,3);
\draw (-1,1) grid (2,2);
\end{scope}
\end{scope}

\end{tikzpicture}\]
\caption{Illustration of the local rules for the completion algorithm for the intermediate columns.}
\label{figure.local.rules.completion}
\end{figure}
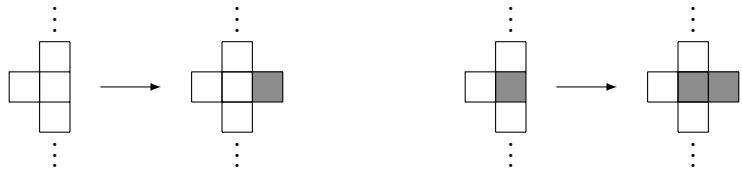

For $j=k$ and then $j=k-1$, 
we determine $x$ on any position $x_{\textbf{u}}$ 
for $\textbf{u} \in C_j$ by applying 
a symmetrical rule: $x_{\textbf{u}}$ 
is $\begin{tikzpicture}[scale=0.3,baseline=0.4mm]
\fill[gray!90] (0,0) rectangle (1,1);
\draw (0,0) rectangle (1,1);
\end{tikzpicture}$ when 
$x_{\textbf{u}+(1,1)}$, $x_{\textbf{u}+(1,-1)}$ 
and $x_{\textbf{u}+(2,0)}$ are $\begin{tikzpicture}[scale=0.3,baseline=0.4mm]
\draw (0,0) rectangle (1,1);
\end{tikzpicture}$. Else it is 
$\begin{tikzpicture}[scale=0.3,baseline=0.4mm]
\draw (0,0) rectangle (1,1);
\end{tikzpicture}$.

\newpage
\item \textbf{The central column ($j=k-2)$.} 

We then 
determine $x$ on the central column $C_{k-2}$. For all $\textbf{u} \in C_{k-2}$, $x_{\textbf{u}}$ 
is $\begin{tikzpicture}[scale=0.3,baseline=0.4mm]
\fill[gray!90] (0,0) rectangle (1,1);
\draw (0,0) rectangle (1,1);
\end{tikzpicture}$ when 
$x_{\textbf{u}+(1,1)}$, $x_{\textbf{u}+(1,-1)}$ 
and $x_{\textbf{u}+(2,0)}$ are equal to 
$\begin{tikzpicture}[scale=0.3,baseline=0.4mm]
\draw (0,0) rectangle (1,1);
\end{tikzpicture}$, 
or when
$x_{\textbf{u}-(1,1)}$, $x_{\textbf{u}-(1,-1)}$ 
and $x_{\textbf{u}-(2,0)}$ are equal to 
$\begin{tikzpicture}[scale=0.3,baseline=0.4mm]
\draw (0,0) rectangle (1,1);
\end{tikzpicture}$. Else, it is 
$\begin{tikzpicture}[scale=0.3,baseline=0.4mm]
\draw (0,0) rectangle (1,1);
\end{tikzpicture}$.

\item \textbf{Eliminating total-domination errors in the central column.}

Choose any position $\bm{u_0} \in C_{k-2}$. From this position upwards, check 
for every position $\textbf{u}$ if it is dominated.
If this is not the case, then change 
the symbol on $\textbf{u}+(0,1)$ into 
$\begin{tikzpicture}[scale=0.3,baseline=0.4mm]
\fill[gray!90] (0,0) rectangle (1,1);
\draw (0,0) rectangle (1,1);
\end{tikzpicture}$.
As soon as $\bm{u_0}$ has been processed, do the same symmetrically (change the symbol in $\textbf{u}-(0,1)$ when $\textbf{u}$ is not dominated) in parallel downwards, beginning from $\bm{u_0}-(0,1)$.

\end{enumerate}

See an illustration of this algorithm on 
\Cref{figure.completing.algorithm.MT}.

\item \textbf{The obtained configuration 
is in $X^\mathrm{MT}$.}

We have to check that the local 
rules of the minimal-total-domination subshift 
are satisfied over the whole constructed configuration $x$.

\begin{enumerate}

\item \textbf{The local rules are satisfied inside the "interior" of 
the half-planes.}

Same as the corresponding point 
in the proof of \Cref{minimal-gluing-th}.

\item \textbf{Every position in $C_{-1}, \cdots, C_{k+2}$ is dominated.}

Same as the corresponding point 
in the proof of \Cref{minimal-gluing-th} for the positions 
outside the central column $C_{k-2}$. 
In this column, let us assume 
that a position $\textbf{u}$ 
above $\bm{u_0}$ (without loss of 
generality) is not dominated. Then 
the last step of the algorithm, when 
examining this position, would have
changed the symbol on position $\textbf{u}+(0,1)$, which is a contradiction. In particular, no dominant positions are 
isolated.

\item \textbf{Every dominant position in $C_{-1}, \cdots, C_{k+2}$ has a private neighbour.}

\begin{enumerate}
\item[(a+b)] Outside the central column, the 
proof is similar to the corresponding points 
in the proof of \Cref{minimal-gluing-th}.
\item[(c)] In the column $C_{k-2}$, the 
dominant positions added in Step 2 necessarily
have a private neighbour in column $C_{k-3}$ 
or $C_{k-1}$. Let us take a dominant position $\textbf{u}$, assumed without loss of generality to be above $\bm{u_0}$, which was added in the last step. This implies that $\textbf{u}-(0,1)$ was not dominated when 
the algorithm checked this position. As a consequence, it is a private neighbour for $\textbf{u}$.
\end{enumerate}
\end{enumerate}

\item 

\textbf{The subshift $X^\mathrm{MT}$ is not $4$-block-gluing.}

Let us consider the patterns $p$ and $q$ 
defined in the corresponding point in the proof of \Cref{minimal-gluing-th}.
It is easy to see that these two patterns are also globally admissible in the subshift $X^\mathrm{MT}$.
Using the proof for $X^\mathrm{M}$, we only have to check that in the constructed configuration, no dominant positions are isolated, which is straightforward.

Using the same arguments as the ones for the minimal domination case, it is easy to see that any configuration in $\mathcal{A}^{\mathbb{Z}^2}$ where $p$ and $q$ are glued at distance 4 contains some forbidden patterns.

As a consequence $c(X^\mathrm{MT}) >4$. Since 
it is $5$-block-gluing, $c(X^\mathrm{MT}) = 5$.
\end{itemize}
\end{proof}

As a direct consequence of \Cref{theorem.computability.entropy}:

\begin{thm}
The numbers $\nu_\mathrm{D}$, $\nu_\mathrm{T}$, $\nu_\mathrm{M}$ and $\nu_\mathrm{MT}$
are computable with rate $n \mapsto 2^{O(n^2)}$.
\end{thm}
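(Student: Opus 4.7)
The proof will be a short assembly of results already established in the excerpt, so the plan is mostly bookkeeping. First, I would recall that in the preceding subsections it has been shown that each of the four subshifts $X^\mathrm{D}$, $X^\mathrm{T}$, $X^\mathrm{M}$, $X^\mathrm{MT}$ is block gluing: for $X^\mathrm{D}$ and $X^\mathrm{T}$ this is trivial with $c=1$ (fill every intermediate position with $\begin{tikzpicture}[scale=0.3,baseline=0.4mm]\fill[gray!90] (0,0) rectangle (1,1);\draw (0,0) rectangle (1,1);\end{tikzpicture}$), and for $X^\mathrm{M}$ and $X^\mathrm{MT}$ this is exactly the content of \Cref{minimal-gluing-th} and \Cref{total-minimum-gluing-th}, which even give $c=5$.

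Second, I would invoke \Cref{theorem.computability.entropy} of Pavlov--Schraudner on each of the four subshifts. Since each is a block-gluing bidimensional SFT, its topological entropy $h(X^*)$ is computable with rate $n \mapsto 2^{O(n^2)}$. Concretely, \Cref{algo-computing-entropy}, using \Cref{lemma.counting.block.gluing}, returns on input $n$ a rational $r_n^*$ with $|h(X^*) - r_n^*| \leq 1/n$ within $2^{O(n^2)}$ steps, for each $* \in \{\mathrm{D},\mathrm{T},\mathrm{M},\mathrm{MT}\}$.

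Third, I would pass from $h(X^*)$ to $\nu_* = 2^{h(X^*)}$. Since $h(X^*) \in [0, \log_2 |\mathcal{A}_0|] = [0,1]$, we have $\nu_* \in [1,2]$, and for any two reals $x,y \in [0,1]$ one has $|2^x - 2^y| \leq 2 \ln 2 \cdot |x-y|$. Consequently, to obtain a rational approximation of $\nu_*$ at precision $1/n$, it suffices to compute a rational approximation of $h(X^*)$ at precision $1/(2n \lceil \ln 2 \rceil+1) = \Theta(1/n)$ and then approximate $2^{r_n^*}$ by a rational at precision $1/(2n)$ (which is standard and costs at most polynomial time in $n$). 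The dominant cost remains that of \Cref{algo-computing-entropy}, namely $2^{O(n^2)}$.

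The only genuine content is the block-gluing property, which was already proved; after that, everything is a direct application of \Cref{theorem.computability.entropy} together with the trivial observation that the exponential $x \mapsto 2^x$ is computable and Lipschitz on the relevant bounded interval, so that the rate $n \mapsto 2^{O(n^2)}$ is preserved under composition. There is no real obstacle; one just has to check the error propagation does not break the stated rate.
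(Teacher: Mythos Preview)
Your proposal is correct and follows exactly the paper's approach: the paper states the theorem as a direct consequence of \Cref{theorem.computability.entropy} once block gluing has been established for the four subshifts. Your write-up is in fact more careful than the paper's one-line remark, since you spell out the Lipschitz argument needed to transfer the computability rate from $h(X^*)$ to $\nu_* = 2^{h(X^*)}$, a step the paper leaves implicit.
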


\section{Bounding the growth rates with computer resources}
\label{section-bounds-growth-rates}
Although the algorithm presented in \Cref{subsection.computability} provides a \uline{theoretical} way to compute the growth rates of various dominating sets of the finite square grids, it is not efficient enough for practical use on a computer. In this section, we use other tools which make it possible to obtain bounds for the growth rates.
These bounds are obtained using computer resources, by running a C++ program made for the occasion. The program is a modification of the program used in \Cref{domination-chapter}. Its modularity enabled us to only (re)write partly the local characterisation rules of domination problems and use the same main code. The main difference is that we work in the $(+,\times)$- algebra instead of the previously used $(\min, +)$-algebra.

We explain here the method differently, relating it to unidimensional SFTs. We do not dive much into the details of the specific problems, but give the abstract framework. As we have just mentioned, the technique relies on, for a fixed $m$, assimilating the dominating sets of $G_{n,m}$ to patterns of a unidimensional subshift of finite type, whose entropy is known to be computable through linear algebra computing.
Note that, in its use in \Cref{domination-chapter} there are no such things as lower or upper bounds we investigate now: we only enumerated sets which are precisely 2-dominating or Roman dominating. Since we were interested in finding the minimum size of such a set, we had the right to apply some optimisations to avoid having to enumerate all dominating sets, in particular the ones we knew were not of minimum size. Since we now want to count \emph{all} the different dominating sets, these optimisations do not apply here: we must enumerate all possible states.

\subsection{Nearest-neighbour unidimensional subshifts of finite type}
\label{section-nearest}
In this section $\mathcal{A}=(a_1,...,a_k)$ 
is a finite set, 
and $X$ is a unidimensional subshift of finite type
on alphabet $\mathcal{A}$. The elements of $\mathcal{A}$ are the cell values we defined in \Cref{domination-chapter}. Let us 
denote by $(e_1,...,e_k)$ the canonical basis 
of $\mathbb{R}^k$.

\begin{deff}
The subshift $X$ is said to have the \textbf{nearest-neighbour} property when it is defined 
by forbidding a set of patterns on support 
$\{0,1\}$.
\end{deff}

\begin{deff}
The \textbf{adjacency matrix} of a nearest-neighbour SFT $X$ is the matrix $M \in \mathcal{M}_k (\mathbb{R})$ 
such that $M[e_i][e_j]=1$ if the 
pattern $a_i a_j$ is not forbidden, or 0 otherwise. 
\end{deff}

The following is well known (see~\cite{Lind-Marcus}): 

\begin{prop}
Let $||.||$ be any matricial norm. The entropy of $X$ is equal to the spectral radius of $M$: 
\[h(X)= \log_2{\lim_n ||M^n||^{1/n}}.\]
\end{prop}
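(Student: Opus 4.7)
The plan is to reduce the statement to two standard facts: a combinatorial identification of locally-admissible patterns with paths in the graph whose adjacency matrix is $M$, and Gelfand's formula for the spectral radius, which guarantees that $\lim_n \|M^n\|^{1/n}$ exists, equals the spectral radius $\rho(M)$ of $M$, and does not depend on the chosen matrix norm. So the heart of the proof is to show that $h(X) = \log_2 \rho(M)$; the limit form follows immediately afterwards.

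First I would set up the combinatorial dictionary. Letting $N_n^\ell(X)$ be the number of locally-admissible patterns of length $n$ (those with no forbidden factor of length $2$), a direct induction on $n$ gives
\[
N_n^\ell(X) \;=\; \mathbf{1}^{\mathsmaller T}\, M^{n-1}\, \mathbf{1},
\]
where $\mathbf{1}=(1,\dots,1)^{\mathsmaller T}\in\mathbb{R}^k$: locally-admissible words of length $n$ are in bijection with paths of length $n-1$ in the directed graph on $\mathcal{A}$ whose adjacency matrix is $M$. Since any matrix norm is equivalent (up to constants) to the norm $A \mapsto \mathbf{1}^{\mathsmaller T}|A|\mathbf{1}$, Gelfand's formula yields
\[
\lim_n \left(N_n^\ell(X)\right)^{1/n} \;=\; \rho(M) \;=\; \lim_n \|M^n\|^{1/n}.
\]

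Next I would prove the upper bound $h(X)\le \log_2 \rho(M)$. This is immediate: every globally-admissible pattern is locally admissible, so $N_n(X) \le N_n^\ell(X)$, and passing to $\log_2(\cdot)/n$ and taking the limit (which exists by \Cref{lemma-entropy-limit}) gives $h(X)\le \log_2\rho(M)$.

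The main obstacle is the matching lower bound $h(X)\ge \log_2 \rho(M)$, because a locally-admissible word need not extend to a bi-infinite configuration (some vertices of the graph may have no incoming or no outgoing edges, or lie outside any cycle). To handle this I would pass to the strongly connected components of the graph. A standard consequence of the Perron-Frobenius theorem for non-negative matrices is that $\rho(M)$ equals the maximum of $\rho(M_C)$ taken over the irreducible components $C$, where $M_C$ is the principal submatrix of $M$ indexed by $C$. Fix a component $C_0$ achieving this maximum. Any path lying entirely inside $C_0$ can be extended to a bi-infinite path in $C_0$ (by strong connectivity), hence every locally-admissible word whose letters all belong to $C_0$ is globally admissible. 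Therefore $N_n(X) \ge \mathbf{1}_{C_0}^{\mathsmaller T} M_{C_0}^{n-1} \mathbf{1}_{C_0}$, and applying Gelfand's formula inside $C_0$ gives $(N_n(X))^{1/n} \to \rho(M_{C_0}) = \rho(M)$, whence $h(X)\ge \log_2 \rho(M)$.

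Combining the two inequalities yields $h(X) = \log_2 \rho(M)$, and invoking Gelfand's formula once more rewrites $\rho(M) = \lim_n \|M^n\|^{1/n}$ for any matrix norm, giving exactly the claimed identity. The only delicate point in the whole argument is the reduction to an irreducible component achieving the spectral radius; everything else is routine.
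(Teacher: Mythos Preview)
The paper does not give its own proof of this proposition: it states it as well known and cites Lind and Marcus~\cite{Lind-Marcus}, then remarks immediately afterwards that in the particular setting of the thesis the adjacency matrices are primitive, so Perron--Frobenius applies directly. Your proof is correct and is essentially the standard argument one finds in that reference.

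One small comment: your lower-bound step, passing to a strongly connected component $C_0$ achieving $\rho(M)$, is exactly what is needed in full generality. In the paper's concrete applications this step collapses, since the matrices encountered are primitive (hence irreducible), so every locally-admissible word is already globally admissible and the two counts coincide. Your more general argument is still worth keeping, since the proposition as stated does not assume irreducibility.
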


Our matrix has non-negative coeffecients and is primitive for the same reason as in \Cref{prop-primitive}. Thanks to these properties, the Perron-Frobenius theorem states that the matrix has a largest eigenvalue in $\RR^+$.

\subsection{Unidimensional versions of the domination subshifts}

We define here the unidimensional versions of 
the domination subshifts introduced in \Cref{subshifts-examples}. We use them to describe and prove the method we use to obtain the bounds on the growth rates. We recall that $n$ refers to the number of lines and $m$ to the number of columns. The first sequence of SFTs ($X^{D, n}$) is used to obtain the lower bound, whereas we use the second one ($X_*^{\mathrm{D},n}$) to obtain the upper bound.

\begin{notation}
Let us fix some integer $n \ge 1$. 
We denote by $X^{\mathrm{D},n}$ the 
undimensional subshift on alphabet $\mathcal{A}_0 ^n$ such that a configuration $x$ 
is in $X^{\mathrm{D},n}$ if and only if 
the set of positions $(j,i) \in \mathbb{Z} \times \llbracket 1,n\rrbracket$ such that the symbol $x_{(j,i)}$ is grey forms a dominating set of the grid $\mathbb{Z} \times \llbracket 1,n\rrbracket$.
\end{notation}

With arguments similar to the ones of the proofs of \Cref{lemma.comparison.minimal} and \Cref{lemma.comparison.minimal.total},
we obtain the following asymptotic formula when $n$ is fixed and $m$ grows 
to infinity. We recall \Cref{notation-counting-numbers}: $D_{n,m}$ denotes the number of dominating sets of $G_{n,m}$.
\[D_{n,m} = 2^{h(X^{\mathrm{D},n}) \cdot m + o(m)}.\]

\begin{notation}
For all $n \ge 3$, we also 
denote by $X_{*}^{\mathrm{D},n}$ the 
undimensional subshift on alphabet $\mathcal{A}_0 ^n$ such that a configuration $x$ 
is in $X_{*}^{\mathrm{D},n}$ if and only if 
the set of positions $(j,i) \in \mathbb{Z} \times \llbracket 2,n-1\rrbracket$ such that the symbol $x_{(j,i)}$ 
is grey forms a dominating set of the 
grid $\mathbb{Z} \times \llbracket 2,n-1\rrbracket$.
\end{notation}

\subsection{Recoding into nearest-neighbour 
subshifts}
\label{subsection.nearest.neighbour}

Let us set $\mathcal{A}_1 = \left\{ 
\begin{tikzpicture}[scale=0.3]
\draw (0,0) rectangle (1,1);
\end{tikzpicture},
\begin{tikzpicture}[scale=0.3]
\draw[pattern color=gray!90, pattern=my crosshatch dots] (0,0) rectangle (1,1);
\end{tikzpicture}, \begin{tikzpicture}[scale=0.3]
\draw[fill=gray!90] (0,0) rectangle (1,1);
\end{tikzpicture}\right\}$, 
and let us consider the map 
$\varphi: \left(\mathcal{A}^n_0\right)^{\mathbb{Z}} \rightarrow \left(\mathcal{A}^n_1\right)^{\mathbb{Z}}$ 
which acts on configurations of $\left(\mathcal{A}^n_0\right)^{\mathbb{Z}}$ by changing the $i^\text{th}$ 
symbol of any position $j \in \mathbb{Z}$ 
into $\begin{tikzpicture}[scale=0.3]
\draw[pattern color=gray!90, pattern=my crosshatch dots] (0,0) rectangle (1,1);
\end{tikzpicture}$ whenever it is not dominant and 
dominated by an element of $C_{j-1} \bigcap \left( \mathbb{Z} \times \llbracket 1,n\rrbracket \right)$ or $C_{j} \bigcap \left( \mathbb{Z} \times \llbracket 1,n\rrbracket \right)$. Informally, from lightest to darkest, the symbols stand for an undominated cell (which is not dominant), a dominated cell which is not dominant and a dominant cell. This is illustrated in \Cref{figure.illustration.conjugation}. The nearest-neighbour property makes it possible to count the dominating sets without enumerating them fully: it is enough to store the information about a small number of the latest columns, proceeding from left to right in the grid. We encode this information from the possibly several columns we need to recall into a single column, for practical reasons. This is why it coincides with the definition of the nearest-neighbour property.

\begin{figure}[h!]
\[\begin{tikzpicture}[scale=0.4]

\fill[gray!90] (1,0) rectangle (3,2);
\fill[gray!90] (3,3) rectangle (4,4);
\draw[dashed] (0,0) -- (6,0);
\draw[dashed] (0,4) -- (6,4);
\draw (1,0) -- (5,0);
\draw (1,4) -- (5,4);
\draw (1,0) grid (5,4);

\draw[line width=0.3mm,-latex] (7.5,2) -- (10.5,2);
\node at (9,1) {$\varphi$};

\begin{scope}[xshift=12cm]

\fill[pattern color=gray!90, pattern=my crosshatch dots] (3,0) rectangle (4,3);
\fill[pattern color=gray!90, pattern=my crosshatch dots] (1,2) rectangle (3,3);
\fill[pattern color=gray!90, pattern=my crosshatch dots](1,3) rectangle (2,4);
\fill[pattern color=gray!90, pattern=my crosshatch dots](4,3) rectangle (5,4);

\fill[gray!90] (1,0) rectangle (3,2);
\fill[gray!90] (3,3) rectangle (4,4);
\draw[dashed] (0,0) -- (6,0);
\draw[dashed] (0,4) -- (6,4);
\draw (1,0) -- (5,0);
\draw (1,4) -- (5,4);
\draw (1,0) grid (5,4);

\end{scope}
\end{tikzpicture}\]
\caption{Illustration of the map recoding $X^{\mathrm{D},n}$ into a nearest-neighbour SFT.}
\label{figure.illustration.conjugation}
\end{figure}
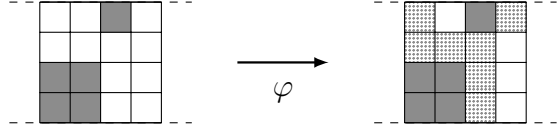

$\varphi$ is a conjugation and the entropy is stable by conjugation by \Cref{lemma-conjugation}, therefore:

\[h(X^{\mathrm{D},n}) = h(\varphi(X^{\mathrm{D},n})).\]

Moreover, the subshift $X^{\mathrm{D},n}$ has the nearest-neighbour property.

\subsection{Numerical approximations}
We give here the numerical bounds we can compute, and prove them. The proof is only given for the simple domination, the other ones use the same method, hence are very similar.

\begin{thm}[Domination]
The following inequalities hold:
    \begin{equation*}1.950022198 \le \nu_\mathrm{D} \le 1.959201684.\end{equation*}
\end{thm}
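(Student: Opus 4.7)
The plan is to squeeze $h(X^{\mathrm{D}})=\log_2 \nu_{\mathrm{D}}$ between two quantities attached to the strip subshifts $X^{\mathrm{D},n}$ and $X_{*}^{\mathrm{D},n}$, both of which are one-dimensional and therefore amenable to exact entropy computation via a transfer matrix. For the lower bound, I would first observe that any dominating configuration of $\mathbb{Z}\times\llbracket 1,n\rrbracket$ extends to a dominating configuration of $\mathbb{Z}^2$ (for instance by filling all positions outside the strip with grey); consequently every globally-admissible length-$m$ pattern of $X^{\mathrm{D},n}$ is also a globally-admissible $n\times m$ pattern of $X^{\mathrm{D}}$, giving $N_m(X^{\mathrm{D},n})\le N_{n,m}(X^{\mathrm{D}})$. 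Taking $\log$, dividing by $nm$ and letting $m\to\infty$ yields $h(X^{\mathrm{D},n})/n \le h(X^{\mathrm{D}})$.

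For the upper bound I would go in the opposite direction via $X_{*}^{\mathrm{D},n}$: any globally-admissible $n\times m$ pattern $p$ of $X^{\mathrm{D}}$ is the restriction of some $x\in X^{\mathrm{D}}$, and since rows $2,\dots,n{-}1$ only have neighbours in rows $1,\dots,n$, the restriction of $x$ to the strip $\mathbb{Z}\times\llbracket 1,n\rrbracket$ dominates the inner rows using only cells of the strip, hence lies in $X_{*}^{\mathrm{D},n}$. Thus $p$ is globally admissible in $X_{*}^{\mathrm{D},n}$, and $N_{n,m}(X^{\mathrm{D}})\le N_m(X_{*}^{\mathrm{D},n})$, which gives $h(X^{\mathrm{D}}) \le h(X_{*}^{\mathrm{D},n})/n$. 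Combining, I obtain
\[
\frac{h(X^{\mathrm{D},n})}{n}\;\le\; h(X^{\mathrm{D}}) \;\le\; \frac{h(X_{*}^{\mathrm{D},n})}{n}
\]
for every $n$ large enough, which, after exponentiation, produces the corresponding sandwich for $\nu_{\mathrm{D}}$.

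To compute $h(X^{\mathrm{D},n})$ and $h(X_{*}^{\mathrm{D},n})$ I would use the recoding of Section~\ref{subsection.nearest.neighbour}: the column alphabet is enlarged (dominant, dominated non-dominant, undominated non-dominant) so that domination becomes a nearest-neighbour condition, and the entropy equals $\log_2 \rho(M_n)$ where $M_n$ is the $\{0,1\}$-adjacency matrix between compatible enlarged columns. The program described in Chapter~\ref{domination-chapter}, adapted to the $(+,\times)$-algebra and to the local rules of $X^{\mathrm{D},n}$ and $X_{*}^{\mathrm{D},n}$, enumerates the column states and builds $M_n$; a power iteration (together with a complementary iteration on $M_n^{T}$ to obtain certified two-sided bounds via the Collatz–Wielandt min–max characterisation of the Perron eigenvalue) then produces rigorous numerical enclosures of $\rho(M_n)$, hence of $2^{h(X^{\mathrm{D},n})/n}$ and $2^{h(X_{*}^{\mathrm{D},n})/n}$. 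For the largest $n$ the memory budget allows, these enclosures fall within the interval $[1.950022198,\,1.959201684]$.

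The main obstacle is purely computational: the number of column states of the recoded $X^{\mathrm{D},n}$ grows exponentially in $n$, and since the gap between $h(X^{\mathrm{D},n})/n$ and $h(X_{*}^{\mathrm{D},n})/n$ closes only slowly with $n$, getting nine significant digits requires pushing $n$ as high as one can, which means carefully generating only one representative per horizontally symmetric pair of states (as in the previous chapter), storing $M_n$ sparsely, parallelising the power-iteration, and tracking floating-point round-off so that the final numerical bounds are guaranteed and not just observed.
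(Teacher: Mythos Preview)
Your proposal is correct and reaches the same sandwich $h(X^{\mathrm{D},n})/n \le h(X^{\mathrm{D}}) \le h(X_*^{\mathrm{D},n})/n$ as the paper, with the numerical bounds obtained at $n=18$. The paper arrives at these inequalities by a slightly different route: rather than your direct pattern-inclusion arguments ($N_m(X^{\mathrm{D},n}) \le N_{n,m}(X^{\mathrm{D}}) \le N_m(X_*^{\mathrm{D},n})$), it works with the finite-grid counts $D_{n,m}$ and $D^*_{n,m}$ and uses the supermultiplicativity $D_{n_1+n_2,m} \ge D_{n_1,m}\, D_{n_2,m}$ (glue a dominating set of $G_{n_1,m}$ on top of one of $G_{n_2,m}$) for the lower bound, and the submultiplicativity $D^*_{n_1+n_2,m} \le D^*_{n_1,m}\, D^*_{n_2,m}$ for the upper bound. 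Your argument is a little cleaner in that it avoids the detour through finite grids. One point worth flagging: the paper does \emph{not} certify the Perron eigenvalue via Collatz--Wielandt as you propose; it uses plain power iteration and explicitly acknowledges later that the resulting digits are not rigorously guaranteed. So your numerical step is actually more careful than what the paper does.
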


\begin{proof}\leavevmode
\begin{itemize} \item \textbf{Lower bound:}

\begin{enumerate}
\item For all integers $n_1,n_2$ and $m$, $D_{n_1+n_2,m} \ge D_{n_1,m} \cdot D_{n_2,m}.$

Indeed, let us consider two
sets dominating respectively $G_{n_1,m}$ and $G_{n_2,m}$. 
By gluing the first one on the top of 
the second one, we obtain a dominating set of 
$G_{n_1+n_2,m}$. This is true because any position in this grid 
is either in the copy of the grid $G_{n_1,m}$
and thus dominated by an element in this grid, 
or in the copy of $G_{n_2,m}$. Since this construction is invertible our remaping is indeed a conjugation. We obtain the announced inequality.

\item As a consequence, for all $k \ge 0$,
    \[D_{18k, m} \ge D_{18,m}^{k} = 2^{h(X^{\mathrm{D},18}) \cdot km + k \cdot o(m)},\]
        where the function $o(m)$ is related to the fact that we used $18$ lines\footnote{We could not do the computations with more lines out of lack of RAM.}.
This implies that
\[\lim_{n,m} \frac{\log_2 (D_{n,m})}{nm} 
        = \lim_{n,k} \frac{\log_2 (D_{18k,m})}{18km} \ge h(X^{\mathrm{D},18}).\]
\item This number is equal to 
    $h(\varphi(X^{D,18}))$, which is computed using \Cref{section-nearest} and \Cref{subsection.nearest.neighbour}. The lower bound follows: for any value of $n$ we may compute the transfer matrix $T_n$ counting $D_{n,m}$. The $n$-th root of the largest eigenvalue of $T_n$ is a lower bound for $\nu_D$. We could compute $T_n$ for at most 18 lines given the computing resources at our disposal.
\end{enumerate}
\newpage 
\item \textbf{Upper bound:}

\begin{enumerate}
\item 
For all $n,m$, let us denote by $D^{*}_{n,m}$ 
the number of sets of vertices of $G_{n,m}$ 
which dominate the middle $n-2$ lines (i.e. cells of the first and last lines might not be dominated). We have a direct inequality \[D_{n,m} \le D^{*}_{n,m}.\]
\item For a reason similar as the one in the first point 
of the proof of the lower bound, for all $n_1,n_2$, $D^{*}_{n_1+n_2, m} \le D^{*}_{n_1,m} \cdot D^{*}_{n_2,m}$.
\item For all $k \ge 0$ and $m \ge 0$, 
    \[D^{*}_{18k, m} \le (D^{*}_{18,m})^{k} = 2^{h(X_{*}^{\mathrm{D},18}) \cdot km + k \cdot o(m)}.\] 
As a consequence 
        \[\nu_\mathrm{D} \le h(X_{*}^{\mathrm{D},18}).\]
With the same method as for the lower bound, 
we obtain the upper bound.
\end{enumerate}

\end{itemize}
\end{proof}

\begin{rk}
With further numerical manipulations, we notice that the lower bound and the upper bound seem to get closer to each other rather slowly. To speed up the convergence, we had the idea of using the sequences of ratios $2^{h(X^{\mathrm{D},n+1})}/2^{h(X^{\mathrm{D},n})}$ and 
$2^{h(X_{*}^{\mathrm{D},n+1})}/2^{h(X_{*}^{\mathrm{D},n})}$. This seems to offer a much better convergence speed. Indeed, for both sequences, from $h=11$ on, the ratio seem to be stabilised around $1.954751195$. This ratio is not a bound on $\nu_\mathrm{D}$ but it is guaranteed to converge towards this quantity.
\label{rk-convergence}
\end{rk}

\begin{conj}[Domination]
$\nu_\mathrm{D} \approx 1.954751$
\end{conj}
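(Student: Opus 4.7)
The plan is to obtain both inequalities by relating $\nu_\mathrm{D}$ to the entropy of the unidimensional subshifts $X^{\mathrm{D},n}$ (for the lower bound) and $X_*^{\mathrm{D},n}$ (for the upper bound) introduced above, for a fixed and as large as possible number of lines $n$.

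For the lower bound, I would start from the simple stacking observation: given a dominating set of $G_{n_1,m}$ and one of $G_{n_2,m}$, the vertical concatenation is a dominating set of $G_{n_1+n_2,m}$ (each cell still has a dominating neighbour within its own half), so $D_{n_1+n_2,m}\ge D_{n_1,m}\,D_{n_2,m}$. Iterating gives $D_{nk,m}\ge D_{n,m}^{k}$ for every fixed $n$, and combining with $D_{n,m}=2^{h(X^{\mathrm{D},n})\cdot m+o(m)}$ and with $D_{n,m}=\nu_\mathrm{D}^{nm+o(nm)}$, letting $k,m\to\infty$ yields $\log_{2}\nu_\mathrm{D}\ge h(X^{\mathrm{D},n})$ for every $n$. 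I would then take $n=18$ (the largest height compatible with the available memory), compute $h(X^{\mathrm{D},18})$ numerically via the nearest-neighbour recoding of Section~\ref{subsection.nearest.neighbour}, which reduces the entropy to $\log_{2}$ of the spectral radius of the resulting transfer matrix, and plug in the value.

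For the upper bound, the symmetric argument does not work directly: if one restricts a dominating set of $G_{n_1+n_2,m}$ to its top $n_1$ rows, the bottom row of that restriction need not be dominated any more, so the restriction is not necessarily counted by $D_{n_1,m}$. To bypass this, I would relax the constraint on the extremal rows and introduce $D^*_{n,m}$, the number of subsets of $G_{n,m}$ which dominate only the middle $n-2$ rows; clearly $D_{n,m}\le D^*_{n,m}$. Because the top and bottom rows of a $D^*$-configuration are unconstrained, splitting a $D^*_{n_1+n_2,m}$-configuration horizontally at height $n_1$ produces an injection into the product of $D^*_{n_1,m}$ and $D^*_{n_2,m}$, giving the sub-multiplicative bound $D^*_{n_1+n_2,m}\le D^*_{n_1,m}\,D^*_{n_2,m}$. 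Iterating as before and using the formula $D^*_{n,m}=2^{h(X_*^{\mathrm{D},n})\cdot m+o(m)}$ yields $\log_{2}\nu_\mathrm{D}\le h(X_*^{\mathrm{D},n})$, and once again taking $n=18$ and computing the spectral radius of the corresponding transfer matrix delivers the announced upper bound.

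The main obstacle is not theoretical but computational: the number of globally-admissible column states, which indexes the transfer matrix, grows exponentially with $n$ (around $10^6$ or more at $n=18$), so both the construction of the matrix and the computation of its spectral radius require the state-enumeration, compatibility-generation and symmetry-reduction optimisations already used in Chapter~\ref{domination-chapter}, together with a sparse power iteration to avoid ever materialising dense intermediate matrices. Pushing $n$ beyond 18 would tighten the gap — as suggested by the ratio heuristic of Remark~\ref{rk-convergence} — but 18 is already what makes the bounds agree to two decimals, which is what the theorem claims.
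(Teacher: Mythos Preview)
You are proving the wrong statement. The item you were given is a \emph{conjecture}, not a theorem; the paper does not prove it. What you have written is, essentially verbatim, the paper's proof of the preceding \emph{theorem} that $1.950022198 \le \nu_\mathrm{D} \le 1.959201684$: super-multiplicativity of $D_{n,m}$ for the lower bound, sub-multiplicativity of the relaxed counts $D^{*}_{n,m}$ for the upper bound, and computation of the spectral radii of the transfer matrices for $X^{\mathrm{D},18}$ and $X_{*}^{\mathrm{D},18}$. That argument is fine for the theorem, but note that your closing sentence is factually off: at $n=18$ the two bounds do \emph{not} agree to two decimals (they differ already in the second decimal place, $1.95\ldots$ versus $1.95\ldots9$), which is precisely why the sharper value $1.954751$ is only conjectured.

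The paper's basis for the conjecture is not a proof at all but the heuristic of Remark~\ref{rk-convergence}: the successive ratios $2^{h(X^{\mathrm{D},n+1})}/2^{h(X^{\mathrm{D},n})}$ and $2^{h(X_{*}^{\mathrm{D},n+1})}/2^{h(X_{*}^{\mathrm{D},n})}$ appear numerically to stabilise around $1.954751$, and since both sequences provably converge to $\nu_\mathrm{D}$, this stabilised value is proposed as the limit. No rigorous bound of that precision is claimed or obtained, so there is nothing to prove here; any ``proof'' you supply is necessarily addressing a different statement.
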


Using the same method, we provide bounds for the other problems. For the total domination, we can make the computations up to $n = 17$. However, for the other problems (the ones with the minimality constraint) the number of patterns we enumerate grows (exponentially) at a much faster rate than for the non-minimal problems, thus the bounds are less good. We cannot go further than about $n=10$ for these problems. This comes from the fact that we need to encode information about more columns than from their non-minimal counterparts.
\begin{thm}[Total domination]
$ 1.904220376\le \nu_\mathrm{T} \le 1.923434191$.
$\nu_\mathrm{T} \approx 1.9153$
\end{thm}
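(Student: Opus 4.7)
The plan is to adapt verbatim the two-sided argument used for $\nu_\mathrm{D}$, replacing "dominating" by "total dominating" throughout and verifying that the gluing-based (super)multiplicativity arguments still go through. Concretely, I would first establish the super-multiplicative lower bound \[T_{n_1+n_2, m} \geq T_{n_1,m} \cdot T_{n_2,m}\] by stacking a total dominating set of $G_{n_1,m}$ on top of one of $G_{n_2,m}$: each vertex in either half keeps all its previous neighbours in the combined set, so the property that every vertex has a neighbour in $S$ is preserved, and the construction is clearly injective. Iterating this for $n = 17k$ (the maximum height for which our program can compute the transfer matrix) and using $T_{17,m} = 2^{h(X^{\mathrm{T},17}) \cdot m + o(m)}$ from the unidimensional-SFT argument of \Cref{section-nearest}, dividing by $17km$ and sending $k,m \to \infty$ yields $\nu_\mathrm{T} \geq h(X^{\mathrm{T},17})$.

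For the upper bound I would introduce $T^{*}_{n,m}$, the number of subsets of vertices of $G_{n,m}$ that totally dominate only the middle $n-2$ lines (so vertices of the first and last rows may fail to have a neighbour in $S$). Any total dominating set of $G_{n_1+n_2,m}$ restricts to a set totally dominating the middle $n_1 - 2$ rows of the top block and to a set totally dominating the middle $n_2 - 2$ rows of the bottom block, and this restriction map is injective, giving \[T^{*}_{n_1+n_2,m} \leq T^{*}_{n_1,m} \cdot T^{*}_{n_2,m}\] and $T_{n,m} \leq T^{*}_{n,m}$. Applying this at $n = 17k$, the same Fekete-type argument gives $\nu_\mathrm{T} \leq h(X_{*}^{\mathrm{T},17})$, where $X_{*}^{\mathrm{T},n}$ is the unidimensional SFT whose configurations are the sets totally dominating $\mathbb{Z} \times \llbracket 2, n-1 \rrbracket$.

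To obtain the concrete numerical values $1.904220376$ and $1.923434191$, I would then recode each of $X^{\mathrm{T},17}$ and $X_{*}^{\mathrm{T},17}$ into a nearest-neighbour SFT on a larger alphabet exactly as in \Cref{subsection.nearest.neighbour} — for total domination, a cell's status needs to record not only whether it contains a stone but also whether it (resp.\ its stone) has already been dominated from the left, so a symbol set analogous to \textsc{stone\_ok, stone\_need\_one, ok, need\_one} is enough to make the forbidden patterns live on supports of size $2$. The entropy of the recoded subshift is $\log_2$ of the spectral radius of its adjacency (transfer) matrix by Perron--Frobenius, which is computed by our C++ program reused from \Cref{domination-chapter}. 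The conjectural value $\nu_\mathrm{T} \approx 1.9153$ is then obtained by the ratio-extrapolation heuristic from \Cref{rk-convergence}.

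The main obstacle is not mathematical but computational: for total domination the transfer matrix has substantially more states per line than for simple domination (since we must also track the domination status of the stones), so pushing $n$ up to $17$ while keeping memory within reason requires the same kind of state-pruning and sparse matrix handling described in \Cref{section-experimental-dom}. The super/sub-multiplicativity and the conjugation to nearest-neighbour form are completely routine once we verify the total-domination property is preserved under vertical gluing, which it is.
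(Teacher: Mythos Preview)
Your proposal is correct and is exactly the approach the paper takes: the paper does not spell out a separate proof for $\nu_\mathrm{T}$ but simply states ``Using the same method, we provide bounds for the other problems. For the total domination, we can make the computations up to $n = 17$'', and you have faithfully reproduced that same super-/sub-multiplicativity argument with the appropriate modifications for total domination and the correct value $n=17$. The only cosmetic slip is writing $\nu_\mathrm{T} \ge h(X^{\mathrm{T},17})$ where you mean the $17$th root of the spectral radius (equivalently $2^{h(X^{\mathrm{T},17})/17}$), but this matches the paper's own shorthand.
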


\begin{rk}
As in \Cref{rk-convergence}, we can observe that the ratios $2^{h(X^{\mathrm{T},n+1})}/2^{h(X^{\mathrm{T},n})}$ and 
$2^{h(X_{*}^{\mathrm{T},n+1})}/2^{h(X_{*}^{\mathrm{T},n})}$ offer a much better convergence speed. Indeed, from $n=10$ they seem to stabilise, both around $1.915316$.
\end{rk}

\begin{conj}[Total domination]
$\nu_\mathrm{T} \approx 1.9153$ 
\end{conj}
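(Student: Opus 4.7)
The plan is to mimic the domination argument, replacing $D_{n,m}$ with $T_{n,m}$ and $X^{\mathrm{D},n}$ with a unidimensional total-domination SFT $X^{\mathrm{T},n}$ defined exactly as $X^{\mathrm{D},n}$ but with the local rules of $X^{\mathrm{T}}$. By the same proof as in \Cref{lemma.comparison.minimal} restricted to the trivial border behaviour of $X^{\mathrm{T}}$, one has $T_{n,m} = 2^{h(X^{\mathrm{T},n}) \cdot m + o(m)}$ for fixed $n$, so both bounds reduce to comparing $\nu_\mathrm{T}$ to entropies of unidimensional SFTs that we can compute with transfer matrices.

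For the lower bound, the first step is to observe that for any $n_1, n_2, m$ one has $T_{n_1+n_2,m} \ge T_{n_1,m} \cdot T_{n_2,m}$: stacking a total-dominating set of $G_{n_1,m}$ on top of one of $G_{n_2,m}$ yields a total-dominating set of $G_{n_1+n_2,m}$, because each vertex already has a neighbour in its own subgrid belonging to the respective set (this is where total domination behaves like domination for the gluing, no new neighbour across the seam is required). Iterating gives $T_{17k,m} \ge T_{17,m}^{k}$, so $\nu_\mathrm{T} \ge 2^{h(X^{\mathrm{T},17})}$. It remains to compute $h(X^{\mathrm{T},17})$ via its nearest-neighbour recoding $\varphi(X^{\mathrm{T},17})$ (\Cref{subsection.nearest.neighbour}): the entropy equals the logarithm of the largest eigenvalue of the corresponding non-negative primitive transfer matrix, which the program of \Cref{domination-chapter} produces after switching to the $(+,\times)$-algebra and to the local rules of $X^\mathrm{T}$; numerically this yields the bound $1.904220376$.

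For the upper bound, the second step is to introduce $T^*_{n,m}$, the number of subsets of $G_{n,m}$ whose restriction to the middle $n-2$ rows is total dominating, so that trivially $T_{n,m} \le T^*_{n,m}$, and to show $T^*_{n_1+n_2,m} \le T^*_{n_1,m} \cdot T^*_{n_2,m}$ by splitting any such set along a horizontal cut and extending each half freely on the two "ghost" boundary rows (any restriction in the product factors as an arbitrary pattern on the top/bottom lines, the needed total-domination conditions being vacuous there). This reduces the bound, as in the domination case, to $\nu_\mathrm{T} \le 2^{h(X^{\mathrm{T},17}_{*})}$, where $X^{\mathrm{T},n}_{*}$ is the analogue of $X^{\mathrm{D},n}_{*}$: a unidimensional SFT whose rules enforce total domination only on rows $2,\dots,n-1$. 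Applying the same transfer-matrix routine to $X^{\mathrm{T},17}_{*}$ gives $1.923434191$.

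The main obstacle is neither theoretical (the gluing arguments are direct and the recoding to nearest-neighbour form mirrors \Cref{subsection.nearest.neighbour} with minor changes to the column-state alphabet to record, for each cell, whether it is a stone and whether it has already been dominated) nor conceptual, but purely computational: the states of $X^{\mathrm{T},n}$ must remember, for each cell of a column, not just "\textsc{stone}/\textsc{ok}/\textsc{need\_one}" but also whether a \textsc{stone} has itself been dominated yet, which enlarges the alphabet and with it the transfer matrix. Checking the value $n=17$ is enough to achieve the stated bounds, and going further is blocked by memory; fortunately, the extremal ratios $2^{h(X^{\mathrm{T},n+1})}/2^{h(X^{\mathrm{T},n})}$ and $2^{h(X^{\mathrm{T},n+1}_{*})}/2^{h(X^{\mathrm{T},n}_{*})}$ both stabilise around $1.9153$ already at $n=10$, which will motivate the subsequent conjecture.
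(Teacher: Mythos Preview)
The statement you were given is a \emph{conjecture}, not a theorem: the paper does not prove $\nu_\mathrm{T} \approx 1.9153$, and neither do you. What you have written is in fact a proof of the preceding \emph{theorem} (the rigorous bounds $1.904220376 \le \nu_\mathrm{T} \le 1.923434191$) together with the heuristic that motivates the conjecture. On that reading, your write-up matches the paper's approach exactly: the lower bound via supermultiplicativity $T_{n_1+n_2,m}\ge T_{n_1,m}T_{n_2,m}$, the upper bound via submultiplicativity of the relaxed counts $T^*_{n,m}$, both reduced to entropies of unidimensional SFTs computed by transfer matrices, with $n=17$ as the largest feasible height for total domination. Your final paragraph also reproduces the paper's only justification for the conjectured value, namely the empirical observation that the ratios $2^{h(X^{\mathrm{T},n+1})}/2^{h(X^{\mathrm{T},n})}$ and $2^{h(X_{*}^{\mathrm{T},n+1})}/2^{h(X_{*}^{\mathrm{T},n})}$ stabilise near $1.9153$ from about $n=10$.

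So there is no gap in your argument beyond the fundamental one: neither the paper nor your proposal turns the ratio observation into a proof, and the statement remains a conjecture. If anything, be explicit that your bounds argument establishes the theorem and that the value $1.9153$ is only a numerical guess supported by the apparent convergence of those ratios, not a consequence of the bounds themselves.
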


\begin{thm}[Minimal domination]
$1.315870482 \le \nu_\mathrm{M} \le 1.550332154$.
\label{th-minimal-dom}
\end{thm}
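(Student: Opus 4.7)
The plan is to adapt the same strategy used for the domination and total domination cases, applied to the minimal domination subshifts. First I would define the unidimensional subshift $X^{\mathrm{M},n}$ on alphabet $\mathcal{A}_0^n$ whose configurations $x$ are such that $\{(j,i) : x_{(j,i)} \text{ is grey}\}$ is a minimal dominating set of the infinite strip $\mathbb{Z} \times \llbracket 1,n \rrbracket$, and similarly $X_*^{\mathrm{M},n}$ in which minimality and domination are only required on the middle $n-2$ rows. By the same type of reasoning as in Lemma 3.15, one shows
\[M_{n,m} = 2^{h(X^{\mathrm{M},n})\cdot m + o(m)} \quad\text{and}\quad M^{*}_{n,m} = 2^{h(X_*^{\mathrm{M},n})\cdot m + o(m)},\]
where $M^{*}_{n,m}$ counts sets whose restriction to the middle $n-2$ rows is minimal dominating for that sub-strip, and $M_{n,m} \le M^{*}_{n,m}$.

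For the lower bound, I would establish a supermultiplicativity estimate of the form $M_{n_1+n_2+c, m} \ge M_{n_1,m}\cdot M_{n_2,m}$ for a suitable constant number $c$ of buffer rows inserted between two independent minimal dominating sets. The buffer rows (filled according to a fixed local rule, mimicking the border completion of Lemma 3.15) guarantee that no dominant vertex of either piece loses its private neighbour and that no undominated cell appears at the interface; the argument is essentially the same "crown" completion used in Theorem 3.24 to establish block gluing of $X^{\mathrm{M}}$. This yields $\nu_\mathrm{M} \ge 2^{h(X^{\mathrm{M},n})}$ for every $n$. For the upper bound, a symmetric submultiplicativity estimate $M^{*}_{n_1+n_2, m} \le M^{*}_{n_1,m}\cdot M^{*}_{n_2,m}$ is immediate since a minimal dominating configuration on the middle rows of the full strip restricts to minimal dominating configurations on the middle rows of each sub-strip. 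This gives $\nu_\mathrm{M} \le 2^{h(X_*^{\mathrm{M},n})}$ for every $n$.

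To compute the two entropies I would recode each subshift into a nearest-neighbour one over an extended alphabet $\mathcal{A}_2^n$ whose symbols encode, for each cell of a column, not only the state (dominant, dominated-only, undominated) but also, for a dominant cell, whether it has already received a private neighbour in the previous columns. This is the analogue of the map $\varphi$ of \Cref{subsection.nearest.neighbour}, with the extra book-keeping needed because minimal domination is \emph{not} characterised by looking only at the immediate left neighbour; a private neighbour can still appear in the next column. Once the recoding is fixed, the transfer matrix $M_n$ is computed by enumerating compatible pairs of extended columns, and its spectral radius, obtained by the Perron--Frobenius theorem, gives $2^{h(X^{\mathrm{M},n})}$ (respectively $2^{h(X_*^{\mathrm{M},n})}$).

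The hard part will be purely computational. As noted after Theorem 3.32, the minimality constraint forces us to carry significantly more information in each column state than in the simple or total domination case (essentially one must track the private-neighbour status of each dominant cell across up to three consecutive columns), so the number of valid column states grows much faster with $n$. Consequently the computations only reach roughly $n \le 10$, and the resulting bounds are looser than in the non-minimal cases. Running the C++ transfer-matrix program at $n=10$ for both $X^{\mathrm{M},n}$ and $X_*^{\mathrm{M},n}$ yields the announced numerical values $1.315870482 \le \nu_\mathrm{M} \le 1.550332154$.
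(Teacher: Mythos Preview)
Your proposal is correct and follows the paper's approach, which is stated there only as ``Using the same method''; you have correctly supplied the one nontrivial adaptation, namely that direct stacking of two minimal dominating sets can destroy minimality at the interface, so buffer rows (filled via the block-gluing algorithm for $X^{\mathrm{M}}$) are needed for the lower-bound supermultiplicativity. Fix one notational slip: the bounds should read $\nu_{\mathrm{M}} \ge 2^{h(X^{\mathrm{M},n})/(n+c)}$ and $\nu_{\mathrm{M}} \le 2^{h(X_*^{\mathrm{M},n})/n}$---one takes the $(n+c)$-th, respectively $n$-th, root of the Perron eigenvalue---not $2^{h(\cdot)}$ itself.
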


\begin{thm}[Minimal total domination]
$1.275805204 \le \nu_\mathrm{MT} \le 1.524476040$.
\end{thm}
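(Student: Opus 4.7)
The plan is to apply, almost verbatim, the method used for the domination bounds to the minimal total domination problem, relying on the comparison between the counting numbers $\mathit{MT}_{n,m}$ and the number of globally-admissible patterns in the associated SFT $X^\mathrm{MT}$, together with the reduction to entropies of unidimensional strip subshifts.

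For the lower bound, I would first define the unidimensional subshift $X^{\mathrm{MT},n}$ on alphabet $\mathcal{A}_0^n$ whose configurations correspond to minimal total dominating sets of the strip $\mathbb{Z}\times\llbracket 1,n\rrbracket$. A stacking argument shows that $\mathit{MT}_{n_1+n_2,m}\ge \mathit{MT}_{n_1,m}\cdot \mathit{MT}_{n_2,m}$: given two minimal total dominating sets of $G_{n_1,m}$ and $G_{n_2,m}$, one checks that their vertical concatenation is still minimal total dominating (each cell keeps at least one neighbour in the set, and each stone keeps its private neighbour, since the only possible new adjacencies are across the gluing line). Iterating with $n_1=n_2=n$ and combining with \Cref{lemma.comparison.minimal.total} yields
\[\nu_\mathrm{MT} = 2^{h(X^\mathrm{MT})} \ge 2^{h(X^{\mathrm{MT},n})}\]
for every fixed $n$. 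Then I would recode $X^{\mathrm{MT},n}$ into a nearest-neighbour SFT on a suitably enlarged alphabet $\mathcal{A}^n$, along the same lines as \Cref{subsection.nearest.neighbour}, except that the extra alphabet symbols must now encode the information needed to verify locally both total domination and the existence of a private neighbour for every stone (typically: is the cell a stone, is it already dominated in the current window, and does each recent stone already have a private neighbour or still need one). Once this nearest-neighbour recoding is in place, $h(X^{\mathrm{MT},n})$ equals the logarithm of the spectral radius of the corresponding adjacency matrix, which is computed numerically by the C++ program used throughout the chapter, producing the value $1.275805204$.

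For the upper bound, I would introduce the relaxed subshift $X_*^{\mathrm{MT},n}$ in which the total domination and minimality conditions are only required on the middle $n-2$ rows, and let $\mathit{MT}^*_{n,m}$ denote the number of subsets of $G_{n,m}$ that are minimal total dominating on these middle rows. Then trivially $\mathit{MT}_{n,m}\le \mathit{MT}^*_{n,m}$, and a symmetric cutting argument gives $\mathit{MT}^*_{n_1+n_2,m}\le \mathit{MT}^*_{n_1,m}\cdot \mathit{MT}^*_{n_2,m}$, whence $\nu_\mathrm{MT}\le 2^{h(X_*^{\mathrm{MT},n})}$ for every $n$. Recoding $X_*^{\mathrm{MT},n}$ as a nearest-neighbour SFT and running the program for the largest feasible $n$ (about $n=10$, as noted in the text) yields the numerical upper bound $1.524476040$.

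The main obstacle is purely computational rather than conceptual: the alphabet of the nearest-neighbour recoding for $X^\mathrm{MT}$ is substantially larger than for plain domination because one must encode, for each of the last few columns, both whether the cell is a stone and the status of its potential private neighbours. This makes the number of states grow much faster with $n$, which is precisely why the gap between the lower and upper bounds here is wider than for $\nu_\mathrm{D}$ and $\nu_\mathrm{T}$, and why one cannot go beyond roughly $n=10$ with a reasonable amount of RAM. Care must also be taken when designing the local rules so that minimality is indeed checked locally within the width of the nearest-neighbour window; this uses the local characterisation of \Cref{local-charac-section} applied to private neighbours, which only involves cells at distance at most two from each stone, so a constant-width window suffices.
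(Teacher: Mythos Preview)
Your overall strategy matches the paper's (which itself only says ``using the same method'' and gives no further details for $\nu_\mathrm{MT}$), but the concrete super-multiplicativity step you propose for the lower bound is wrong. The claim that the vertical concatenation of two minimal total dominating sets is again minimal total dominating fails: if a stone $v$ on the bottom row of $S_1$ has its \emph{unique} private neighbour $w$ also on that bottom row, and the cell of $S_2$ sitting directly below $w$ happens to be a stone, then after gluing $w$ acquires a second dominator and $v$ is left with no private neighbour at all. One can already realise this with an explicit $S_1$ on $G_{3,5}$ (a stone at $(3,3)$ whose only private neighbour is $(4,3)$) glued to an $S_2$ on $G_{2,5}$ having a stone at $(4,1)$; the resulting set of $G_{5,5}$ is total dominating but not minimal. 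So the sentence ``each stone keeps its private neighbour, since the only possible new adjacencies are across the gluing line'' is exactly where the argument breaks --- those new adjacencies are precisely what can destroy a private neighbour.

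The repair is to use the block-gluing property you cite elsewhere. Since $c(X^\mathrm{MT})=5$ by \Cref{total-minimum-gluing-th}, any two globally-admissible $n\times m$ patterns of $X^\mathrm{MT}$ can be glued with a $5$-row buffer into a globally-admissible $(2n{+}5)\times m$ pattern; iterating yields $N_{k(n+5),m}(X^\mathrm{MT})\ge N_{n,m}(X^\mathrm{MT})^{k}$, and \Cref{lemma.comparison.minimal.total} transfers this, up to the subexponential factor $2^{8(m+n)}$, to the numbers $\mathit{MT}_{n,m}$. This gives $h(X^\mathrm{MT})\ge h(X^{\mathrm{MT},n})/(n{+}5)$, which is what the transfer-matrix computation then turns into the numerical lower bound. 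Your upper-bound sketch has a milder version of the same issue: for the cutting inequality $\mathit{MT}^{*}_{n_1+n_2,m}\le \mathit{MT}^{*}_{n_1,m}\cdot \mathit{MT}^{*}_{n_2,m}$ to hold, the relaxed condition must drop not only the total-domination constraint but also the private-neighbour constraint on \emph{two} rows at each boundary (the local rule for a stone has radius $2$), otherwise a stone near the cut can again fail the minimality test after restriction.
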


\label{counting-numerical-section}

\section{A (2\emph{k}+3)-block-gluing family: the minimal meta \emph{k} domination}
\label{section-meta-k-domination}
When we spoke about the block-gluing property, we only defined the \emph{constant} version of it. This property may be declined in finer-grained versions than choosing between being block gluing or not being block gluing.\footnote{"To be or not to be, that is the question." said a famous writer. Maybe he was thinking about block-gluing SFTs.}
 
\begin{deff}
Let $X$ be a subshift and $f : \NN \rightarrow \NN$ be a function. We say that $X$ is $f$-block-gluing when any two globally-admissible patterns $q$ and $q$ of size $n \times n$ may be glued when they are separated by at least $f(n)$ columns.
\end{deff}
This definition encompasses more SFTs because now the gap depends on the size of the pattern. This leads to interesting questions, such as:

\begin{question}
What functions can $f$ be? For instance, can $f$ be a non-constant sub-linear function, or a super-linear function?
\label{question-block-gluing}
\end{question}
We will answer partly these questions by recalling some known results in \Cref{counting-conclusion-gluing}.

We propose here an extension of the domination problems. The standard domination consists of two sets $S_1 = D$ and $S_0 = V \setminus D$: the members of $S_0$ need to be dominated by a vertex in $S_1$ while the members of $S_1$ do not need to be dominated. We introduce a new family of problems where we colour the vertices with $k+1$ colours,  which denotes a hierarchy of who may dominate whom. We will then show that each SFT associated to a problem of this family is block gluing, but the gluing constant linearly depends on the parameter $k$. We will also give some hints as to why it is difficult to find problems which are $f$-block-gluing with $f$ being something other than an affine function.

\begin{deff}
    $S = (S_0, S_1, \cdots, S_k)$ is a \textbf{meta-$\bm{k}$-dominating} tuple if it is a partition of $V$ and every $v \in V_i$ such that $i < k$ has a neighbour in some $V_j$ with $j > i$.
\end{deff}

\begin{notation}
If $S$ is a meta-$k$-dominating tuple and $x \in V$, we denote by $\textbf{\aarg}_S(x)$ the integer $0 \leq i \leq k$ such that $x \in S_i$.
\end{notation}

\begin{deff}
Similarly as in the other domination problems, we say that $u \in G_{n,m}$ is \textbf{dominated} by a neighbour $v$ or that $v$ \textbf{dominates} $u$ when $\arg_S(u) < \arg_S(v)$. $u$ is a \textbf{private neighbour} of $v$ if, in addition to being dominated by $v$, $u$ is not dominated by another neighbour.
\end{deff}

Now that we have generalised the notion of domination and total domination, we attack the minimal and minimal total domination. Since we no longer have a natural inclusion order, we choose one which generalises the inclusion on sets, for tuples of sets.

\begin{deff}
Let $S$ and $S'$ be two meta-$k$-dominating tuples. We say that $S' \leq S$ if for every $x \in V$, $\aarg_{S'}(x) \leq \aarg_S(x)$. We say that $S' < S$ if $S' \leq S$ and if there exists at least one $x \in V$ such that $\aarg_{S'}(x) < \aarg_S(x)$.\\
We say that a meta-$k$-dominating tuple $S$ is \textbf{minimal} if there is no meta-$k$-dominating tuple $S'$ such that $S' < S$.
\end{deff}

In this definition, $S' \leq S$ if we can reduce the labels of a set of vertices altogether simultaneously. We could define another order allowing the reduction of the label of a unique vertex at each step. We show that the two definitions are equivalent.

\begin{deff}
If $S$ and $S'$ are two meta-$k$-dominating tuples, we write $S' <_1 S$ if there exists a unique $x \in V$ such that $\aarg_{S'}(x) < \aarg_S(x)$ and $\aarg_{S'}(u) = \aarg_S(u)$ for every $u \neq x$.
\end{deff}

\begin{prop}
Let $S$ be a meta-$k$-dominating tuple. $S$ is minimal for the relation $<$ if and only if it is minimal for the relation $<_1$.
\label{equiv-reduction-prop}
\end{prop}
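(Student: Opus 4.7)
The plan is to handle the two implications separately. The forward direction, that minimality for $<$ implies minimality for $<_1$, is essentially free: every $T <_1 S$ also satisfies $T < S$, so any meta-$k$-dominating witness of non-$<_1$-minimality is automatically a witness of non-$<$-minimality. Thus the only substantive work is the contrapositive of the converse: given a meta-$k$-dominating $S' < S$, I have to produce a meta-$k$-dominating $S'' <_1 S$.

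For this, I would set $D = \{x \in V : \aarg_{S'}(x) < \aarg_S(x)\}$ (nonempty because $S' < S$) and pick $x \in D$ minimising $\aarg_S(x)$ over $D$. The natural candidate is $S''$ obtained from $S$ by lowering only the label of $x$ all the way down to $\aarg_{S'}(x)$, so that $S'' <_1 S$ by construction. The substance is then to verify that $S''$ is still meta-$k$-dominating.

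To check the domination requirement at each vertex $v$ with $\aarg_{S''}(v) < k$, I would split into three cases: (i) $v = x$, where any $S'$-neighbour dominating $x$ still has an at-least-as-large label in $S''$ (since only $x$'s label was changed and that only downward); (ii) $v$ not adjacent to $x$, where the $S$-dominator of $v$ is preserved verbatim in $S''$; (iii) $v$ adjacent to $x$. The only obstruction in case (iii) is when $x$ was the unique $S$-neighbour of $v$ with a strictly larger label, and $\aarg_{S'}(x) \le \aarg_S(v)$.

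The hard part will be precisely this case (iii), and the plan is to exclude it by a short dichotomy using the fact that $S'$ is itself meta-$k$-dominating. Then $v$ is dominated in $S'$ by some neighbour $w$, and a case split on whether $w = x$ or $w \neq x$, combined with the constraint $\aarg_S(w) \le \aarg_S(v)$ forced by the uniqueness of $x$ as $S$-dominator of $v$, should in both subcases yield $\aarg_{S'}(v) < \aarg_S(v)$, i.e., $v \in D$. But then $v \in D$ with $\aarg_S(v) < \aarg_S(x)$ contradicts the minimality of $\aarg_S(x)$ over $D$. This contradiction closes the argument: $S''$ is meta-$k$-dominating, so $S$ is not minimal for $<_1$, proving the desired contrapositive and hence the equivalence.
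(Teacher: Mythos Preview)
Your proof is correct and follows essentially the same strategy as the paper's: pass to the contrapositive, set $D = \{x : \aarg_{S'}(x) < \aarg_S(x)\}$, pick a distinguished $x \in D$, define $S''$ by lowering only the label of $x$ to $\aarg_{S'}(x)$, and verify case by case that $S''$ is still meta-$k$-dominating. The one difference is that you select $x$ minimising $\aarg_S$ over $D$ whereas the paper minimises $\aarg_{S'}$; both choices work, and your handling of case~(iii)---showing that a problematic neighbour $v$ would have to lie in $D$ with $\aarg_S(v) < \aarg_S(x)$, contradicting minimality---is a clean way to close the argument.
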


\begin{proof}
It suffices to show that if a tuple $S$ is reducible for one order, it is also reducible for the other, and vice versa.

It is clear that if $S$ is reducible for $<_1$, then it is reducible for $<$.

Now, let $S$ be such that there exists some $S' < S$. We show that $S$ is reducible for $<_1$. Let $D = \{x \in V \;|\; \aarg_{S'}(x) < \aarg_S(x)\}$ and $x_0 \in D$ with minimal $\aarg_{S'}$ value among the elements of $D$. Let $S''$ be such that $\aarg_{S''}(x_0) = \aarg_{S'}(x_0)$ and $\aarg_{S''}(x) = \aarg_S(x)$ for every $x \neq x_0$. Let us show that $S''$ is a meta-$k$-dominating tuple.

    First, any vertex in $V\setminus N[x_0]$ (see \Cref{def-neighbourhood} for the definition of the the closed neighbourhood $N[v]$) is dominated because itself and its neighbours have the same label as in $S$. $x_0$ is also dominated because its label decreased while the others stayed constant. Finally, let $v \in N(x_0)$. If $v$ belongs to $S_k$, then it still does not need to be dominated. Else, if $v$ is dominated by some vertex other than $x_0$ in $S'$, it is also the case in $S''$. Else, $v$ is only dominated by $x_0$ in $S'$. The definition of $x_0$ implies that $\aarg_{S}(v) = \aarg_{S'}(v)$. Since $v$ is dominated in $S'$, it implies that $\aarg_{S''}(x_0) = \aarg_{S'}(x_0) > \aarg_{S'}(v) = \aarg_{S}(v) = \aarg_{S''}(v)$, hence $v$ is dominated in $S''$.

Hence the minimal elements for $<$ are the same as the minimal elements for $<_1$.
\end{proof}

\begin{deff}
If $u \in S_m$ with $m > 0$, we say that any private neighbour of $u$ belonging to $S_{m-1}$ is a \textbf{good} private neighbour.
\end{deff}

\begin{prop}
Let $(S_0, \cdots, S_k)$ be a meta-$k$-dominating tuple. Then it is minimal meta-$k$-dominating if and only if every $x \in S_i$ with $i < k$ has a good private neighbour, and any $x \in S_k$ with a neighbour in $S_k$ has a good private neighbour.
\label{prop-mini-meta-k}
\end{prop}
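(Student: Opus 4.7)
The plan is to use \Cref{equiv-reduction-prop} to reduce the question of minimality to checking that no single-vertex reduction yields a valid meta-$k$-dominating tuple. Concretely, for every $x$ with $\aarg_S(x) = m > 0$, let $S^{(x)}$ denote the tuple obtained from $S$ by moving $x$ from $S_m$ to $S_{m-1}$. Then $S$ is minimal if and only if, for every such $x$, the tuple $S^{(x)}$ fails to be meta-$k$-dominating. So the core task is to identify exactly which vertices can witness the failure of $S^{(x)}$.

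Since only $x$'s label changes, the only vertices whose domination status can be affected lie in $N[x]$. First, consider $v = x$ itself: in $S^{(x)}$ its label is $m-1 < k$, so it must be dominated. Its neighbours' labels are unchanged, so $x$ is dominated in $S^{(x)}$ iff it has a neighbour of label $> m-1$ in $S$. If $m < k$, the meta-$k$-dominating property of $S$ already gives $x$ a neighbour of label $> m$, so this condition is automatic. Hence $x$ itself can witness the failure only in the case $m = k$, and precisely when $x$ has no neighbour in $S_k$. Second, consider $v \in N(x)$ with $\aarg_S(v) < k$: such a $v$ is dominated in $S$ and can only become undominated in $S^{(x)}$ if $x$ was its unique dominator and $x$ no longer dominates it. Being the unique dominator means $v$ is a private neighbour of $x$; and $x$ fails to dominate $v$ under $S^{(x)}$ iff $m - 1 \le \aarg_S(v) < m$, i.e.\ $\aarg_S(v) = m-1$. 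Thus $v$ witnesses the failure iff $v$ is a \emph{good} private neighbour of $x$ in the sense of the definition.

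Combining these two cases, $S^{(x)}$ fails to be meta-$k$-dominating if and only if either $x$ has a good private neighbour, or $m = k$ and $x$ has no neighbour in $S_k$. Therefore $S$ is minimal iff, for every $x$ with $0 < \aarg_S(x) < k$, $x$ has a good private neighbour, and for every $x \in S_k$, either $x$ has no neighbour in $S_k$ (nothing to check) or $x$ has a good private neighbour — which is exactly the claimed characterization.

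The main obstacle I expect is the careful case distinction at $m = k$: one must notice that reducing $x \in S_k$ places a new constraint on $x$ itself (it now needs a dominator), whereas for $m < k$ the vertex $x$ is automatically still dominated. Keeping this asymmetry straight — and verifying that the condition $\aarg_S(v) = m-1$ is precisely what makes a private neighbour \emph{good} — is the only subtle point; everything else is a bookkeeping check that labels outside $N[x]$ are unaffected.
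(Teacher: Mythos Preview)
Your approach matches the paper's: invoke \Cref{equiv-reduction-prop} and then analyse, for each $x$, which vertex in $N[x]$ witnesses the failure of domination after lowering $x$'s label. Your case analysis for the one-step reduction $S^{(x)}$ is correct.

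There is, however, a genuine gap in the claim ``$S$ is minimal if and only if, for every such $x$, the tuple $S^{(x)}$ fails''. \Cref{equiv-reduction-prop} only equates $<$-minimality with $<_1$-minimality, and $<_1$ permits lowering one vertex's label by \emph{any} positive amount, not just by one; you have not argued that checking decrements-by-one suffices. For $x$ with $\aarg_S(x)=m<k$ this is harmless, since a good private neighbour $y\in S_{m-1}$ becomes undominated under every reduction of $x$ to any $m'\le m-1$. But for $x\in S_k$ with no neighbour in $S_k$, you only show that the reduction to $k-1$ fails (because $x$ then needs a label-$k$ neighbour); a larger drop may give $x$ a dominator among its lower-labelled neighbours while breaking nothing else. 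Concretely, take $k=2$ on the star with centre $z$ (label~$1$) and leaves $x,w$ (label~$2$), $z'$ (label~$0$): the proposition's hypothesis holds and all three one-step reductions $S^{(x)},S^{(w)},S^{(z)}$ fail, yet moving $x$ straight to label~$0$ yields a valid meta-$2$-dominating tuple, so $S$ is not minimal. The paper's own proof glosses over the same point: in the case $\aarg_S(x_0)=k$ it asserts that $x_0$ being dominated in $S'$ forces a neighbour of label~$k$, which is only justified when the reduction is specifically to $k-1$.
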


\begin{proof}
\leavevmode

    \textbf{If a meta-$\bm{k}$-dominating tuple has the private-neighbour property, then it is minimal.}

Let $S$ be a meta-$k$-dominating tuple having the private neighbour property. By \Cref{equiv-reduction-prop} it is sufficient to show that there is no $S'$ such that $S' <_1 S$. We proceed by contradiction and assume that we have some $S' <_1 S$. Let $x_0$ be the vertex such that $\aarg_{S'}(x_0) < \aarg_S(x_0)$. There are two cases, the first one being that $x_0$ has label $k$ in $S$. Since $x$ is dominated in $S'$, it has a neighbour of label $k$ in $S'$ which has the same label in $S$: $x_0$ is not isolated in $S$. Since by hypothesis it has, in $S$, a private neighbour $y$ of label $k-1$, this implies that $y$ is not dominated in $S'$. Thus $S'$ is not meta $k$  dominating, a contradiction. The other case is when $\arg_S(x) < k$. Let $y$ be a good private neighbour of $x_0$ in $S$: $\aarg_S(y) = \aarg_S(x_0)-1$ and no neighbours of $y$ dominates it in $S$. Since $\aarg_S'(x_0) < \aarg_S(x_0)$ and the labels of the others vertices are the same, $y$ is not dominated in $S'$, which is also a contradiction. This proves that $S$ is minimal.\\
\newpage
    \textbf{If a meta-$\bm{k}$-dominating tuple is minimal then it has the private-neighbour property.}

We show the contrapositive implication: let us consider a meta-$k$-dominating tuple $S$ which does not have the private-neighbour property. We will prove that it is not minimal. Let $x_0$ be such a vertex with label $p > 0$ which does not have any good private neighbour, and if $p=k$ then has a neighbour in $S_k$. Let $S'$ be such that $\aarg_{S'}(x_0) = p-1$ and $\aarg_{S'}(x) = \aarg_S(x)$ for every $x \neq x_0$. In $S'$, $x_0$ is dominated by the same vertex it is dominated by in $S$, or by its neighbour of label $k$ if $p = k$. Any neighbour which is dominated by $x_0$ in $S$ has either label at most $p-2$ or is also dominated by another vertex in $S$, hence it is dominated in $S'$. 
Therefore $S'$ is a meta-$k$-dominating tuple and $S' < S$, which concludes the proof.
\end{proof}

\begin{thm}
The SFT associated to the minimal meta-$k$-domination problem is ($2k+3$)-block-gluing.
\label{meta-gluing-th}
\end{thm}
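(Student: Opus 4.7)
The plan is to follow the structure of the proofs of Theorems \ref{minimal-gluing-th} and \ref{total-minimum-gluing-th} (which are the $k=1$ instances of this family, with bound $5 = 2\cdot 1+3$), but to track a longer chain of constraints that propagate inward from the boundary half-planes. Using Proposition \ref{proposition.semi.plane}, it suffices to glue two globally-admissible half-plane patterns $p$ on $\mathbb{Z}_-\times\mathbb{Z}$ and $q$ on $(k',0)+\mathbb{Z}_+\times\mathbb{Z}$ for any $k'\geq 2k+3$. The intuition behind the constant $2k+3$ is that Proposition \ref{prop-mini-meta-k} forces chains of good private neighbours: if a cell carries label $\ell$, its private neighbour must carry label $\ell-1$, which in turn (when it lies on the boundary of $p$ or $q$) may need to be provided within the intermediate columns; for a cell of label $k$ on $C_0$ or $C_{k'+1}$ such a chain can reach $k$ columns into the gap, so we need $k$ columns of ``propagation room'' on each side plus at least $3$ columns of ``central play'' to reconcile the two sides.

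First, I would define the filling algorithm as follows. For $j=1,\ldots,k$, I would fill column $C_j$ greedily from $p$: each cell $\mathbf{u}\in C_j$ receives the smallest label in $\{0,\ldots,k\}$ compatible with (a) dominating any cell of $C_{j-1}$ that is not already dominated inside $p\cup C_1\cup\cdots\cup C_{j-1}$, and (b) continuing any private-neighbour chain anchored in $p$ that has not yet been completed by column $C_{j-1}$. Symmetrically, I would fill $C_{k'-k+1},\ldots,C_{k'}$ from $q$. The remaining at-least-three central columns $C_{k+1},\ldots,C_{k'-k}$ are filled with label $0$ with a small modification in the central column analogous to Step 3 of the algorithms in Figures \ref{figure.completing.algorithm} and \ref{figure.completing.algorithm.MT}, to repair positions that become undominated or that would violate the private-neighbour rule at the interface.

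Second, I would verify that the resulting configuration lies in the minimal meta-$k$-domination SFT, using Proposition \ref{prop-mini-meta-k}. The three checks are: (i) no local rule is violated inside $p$ and $q$, which follows from their global admissibility; (ii) every cell of label less than $k$ has a strictly higher-labelled neighbour, which is guaranteed by the minimality of the chosen labels in the forward/backward passes and by the central cleanup; (iii) every cell of label $i$ with $0<i<k$ (and every cell of label $k$ having a same-label neighbour) has a good private neighbour. Point (iii) is the delicate step: chains of length up to $k$ are possible, and one must check that the cleanup in the central column does not spoil the chain property of the cells placed during the forward/backward passes. The greedy ``smallest admissible label'' choice keeps these chains disjoint and confined within $k$ columns of the boundary, which is precisely why $k$ columns of propagation suffice.

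Finally, I would prove tightness by exhibiting a pair of half-plane patterns witnessing that the subshift is not $(2k+2)$-block gluing, generalising Figure \ref{minimal-4-gluing}: take $p$ (resp.\ $q$) periodic in the vertical direction such that the boundary column carries cells of label $k$ each of which can only obtain its good private neighbour via a length-$k$ chain extending into the gap, with the left and right chains aligned to meet in the middle; with only $2k+2$ intermediate columns the two chains cannot simultaneously be realised without creating an undominated label-$0$ cell or destroying a private-neighbour relation, exactly as the sagging argument of Remark \ref{rk-sagging} forces. The hard part will be point (iii) of the verification: bookkeeping the private-neighbour chains of various lengths without letting them interfere, especially at the boundary between the forward pass, the backward pass, and the central repair, where multiple chains may compete for the same cells.
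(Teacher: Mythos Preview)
Your overall plan mirrors the paper's, your greedy rule coincides with the paper's explicit rule (increment to dominate, decrement to supply a good private neighbour, else $0$), and your tightness sketch is exactly the counter-example the paper uses. But the upper-bound half has a genuine gap, and it is not where you expect (point~(iii)): it already fails at point~(ii), domination. Your rule makes $C_j$ dominate any undominated cell of $C_{j-1}$, but nothing makes $C_j$ itself dominated---that is the job of $C_{j+1}$. Since you stop the rule at $C_k$ and set $C_{k+1}\equiv 0$, the cells of $C_k$ can be stranded. For a concrete failure with $k=3$, take $p$ with $C_{-1}\equiv 3$ and $C_0\equiv 2$ (globally admissible: extend by the period-$5$ pattern $3,0,0,1,2$). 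Each $C_0[i]=2$ is dominated inside $p$ but needs its good private neighbour in $C_1$, so your rule produces $C_1\equiv 1$, $C_2\equiv 0$, then $C_3\equiv 0$ (to keep $C_2$ private for $C_1$). Now $C_3[i]=0$ has all four neighbours equal to $0$ (including your zero-filled $C_4[i]$) and is undominated; the cleanup column $C_5$ is not adjacent and cannot help. The phenomenon is general: whenever the decreasing chain from $C_0$ has length $\ell<k$, an \emph{increasing} domination chain restarts two columns later and is still unresolved at $C_k$. For $k'>2k+3$ the zero-filled interior makes things strictly worse.

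The paper avoids this in two ways you underestimate. First, it does \emph{not} stop the rule after $k$ columns: the left fill runs all the way from $C_1$ to $C_{n-k-2}$ (and the right fill from $C_n$ down to $C_{n-k}$), so there is a single central column $C_{j_0}$ and every rule-filled column is adjacent either to the next rule-filled column or to $C_{j_0}$. Second, the central-column procedure is substantially heavier than the ``cleanup analogous to Step~3'' you invoke: it sets $C_{j_0}[i]=\max(R_i)+1$ where $R_i$ collects the still-undominated horizontal \emph{and} previously-set vertical neighbours (with special handling at $i\in\{-1,0,1\}$), which is exactly what lets one column simultaneously dominate the unfinished increasing chains arriving from both sides and manufacture its own good private neighbours. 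Your three-column zero buffer cannot replace this.
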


We can notice that this result coincides with the one about the minimal domination, for which $k = 1$. We can also see easily that the meta-$k$-domination subshift is 1-block-gluing for all $k$ (we put label $k$ for any cell in the gap column).

\begin{proof}
Let $X_k$ be the SFT associated to the meta $k$ domination, and $X_k^\mathrm{M}$ its minimal version.

We give here a proof similar to the one in the proof of \Cref{minimal-gluing-th}. The argument is more general and can be used to show that the minimal domination is 5-block-gluing. We give an algorithm to fill the middle columns, providing there are at least $2k+3$ of them, and then prove that the result is indeed minimal meta $k$ dominating.

\begin{itemize}
\item \textbf{Filling the intermediate columns 
between two half-plane patterns.}

Let $p$ and $q$ be two patterns respectively on, without loss of generality,
$\mathbb{Z}_{-} \times \mathbb{Z}$ and 
$\mathbb{Z}_{+} \times \mathbb{Z}$. Let 
us determine a configuration of $\mathcal{A}^{\mathbb{Z}^2}$ such that $x_{|\mathbb{Z}_{-} \times \mathbb{Z}}=p$ and $x_{|(k,0)+\mathbb{Z}_{+} \times \mathbb{Z}}=q$. 
The intermediate 
columns $C_1,...,C_n$ are split, like before, into the "left part" of the middle columns ($C_1$ to $C_{n-k-3}$, the central column $_{n-k-2}$ and the "right part" of the middle columns ($C_{n-k-2}$ to $C_n$). We determine their values by the following algorithm. For concision, we introduce the following definition:

\begin{figure}[h]
\centering
\begin{tikzpicture}[scale=0.4, xscale = 1.5]

\def\testarray{{{1,0.5},{4,2}}}

\def\x{1.5}
\foreach \v in {4,3,2,1,0,0,1,2,3,5,0,5,4,3,2,1,0,0} {\node[scale=0.5] at (\x, 6.5) {$\v$}; \xdef\x{\x+1}}

\xdef\x{1.5}
\foreach \v in {4,3,2,1,0,0,1,2,3,4,0,0,0,0,0,0,0,0} {\node[scale=0.5] at (\x, 5.5) {$\v$}; \xdef\x{\x+1}}

\xdef\x{1.5}
\foreach \v in {0,0,0,0,0,0,0,0,0,2,1,0,0,5,4,3,2,1} {\node[scale=0.5] at (\x, 4.5) {$\v$}; \xdef\x{\x+1}}

\xdef\x{1.5}
\foreach \v in {5,5,4,3,2,1,0,0,0,2,1,0,0,5,4,3,2,1} {\node[scale=0.5] at (\x, 3.5) {$\v$}; \xdef\x{\x+1}}

\xdef\x{1.5}
\foreach \v in {0,0,0,0,0,0,0,1,2,3,0,0,0,0,0,0,0,0} {\node[scale=0.5] at (\x, 2.5) {$\v$}; \xdef\x{\x+1}}

\xdef\x{1.5}
\foreach \v in {5,0,0,0,0,0,0,0,0,4,5,4,3,2,1,0,0,5} {\node[scale=0.5] at (\x, 1.5) {$\v$}; \xdef\x{\x+1}}

\xdef\x{1.5}
\foreach \v in {0,0,1,2,3,4,5,0,0,1,0,0,0,0,0,0,0,0} {\node[scale=0.5] at (\x, 0.5) {$\v$}; \xdef\x{\x+1}}

\foreach \x in {1,...,19}{ \draw[mydashed](\x, -1) -- (\x, 0);}
\foreach \x in {1,...,19}{ \draw[mydashed](\x, 7) -- (\x, 8);}

\foreach \x in {0,...,7} {\draw[mydashed] (0,\x) -- (1,\x);}
\foreach \x in {0,...,7} {\draw[mydashed] (19,\x) -- (20,\x);}

\draw (1,0) grid (19,7);

\node at (-1,4) {$p$};
\node at (20.8,4) {$q$};

\node[scale=0.625] at (2.5,-1.5) {$C_0$};
\node[scale=0.625] at (3.5,-1.5) {$C_1$};
\node[scale=0.625] at (4.5,-1.5) {$C_2$};

\node[scale=0.625] at (10.5,-1.5) {$C_{j_0}$};
\node[scale=0.625] at (15.5,-1.5) {$C_{n-1}$};
\node[scale=0.625] at (16.5,-1.5) {$C_{n}$};
\node[scale=0.625] at (17.5,-1.5) {$C_{n+1}$};

\draw[line width =0.5mm] (3,-1.5) -- (3,9.5);
\draw[line width =0.5mm] (17,-1.5) -- (17,9.5);
\end{tikzpicture}
	
\caption{Illustration of the algorithm filling the middle columns for the minimal meta $k$ domination for $k=5$. We assume the line above the values and the one below them are filled with zeroes. Here $n=14 = 2k+4$, although a gap of $13$ is enough.}
\label{meta-k-algo-example}
\end{figure}

\begin{enumerate}
\vspace{-0,1cm}
\setlength{\parskip}{0pt}
\setlength{\itemsep}{3pt}
\item \textbf{Filling the middle left columns, from $C_1$ to $C_{n-k-2}$.}

The following algorithm, to fill the values of the middle columns, is illustrated through an example in \Cref{meta-k-algo-example}.
Successively, for all $j$ 
from $1$ to $n-k-2$, we determine the column 
$C_j$ according to the following rule: for each $i \in \ZZ$,
\[ C_j[i] = 
\left\{ \begin{array}{ll}
C_{j-1}[i]+1 &\text{ if } C_{j-1}[i] \text{ is not dominated;} \\
C_{j-1}[i]-1 &\text{ if } C_{j-1}[i] \neq 0 \text{ and has no private neighbours in $S_{C_{j-1}[i]-1}$;}\\
0 &\text{ otherwise.}
\end{array}
\right.\]
Note that since $p$ is globally admissible, no cells in $C_0$ (or $C_{n+1}$) can need both a good private neighbour and some neighbour to be dominated. For the same reason, if some $C_0[i]$ needs a good private neighbour then $C_0[i-1]$ must, if it needs a good private neighbour, be equal to $C_0[i]$: otherwise its neighbour in $C_1[i-1]$ (resp. $C_1[i+1]$) would either dominate or be dominated by $C_1[i]$, hence one of the two would not play its role of good private neighbour. It must also be dominated, either by $C_0[i]$ if lesser, or by another cell in $p$ if greater or equal (otherwise its neighbour in $C_1[i-1]$ would dominate $C_1[i]$. The same applies to $C_0[i+1]$, so that the algorithm is well defined for the first step. These properties propagate to each new column filled, so that the algorithm is completely well defined.

\textbf{Filling the middle right columns, from $C_n$ to $C_{n-k-1}$.}

We apply the symmetric rule: instead of considering the value of $C_{j-1}[i]$ to determine the one of $C_j[i]$, we use the one of $C_{j+1}[i]$.

\item \textbf{The central column $C_{n-k-1}$.}

We first let $j_0 = n-k-1$ be the index of the central column. We first set the value $C_{j_0}[0]$, then the values of $C_{j_0}[-1]$ and $C_{j_0}[1]$. After this, we give the way to fill the values of the column from index 2 to infinity, and in parallel from index -2 to minus infinity. We define $\max(\emptyset) = -1$.

First let $R_0$ be the set of values among $C_{j_0-1}[0]$ and $C_{j_0+1}[0]$ which are different from $k$ and are not dominated so far. Then $C_{j_0}[0] = \max(R_0)$+1.\\
Now let\footnote{We named it $R'_1$ instead of $R_1$ here because it is not exactly how the others $R_i$s are defined.} $R'_1$ (resp. $R'_{-1}$) be the set of values among $C_{j_0-1}[1]$ and $C_{j_0+1}[1]$ (resp. $C_{j_0-1}[-1]$ and $C_{j_0+1}[-1]$) which are different from $k$ and are not dominated so far. Up to symmetry, we may assume that $\max(R'_1) \geq \max(R'_{-1})$. We then set $R"_1 = R'_1$ if $C_{j_0}[0]$ is already dominated, or $R"_1 = R'_1 \cup \{C_{j_0}[0]\}$ otherwise. We define $C_{j_0}[1] = \max(R"_1)+1$ and $C_{j_0}[-1] = \max(R'_{-1})+1$. We do the same thing in symmetric if $\max(R'_{-1}) > \max(R'_1)$.

Now for $i = 2$ to infinity, we define $R_i$ to be the set of values among $C_{j_0-1}[i]$, $C_{j_0+1}[i]$ and $C_{j_0}[i-1]$ 
which are different from $k$ and not dominated so far and we set $C_{j_0}[i] = \max(R_i)+1$. In parallel downwards for $i$ from -2 to minus infinity, we define similarly $R_i$ considering $C_{j_0}[i+1]$ instead of $C_{j_0}[i-1]$ and we set $C_{j_0}[i] = \max(R_i)+1$ as well.
\end{enumerate}

\item \textbf{The obtained configuration is in $X_k^\mathrm{M}$.}

We have to check that the configuration $x$ we constructed satisfies the local 
rules of the minimal domination subshift. We divide this part of the proof according to whether or not the cells belong to $p$ or $q$, or lie outside.
\begin{enumerate}

\item \textbf{The local rules are satisfied inside the half planes.}

By hypothesis, the patterns $p$ and $q$ 
are globally admissible in $X^\mathrm{M}_k$. As a consequence, all symbols in $p$ or $q$ except the columns $C_0$ and $C_{n+1}$ are dominated, and all symbols in $p$ or $q$ except the columns $C_{-1}, C_0, C_{n+1}$ and $C_{n+2}$ have a private neighbour with the right value. We prove that this is also the case for $C_{-1}$ and $C_0$. The cases for $C_{n+1}$ and $C_{n+2}$ are symmetric. Let $i \in \ZZ$. If $C_0[i] \neq k$ is not dominated by an element inside $p$ then the algorithm sets $C_1[i] = C_0[i]+1$, which dominates $C_0[i]$. Remember that $C_0[i]$ cannot both need being dominated and a good private neighbour, for $p$ would not be globally admissible if this were the case.
Now we know that every cell in $C_0$ is dominated, and has a potential good private neighbour in $C_1$. It remains to show that this neighbour is not dominated by $C_2$. Let $i \in \ZZ$ such that $C_0[i]$ does not have a good private neighbour in $p$. We showed that both $C_1[i-1]$ and $C_1[i+1]$ are less than or equal to $C_1[i]$. $C_1[i]$ is dominated by $C_0[i]$, hence $C_2[i] < C_1[i]$ and $C_1[i]$ is indeed a good private neighbour for $C_0[i]$. If $C_{-1}[i]$ needs a good private neighbour, either it has it in $C_{-2}$ or $C_{-1}$ and we are done, or it needs to be $C_0[i]$. But then $C_0[i]$ is dominated by $C_{-1}[i]$ so that $C_{1}[i] < C_0[i]$ and $C_0[i]$ stays a good private neighbour for $C_{-1}[i]$.

\item \textbf{Every position outside $\text{supp}(p) \cup \text{supp}(q)$  with label less than $k$ is dominated.}

We prove the case for columns $C_1$ to $C_{n-k-1}$ (the central column), the proof is identical for the columns at the right.
At the left of the central column, this is true by the definition of the algorithm: if $C_j[i]$ is not dominated, then $C_{j+1}[i]$ is equal to $C_j[i]$+1, or has at least this value if $C_{j+1}$ is the central column.

For the central column it is also true for the same reason by the definition of the algorithm for this column.

\item \textbf{Every dominant position outside $\text{supp}(p) \cup \text{supp}(q)$ with label greater than 0 has a good private neighbour.}

We recall that $j_0 = n-k-1$ is the index of the central column. Once again we only prove this for columns $C_1$ to $C_{j_0}$.
We first state an intermediate result before proving this.
\begin{claim}
Let $0 \leq j < j_0$ and $i \in \ZZ$.
If $C_j[i]$ needs a good private neighbour in $C_{j+1}$ then  $C_j[i]$ is the unique good private neighbour of $C_{j-1}[i]$, i.e. its predecessor was in the same situation.
\label{claim-algo-meta}
\end{claim}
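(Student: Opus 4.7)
The plan is to reduce the claim to a case analysis on which branch of the algorithm defining the middle columns actually set the value $C_j[i]$. By the definition of the algorithm, exactly one of three situations occurred at step $j$: (a) $C_{j-1}[i]$ was not dominated when column $C_j$ was processed, and $C_j[i] = C_{j-1}[i]+1$; (b) $C_{j-1}[i]$ was dominant and had no good private neighbour among the cells already set at the moment $C_j$ was filled, and $C_j[i] = C_{j-1}[i]-1$; (c) neither situation applied and $C_j[i]=0$. I will rephrase the hypothesis operationally: ``$C_j[i]$ needs a good private neighbour in $C_{j+1}$'' means that $C_j[i] > 0$ and none of the already-visible neighbours of $C_j[i]$, namely $C_{j-1}[i]$, $C_j[i-1]$, $C_j[i+1]$, sits at value $C_j[i]-1$ without being dominated by yet another neighbour.

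First I would discard case (c) trivially: the value $0$ never requires a good private neighbour, so the hypothesis is vacuous. Then I would eliminate case (a) by exhibiting $C_{j-1}[i]$ itself as a good private neighbour of $C_j[i]$. Indeed, in case (a) one has $C_{j-1}[i] = C_j[i]-1$, so the value is right; and because the algorithm fired the ``not dominated'' rule, none of $C_{j-2}[i]$, $C_{j-1}[i-1]$, $C_{j-1}[i+1]$ can have value strictly larger than $C_{j-1}[i]$. Using the completeness of the earlier columns, this is enough to conclude that $C_{j-1}[i]$ is undominated except by $C_j[i]$, contradicting the hypothesis that $C_j[i]$ still lacks a good private neighbour after $C_j$ is filled.

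Only case (b) remains, and this is where the conclusion comes out. In that case $C_j[i] = C_{j-1}[i]-1$, which is by construction the label exactly one below $C_{j-1}[i]$, and $C_j[i]$ is placed precisely to serve as the missing good private neighbour of $C_{j-1}[i]$. To deduce uniqueness, I would argue that at the instant $C_j[i]$ is set, the cells $C_{j-2}[i]$, $C_{j-1}[i-1]$, $C_{j-1}[i+1]$ all carry labels different from $C_{j-1}[i]-1$ (else the rule (b) would not have fired), so $C_j[i]$ is the sole candidate among the already-placed neighbours; and any later candidate would have to appear at $C_j[i\pm 1]$ with value $C_{j-1}[i]-1 = C_j[i]$, which is precluded because such a value would dominate $C_j[i]$ and therefore block the very mechanism by which $C_j[i]$ is demanding a good private neighbour in $C_{j+1}$.

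The main obstacle I anticipate is tracking the ``already-filled'' invariants carefully: one has to be explicit about the order in which cells are examined (left-to-right, column by column) so that statements like ``$C_{j-1}[i]$ was not dominated when $C_j$ was processed'' can be translated rigorously into inequalities on the final values. Once this bookkeeping is nailed down, the argument itself is a short three-line case split; the rest of the proof of the block-gluing property can then use this claim to propagate the ``needs a private neighbour'' predicate backwards through the columns until it hits $p$, which is globally admissible and cannot carry an unsatisfied demand.
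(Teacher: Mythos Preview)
Your proposal is correct and follows essentially the same route as the paper: a three-way case split on which branch of the filling rule produced $C_j[i]$, with case~(c) discarded because $0$ needs no private neighbour, case~(a) discarded because $C_{j-1}[i]$ itself is then a good private neighbour of $C_j[i]$, leaving case~(b) which yields the conclusion. The paper's proof is terser (it compresses your three cases into two sentences) and does not spell out the uniqueness part, relying instead on the phrasing ``its predecessor was in the same situation'' and the way the claim is used inductively afterwards. One small imprecision in your uniqueness argument: rule~(b) fires when $C_{j-1}[i]$ has no \emph{good private} neighbour, not when the adjacent cells all carry a label different from $C_{j-1}[i]-1$; a neighbour could carry the right label yet fail to be private. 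This does not affect the main case analysis, which is the substance of the claim.
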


\begin{proof}
    Let $0 \leq j < j_0$ and $i \in \ZZ$ be (if any) such that $C_j[i]$ needs a good private neighbour in $C_{j+1}$. This iplies, by the definition of the algorithm, that $C_{j-1}[i]$ did not need to be dominated by $C_j[i]$ (otherwise $C_{j-1}[i]$ would have been a good private neighbour). Since $C_j[i] \neq 0$ (a position with value 0 does not need a good private neighbour) then the definition of the algorithm implies that we are in the case $C_j[i] = C_{j-1}[i]-1$, hence $C_{j-1}$ needed a private neighbour.
\end{proof}

We now go back to the proof of \Cref{meta-gluing-th}. Let $0 \leq j < j_0-1$ and $i \in \ZZ$. Let us show that $C_j[i]$ has a good private neighbour if it is different from 0. From \Cref{claim-algo-meta} and an easy induction, we know that $C_0[i]$ also needed a good private neighbour in $C_1[i]$. We mentioned that each of $C_0[i-1]$ and $C_0[i+1]$ is dominated in $p$, and if any needed a good private neighbour in $C_1$ then this neighbour would have the same value as $C_0[i]$. By an easy induction, we can see that $C_j[i-1] = 0$ or $C_j[i-1] = C_j[i]$ and the same applies to $C_j[i+1]$. For the same reasons that $C_0[i]$ was guaranteed to have a good private neighbour in $C_1$, $C_j[i]$ has a good private neighbour in $C_{j+1}[i]$.

If now $j = j_0 - 1$, then $j = n-k-1-1 \geq 2k+3 - k - 2 \geq k+1$. Let us show that $C_{j_0-1}$ does not need a good private neighbour in $C_{j_0}$. Let us assume that this is false: it does need one. Then using the same argument as before, we know that each $C_u[i]$ needed a good private neighbour in $C_{u+1}[i]$ for $-1 \leq u < j$. This implies that $C_1[i] = C_0[i]-1$, and so on, so that $C_j[i] \leq C_{k+1}[i] = C_0[i]-k-1$. However, $C_0[i] \leq k$, which implies that this is not possible. Therefore, at most column $C_{k-1}$ may have cells which need a good private neighbour.

It only remains to show that positions in $C_{j_0}$, if containing values greater than 0, have a good private neighbour. This is true thanks to the definition of the algorithm for the central column: every set $S_i$ only contains values of undominated neighbours. This implies that any new defined value in the central column is one above its maximum undominated neighbour, which will not have any new neighbour to also dominate it. The exception is with the cells at lines 1 and -1. However, the way their values are defined in the central column ensures that at most one of them may have $C_{j_0}[0]$ as a good private neighbour, and the other one does not dominate it. This proves that both have a good private neighbour in column $C_{j_0}$, $C_{j_0-1}$ or $C_{j_0+1}$.
\end{enumerate}

\begin{figure}[h!]
\centering
\begin{tikzpicture}[scale=0.4, xscale = 1.5]

\foreach \y in {0,...,7} {\node[scale=0.5] at (1.5, \y+0.5) {$k$};}
\foreach \y in {0,...,7} {\node[scale=0.5] at (2.5, \y+0.5) {$k$};}

\foreach \y in {0,...,7} {\node[scale=0.5] at (3.5, \y+0.5) {$k-1$};}

\foreach \y in {0,...,7} {\node[scale=0.5] at (4.5, \y+0.5) {$k-2$};}

\foreach \y in {0,...,7} {\node[scale=0.5] at (7.5, \y+0.5) {$1$};}
\foreach \y in {0,...,7} {\node[scale=0.5] at (8.5, \y+0.5) {$0$};}
\foreach \y in {0,...,7} {\node[scale=0.5] at (9.5, \y+0.5) {$0$};}
\foreach \y in {0,...,7} {\node[scale=0.5] at (10.5, \y+0.5) {$0$};}
\foreach \y in {0,...,7} {\node[scale=0.5] at (11.5, \y+0.5) {$0$};}
\foreach \y in {0,...,7} {\node[scale=0.5] at (12.5, \y+0.5) {$1$};}

\foreach \y in {0,...,7} {\node[scale=0.5] at (15.5, \y+0.5) {$k-2$};}

\foreach \y in {0,...,7} {\node[scale=0.5] at (16.5, \y+0.5) {$k-1$};}

\foreach \y in {0,...,7} {\node[scale=0.5] at (17.5, \y+0.5) {$k$};}
\foreach \y in {0,...,7} {\node[scale=0.5] at (18.5, \y+0.5) {$k$};}

\foreach \x in {1,...,19} { (\ifthenelse{\x=6 \OR \x = 14}{}{ \draw[mydashed](\x, -1) -- (\x, 0);}}
\foreach \x in {1,...,19} { (\ifthenelse{\x=6 \OR \x = 14}{}{ \draw[mydashed](\x, 8) -- (\x, 9);}}

\foreach \x in {0,...,8} {\draw[mydashed] (0,\x) -- (1,\x);}
\foreach \x in {0,...,8} {\draw[mydashed] (19,\x) -- (20,\x);}

\draw (1,0) grid (5,8); 
\draw (7,0) grid (13,8); 
\draw (15,0) grid (19,8);

\node at (-1,4) {$p$};
\node at (20.8,4) {$q$};

\node at (6,4) {$\cdots$};
\node at (14,4) {$\cdots$};

\node[scale=0.625] at (2.5,-1.5) {$C_0$};
\node[scale=0.625] at (3.5,-1.5) {$C_1$};
\node[scale=0.625] at (4.5,-1.5) {$C_2$};

\node[scale=0.625] at (7.5,-1.5) {$C_{k-1}$};
\node[scale=0.625] at (8.5,-1.5) {$C_{k}$};
\node[scale=0.625] at (9.5,-1.5) {$C_{k+1}$};
\node[scale=0.625] at (10.5,-1.5) {$C_{k+2}$};
\node[scale=0.625] at (11.5,-1.5) {$C_{k+3}$};
\node[scale=0.625] at (12.5,-1.5) {$C_{k+4}$};
\node[scale=0.625] at (15.5,-1.5) {$C_{2k+1}$};
\node[scale=0.625] at (16.5,-1.5) {$C_{2k+2}$};

\draw[line width =0.5mm] (3,-1.5) -- (3,9.5);
\draw[line width =0.5mm] (17,-1.5) -- (17,9.5);
\end{tikzpicture}
	
\caption{Illustration of the fact that $X_k^M$ is not $2k+2$-block-gluing: when attempting to glue $p$ and $q$, the values of the next $k+1$ columns are forced, and leave the cell of the middle columns $C_{k+1}$ and $C_{k+2}$ undominated.}
\label{meta-k-counter-example}
\end{figure}
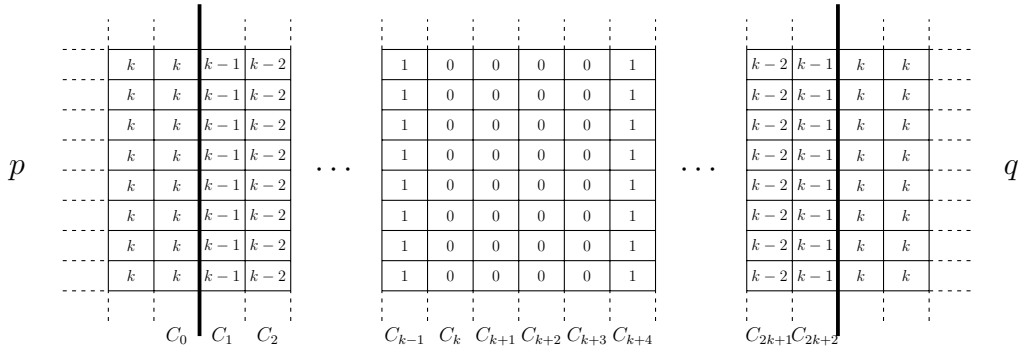

\item \textbf{$X_k^\mathrm{M}$ is not $2k$+2-block-gluing.}

We take a globally-admissible half-plane pattern $p$ of $\ZZ_{-} \times \ZZ$ such that its rightmost two columns are filled with value $k$. It is globally admissible because the preceding column may be filled with values $k-1$, the one before by $k-2$, and so on until a column of 0s. The column preceding it must also be filled with 0s and the one at its left can be filled by ones, then by twos, and so on. We take $q$ as the vertical symmetric of $p$.

By \Cref{prop-mini-meta-k} since the cells of $C_0$ (see \Cref{meta-k-counter-example}) have no neighbours of value $k-1$ in $p$, column $C_1$ must be filled with value $k-1$. In fact for the same reason, for every $1 \leq i \leq k$, the column $C_i$ must be filled with the value $k-i$. Now, to guarantee for each cell of column $C_{k-1}$ that its neighbour in $C_k$ is its private neighbour, column $C_{k+1}$ must not dominate column $C_k$, hence it must be filled with zeroes. A symmetric reasoning forces the values of the cells of the other columns between $p$ and $q$. Now columns $C_{k+1}$ and $C_{k+2}$ are not dominated. This proves that the minimal-meta-$k$-dominating subshift $S_k^\mathrm{M}$ is not ($2k+2$)-block-gluing. Hence it is block gluing with $c = 2k+3$.
\end{itemize}
\end{proof}

We can notice, for the minimal meta $k$ domination, that there is some linear decrease on the values of the forced cells. Indeed, in \Cref{meta-k-counter-example} for instance,  we see that when the label is $k$, we managed to need roughly $k$ steps to get down to 0, so that a minimum gap cannot be less than around $2k$. We could easily have only needed a gap of $k$ if we had slightly modified the rules, so that a private neighbour of a vertex with label $i$ may have either label $i-1$ or $i-2$. We may even implement a roughly $2k/q$-block-gluing SFT if a vertex of label $i$ needs a private neighbour of label between $i-q$ and $i-1$. This behaviour is similar to the one in \Cref{rk-sagging} when we mentioned the sagging of the set of forced cells. In the example showing that the minimal-meta-$k$-domination subshift is not ($2k$+2)-block-gluing, we took half-plane patterns to avoid some technicalities. Had we not done that, at each new column one cell at the top (and one at the bottom) would not have been forced, since they have upper (and lower) neighbours outside the band of the same height of the pattern. These extra cell give some liberty for the symbol in the top and bottom cells, so that with each new column we may lose two lines which were forced but which are independent now. This constitutes, as in \Cref{rk-sagging}, a linear sagging. Hopefully, here it does not prevent the SFT from being block-gluing because the values decrease by themselves.

\section{Conclusions}
\subsection{Counting dominating sets}
Here we both proved the existence and the computability of an asymptotic growth rate for four variants of the domination problem in grid graphs. We gave some bounds and some values we think approximates rather accurately each of these growth rates. The bounds for the domination and total domination problems are rather good: respectively 0.5\% and 1\% of the computed value. 
However, the ones for their minimal counterparts are looser: the gaps between the upper and the lower bounds are around 20\% of the lower bound.

\Cref{th-minimal-dom} improves the bound given by Fomin et al.~\cite{enum-minimal-dom-paper}, when the graph is a grid. They provide an algorithm enumerating the minimal dominating sets of a graph. By analysing its complexity, they show that there are at most $1.7159^n$ minimal dominating sets for a graph on $n$ vertices. We reduce this bound by approximately $10\%$ in the case of grids.

As for the minimal domination and minimal total domination, the associated bounds could be improved by using a more powerful computer (mainly one with more than 1.5TB of memory), or by optimising the technique or finding one more efficient.\\

Also, the bounds we give are not numerically certified because of two reasons. First, the computations are done in floating-point arithmetic, hence some rounding errors may  propagate. However, we only do additions and one division, so this should not occur. Also, we checked one value using arbitrary-precision numbers and it gave the exact same results. Second, we compute the largest eigenvalue of the matrix by using the power iteration method: start with a vector $V$, compute the iterates ($M^kV$) and see how the norm of the vector evolves. We observe that this method converges rather quickly (around 20 iterations), but it does not certify any digit of the value as being the right one: we have no guarantee on the precision of the numerical values. However, when we used the arbitrary-precision numbers, we computed $M^{150}V$ and there seemed to be 85 digits which stabilised, so it really looks like it is converging here. One direction of work could be to find a way it to certify the digits of the computed eigenvalues and use it. 

\subsection{Around the block-gluing property}
\label{counting-conclusion-gluing}
In the proofs of the block-gluing property, we mentioned some phenomenon: the sagging of the values in the middle cells between two pattern we try to glue, or the one about the number of lines which are still forced when the patterns are finite (instead of half planes). We discuss here the matter further, for there exist subshifts for which this sagging prevents the block-gluing property: the values or number of forced cells decrease too slowly, in a linear manner. This prevents two patterns to be glued with a gap less than something linear in the heights of the patterns, as we shall see.\\

Among others (and stronger) results, Gangloff and Sablik introduced the concept of linearly-block-gluing SFTs in~\cite{gangloff-sablik} and showed that some SFTs indeed have this property. One example they show kinds of encodes an integer by having a pattern with a column of $k$ black cells. By forbidding a list of three small patterns, they force the next column to be a column of $k-2$ black cells, the black cells being centred compared to the ones of the previous column. The next column must be a column of $(k-2)-2 = k-4$ black cells also centred, and so on. 
This means that if we begin with a column pattern of height $k$ full of black cells we need a gap of size around $k/2$ to glue it to a column of white cells. Hence no constant block-gluing gaps are possible, but only a gap the size of which is linear in the height of the pattern. We can see this as if the height of the pattern encodes a number, which then decreases is a linear way.

This reminds us of the \Cref{question-block-gluing} to know if there exists subshifts, or even SFTs which are $f(n)$-block-gluing, but where $f$ is not an affine nor a constant function. Let us now think about a potential SFT which would $\log(n)$-block-gluing for instance. We can think that each new column we build between two patterns should have halved the number of forced cells from the number in the previous column. For instance, with the previous example we would need, beginning with a column with $k$ adjacent black cells, that only $k/2$ cells are black in the next columns, then $k/4$, and so on. This could be done by several ways. We could, for instance in one dimension, encode an arbitrarily large alphabet with a finite alphabet but considering arbitrarily large patterns: if the alphabet is $\{0,1\}$ a word $1111$ could represent 15. For instance, we can define a subshift which would allow any word which forbids $01^k0^q1$ if $q < \log_2{k}$. This subshift is not an SFT and is not constant block gluing, but it is $\log(n)$-block-gluing.

However, for a SFT to be $\log(n)$-block-gluing appears to be a much more difficult problem: we have a finite alphabet and only a finite number of finite patterns which we forbid. For a subshift to have a sagging other than linear, it seems to need to be able to encode arbitrarily large values: if the sagging is logarithmic, it should work for patterns arbitrarily large so that it can effectively "compute" the logarithm of arbitrarily large numbers, but only with local rules of fixed radius. This seems rather difficult to achieve in the world of SFTs, but Gangloff and Sablik also showed in~\cite{gangloff-sablik} that this was possible by constructing a $\log(n)$-block-gluing SFT. They basically implement an $+1$ adder component (like in CPUs) with a system of carry. The number of digits only increases by one when we reach a new power of two, that is after $2^k$ iterations if the pattern is of size $2^k$. This settles one case of the question we asked in \Cref{question-block-gluing}, but all the spectrum of other subliner functions (except for the logarithm) or superlinear ones remains open. For instance, are there SFTs or subshifts which are $\sqrt{n}$-block-gluing?

\resetlinenumber
\chapter{Tiling rectangles with polyominoes}
\label{polys-chapter}
Even though we can wonder how we let this possible and even happen, some parts of the Pacific ocean are currently covered by 80 000 tons of plastic, tiling around three times the surface of France. Some other people are untiling big and ancient forests right now: over the past year, for instance, the Amazon rainforest in Brasil suffered the loss of between 500 000 and one million football pitches. The rate of deforestation of this area has increased eight times as much it was before Bolsonaro was elected.\\ 

Fortunately here, while we try to tile, it is not with plastic but instead with objects called polyominoes, looking like the inoffensive Tetris pieces. We try to fill a rectangle with such pieces, a bit like in a jigsaw puzzle, except that all the pieces are copies of the same one.

Informally, a polyomino is a connected finite set of unit squares. This means that the polyomino is made of only "one part": see \Cref{ex-polys-def}. We already talked about them in \Cref{domination-chapter}. We showed there how to encode a domination problem into a polyomino. A dominating set of the grid is then a \textit{covering} of the rectangle with the right polyomino: placing copies of the polyomino so that each cell is covered by at least one polyomino. A minimum dominating set is a covering of the rectangle with as few polyominoes as possible. The number of polyominoes used in such a covering gives the domination number of the grid.

In this chapter, we are interested in a notion close to the one of coverings: the notion of \emph{tiling}. Tiling a surface with a shape means covering the surface with copies of the shape such that no copy go over the surface, and no two copies overlap. We want to know, given one particular polyomino, if we can tile a rectangle with it. There are many questions in the topic of tilings with polyominoes. The natural question is to wonder which polyominoes can tile the plane. In the case when we allow only translations, that is when we choose one orientation for each polyomino and all its copies are obtained from it by translation only, the case is settled. Indeed, there exists an algorithm to decide if a given polyomino can tile the plane $\ZZ^2$ by translations. However, in the case when we allow translations, rotations and mirrors of the polyomino, we have little knowledge. We do not know whether or not this problem of tiling is decidable or whether there exists a polyomino which tiles only the plane in an aperiodic way. However, we know that the problem of tiling a plane with a set of polyominoes is not decidable. We will develop a bit more on these results in \Cref{section-defs-polys}. Unless written otherwise, when we speak of tiling, we mean a tiling by copies allowing translations, rotations and vertical mirrors.

A funny concept about the polyominoes is the one of \textbf{reptiles}.\footnote{Don't worry, they don't bite!} It stands for auto-replicating tiles. This means that we can assemble several copies of a polyomino $P$ in such a way that the obtained shape is $P$ zoomed in. This implies that the polyomino can tile the infinite plane $\ZZ^2$. Indeed, just take your shape, assemble copies of it so that you obtain that shape again but bigger in size. This way, you tile a bigger and bigger connected area by extending step by step your partial tiling, and never changing what is already done. Hence you will tile the whole plane if you can wait indefinitely, with no need for the axiom of choice.

This chapter is focused on a stronger property a polyomino can have: to be \textit{rectifiable}, i.e. to be able to tile (allowing rotations, translations and mirrors) some rectangle. This property is stronger than tiling the plane since when a polyomino tiles a rectangle then we can put together copies of this arrangement to tile the plane. We can also assemble the rectangles to obtain a big square, with which we can obtain a bigger shape of the polyomino, hence any rectifiable polyomino is also a reptile. If a polyomino is rectifiable, we want to find the smallest rectangle which can be tiled by this polyomino, or in fact the smallest number of copies we need to tile a rectangle. This minimum number of copies is called the \textbf{order} of a polyomino, if it exists. This problem looks rather simple, or at least can be expressed very easily, however there are big gaps in our knowledge of the subject. For instance, we do not know if there exists a polyomino the order of which is five, that is one which can tile a rectangle with five copies but cannot tile a rectangle with a smaller number of copies.

In 1989, in a paper studying polyominoes, their tileability and their rectifiability, Golomb~\cite{golomb-vieux} asked a question which is still unresolved. It was our goal for studying polyominoes.

\begin{question*}[\cite{golomb-vieux}]
Is there any polyomino with odd order greater than one?
\end{question*}

In the first chapter, we first give basic formal definitions and a bit of history on the subject of tilings using polyominoes. We will for instance give more details of the hierarchy Golomb made about the tiling properties polyominoes can have (tiling the plane, being rectifiable, being a reptile for instance). We will also say a bit about tilings with sets of polyominoes. In \Cref{algos-basiques}, we will give the methods which can be used to find the order of polyominoes. We give there simple versions of the algorithm, to be refined in \Cref{section-optim-polys}. It begins with \Cref{section-gen-polys} where we explain how we enumerate the polyominoes, which is a first step before testing them for rectifiability. We then focus on how to test if a rectangle can be tiled by some polyomino. We present two methods: one looking like a DFS in \Cref{section-dfs} which is rather slow, and another one which looks like a breadth-first search (BFS) in \Cref{section-bfs} which is much faster. We even mention in \Cref{section-gurobi} an approach using integer linear programming, which is slower than the two previous methods. In \Cref{section-optim-polys} we explain the methods we used to detect polyominoes which are not rectifiable, in order to rule them out. We also provide some optimisations mainly for the BFS algorithm. In \Cref{section-stats}, we give some statistics on the number of polyominoes which are rectifiable, and on their orders. We also give some ideas for further work in order to tackle this problem.

\section{Definition and some history}
\label{section-defs-polys}
\begin{deff}
    A \textbf{polyomino} $P$ is a finite and connected\footnote{Two unit cells of $\ZZ^2$ are \textbf{connected} when they share an edge.} union of unit cells of the $\ZZ^2$ lattice.
\end{deff}

\begin{figure}[h]
\centering
\includegraphics[scale=1]{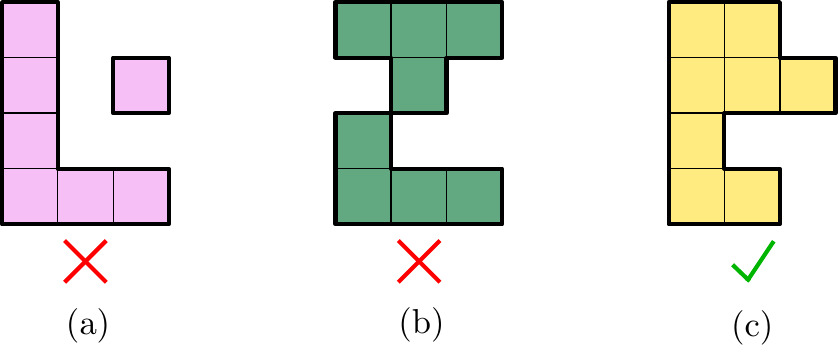}
\caption{Illustration of the polyomino definition. (a) is not a polyomino because it is not connected. The same goes for (b): connectivity is done through edges and not corners. (c) is a valid polyomino.}
\label{ex-polys-def}
\end{figure}

\begin{deff}
An \textbf{isometry} of the plane is a geometric transformation which preserves the distance between any pair of points. If $a$ and $b$ were at distance $d$, their images must be at distance $d$ as well.
\label{def-isometry}
\end{deff}

\begin{fact}
Including itself, a polyomino may have up to 8 isometric copies in total (see \Cref{example-copies}).
All isometric copies of $P$ are polyominoes obtained from $P$ by a succession of rotations by $\pi/2$ and vertical symmetries.
\label{fact-copies}
\end{fact}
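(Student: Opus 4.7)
The plan is to identify the group of isometries of the plane that can send one polyomino to another, then enumerate its elements. First, I would argue that any isometry $\varphi$ sending a polyomino $P$ to another polyomino $P'$ decomposes uniquely as $\varphi = t \circ \ell$, where $\ell$ is a linear isometry (an element of the orthogonal group $O(2)$) fixing the origin, and $t$ is a translation. Since both $P$ and $P'$ are finite unions of axis-aligned unit cells of $\mathbb{Z}^2$, the linear part $\ell$ must send axis-aligned unit cells to axis-aligned unit cells. Consequently, $\ell$ must map the basis vectors $(1,0)$ and $(0,1)$ to orthogonal unit vectors of $\mathbb{Z}^2$, and the only such vectors are $\pm(1,0)$ and $\pm(0,1)$.

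Next, I would enumerate the possibilities: there are four choices for the image of $(1,0)$ among $\{\pm(1,0),\pm(0,1)\}$, and once this choice is made, the image of $(0,1)$ is determined up to a sign (two choices preserving orthogonality). This gives $4 \times 2 = 8$ linear isometries, which together form the dihedral group $D_4$ of order $8$: the four rotations by angles $0,\pi/2,\pi,3\pi/2$ around the origin, together with the four reflections along the horizontal, vertical, and two diagonal axes. A standard presentation of $D_4$ shows that it is generated by the rotation of angle $\pi/2$ and the reflection along the vertical axis, which gives the second half of the statement.

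Finally, since we count copies up to placement on the lattice (i.e.\ up to translation), each element of $D_4$ gives at most one copy of $P$, so the total number of distinct isometric copies is bounded by $|D_4| = 8$. This bound is not always tight: when $P$ has non-trivial rotational or reflective symmetries, several elements of $D_4$ yield the same shape and the count drops (for instance, a single unit cell has only one copy). This justifies the wording ``up to 8''.

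The only subtle point, and hence the main obstacle, is the very first step: justifying cleanly that the linear part $\ell$ of $\varphi$ must preserve the horizontal/vertical direction of the lattice. This follows from the fact that both $P$ and $\varphi(P)$ consist of axis-aligned unit squares, so $\ell$ maps the edge vectors of a unit square to edge vectors of a unit square; everything else is a routine verification inside the finite group $D_4$.
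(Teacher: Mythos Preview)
The paper does not actually prove this statement: it is stated as a \emph{Fact} and left without justification, with only a reference to the figure showing the eight copies of a heptomino. Your argument is correct and supplies precisely the missing justification. The identification of the admissible linear parts with the dihedral group $D_4$, together with the observation that $D_4$ is generated by the quarter-turn and a vertical reflection, is the standard and cleanest way to establish both claims. Your remark that the bound of $8$ is not always attained (when $P$ has nontrivial symmetries) is also appropriate given the wording ``up to~$8$''.
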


\begin{deff}
Given a (finite or not) surface $X \subset \ZZ^2$ and a polyomino $P$ we say that $P$ \textbf{tiles} $X$ when $X$ can be decomposed into isometric copies of $P$.
\label{def-tile}
\end{deff}

In \Cref{def-tile} $X$ may be finite of infinite. In this chapter, we will mostly consider finite rectangles to be tiled, but sometimes we speak of tiling the plane $\ZZ^2$.

\begin{figure}[H]
\centering

\includegraphics[scale=1]{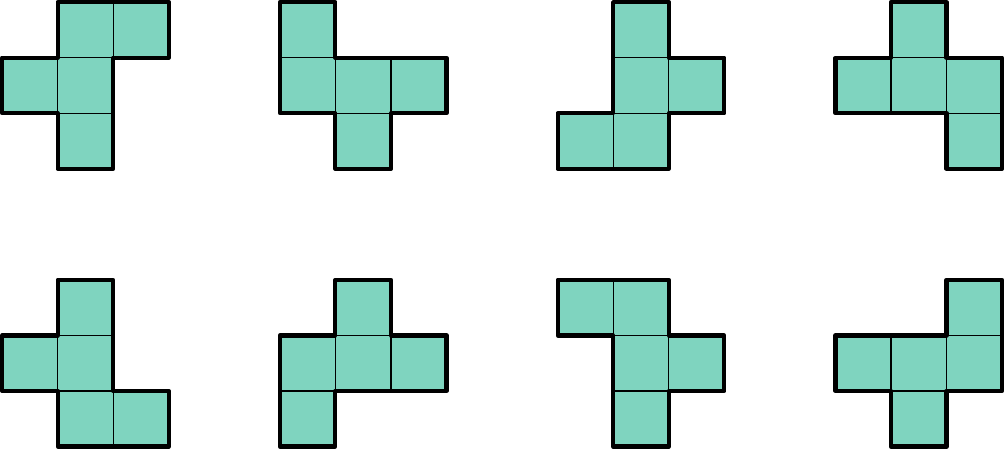}

\caption{The eight copies of one heptomino. The second line is the vertical flip of the first one.}
\label{example-copies}
\end{figure}

\begin{deff}
A polyomino $P$ is \textbf{rectifiable} when there exists a finite rectangle which can be tiled with copies of $P$. The \textbf{order} of a polyomino is the minimum number of copies with which we can tile a rectangle, or $+\infty$ if it is not rectifiable.
\end{deff}
Golomb~\cite{golomb-hierarchy} established in 1966 a hierarchy of polyominoes according to their ability to tile some parts of the planes. For instance, we mentioned that reptiles can tile the plane, thus being a reptile is a stronger property than tiling the plane. Also, being rectifiable implies that a polyomino can tile an infinite strip, which implies it can tile a half-plane, which in turn implies that it can tile the whole plane. We mentioned in the introduction that a rectifiable polyomino can tile a big square by combining the rectangles, hence it is a reptile: a bigger version of the polyomino can be constructed with copies of this big squares. In fact, the rectifiability is the strongest class of the hierarchy Golomb defined because of this ability to tile a square: when you tile a square you can tile any shape decomposable into squares. The full set of categories and implications can be found in~\cite{golomb-hierarchy}.
One big question remains open in this domain:
\begin{question*}
Is tiling the plane with one polyomino decidable?
\end{question*}

Golomb also studied the tiling of different surfaces when, instead doing it with just one polyomino, we are allowed to use several polyominoes from a given set. In 1970, he gave in~\cite{golomb-sets-hierarchy} a hierarchy of the different tiling problems when we tile with a set of polyominoes. He also answered the above question in this context.

\begin{thm}[Golomb,~\cite{golomb-sets-hierarchy}]
The problem of tiling the plane with a set of polyominoes is equivalent to Wang's domino problem and is therefore undecidable.
\label{undecidable-th}
\end{thm}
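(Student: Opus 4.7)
The plan is to show undecidability by a two-way reduction between the polyomino tiling problem and Wang's domino problem, whose undecidability (Berger, 1966) is well established. Since \Cref{undecidable-th} only asserts equivalence, I focus on the reduction from Wang tiles to polyomino tiling; the reverse reduction is the easy direction and I outline it briefly.

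First I would recall the setup: a Wang tile set is a finite set of unit squares whose four edges carry colours from a finite palette, and the domino problem asks whether one can tile $\ZZ^2$ by translated (only) copies of these tiles so that colours match along every shared edge. To reduce this to polyomino tiling, I would encode each Wang tile as a polyomino by replacing each of its four edges with a distinctive profile of bumps and dents that depends only on the colour of that edge. Concretely, fix a sufficiently large integer $N$ (say $N=4k+2$ where $k$ is the number of colours); blow up each unit square into an $N\times N$ block of cells, and along the top (resp. right) edge place a pattern of protruding unit squares whose positions encode the colour in binary, while along the bottom (resp. left) edge place the complementary pattern of notches, so that a bump on one side fits exactly one matching notch on the other side and no other. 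The crucial property to verify is that a set of these polyominoes tiles $\ZZ^2$ if and only if the original Wang tiles do: bumps and notches can only interlock when the two adjacent polyominoes came from Wang tiles with matching edge colours, and conversely any valid Wang tiling lifts to a polyomino tiling by juxtaposing the encoded blocks.

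The main obstacle will be ensuring that the polyomino encoding is rigid enough to forbid spurious tilings. Since polyominoes may be used with rotations and reflections (by \Cref{fact-copies}), a naive encoding might allow a rotated copy of one piece to plug into a slot intended for another, creating tilings with no counterpart among the Wang tiles. To prevent this, I would add to every polyomino a common "orientation gadget," for instance an asymmetric decoration on one designated corner or a global chirality marker built from a distinguished pattern of bumps that is not preserved by any non-trivial isometry. One has to check that this gadget forces every polyomino in a valid tiling to appear in the same orientation, so that effectively only translations occur, matching the Wang tile setting. A second technical point is connectedness: one must verify that each encoded block remains a polyomino (a connected union of cells), which is immediate if the inner $N\times N$ core is kept intact.

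For the reverse reduction (polyominoes to Wang tiles), I would sketch the following. Given a finite set $\mathcal{P}$ of polyominoes, consider as "states" all pairs (polyomino $P$ in $\mathcal{P}$ with a chosen orientation, cell position inside $P$). Build Wang tiles whose four colours encode, for each unit cell of $\ZZ^2$, which polyomino covers it and which internal position that cell occupies; the matching rules enforce that horizontally and vertically adjacent cells carry consistent internal coordinates within a single copy, or else belong to two different copies joined along compatible boundaries. A valid Wang tiling of $\ZZ^2$ then reads off a polyomino tiling, and vice versa. Combining both directions with Berger's theorem yields the undecidability claim.
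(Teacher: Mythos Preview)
The paper does not give its own proof of this theorem: it is stated as a result of Golomb~\cite{golomb-sets-hierarchy} and simply cited from the literature, with the surrounding text only paraphrasing the statement and then moving on to Ollinger's refinement. So there is no paper proof to compare against.

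Your sketch is essentially the standard approach, and indeed the one Golomb uses: encode Wang-tile colours as bump/dent profiles along the edges of a large square block, so that edge-matching becomes geometric interlocking. You correctly identify the main technical hazard---that polyomino tilings allow rotations and reflections while Wang tilings do not---and propose an orientation gadget to rigidify the pieces. This is the right idea, though in an actual proof the verification that no spurious tilings arise (pieces sliding, meeting at half-offsets, or interlocking in unintended ways) is where most of the work lies; your proposal acknowledges this but does not carry it out. The reverse reduction you outline is also the standard one and is straightforward. Overall your plan is sound as a proof sketch, but since the paper treats this as a quoted black-box result, no further detail is expected here.
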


This means that there are no Turing machines which, on every possible set of polyominoes given as input, would output, in finite time, 1 if we can tile $\ZZ^2$ with the polyominoes in this set, and 0 otherwise. \Cref{undecidable-th} was further improved in 2008:

\begin{thm}[Ollinger,~\cite{undecidable-five}]
The problem of tiling the plane with a set of five polyominoes is undecidable.
\end{thm}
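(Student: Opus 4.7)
The plan is to reduce from a known undecidable tiling problem, most naturally Wang's domino problem, which Golomb's \Cref{undecidable-th} already relates to the general polyomino-set tiling question. The goal is to simulate an arbitrary Wang tileset using only five fixed polyomino shapes, so that the plane is tiled by copies of these five shapes if and only if the original Wang tileset tiles the plane. Since Wang's domino problem is undecidable, any computable reduction of this form immediately yields the theorem.

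First I would introduce a \emph{macro-tile} construction: each Wang tile of the instance is simulated by a large rectangular assembly built out of copies of the five polyominoes. Color constraints on the edges of the Wang tile are encoded geometrically by a system of \emph{bumps and dents} on the boundary of the macro-tile, where the bump pattern along each macro-edge uniquely encodes the corresponding color. Two macro-tiles placed next to each other interlock correctly precisely when the encoded colors on the shared edge agree. The delicate part of the design is that the shapes themselves are fixed once and for all: rather than allocating one polyomino per Wang tile or per color, the color information is carried entirely by \emph{where} the universal building-block polyominoes are placed inside the macro-tile, not by which shapes are used.

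Concretely, I would aim for a roster of five shapes playing roles such as: a \emph{bulk} polyomino that fills most of the interior of each macro-tile; one or two \emph{boundary} polyominoes that realize the bump/dent alphabet along macro-edges; and auxiliary \emph{corner} and \emph{synchronization} polyominoes whose geometry forces the macro-tile grid to align rigidly (ruling out shifted or rotated assemblies). The data of the Wang tileset enters only through the (finite) list of admissible macro-tiles; the shapes themselves do not depend on the instance. The reduction is then clearly computable, since given the Wang tileset one can list the legal macro-tiles explicitly.

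The hard part will be proving correctness of the reduction in the soundness direction: showing that \emph{every} tiling of the plane by the five polyominoes decomposes canonically into macro-tiles that correspond to a legal Wang tiling. One must rule out ``parasitic'' tilings where the five shapes lock together in some unforeseen way that does not respect the intended macro-tile grid. This typically requires a careful local case analysis exploiting the rigidity of the chosen bump/dent profiles: one shows that around each copy of, say, a corner polyomino, the neighboring copies are forced into a unique local configuration, which by induction propagates into a global macro-tile structure. Completeness (a valid Wang tiling yields a valid polyomino tiling) is comparatively routine: one simply substitutes each Wang tile by its macro-tile assembly and checks that boundary bumps match wherever colors matched. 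The engineering challenge, and the novelty of Ollinger's result over Golomb's, lies entirely in squeezing this rigid encoding into exactly five shapes rather than an unbounded number depending on the Wang instance.
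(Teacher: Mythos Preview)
The paper does not prove this theorem; it only cites Ollinger and remarks in one sentence that the result is obtained ``by reducing the domino problem to the problem of tiling the plane with a set of five polyominoes.'' So there is no detailed argument in the paper to compare against. Nonetheless, your proposal contains a genuine conceptual error worth flagging.

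You write that ``the shapes themselves do not depend on the instance'' and that ``the data of the Wang tileset enters only through the (finite) list of admissible macro-tiles.'' But the decision problem in question takes as input \emph{only} a set of polyominoes and asks whether that set tiles the plane; there is no side channel through which a list of admissible macro-tiles could be supplied. If the five shapes were truly instance-independent, the answer would be a fixed yes or no, and no reduction would be possible. What is fixed in Ollinger's theorem is the \emph{number} five, not the shapes: the reduction must map each Wang tileset $T$ to a set of five polyominoes whose geometry encodes $T$. In the actual construction, one of the five pieces is a large polyomino whose boundary carries bump/dent profiles encoding all the tiles of $T$ simultaneously; the remaining pieces are auxiliary shapes (teeth, fillers, a jaw) that force this central piece to be used in a way that selects one Wang tile per macro-cell and enforces the matching constraints. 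Your intuition about bumps, dents, rigidity, and the soundness direction being the hard part is all correct, but the architecture needs to be reorganised so that the entire Wang instance lives inside the shapes.
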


Ollinger also proved this result by reducing the domino problem to the problem of tiling the plane with a set of five polyominoes. This narrows the gap between the undecidability of tiling the plane by a set of polyomino and the question about doing so with a single polyomino. Maybe tiling the plane with a set of four polyominoes would turn out to be decidable, hence implying the same for sets of size less than four, including tiling with one polyomino.

In 1991, Beauquier and Nivat showed in~\cite{beauquier-nivat} an interesting characterisation of the \linebreak polyominoes which can tile the plane \uline{by translation} (forbidding copies obtained by rotation or mirror). This gives an algorithm to check if a polyomino can tile $\ZZ^2$ by translation: all we needed is to look at its frontier and use some combinatorics on words. However this does not help us here since we consider also rotations and symmetries of the polyominoes. Recently, Nitica investigated how to translate Golomb's hierarchies of polyominoes and sets of polyominoes when only translation is allowed. This resulted in different classes and some inclusion relations between these classes, as it can be seen in~\cite{nitica-polys} and~\cite{nitica-sets-polys}.

Shortly after, in 1992, Stewart and Wormstein~\cite{article-polys-order-3} put a first stone towards the resolution of this question: they answered the question about a polyomino of odd order for the smallest odd integer greater than one. We still do not know the answer for any other odd number.
\begin{thm}[\cite{article-polys-order-3}]
No polyominoes of order 3 exist.
\end{thm}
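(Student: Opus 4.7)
The plan is to argue by contradiction: assume a polyomino $P$ has order exactly $3$, so $P$ is not itself a rectangle (which would give order $1$) and $P$ cannot tile any rectangle with two copies (which would give order $2$). Fix a rectangle $R$ of dimensions $a \times b$ tiled by three isometric copies $P_1, P_2, P_3$ of $P$. Since $|P|\cdot 3 = ab$, one of the sides is divisible by $3$; up to rotating $R$ we may assume $3 \mid b$, and we set $|P| = a\cdot (b/3)$. The central objective is then to show that every such tiling must, after inspection, already exhibit an order-$1$ or order-$2$ sub-tiling, contradicting the assumption.

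First I would exploit the four corners of $R$. Since there are four corners but only three pieces, pigeonhole gives a piece, say $P_1$, containing at least two corners of $R$. Up to the symmetries of $R$ this yields three cases: (i) two corners on the same horizontal side of $R$, (ii) two corners on the same vertical side, or (iii) two diagonally opposite corners. Cases (i) and (ii) are equivalent under swapping $a$ and $b$, so effectively we have two cases. For each case, the connectedness of $P_1$ together with the fact that the three pieces are isometric copies of one polyomino forces strong structural constraints on $P_1$ and on its complement $R \setminus P_1$, which is tiled by the two copies $P_2$ and $P_3$.

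In case (i), I would examine the lattice path that forms the boundary between $P_1$ and $P_2 \cup P_3$. This path goes from the left vertical side of $R$ to the right vertical side and, since $P_1$ contains both bottom corners, lies entirely in the interior. The complement $R \setminus P_1$ is then a simply connected shape tileable by two copies of $P$, so this already gives a two-copy tiling of a (possibly non-rectangular) shape with area $2|P|$. Using that $P_2$ and $P_3$ are isometric, I would argue that the only way they can fit is by being related by a $180^\circ$ rotation or a reflection about the central vertical line of $R$, which forces the separating path between them to be either a straight horizontal/vertical segment or a centrally symmetric staircase. In the straight-segment subcase, $R \setminus P_1$ is itself a rectangle, so $P$ has order $2$; in the staircase subcase, one checks that the union $P_2 \cup P_3$ is already a rectangle (because two centrally symmetric L-shapes of the same area always combine to a rectangle), again giving order $2$. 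Case (iii) is treated similarly: the piece $P_1$ contains cells at two opposite corners, and so has a $180^\circ$-rotational match with itself after translation, which forces either $P_1$ to be a rectangle (hence $P$ a rectangle, order $1$) or $P_2 \cup P_3$ to decompose into two congruent rectangles.

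The main obstacle will be handling all staircase subcases in case (i) and the analogue in case (iii) without a tedious enumeration: the separating lattice path can have arbitrarily many steps, so I would set up an induction on the number of turns of this path and use \Cref{fact-copies} together with the fact that the three $P_i$'s are isometric to reduce each extra turn to a smaller configuration. A subtler point is to rule out a priori the scenario where $P_1$ covers three or even four corners; this I would dispatch first by noting that if $P_1$ meets three corners then $R \setminus P_1$ is disconnected into two pieces whose areas must each equal $|P|$ and be isometric to $P$, which immediately puts $R \setminus P_1$ into the order-$2$ situation already analysed. Combining all cases yields that $P$ has order $\leq 2$, contradicting the assumption of order $3$.
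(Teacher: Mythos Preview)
The paper does not prove this theorem; it merely cites it as the result of Stewart and Wormstein~\cite{article-polys-order-3}, and in fact remarks later that their argument is an ad-hoc geometric case analysis that ``does not seem, as the authors write, to be generalisable''. So there is no in-paper proof to compare against, and your outline must stand on its own.

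It does not. The pigeonhole step on the four corners is the correct opening move, but almost everything after that is a wish rather than an argument. In your case~(i) you assert that ``the only way $P_2$ and $P_3$ can fit is by being related by a $180^\circ$ rotation or a reflection about the central vertical line of $R$''. Nothing forces this: $P_2$ and $P_3$ are isometric to each other, but the isometry carrying one to the other can be any of the eight types and need not be a symmetry of $R$ at all. Consequently the claim that the $P_2$--$P_3$ interface is a straight segment or a centrally symmetric staircase is unsupported. Worse, even granting that, your conclusion ``$P_2\cup P_3$ is already a rectangle'' is a non sequitur: $P_2\cup P_3 = R\setminus P_1$, and its lower boundary is the $P_1$-interface, which you have said nothing about. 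The parenthetical justification (``two centrally symmetric L-shapes of the same area always combine to a rectangle'') is both irrelevant (nothing makes $P$ an L-shape) and false as stated (two such shapes combine to a centrally symmetric region, not a rectangle). Case~(iii) has the same defects in more compressed form; ``has a $180^\circ$-rotational match with itself after translation'' does not imply anything close to ``$P_1$ is a rectangle''.

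What your sketch is missing is exactly the hard content of the Stewart--Wormstein proof: a genuine analysis of how the boundary word of $P_1$ along the side(s) of $R$ it touches constrains the shapes of $P_2$ and $P_3$, using that all three pieces are congruent. The induction ``on the number of turns of the separating path'' that you allude to is where the real work lies, and it cannot be waved through; the published proof occupies several pages of careful geometry for precisely this reason.
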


This theorem can be restated in the following way: if a polyomino can tile a rectangle with three copies then this polyomino is necessarily a rectangle. However, there are no strong reasons for ruling out any odd order greater than three. Indeed, Golomb has shown that there are infinitely many polyominoes which can tile some rectangle with an odd number of copies; the problem is that it might not be the minimum number of copies needed, hence not their orders in that case. Let us call such a polyomino, which can tile some rectangles with an odd number of copies, an \textbf{odd polyomino}. In 1997, Reid~\cite{article-no-upper-bound-odd} showed that the minimum number of odd copies needed to tile a rectangle with a polyomino can be arbitrarily large. He showed this by giving a way to construct, for each prime number $p$, a polyomino of odd-order $3(p+2)$: a polyomino which can tile a rectangle with $3(p+2)$ copies but cannot tile a rectangle with an odd number of copies less than that.

The funny thing is that while we have no clues about half of the possible orders, namely the odd numbers, we have knowledge for half of the even orders.

\begin{thm}[\cite{golomb-vieux}]
Each positive multiple of 4 is the order of some polyomino.
\end{thm}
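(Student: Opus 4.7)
The plan is to give a constructive proof: for each positive integer $k$, exhibit an explicit polyomino $P_k$ whose order is exactly $4k$. A natural starting point is the observation that several small polyominoes are already known to have order $4$ (in particular, the Y-pentomino is a classical example, built from $4$ copies tiling a $4\times 5$ rectangle). The idea is then to generalise such a construction through a parameter $k$, so that the proportions of $P_k$ force both a tiling using exactly $4k$ copies and the impossibility of any rectangular tiling using fewer copies.

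For the upper bound, I would construct for each $k$ a polyomino $P_k$ together with an explicit rectangle $R_k$ tiled by exactly $4k$ copies of $P_k$. A recurring device is to pair up copies of $P_k$ into an intermediate shape (often a small rectangle or an L), which then itself assembles into $R_k$; alternatively, one can use a ``pinwheel'' arrangement whose $4$-fold rotational symmetry naturally produces counts that are multiples of $4$. Choosing the arm lengths of an L-shaped $P_k$ as linear functions of $k$ is the most flexible way to do this, because it allows the area of $R_k$ to scale as $(\text{constant})\cdot k \cdot |P_k|$, thus forcing $4k$ copies.

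For the lower bound, which is the delicate part, I would argue via a coloring/parity argument. The standard technique is to colour the cells of $\ZZ^2$ periodically (for instance with a $p\times q$ pattern) in such a way that any isometric copy of $P_k$ (allowing rotations and mirrors) covers a tightly constrained multiset of colours. Counting each colour inside any candidate rectangle then transforms the existence of a tiling into a Diophantine condition on the number of copies used; the geometry of $P_k$ must be calibrated so that this condition forces the number of copies to be a multiple of $4k$, hence at least $4k$. One should also exploit the fact that $P_k$ itself has area that is a certain function of $k$, so that purely from area considerations any tiled rectangle has area at least $4k \cdot |P_k|$.

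The main obstacle is tying the upper and lower bounds together uniformly in $k$. We need a single family $(P_k)_{k\ge 1}$ such that the explicit rectangle $R_k$ genuinely uses $4k$ copies, while at the same time the coloring argument rules out every tiling of every rectangle (not just of $R_k$) with fewer than $4k$ copies, and does so for every $k$ rather than only for small initial cases. Tuning the geometry of $P_k$ so that the colouring inequality is simultaneously tight (to match the upper bound) and robust under rotations and mirrors (to cover all possible placements) is where the bulk of the technical work lies, and is likely to require either an inductive construction on $k$ or an ad hoc family whose properties are verified by a uniform combinatorial argument.
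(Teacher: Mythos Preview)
The paper does not actually prove this theorem: it is quoted from \cite{golomb-vieux}, and the only comment offered is the single sentence ``Golomb showed this theorem by providing, for each number $4s$, a polyomino of order precisely this number.'' So there is no detailed argument in the paper against which to compare your work step by step.

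That said, your proposal is a \emph{plan}, not a proof, and the plan matches Golomb's approach only at the coarsest level --- construct an explicit family $(P_k)$ and show the order is exactly $4k$. Where it falls short is precisely where you yourself flag the difficulty: you never specify the family, and the lower-bound mechanism you sketch is unlikely to close the gap on its own. Colouring arguments of the type you describe typically yield divisibility or parity constraints on the number of copies in \emph{some} rectangles, but getting a uniform ``at least $4k$ copies in \emph{every} tileable rectangle'' out of a periodic colouring is rarely possible, because rotations and reflections of an L-shaped piece tend to equalise colour counts rather than separate them. Golomb's actual argument for his L-shaped family is geometric: one analyses how copies must sit along the short side of any tiled rectangle (corner and edge forcing), which bounds the rectangle's dimensions from below and hence the number of copies. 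Your outline never engages with this kind of boundary analysis, and without it the lower bound remains open.

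In short: right architecture, but the load-bearing step (the uniform lower bound) is neither executed nor supported by a method that would plausibly work; the paper itself defers entirely to Golomb for the construction and its verification.
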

Golomb showed this theorem by providing, for each number $4s$, a polyomino of order precisely this number.
In 1989, Dahlke \cite{dahlke} gave the order of a polyomino mentioned by Golomb: it was 92 (see \Cref{fig-92}). He also answered in another paper, with the same program, that another polyomino the ordered of which Golomb had also asked, had order 76.
To summarise, regarding which numbers are the order of some polyomino, we know the answer for a quarter of $\NN$ plus a few isolated cases. So there is still much space for discoveries. We also do not know the answer for small numbers: are there polyominoes of order 5? Of order 6, 14 or 22?\\

\begin{figure}[h]
\centering
\includegraphics[scale=1.4]{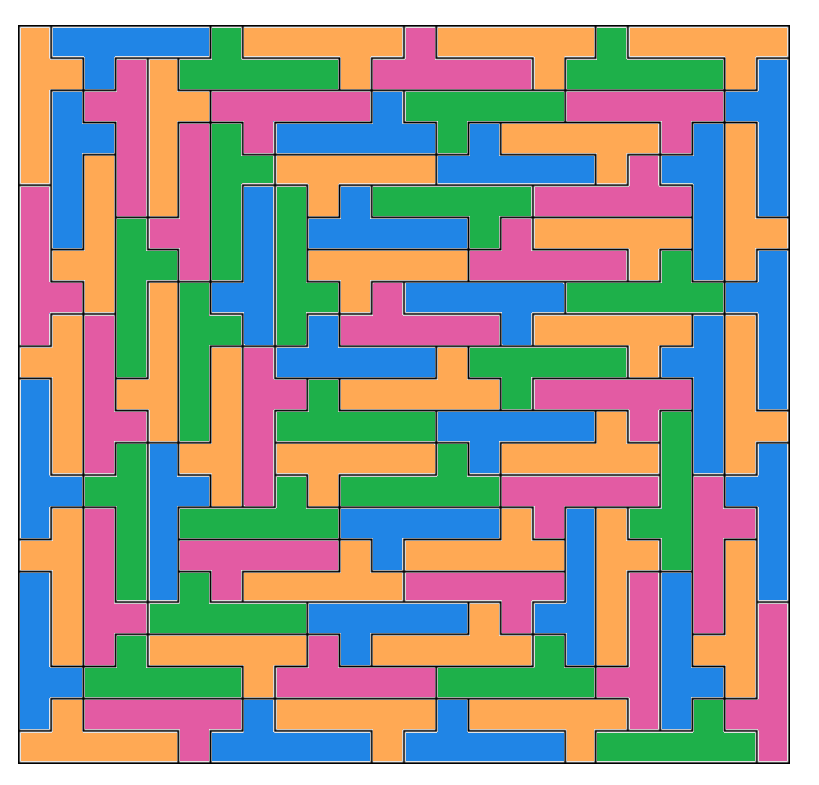}

\caption{A tiling with 92 copies of a polyomino of order 92. It also illustrates the four-colour theorem: two polyominoes sharing an edge always have different colours.}
\label{fig-92}
\end{figure}

\section{Finding the order of a polyomino: the basic algorithms}
\label{algos-basiques}
We explain in this section how to look for rectifiable polyominoes. We first show how to enumerate them, or in fact, enumerate the ones which have no holes since any polyomino with a hole cannot tile the plane. After this we present two methods, given a rectifiable polyomino, to find its order. This section will be completed by \Cref{section-optim-polys}, which provides some optimisations of our algorithms, as well as techniques to show that some polyominoes are not rectifiable.
In all this section we assume that we have a polyomino $P$ and try to find its order, by trying to tile rectangles of size $n \times m$ with isometric copies of $P$.

\subsection{Enumerating all the polyominoes}
\label{section-gen-polys}

Counting polyominoes is not a new topic at all. Several tables or sequences giving the number of a certain type of polyominoes according to their sizes can easily be found on some papers and on the Internet. The On-Line Encyclopedia of Integer Sequences (\url{oeis.org}) lists such tables: see for instance the sequences A000105, A001168, A000988. There are several sequences because there are several ways to list polyominoes, according for instance to whether or not we count isometric copies as different polyominoes. The sequences respectively focus on free, one-sided and fixed polyominoes. The first one considers that two isometric polyominoes are the same; the second one considers that two polyominoes are the same if one can be obtained from the other by combining rotations and translations; to the third one two polyominoes are the same only if they differ by translation.

In this chapter, we are interested in enumerating all the polyominoes. We consider, since we authorise isometric copies of a polyomino for our tilings, that two isometric copies of a polyomino are the same polyomino, and we enumerate just one of them. As mentioned in \Cref{fact-copies}, a polyomino can have up to 8 "different" forms: we may obtain them for instance by applying from zero to three rotations by $\pi/2$ and, on top of this, also applying  zero or one vertical flip\footnote{also called \uline{horizontal} symmetry} (see \Cref{example-copies}).

\begin{deff}
We say that $X$ is a set of \textbf{free polyominoes} when no two elements are isometric.\footnote{We keep this phrase because it is widespread, though we are not quite satisfied with it: the polyominoes are not free. Saying that a polyomino is free means nothing. It would be better to speak of a free set of polyominoes: two sets of polyominoes are equivalent if any polyomino in one has an isometric copy in the second, and vice versa.}
\end{deff}

\begin{deff}[see \Cref{figure-holes}]
Let $P$ be a polyomino placed on a grid. Any finite connected region of cells not in $P$ which is disconnected from the rest of the grid by $P$ is called a \textbf{hole}.
\end{deff}

\begin{figure}[h]
\centering
\includegraphics[scale=1]{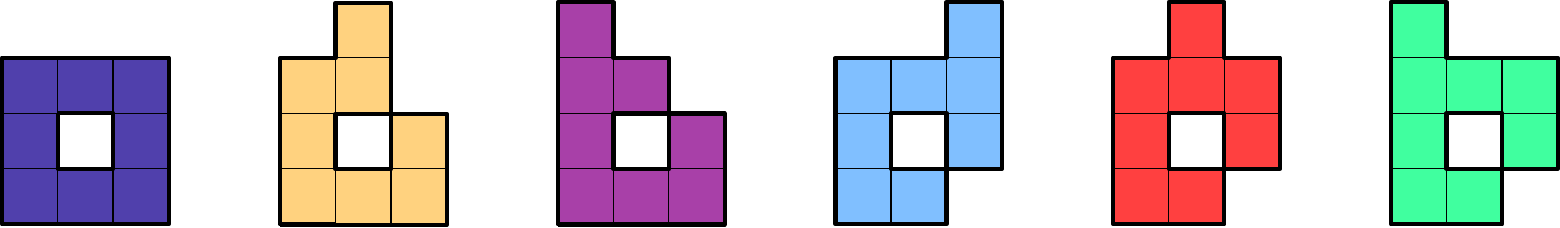}
\caption{The six octominoes containing a hole.}
\label{figure-holes}
\end{figure}

\begin{notation}
Let us denote by $\bm{\PP_n}$ the set of free polyominoes of size $n$ \uline{with no holes}.
\label{notation-holes}
\end{notation}
We now introduce two new notions we will use to show how to choose a \emph{representative} for equivalent polyominoes. As we will see, doing so will improve the running time of the enumeration.

\begin{deff}
Let $L$ be the list of coordinates of a copy of a polyomino $P$.
We say that $L$, or the copy of $P$ it represents, is \textbf{lifted} when $L$ is sorted in lexicographic order and the first coordinate of the list is $(0,0)$.\\
The copy represented by $L$ is the \textbf{representative} of $P$ when $L$ is lexicographically smaller than the lifted list of coordinates of the other copies of $P$.\\
\label{def-representative}
\end{deff}

To enumerate the polyominoes of size $n$, we use a rather simple method. It is based upon the fact that in any polyomino of size $n > 1$ there is at least one square we can remove to obtain a polyomino of size $n-1$. This fact is trivial, but here we deal with hole-free polyominoes, which is trickier. The proof of \Cref{lemma-gen-polys} will explain why this assertion is still true in the context of hole-free polyominoes. Our algorithm (see \Cref{algo-gen-poly} on next page) takes as input the set $\PP_{n-1}$ of polyominoes of size $n-1$: for each $P \in \PP_{n-1}$, we add a square at every possible location (maintaining the connectivity of the polyomino). We will explain the hole detection, done with a \emph{flood-and-fill} algorithm, in \Cref{section-optim-polys}.

In order to avoid storing several copies of the same polyomino, we use \Cref{def-representative} to keep a unique representative for each copy we enumerate. Let us reformulate the definition: a copy is \emph{lifted} when the first element of its list of cells is $(0,0)$ and this list is sorted by increasing x-values, and in case of equality, increasing order of y-values. A lifted polyomino does not have negative $x$-coordinates, but may have negative $y$-coordinates. This notion is central to compare two polyominoes: any copy of a polyomino has a unique lifted list of coordinates, which is a representative of this copy. Two copies with the same lifted list of coordinates are in fact the same up to translation.
The \emph{representative} copy of a polyomino is the copy which, once lifted, has the list of coordinates which is minimum for the lexicographic order. For instance, in \Cref{example-copies} the leftmost cell (and topmost in case there are several) of every copy will have coordinate $(0,0)$ once the (coordinate list of the) copy is lifted. Then the two copies of the second column will have two cells of $x$-coordinate 0 so they are smaller than the other ones. The third cell in the (sorted) list will have $x$-coordinate 0 for both copies, but $y$-coordinate -1 for the bottom one, whereas the copy on the first row will have $y = 0$. Hence the copy of the second row and second column is the smallest copy of \Cref{example-copies}. Again, this concept of representative of a polyomino is crucial: one can detect if two copies of polyominoes are copies of the same polyomino by comparing their representatives. We will detail in \Cref{other-optimisations} the \textit{hash table} data structure we used to achieve a fast insertion of new polyominoes without having duplicates of them. 

\begin{center} \begin{algorithm}[H]
\SetAlgoLined
\SetKwData{Left}{left}
\SetKwData{This}{this}
\SetKwData{Up}{up}
\SetKwFunction{Union}{Union}
\SetKwFunction{FindCompress}{FindCompress}

\SetKwInput{Input}{Input}
\SetKwInOut{Output}{Output}
\Input{An integer $n$ and the set $\PP_{n-1}$}
\Output{$\PP_n$}

 $X \leftarrow \emptyset$\\
 \ForEach{$P \in \PP_{n-1}$}
 {
 	\ForEach{$(x,y) \in P$}
 	{
 		\ForEach{\normalfont{$(x',y')$ neighbour of $(x,y)$}}
 		{
 		$Q \leftarrow P \cup (x',y')$\\
 		\If{\normalfont{$Q$ has no holes}}
            {$Q' \leftarrow$ representative of $Q$ \tcp{the smallest lifted copy of $Q$}
 			Insert $Q'$ into $X$}
 		}
 	
 	}
 } 
 
 \KwRet $X$
 \caption{Computing $\PP_n$ from $\PP_{n-1}$}
 \label{algo-gen-poly}
\end{algorithm}
\end{center}

\subsection{Tiling a rectangle with a DFS (inefficient)}
\label{section-dfs}
We present here a classical backtracking algorithm, described in \Cref{algo-dfs-poly}, which features the characteristics of a depth-first search (DFS). The idea is, when examining one specific (partial) tiling, to continue it as far as we can. When we find a contradiction we rollback one step before, try another choice, continue, and so on. DFS stands for "depth-first search": this means that when we have several options, we first examine the first one as far as we can go before considering the second one, and so on. By doing this, we explore all possible tilings, hence if there exists a tiling we will find it. If we exhaust all the possible choices for where to put the next copy of $P$ without managing to tile the rectangle, this constitutes a proof of the fact that $P$ does not tile the rectangle.

\begin{center} \begin{algorithm}[H]
\SetAlgoLined
\SetKwData{Left}{left}
\SetKwData{This}{this}
\SetKwData{Up}{up}
\SetKwProg{Fn}{Function}{:}{}

\SetKwFunction{dfsfunc}{DFS\_tile}{}{}

\SetKwInput{Input}{Input}
\SetKwInOut{Output}{Output}
\Input{The dimensions $n$ and $m$ of the rectangle, the polyomino $P$}
\Output{True if the $n\times m$ rectangle can be tiled by $P$, False otherwise}

\Fn{\dfsfunc{grid, $P$, nbLeft}}
{
\If{$nbLeft = 0$}
{
	\KwRet True
}

$(x,y) \leftarrow \texttt{choose\_free}(grid)$\\
\ForEach{\normalfont{partial tiling $grid'$ extending $grid$ by exactly one copy of $P$ covering $(x,y)$}}
{
	\If{\dfsfunc{grid', $P$, nbLeft-1}}
	{
		\KwRet True
	}

}
\KwRet False
}
\vspace{0.3cm}

\KwRet \dfsfunc{empty\_grid, $P$, $nm/|P|$}
 \caption{Trying to tile an $n\times m$ rectangle, in a DFS fashion.}
 \label{algo-dfs-poly}
\end{algorithm}
\end{center}

One important point in the algorithm is not to enumerate the same partial tiling several times: we want to avoid putting a copy in $P$ at cell $(0,0)$ then one copy at cell $(3,3)$, realise that there is a contradiction... and then try to put a copy at $(3,3)$ and then one at $(0,0)$ and come up with the same conclusion. To enforce this, at each depth we choose a unique cell to be tiled at that step (the role of the function \texttt{choose\_free}). We then force the chosen cell $(x,y)$ to be covered at this step backtrack directly, should the one we chose fail to be covered. We are sure to still enumerate all possible tilings: the chosen cell must be covered at some point so it might as well be covered now, and we try every possible way to cover it.  When we go back because we fail to cover the cell we chose at the current step, it is possible that another option for a choice made at an earlier step will lead to a tiling of the rectangle.

In our program, we chose the function \texttt{choose\_free} to return the leftmost free cell, and in case of ties, the topmost one. However, it could be chosen differently as we mention below. 
The choice for this function \texttt{choose\_free} is crucial. Indeed, choosing  judiciously the next cell to be covered can impact the performances a lot. There is one main approach. It consists in trying one cell which has very few possibilities to be covered, so that we do not have to branch a lot. We hope that we can find successive positions for which the number of ways to cover them is indeed very small: ideally only one possibility to cover it, or zero so that we backtrack directly. If we are lucky, by doing so we only have one choice at each time and the running time would in fact be "linear": the first tiling we explore would work, with no need to backtrack. However, nothing guarantees that by choosing a spot with only a few possibilities we do not double (or even more) the number of possibilities for the next spots. The only sure thing is: if there is only one way to cover one tile, we may as well cover it now so that we realise some contradictions sooner, and we cannot make things worse since there was a unique choice to cover the cell here no matter when it is done. Other approaches based on other heuristics can be tried, as for instance tiling in "spiral": prioritise cells according to their distance to the closest edge of the rectangle, and in clockwise direction if there are ties. This choice was less efficient than the one prioritising the cell with the lest number of possibilities. Another approach is to prioritise cells to be tiled according to their distance from the top-left corner, that is tiling kinds of diagonal waves one after the other.\\

This approach is very fast for polyominoes of small order, for instance it takes 0.08s for a specific polyomino of order 50. However it led to very long running times for greater orders: it takes one minute on a polyomino of order 76, and 24 minutes for a polyomino of order 96. This is why we describe a second more efficient method in \Cref{section-bfs}: the BFS approach. This other approach is outperformed by the DFS for small orders: it takes one second for the polyomino of order 50. However, for the other two, the BFS method (with the optimisations of \Cref{bfs-optimisations}) takes respectively 6.5s and 4.8s.

\subsection{BFS on the frontiers}
\label{section-bfs}
We present here an approach which may seem slower at first: we try iteratively and "simultaneously" all the partial tilings with $k$ copies. Basically, we try to fill the rectangle column by column and enumerate all the partial tilings with one tile, then with two tiles, and so on. We do it on a BFS fashion: all the partial tilings with $k$ copies of $P$ are tried before trying the ones with $k+1$ copies. This seems slower because we enumerate all the partial tilings sequentially whereas a good heuristic might have discarded a lot of them early and explored a promising tiling first. However, if the rectangle cannot be tiled, we are forced to try all possible tilings, hence the BFS approach might not be more costly 

\begin{deff}
    Let $T$ be a partial tiling of a rectangle. Let $C_l$ (resp. $C_r$) be the leftmost (resp. rightmost) non-empty yet non-completely filled column. The \textbf{frontier} is the set of cells of the columns $C_l, C_{l+1}, \cdots, C_r$ covered by the polyomino (shown in red boxes in \Cref{fig-frontier-bfs}).
\end{deff}

\begin{rk}
Assume that at each step the next cell to be covered is chosen to be the leftmost one still uncovered, and in case of ties, the topmost one. The surface covered by any partial tiling $T$ obtained in this fashion is connected.
\end{rk}

This is true because of the design of the algorithm. Apart from the first one, every new copy we put in the current partial tiling $T$ necessarily shares an edge with $T$. In case the obtained partial tiling does not contain a hole, the surface it covers constitutes a polyomino.

We now introduce the basic algorithm without the technical details and without some optimisations: we enumerate the frontiers resulting of the partial tilings with first one tile, then two, and so on.

\begin{figure}[h]
\centering
\includegraphics[scale=0.8]{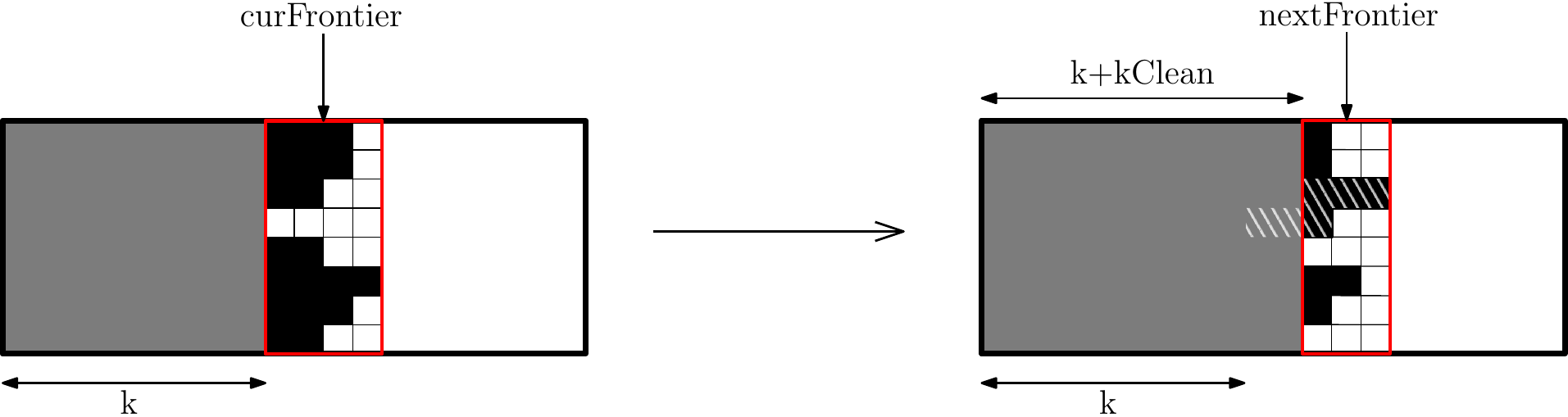}
\caption{Illustration of the creation of a new frontier from en existing one (\Cref{foreach-bfs} of \Cref{algo-bfs-poly}). The cells hatched with white lines represent the copy of $P$ which was added at this step.}
\label{fig-frontier-bfs}
\end{figure}

\begin{center} \begin{algorithm}[H]
\SetAlgoLined
\SetKwData{Left}{left}
\SetKwData{This}{this}
\SetKwData{Up}{up}
\SetKwFunction{bfsfunc}{BFS\_tile}

\SetKw{Break}{break}
\SetKwInput{Input}{Input}
\SetKwInOut{Output}{Output}
\Input{The dimension $n$ and $m_\mathrm{max}$ of the rectangle, the polyomino $P$}
\Output{(True, $m$) if the $n\times m$ rectangle can be tiled by $P$, False otherwise}

	nbColumnsTiledRectangle $\leftarrow +\infty$ \tcp{No tiled rectangle so far.}
	newFrontiers $ \leftarrow \emptyset\quad$ \tcp{This is a queue.}
	emptyColumn $ \leftarrow [0, \cdots, 0]$ \tcp{$n$ zeroes: an empty column; covered cells contain 1's.}
	newFrontiers.push((0, emptyColumn)) \tcp{0 means there were no filled columns at the left}
	\While{newFrontiers is not empty}
	{
		(k,curFrontier) $\leftarrow$ newFrontiers.pop() \tcp{curFrontier was seen with $k$ filled columns at its left}

			(x,y) $\leftarrow \texttt{choose\_leftmost\_free}(curFrontier)$ \tcp{In case of ties, the topmost one.}
	\ForEach{nextFrontier extending curFrontier by a copy of $P$ covering (x,y)\label{foreach-bfs} \tcp{See \Cref{fig-frontier-bfs}.}}
	{
		nextFrontierClean $\leftarrow$ nextFrontier stripped of its fully filled columns\\
		kClean $\leftarrow$ number of removed filled columns\\

		\If{nextFrontierClean = $[\;]$}
		{\tcp{We tiled a rectangle.} 
			nbColumnsTiledRectangle $\leftarrow$ k+kClean\\
			\Break \tcp{We get out of the loop.}
		}
		\If{nextFrontierClean was not seen yet\label{line-test-seen}}
		{
			Mark nextFrontierClean as seen\\
			newFrontiers.push((k+kClean, nextFrontierClean))
		}

	}
	}

\KwRet nbColumnsTiledRectangle
    \caption{Trying to tile a rectangle with $n$ lines and at most $m_\mathrm{max}$ columns, in a BFS fashion.}
 \label{algo-bfs-poly}
\end{algorithm}
\end{center}

The general principle of this algorithm was already used by Karl Dahlke. In this algorithm, we first explore all frontiers consisting of one copy of $P$, then the ones with two copies, then the ones with three copies, and so on. Hence this algorithm has the characteristic of a BFS. Indeed, BFS stands for "Breadth-first search": we explore step by step all possible solutions, as opposed to the DFS which explores fully each one after the other.\\

We can observe that any frontier may consists of at most $w$ columns, where $w$ is the maximum width of the copies of the polyomino. We speak here of the number of columns which are not completely filled. Indeed, since we always try to fill the leftmost free cell, any copy we add must fill this cell. The result follows since it spans at most $w$ columns. This means that the number of frontiers is bounded by $2^{nw}$. However, since we forbid a lot of configurations, we have fewer than $2^{nw}$ possible configurations: see \Cref{bfs-optimisations} to see how we discard some frontiers, apart from the obvious cases when the configuration contains a hole. Since the maximum width of a polyomino is fixed, it is better, when trying to tile a $n \times m$ rectangle, to choose $\min(n,m)$ as the number of lines of the rectangle we actually try to tile. The method works well for rectangles with disproportionate widths and heights, and less with the ones close to squares.

Keeping the frontiers in memory enables us to reduce the running time of the program (see \Cref{line-test-seen}). Indeed, we may obtain a frontier by several different ways. By keeping them in memory we know that when we find some frontier $F$ which we had already seen (necessarily with fewer filled columns since we enumerate the frontiers in a BFS way, with fewer copies needed first), we may skip it now instead of investigating again what other frontiers it leads to. To save memory and for easier access, we only remember the frontiers without the completely filled columns, and instead store this number and link the frontier to it: in \Cref{fig-frontier-bfs} the four (on the left) and three (on the right) columns, surrounded by a red rectangle, are the real frontiers we store. For the left one we associate the number k of filled columns, and the number k+kClean for the right one. By doing so, if we find a frontier $F$ with 42 preceding filled columns and see again later $F$ with 44 filled column, we overlook it. We used a \emph{hash table} to perform fast lookups for already seen frontiers. We detail this data structure in \Cref{other-optimisations}.

\subsection{Another approach: solving a linear program}
\label{section-gurobi}
At the beginning, we decided to give a chance to some solvers of (integer) linear programs. Indeed, the problem of knowing if a polyomino tiles a given rectangle can be expressed as an integer linear program. The idea is that we encode the locations of the copies of the polyomino and ensure that each cell is covered exactly once. To so do, we first consider the 8 different orientations of our polyomino $P$ (applying rotations and horizontal symmetry): $P_1, \dots, P_8$. For each $P_i$ and each cell $(x,y)$ of the rectangle, we denote by $C(P_i, x, y)$ the set of coordinates covered by placing the top-left cell of $P_i$ on cell $(x,y)$ or $\varnothing$ if doing so causes the polyomino to go over the rectangle. Our integer linear program can then be defined as:

\begin{flalign*}
\text{For each } x_0, y_0: \sum_{(x_0,y_0) \in C(P_i, x, y)}{A_{i,x,y}} = 1.
\end{flalign*}
Each sum is done for a fixed value $(x_0, y_0)$: we are summing over $i, x$ and $y$. To each $(x_0, y_0)$ corresponds an equation for our linear program. If the set of feasible solutions is not empty, this means that a tiling exists, and the variables set to one give this tiling. If, on the contrary, it is empty, this is a proof of the fact that the given polyomino does not tile the rectangle.

However, even by using Gurobi, one of the best solvers available, we did not obtain any interesting results with this method. The program using Gurobi was much too slow:
4 minutes for the polyomino of order 76, and 24 minutes for another one of order 96. The method in \Cref{section-bfs} (with the optimisations of \Cref{bfs-optimisations}) took respectively 6.5s for the polyomino of order 76 and 4.8s for the one of order 96. In addition to this, Gurobi used the 24 cores available in the test machine whereas the time for the BFS method was achieved using only one core. Gurobi was even (slightly) outperformed by the slow DFS algorithm!

\section{Refinements of the algorithms and other optimisations}
\label{section-optim-polys}
We present here two main things. On the one hand we introduce other techniques to improve our search. We begin by giving and explaining methods to show that a polyomino is not rectifiable, so that we may avoid to lose time by not attempting to find its order. We then present optimisations both in the design and the implementation of the algorithms of the \Cref{algos-basiques}, which we kept at the time as simple as possible for pedagogical purposes.

\subsection{Ruling out non rectifiable polyominoes}

\label{section-non-rectifiable}
One crucial point is to find which polyominoes can be discarded because they are not rectifiable. Indeed, it would take an infinite amount of time to try to tile all possible rectangles with a polyomino which is not rectifiable. This is why we need efficient methods which can discard as many non-rectifiable polyominoes as possible. Some of the techniques were used by Karl Dahlke (see~\cite{dahlke-website}), and some are "new" or were improved by us.

\paragraph{Playing chess.\\}
This method does not, properly speaking, detect non-rectifiable polyominoes, but rather shows that some classes of rectangles with specific properties cannot be tiled by some polyominoes.
This well-known parity argument consists in overlaying a checkerboard on the rectangle and deducing some properties of the tiling. For instance, let us assume that the polyomino $P$ is formed out of a rectangle with 2 lines and three columns by removing the cell of the middle column in the first row. Wherever on the checkerboard we place it, horizontally or vertically, it consumes either three white cells and one black, or the contrary, once put on the checkerboard. Since $P$ has an even number of cells, the rectangle it might tile must have, for instance an even number of lines, therefore it contains as many black cells as white cells. We deduce that there should be an even number of copies of $P$ since each one consumes an odd number of either black or white cells: the set of copies covering 3 black cells must be of even size, as must its white counterpart. This shows that our polyomino cannot tile any rectangle of size $n \times l$ when $kl$ is not a multiple of 10. For instance tiling a rectangle of size $5 \times 5$ would require 5 copies of the polyomino, hence either the number of white cells covered or the one of black cells covered would be odd. This argument implies that the polyomino is less interesting: its possible order cannot be odd. We also tried to obtain other equations showing that some other types of rectangles are impossible to be tiled by some polyominoes. We tried to put alternating columns of white and black cells, or even different moduli: instead of having white and black cells, that is reasoning modulo 2, we can try with other prime numbers. We managed to discard some rectangles which could not be discarded with the classical checkerboard argument, but not many more.

The checkerboard argument can also be exploited in another way: when tiling any rectangle there must be as many pieces covering three black cells as pieces covering three white cells since there are as many black cells as white cells. This could be used to improve the DFS algorithm: if, out of the $k = nm/|P|$ copies, we already placed more than $k/2$ pieces covering three white cells out of the $k$ pieces to put, we may backtrack.

\paragraph{Tiling a quarter of the plane.\\}
As we mentioned in \Cref{section-defs-polys}, being rectifiable implies tiling a half-plane. Using the same argument, it is easy to prove that in fact being rectifiable also implies tiling a quarter of the plane ($\{(x,y) \;|\; x \geq 0, y \leq 0\}$ for instance). We use this property to show the non-rectifiability of some polyominoes: if they do not tile a quarter of the plane then they are not rectifiable. We also use another similar property which we will define just below.\\

We now present one property which can be tested as soon as when we enumerate the polyominoes and helps us enumerating fewer of them, hence speeding up the process.

\begin{deff}
We say that a polyomino $P$ is \textbf{corner compatible} when there is a way to place it such that it covers the corner cell of an arbitrarily large rectangle without disconnecting the set of empty cells of the rectangle.
\label{def-corner-compatible}
\end{deff}

Informally, $P$ is not corner compatible when, however the way we place it, it separates the set of empty cells into several connecting components. This implies that the polyomino cannot tile a quarter of the plane, thus we can overlook them. Note that if we cannot cover the top-left corner, it implies that the polyomino is not corner compatible.

\begin{notation}
Let us denote by $\bm{\PP_n^*}$ the set of free polyominoes of size $n$, with no holes, \uline{which are corner compatible}.
\end{notation}

The cross with 5 squares (see \Cref{fig-poly-croix}) is an example of a polyomino which is not corner compatible: because it creates a hole which cannot be filled so that the corner cannot be tiled. Our program uses a modified version of \Cref{algo-gen-poly} to enumerate elements of $\PP^*_{n}$: it discards any polyomino which is not corner compatible. This is done in the same way as the elimination polyominoes with holes: we use a flood-and-fill algorithm\footnote{The flood-and-fill algorithm is a classical graph algorithm. It starts from one vertex and fills it with one colour, along with any vertex accessible: it colours its connected component with the same colour. Repeat this, with a new colour each time, for any vertex not yet coloured so that we compute the different connected components of the graph, in linear time.}. We assume (up to rotating the copy of the polyomino) that we want to tile the top-left corner. We put each copy of the polyomino into the smallest rectangle hull and add a column at the right and one line at the bottom. We then compute the set of connected components of the free cells (the ones which do not belong to the polyomino), considering that the polyomino is a "wall": two cells are connected if they are connected in the grid and none of them belong to the polyomino. The polyomino contains a hole or is not corner compatible if and only if there are several components.

We must now ensure that any element of $\PP^*_n$ can be obtained from an element $\PP^*_{n-1}$ to which we add a square, so that computing $\PP^*_n$ from $\PP^*_{n-1}$  indeed enumerates all the polyominoes we are interested in.

\begin{lemma}
Let $n > 1$ and  $P$ be a polyomino of size $n$ which is corner compatible and contains no holes. Then there exists a polyomino $Q$ of size $n-1$ with the same properties such that $P \setminus Q$ is a single square.
    \label{lemma-gen-polys}
\end{lemma}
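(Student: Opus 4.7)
The plan is to find a square $s \in P$ such that $Q = P \setminus \{s\}$ is again a corner-compatible hole-free polyomino. Fix a corner-compatible placement of $P$ inside a rectangle $R$ chosen much larger than the bounding box of $P$, so that $P$ covers the corner $c$ of $R$ while every other square of $P$ has all four grid-neighbours in $R$; the in-rectangle exterior $E = R \setminus P$ is then connected. I will aim for $s \neq c$ satisfying (a) $s$ has a grid-neighbour in $E$ and (b) $P \setminus \{s\}$ is connected.

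Granted such an $s$, the rest is routine. Condition (b) makes $Q$ a connected polyomino of size $n-1$. Hole-freeness follows because $\mathbb{Z}^{2} \setminus Q = (\mathbb{Z}^{2} \setminus P) \cup \{s\}$, with the first piece connected (as $P$ has no holes) and $s$ attaching through the neighbour furnished by (a). For corner-compatibility, the same placement of $Q$ inside $R$ covers $c$ (since $s \neq c$), and its exterior $E \cup \{s\}$ is connected by the connectedness of $E$ together with (a). Thus $Q \in \PP_{n-1}^{*}$ and $P \setminus Q = \{s\}$ is a single square, as required.

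The main obstacle is the existence of such an $s$, which I expect to handle by an ear-removal argument on the outer boundary of $P$. Since $P$ is hole-free, this boundary is a simple closed polygon, and the turning-angle identity forces the number of convex-corner vertices to exceed the concave ones by $4$; so $P$ has at least four convex-corner squares, each with one or two (perpendicular) $P$-neighbours. A leaf square (one $P$-neighbour) is a non-cut vertex of the adjacency graph and has three exterior sides, fulfilling both (a) and (b). A convex-corner square with two $P$-neighbours has two exterior sides (giving (a)) and is non-cut whenever its inner-diagonal cell also belongs to $P$, since the two $P$-neighbours are then connected through that diagonal. A short case analysis rules out the pathological situation in which every convex-corner square has two $P$-neighbours together with an empty inner diagonal, by showing the accumulated concavities force a leaf square elsewhere. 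This yields at least two candidate ears of $P$, from which one may pick one distinct from the single square $c$.
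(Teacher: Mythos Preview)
Your reduction in the first two paragraphs is sound: exhibiting a square $s\neq c$ that is non-cut in the adjacency graph and has a neighbour in $E$ does yield the required $Q$. The difficulty is entirely in the existence of such an $s$, and here your argument has a genuine gap.

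The sentence ``a short case analysis rules out the pathological situation in which every convex-corner square has two $P$-neighbours together with an empty inner diagonal'' is precisely the heart of the matter, and you have not supplied it. Saying that ``the accumulated concavities force a leaf square elsewhere'' is an expectation, not a proof; establishing that a hole-free polyomino of minimum degree $\geq 2$ cannot have \emph{all} its convex-corner cells in this degenerate configuration requires a real argument (for instance via the block--cut tree of the adjacency graph, showing each leaf block contributes a non-cut cell of grid-degree $<4$), none of which you give. Without it, you have not shown that even one suitable ear exists.

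There is also a slip in the setup: the claim that ``every other square of $P$ has all four grid-neighbours in $R$'' is false, since $P$ covers the corner $c$ and therefore has cells along both edges of $R$ incident to $c$. This matters: a candidate $s$ lying on one of those edges may have its only exterior neighbour \emph{outside} $R$, so $E\cup\{s\}$ need not be connected inside $R$ and corner-compatibility of $Q$ via the same placement can fail. Condition~(a) must really ask for a neighbour in $E$, not merely in $\mathbb{Z}^2\setminus P$, and your ear argument gives no control over which side of $P$ the ear sits on.

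The paper's proof sidesteps both issues by an iterative descent rather than a global ear count. It starts at the rightmost cell $(x_0,0)$ of $P$ in the top row, whose right-neighbour lies in $E$ by construction. If removing it disconnects $P$, one passes to the component \emph{not} containing any top-row or left-column cells (so every later candidate is safely in the interior of $R$), picks any cell there with fewer than four $P$-neighbours, and repeats. The components strictly shrink, so the process terminates at a removable cell. This local descent replaces your convex-corner accounting and its unfinished case analysis.
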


What follows also proves that the algorithm we explained in \Cref{algos-basiques} for the enumeration of hole-free polyominoes is valid. Indeed, the lemma remains true if we remove the corner-compatibility requirement and only keep the hole-free one, as we mention at the end of the proof.

\begin{proof}
    Let us place $P$ on the top-left corner of a smallest rectangle containing it and augmented by a column on the right and a line below, in a way such that the flood-and-fill algorithm would find a single component outside $P$.
    We first observe that if the polyomino is corner compatible, this implies that there exists some $x_0$ such that every cell of the top row with $x \leq x_0$ belongs to $P$. Otherwise, this would create a component of free cells disjoint from the one containing the rightmost cell of the bottom line. The same goes for the leftmost column: there exists some $y_0$ such that every cell of the first column with $y \leq y_0$ belongs to $P$. Note that both sets we defined contain the top-left cell, so they are connected.
    
    We know, since $n > 1$, that $\max(x_0, y_0) > 0$. We assume without loss of generality that $x_0 > 0$: at least two cells of the top row are covered by $P$. We define a procedure which will terminate, and find the a cell we can remove. We first choose the cell $(x_0,0)$ and try to remove it. We know that we are neither creating a hole nor obtaining a non-corner-compatible polyomino since this cell has a free neighbour at its right. If the polyomino we obtain is still connected, we set $Q = P \setminus (x_0,0)$.

    Otherwise, we consider the polyomino $P'_1$ defined as the component disconnected from $P$ by removing $(x_0,0)$, and we set $P_1 = P'_1 \cup (x_0, 0)$. Note that $P'_1$ does not contain any cell of the form $(0,y)$ or $(x, 0)$: we mentioned that all the cells of this shape which belong to $P$ are connected to $(x_0, 0)$, hence not in $P_1$. This means that any $Q$ we will define by removing a cell from $P_1$ will necessarily be corner compatible. If $P'_1$ is reduced to a single cell, we may remove it to obtain $Q$ with the desired properties. Otherwise, we choose a cell $(x_1,y_1) \in P_1 \setminus (x_0,0)$ with fewer than four neighbours in $P_1$. Since it has fewer than 4 neighbours, removing it does not create a hole in the polyomino. If removing it does not disconnect $P_1$, then it does not disconnect $P$: the only way to disconnect $P_1$ from $P$ is to remove $(x_0,y_0)$ and we define $Q = P \setminus (x_1,y_1)$. Otherwise, let $P'_2$ be a connected component of $P \setminus (x_1, y_1)$ which does not contain $(x_0,y_0)$. We set $P_2 = P'_2 \cup (x_1,y_1)$. $P_2$ is smaller than $P_1$ so that our procedure selects smaller and smaller parts of $P$. This means that it will terminate at some point, finding some cell $(x,y)$ with fewer than four neighbours which fulfils our conditions.
    
If we only want a cell which does not create a hole or disconnects the polyomino, to show that our generation of hole-free polyominoes in \Cref{section-gen-polys} works, it suffices to choose for $(x_0,y_0)$ any cell with fewer than four neighbours. The rest of the procedure is not modified, and comes up with some cell such that $P \setminus (x,y)$ is a connected hole-free polyomino.
\end{proof}

\paragraph{Trying to "fully" tile a corner.\\}
This method looks a bit like the method we have just explained, to test if the polyomino is corner compatible. In fact, it is an extension of the corner-compatibility concept. Trying to fully tile a corner consists in taking the corner of a quarter of a plane. We then try to tile this corner, by tiling the $k$ cells which are the closest to the corner point. If we want to cover all cells at distance at most $d$, we will have to cover $(d+1)(d+2)/2$ cells. We do this using our DFS algorithm, covering cells according to the distance to the corner. If at any step it turns out that it is impossible to cover these $k$ cells, then we know that the polyomino will not be able to tile any rectangle (but it might tile the plane like the one in  \Cref{fig-poly-croix}).
This method extends the one of the corner compatibility because by tiling cells at distance at most $|P|+1$ for instance we would have realised that the position of the first copy of the polyomino induces an unfillable hole. To see the relative efficiency of this method, see \Cref{table-ruled-out} where we can observe that almost every polyomino ruled out by this method also does not "tile" a bottom band of a certain size. One could ask why introduce two concepts if tiling a corner implies being corner compatible. The answer is that the corner-compatibility test can be used during the enumeration phase, discarding a lot of polyominoes (see the data of \Cref{table-orders}). This saves us a lot of time generating the polyominoes.

\begin{figure}[h!]
\centering
\includegraphics[scale=0.5]{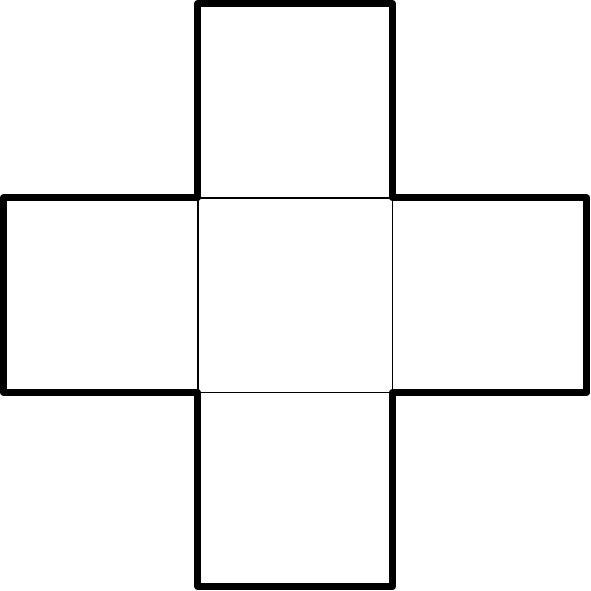}
\caption{The cross polyomino. It cannot tile the rectangle of a rectangle but can tile the plane $\ZZ^2$.}
\label{fig-poly-croix}

\end{figure}

\paragraph{Trying to "tile" the bottom band.\\}
The most efficient technique to rule-out polyominoes which are not rectifiable is to try to tile the bottom band of a rectangle. Indeed, it may not only show that a particular rectangle cannot be tiled by a polyomino, but also show that the polyomino is not rectifiable at all. The scheme is simple: we try to tile a band of height $h$ but we allow the tiling to go over $h$. However, it must not exceed the strip neither by the left or by the bottom. This way, we simulate the bottommost $h$ lines of any rectangle to be tiled by the polyomino. We enumerate all the possible frontiers in a BFS manner, like the one for tiling a rectangle. We keep all the frontiers in memory and stop when either we found a "full frontier": all the columns are filled, or when we no longer enumerate new frontiers. In the latter case, this means that any rectangle with a height greater than $h$ cannot be tiled: indeed this outcome of the algorithm shows that in any possible tiling the bottom-right corner cannot be covered. It is easy to show, by using this technique with $h=2$, that the $z$-shaped pieces in Tetris cannot tile any rectangle.
Sometimes we have no contradictions with $h = 2$ or $h = 3$, but when we increase the height we find one. For the efficiency of this method, see \Cref{table-ruled-out}.

\subsection{Improving the BFS approach}
\label{bfs-optimisations}

After presenting some ways to mark a polyomino as not rectifiable or showing that it cannot tile some family of rectangles, we focus here on the BFS algorithm described in \Cref{algo-bfs-poly} for which we present some optimisations.

\vspace{2cm}

\paragraph{Looking for a complementary frontier.\\}
    In addition to using the BFS approach, Dahlke used a clever idea: when trying to tile a rectangle, we may at each step look for the \textbf{complement} of the current frontier among the ones we have seen. If we have already enumerated the complement (the rectangle deprived of the current tiling), then assembling the two frontiers yield to a tiling of the rectangle and we are finished.
    Dahlke uses this idea to stop his search at roughly half the width of the desired rectangle: if a tiling exists, then it can be split into two parts of almost equal sizes, the biggest of the two being roughly half of the width of the rectangle. Therefore it is possible to look for this biggest part, and to stop the search when we no longer have any chance to find it.
    However, due to the way we generate the frontiers, we cannot guarantee that we would enumerate this biggest part of the particular decomposition we spoke of. Hence we always look for a complementary frontier, but we do not stop the search at half the width of the rectangle. Yet, in our runs, we could always find the order of the known rectifiable polyominoes by stopping halfway of when trying to tile a rectangle.

\paragraph{Using symmetries.\\}
The notion of symmetries can generally lead to large reductions in the running time of some algorithms. Here we use it to reduce the running time of \Cref{algo-bfs-poly} but also its memory usage. Let us say that two frontiers $F$ and $F'$ are \textbf{equivalent} if $F'$ is obtained from $F$ by applying a horizontal symmetry. For each frontier, we maintain only one representative for each equivalence class\footnote{They are of size two, or one if the frontier is invariant by symmetry.}. In our case, we choose the lexicographically smallest (see the paragraph after \Cref{notation-holes}).
This way, we approximately halve (some states are symmetric) the number of frontiers we store. Since we also process half the number of states, hence also the running time.
When we consider a new frontier, we now also check if we have seen the symmetric of the representative, or the complement of its symmetric.

\paragraph{Forbidden relative positions.\\}

Some copies of a polyomino $P$ may need to avoid some particular cells. For instance, let us consider a square of width three from which we remove the top cell of the middle column. It is obvious that if we place it in this orientation at the top of the rectangle, the middle free cell will never be covered. If we reverse the orientation (by applying a horizontal symmetry), the same problem happens: it cannot be placed at the bottom of a rectangle for the same reason. To avoid exploring and storing useless frontiers leading to or containing such contradictions,
we compute at the beginning, for each copy of $P$, the set of cells on which we may place the copy such that that the frontier will not be trivially not extendible. To test this, we may for instance place the copy on a location and then try to cover the next $k$ cells for some value of $k$. If this is not possible, we forbid this position for the given copy. The bigger $k$ is, the more time it takes to do this pre-computation but the most efficient it will turn out later so we have to choose a good trade-off. One could worry that if we place the copy near the last column of the rectangle we want to tile, then there is no need for some $k$ other copies to be placed. This could lead us to some false forbidden positions. However this is not true: let us assume $P$ tiles a $n \times m$ rectangle. Then it also tiles a $2n \times 2m$ rectangle, so this is not a problem at all because with a wide enough rectangle, we would be able to place $k$ copies of $P$ for any value of $k$.

\subsection{Other optimisations}
We begin by giving an optimisation for the DFS approach, and then speak of more technical optimisations, which are a matter of implementation.
\label{other-optimisations}

\paragraph{Forbidden pairs.\\}

In order to avoid placing one copy of $P$ and realise later on that there is no way to cover some neighbouring cells, we do some other pre-computations. We trade again some extra time at the beginning of the program for benefits each time some particular function is run. This time the idea is to know, for every two copies $P_1, P_2$ of our polyomino, at which relative positions they may be put from each other to guarantee that the obtained partial tiling is extendible. Let us assume that we put $P_1$ at $(x_1, y_1)$. For each free position $(x_2, y_2)$ we check whether, if $P_2$ is put at $(x_2, y_2)$, there is a way to cover all cells neighbouring $P_1 \cup P_2$. If it is not possible we forbid $P_1$ and $P_2$ to be placed at these positions relatively to each other.
For instance, if putting $P_1$ at $(2,3)$ and $P_2$ at $(6,3)$ would necessarily cause one neighbouring cell to be uncovered however the way we extend the partial tiling, we will forbid $P_2$ to be separated by the vector $(6,3)-(2,3) = (4,0)$ from where any $P_1$ is placed. This is very useful because it saves us time (we realise sooner that placing $P_2$ there leads nowhere) and memory (we store fewer frontiers). This optimisation is much more suited to the DFS approach, because we explore one partial tiling at a time. For the BFS approach, we would need to store a list of forbidden cells for specific copies, and detect when it is time to remove elements from this list.

\paragraph{Data structures used.\\}

In order to have good performance, one needs to use the right data structures. Some of them may be asymptotically optimal but have poor performance when used with few elements. For instance, a priority queue achieves insertion and deletion in $\OO(\log(n))$ while keeping the elements ordered but there is a constant in the $\OO$ which can make it worse than using an array and inserting or deleting (at an arbitrary position) in time $\OO(n)$, when $n$ is not big enough. Here, we used a hash table (\texttt{std::unordered\_map} in the C++ code) to store and remember the frontiers. This allows us to make operations (search, insertion for instance) in time $\OO(1)$ on average. To further optimise, since the complement of any frontier $F$ must have the same number of partially filled columns to be able to match $F$, we may store the frontiers according to their sizes. We create an array of size $w$ of hash tables, one for every possible number of partially filled columns, and store and look for a complement in the right hash table. When we look up for a frontier of width $k$, we look it up in the hash table of index $k$. This way, each hash table has fewer elements than a single one would have. This makes the insertions and lookups empirically faster.\\ 

Let us describe a bit more this data structure. Each element is assigned a \emph{hash}, for instance a 64-bit value computed from the value of the element. The hash is then some index to access the element in the hash table. Since there are many 64-bit integers, we cannot allocate the memory for a full array, so different hashes may be attributed the same \emph{bucket}. This also happens for two elements which would have the same hash. In case, when we look for the presence in the hashtable of some element $x$, two scenarii can occur. If the hash of $x$ is not present, we know that $x$ is necessarily absent from the table. If the hash exists, all the elements with the same hash (the point in the structure is to choose a good hash function so that there are few collisions) are checked until one equal to $x$ is found, if any. This makes almost all operations on a hashtable to take $\OO(n)$ in the worst case (all elements are put in the same bucket) but an average $\OO(1)$ practical complexity. In C++, this structure corresponds to \texttt{std::unordered\_set<>} and \texttt{std::unordered\_map<>}.\\

For the frontiers, we use a vector of boolean values: each cell is either occupied (\texttt{true}) or empty (\texttt{false}). In C++, the corresponding \texttt{std::vector<bool>} is in fact a \emph{bitset}: instead of having each bool element take its usual one byte size, here one byte stores eight elements. This is saving a lot of memory, but may or may not slow down the program depending on the operations made.

\paragraph{Too many polyominoes.\\}

When enumerating polyominoes, at some point there are just too many of them. In this case, even if they can fit in the very large RAM we had at our disposal (around 1.5TiB), we may not want to write such a file to the disk. Therefore we decided to compress the files we were writing. One simple way was to use the gzip format and the gzstream library\footnote{\url{https://www.cs.unc.edu/Research/compgeom/gzstream/}} which enables us to directly write in the compressed format with very little modification of the C++ code. We achieved a good gain. For instance, the list of the about 5 million of free hole-free and corner-compatible polyominoes of size 16 takes 358Mib uncompressed versus 45Mib if compressed. If we compare the other sizes, it seems that we reduce the size of the file by a factor around 9.

\section{Statistics and perspectives}
\label{section-stats}

\subsection{Statistics on the polyominoes and their orders}
We give here statistics about the orders of polyominoes, and also details about polyominoes we could discard as not rectifiable. Given the way we decided to enumerate only polyominoes which had a chance to tile a rectangle, the statistics we give are to be read as data about polyominoes in $\mathcal{P}^*_n$: \textbf{with no holes} and which \textbf{are corner compatible}. Polyominoes with holes, or which would split the set of free cells of a rectangle into several connected components when put on a corner (however the way we put them) are not counted here.

We recall that $\PP_n$ is the set of hole-free free polyominoes, that is no two polyominoes in this set are isometric; $\PP^*_n$ is the set of polyominoes we enumerate: they contain no holes, are corner compatible (see \Cref{def-corner-compatible}). In \Cref{section-non-rectifiable} we explained two methods to rule out a polyomino as not rectifiable, after they are enumerated. We show here the efficiency two methods: the \emph{band method} which consists in trying to tile the bottom band of a rectangle and the \emph{corner method} which tries to tile a corner with some number $k$ of pieces, covering cells by prioritising at each step the ones closest to the corner.\\

As indicated in the caption of \Cref{table-ruled-out}, for lines 17 and 18 the percentage of ruled out polyominoes is 100.00. In fact there remain a few polyominoes, but very very few: 18 out of the 18 637 273 we generated (using the corner-compatibility optimisation) for size 17, for instance. This illustrates completely the scarcity of rectifiable polyominoes. Another interesting property is that when the size is a prime number there seems to be much fewer possibly rectifiable polyominoes: the number of remaining polyominoes is less than 20 for orders 11, 13, 17 when it is at least 97 for orders 12, 14, 15 and 16. The numbers for 15 and 16 are even 210 and 385, before dropping down to 18 for size 17... and it increases again to 686 for size 18.

\begin{landscape}
\begin{table}[H]
\centering
\begin{tabular}{ |c||c|c|c|c|c|c|c|c|c| } 
\hline
$n$ & $\#\PP_n$ & $\#\PP^*_n$ & Maybe  & Order $\leq 10$ & $10 < $ Order $ \leq 100$ & $100 < $ Order $ \leq 300$ & ? ($> 150$)  & ? ($> 300$)\\
\hline
1 & 1 & 1 & 1 & 1 & 0 & 0 & 0 &0\\
\hline
2 & 1 & 1 & 1 & 1 & 0 & 0 & 0 & 0\\
\hline
3 & 2 & 2 & 2 & 2 & 0 & 0 & 0 & 0\\
\hline
4 & 5 & 5 & 4 & 4 & 0 & 0 & 0 & 0\\
\hline
5 & 12 & 11 & 4 & 4 & 0 & 0 & 0 & 0\\
\hline
6 & 35 & 32 & 10 & 8 & 2 & 0 & 0 & 0\\
\hline
7 & 107 & 91 & 7 & 4 & 2 & 0 & 1 & 0\\
\hline
8 & 363 & 288 & 16 & 10 & 1 & 2 & 2 & 1\\
\hline
9 & 1248 & 923 & 36 & 33 & 0 & 0 & 3 & 0\\
\hline
10 & 4460 & 3062 & 33 & 26 & 1 & 1 & 2 & 3\\
\hline
11 & 16094 & 10296 & 13 & 6 & 1 & 1 & 4 & 1\\
\hline
12 & 58937 & 35175 & 97 & 79 & 0 & 1 & 8 & 9\\
\hline
13 & 217117 & 121349 & 10 & 7 & 0 & 0 & 1 & 2\\
\hline
14 & 805475 & 422665 & 101 & 84 & 0 & 0 & 9 & 8\\
\hline
15 & 3001127 & 1483274 & 210 & 192 & 1 & 0 & 6 & 11\\
\hline
16 & 11230003 & 5241856 & 385 & 370 & 0 & 0 & 4 & 11\\
\hline
17 & 42161529 & 18637273 & 18 & 9 & 0 & 0 & 0 & 9\\
\hline
18 & 158781106 & 66635182 & 686 & 650 & 0 & 0 & 18 & 18\\
\hline
\end{tabular}
\caption{Statistics on the order of polyominoes up to size 18.\\
$\mathcal{P}_n$ is the number of free polyominoes of size $n$ with no holes. $\mathcal{P}^*_n \subset \mathcal{P}_n$ contains the polyominoes without holes which are corner-compatible (see \Cref{def-corner-compatible}).\\
"Maybe" indicates the number of polyominoes that we could not discard with the corner method with 50 copies and the bottom "tiling" method of height up to 11 (one more than in \cref{table-ruled-out}.\\
    The "? $> x$" columns mean that we proved the polyomino either is not rectifiable or has order greater than $x$. These columns are not cumulative but mutually exclusive: the column "? $(> 100)$" does not include the polyominoes of the column "? $( > 200)$".}
\end{table}
\label{table-orders}
\end{landscape}

\begin{landscape}
\begin{table}[H]
\centering
\begin{tabular}{ |c||c|c|c|c|c|c|c|c|c|c|c|c|c| } 
\hline
$n$ & Ruled out & C & B & C \& B & B2 & B3 & B4 & B5 & B6 & B7 & B8 & B9 & B10\\
\hline

1 & 0.00 & 0.00 & 0.00 & 0.00 & 0.00  & 0.00  & 0.00  & 0.00  & 0.00  & 0.00  & 0.00  & 0.00  & 0.00 \\
\hline

2 & 0.00 & 0.00 & 0.00 & 0.00 & 0.00  & 0.00  & 0.00  & 0.00  & 0.00  & 0.00  & 0.00  & 0.00  & 0.00 \\
\hline

2 & 0.00 & 0.00 & 0.00 & 0.00 & 0.00  & 0.00  & 0.00  & 0.00  & 0.00  & 0.00  & 0.00  & 0.00  & 0.00 \\
\hline

4 & 20.00 & 0.00 & 20.00 & 0.00 & 20.00  & 0.00  & 0.00  & 0.00  & 0.00  & 0.00  & 0.00  & 0.00  & 0.00 \\
\hline

5 & 63.64 & 45.45 & 63.64 & 45.45 & 18.18  & 36.36  & 9.09  & 0.00  & 0.00  & 0.00  & 0.00  & 0.00  & 0.00 \\
\hline

6 & 68.75 & 50.00 & 68.75 & 50.00 & 18.75  & 31.25  & 9.38  & 6.25  & 0.00  & 0.00  & 0.00  & 3.12  & 0.00 \\
\hline

7 & 92.31 & 86.81 & 91.21 & 85.71 & 19.78  & 45.05  & 15.38  & 7.69  & 1.10  & 1.10  & 1.10  & 0.00  & 0.00 \\
\hline

8 & 94.44 & 88.54 & 94.44 & 88.54 & 21.53  & 44.44  & 16.32  & 7.99  & 3.12  & 0.69  & 0.00  & 0.00  & 0.35 \\
\hline

9 & 96.10 & 93.17 & 96.10 & 93.17 & 24.05  & 46.05  & 17.77  & 5.31  & 1.84  & 0.76  & 0.00  & 0.22  & 0.11 \\
\hline

10 & 98.92 & 97.26 & 98.92 & 97.26 & 26.09  & 47.68  & 17.41  & 5.23  & 1.60  & 0.56  & 0.23  & 0.07  & 0.07 \\
\hline

11 & 99.87 & 99.78 & 99.87 & 99.78 & 28.20  & 48.47  & 17.03  & 4.24  & 1.31  & 0.38  & 0.16  & 0.07  & 0.03 \\
\hline

12 & 99.72 & 99.36 & 99.71 & 99.35 & 30.00  & 48.51  & 15.97  & 3.73  & 1.01  & 0.33  & 0.11  & 0.04  & 0.01 \\
\hline

13 & 99.99 & 99.98 & 99.99 & 99.97 & 31.71  & 48.93  & 14.91  & 3.19  & 0.86  & 0.26  & 0.09  & 0.03  & 0.01 \\
\hline

14 & 99.98 & 99.94 & 99.97 & 99.93 & 33.21  & 48.84  & 14.01  & 2.89  & 0.71  & 0.21  & 0.07  & 0.02  & 0.01 \\
\hline

15 & 99.99 & 99.97 & 99.98 & 99.96 & 34.58  & 48.68  & 13.19  & 2.64  & 0.63  & 0.18  & 0.06  & 0.02  & 0.01 \\
\hline

16 & 99.99 & 99.98 & 99.99 & 99.98 & 35.78  & 48.50  & 12.46  & 2.45  & 0.58  & 0.16  & 0.05  & 0.02  & 0.01 \\
\hline

17 & 100.00 & 100.00 & 100.00 & 100.00 & 36.86  & 48.25  & 11.85  & 2.29  & 0.54  & 0.15  & 0.04  & 0.01  & 0.00\\
\hline

18 & 100.00 & 100.00 & 100.00 & 100.00 & 37.83 & 47.98  & 11.33  & 2.17  & 0.50  & 0.14  & 0.04  & 0.01  & 0.00\\
\hline
\end{tabular}
\caption{How the polyominoes are ruled out as not rectifiable.\\'C' means the percentages of polyominoes (out of $|\PP^*_n$) ruled out as not being able to put 50 copies to tile a corner.\\
'B$i$' means the ones which cannot "tile" a band of size $i$ but can tile a band of size $i-1$. \\
"C \& B" means both the band and the corner methods rule out these polyominoes.\\
The methods for ruling out polyominoes are explained in \Cref{section-non-rectifiable}. For $n = 17$ and $n=18$ the 100.00 and 0 .00 are rounded: some polyominoes are rectifiable, and the band method for $h=10$ is useful: it rules out 2464 polyominoes for $n = 18$ for example.}
\label{table-ruled-out}
\end{table}
\end{landscape}

\subsection{Perspectives for future works}
\paragraph{Trying other types of polyominoes.\\}

Since we could not find a polyomino of odd order, we thought about more general types of objects. First we thought about 3D polyominoes, which would have required quite some work to rewrite the program to test them. A less costly approach, which we tried, was to test the \textbf{extended polyominoes} (see \Cref{fig-ex-extended-polys}): sets of unit squares of $\ZZ^2$ connected either by edges or by corners. This means that the cells of an extended polyomino need not be connected by edges but can be connected through their corners. We did not have more success with extended polyominoes: we found none with an odd order, despite the fact that there are much more of them than of regular polyominoes. We explored extended polyominoes of size up to 11 (there are around 1.6 millions of them) but found none of odd order either.

\begin{figure}
\centering
\includegraphics{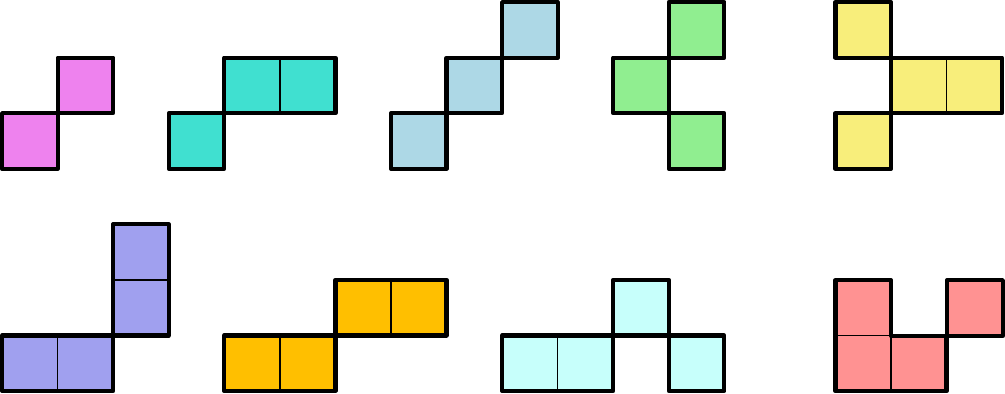}
\caption{Example of extended polyominoes which are not classical polyominoes. Six of them have order two.}
\label{fig-ex-extended-polys}
\end{figure}

\paragraph{Finding other impossible orders.\\}

So far, we only know that the number three cannot be the order of a polyomino. It would be interesting to check if this fact still holds for greater odd number, like five, seven and so on. The problem with the argument in~\cite{article-polys-order-3} is that it is an ad-hoc geometric argument which does not seem, as the authors write, to be generalisable. It would be nice to simplify their argument so that showing the same for five, should it be true, would be less tedious.

\cleardoublepage
\phantomsection
\addcontentsline{toc}{chapter}{Conclusions and perspectives}
\begin{conclusion}

    The main goal of this thesis was to use the power of computers to solve some problems of graph theory. We attacked four main problems, two of which are much related: the dominations numbers and the growth rates of various dominating sets. We solved totally some problems we attacked, some others partially, and did not make much raw progress in the polyomino problem. We recall in what follows our contribution to each problem we tackled, chapter by chapter.\\

In \Cref{discharging-chapter} we reproduced the proof of the four-colour theorem by Robertson et al. \cite{4-col-paper}. It did not need verification since it was proved in Coq. However, this gave us an occasion to give another explanation of their proof, in a more abstracted way, but keeping some technical details. A piece of software we developed was released to the community to make it easier for anyone to investigate a problem using the discharging method. We gave small hints as to how the search for the discharging rules could be automated. Achieving some kind of automation on this part would be a very good improvement to the method, and would contribute to make it more widespread and performing.\\

\Cref{domination-chapter} provides an alternate explanation of the method of the loss introduced by Gonçalves et al. \cite{rao}. We reuse this method for the first time to solve the 2-domination and the Roman-domination numbers on grid graphs: we give closed formulas computing these numbers for any size of grids. This proves that the lower bound for the Roman domination of grids given by Currò \cite{curro} was tight. We also give values for the total-domination and distance-2-domination numbers for small number of lines on grids. This confirms the first formulas of the work of Crevals and Ostergård~\cite{total-dom-article-28}. We also give bounds for arbitrary grids for the total domination, as well as a conjecture on the real formula. According to our conjectures, the total domination number is out of reach of the loss method. What method will be found to solve this problem? For all the problems, we may also wonder how to solve the problems on cylinders: Cartesian products between a path and a cycle.\\

In \Cref{dom-counting-chapter}, we study the counting problem for various dominations problems: the domination, total domination, and their minimal counterparts. To solve it, we link it to the study of SFTs. We show some properties on the associated subshifts: they are block gluing. We prove that each number of dominating sets grows exponentially, at a specific growth rate. We show that these growths rate are computable, and give numerical bounds on each of them. The bounds on the domination and total domination are quite good, and we are able to conjecture the actual value of their growth rates. Will these conjecture be proved someday? Also, we may resort to other methods to improve the bounds for the minimal domination and the minimal total domination.\\

Concerning the polyominoes, in \Cref{polys-chapter}, we do our little part in tackling the question of whether or not a polyomino of odd order exists. We pursue notably the work done by Dahlke to find new orders empirically. We recall some algorithmical techniques to discard a polyomino as being not rectifiable, and to find its order. We also improve some of them. We compare the effectiveness of the main methods to show that a polyomino is not rectifiable. We also sum up some data about the orders we could find for polyominos of size up to 18. If there exists one polyomino of odd order, the only proof needed is this polyomino and the reasons why it does not tile a smaller rectangle. This involves checking a finite number of cases. It seems much harder to prove that no odd number greater than one is the order of some polyomino, given that so far we only proved it for three. Also, the fact that some polyominoes can tile some rectangle with an odd number of copies (but have a smaller even order) rather seems to give more credit to the existence of an odd-order polyomino than to its impossibility. Also, due to parity reasons there are many polyominoes which cannot tile any rectangle with an odd number of copies. This could explain their possible scarcity and our difficulties in finding one.\\

This PhD was also an occasion to think about the proofs using results from some computer program. Some doubt them because a bug could occur, or even some bit in the memory of the computer could flip because of cosmic rays\footnote{yes, it happens, but not that much}. One first answer to this objection is to ask if a very long mathematical proof (with no use of a computer) can truly be verified with a high degree of scrutiny. The more the proof is long and complicated, the more small (or even bigger ones) flaws can be contained in it. Also, a source code may be hard to verify, because it requires simulating it somehow. It is not necessarily more prone to risks than a long proof. Yet, we could get some inspiration from experimental sciences to convince sceptics: it is possible to strengthen the confidence in some program by  having other people program it their own way, independently. Should one or more teams be able to independently make a program giving the same results, we could consider the risk of bugs or errors is negligible.\\

Something else we may wonder is how much computers will help us solve problems in the future. Or, rather, which problems will be within reach, and which could remain forever out of reach of computer solving. In 2010, Google put a lot of computing power to solve a problem on Rubik's cube. They showed that, from any of the $43 \cdot 10^{18}$ starting configurations, one could solve the cube with 20 moves or less. Other problems seem out of reach of today's computers, like finding a winning strategy for chess, because the combinatorial complexity of this game is huge: there are an enormous amount of configurations of the game. However, it seems impossible from today's knowledge. Maybe the next years will see a shift on the computing models: some people for instance think that quantum computers will help us solve some problems which were previously thought out of reach. On the contrary, we do not know if the computing resources will continue to grow forever. For instance, the frequencies of processors has stopped to grow for some years, because of limits from physics. This forces people to use parallelism instead of raw power, but comes with limitations: the more units you put together, the more time you spend in communications between them. So the question remains: which problems will be solved in 100 years\footnote{if we are still here trying to solve these kind of abstract problems, and with computers...} which were not possible to solve today?\\

Finally, each chapter was the occasion, in this manuscript, to sensitize the readers to some problems of our world. The major one is ecological: global warming and the huge loss of biodiversity. Maybe it is also time to reconsider everyday life, and some choices made a long ago when we were not informed. It should require us to reconsider and update our values and lives accordingly.
 
 \end{conclusion}

\appendix

\printindex
\end{document}